\numberwithin{equation}{section}
\newtheorem{proposition}{Proposition}[section]
\newtheorem{lemma}[proposition]{Lemma}
\newtheorem{corollary}[proposition]{Corollary}
\newtheorem{theorem}[proposition]{Theorem}
\newtheorem{ass}[proposition]{Assumptions}
\theoremstyle{definition}
\newtheorem{definition}[proposition]{Definition}
\newtheorem{remark}[proposition]{Remark}
\newcommand{\vertiiii}[1]{{\left\vert\kern-0.25ex\left\vert\kern-0.25ex\left\vert\kern-0.25ex\left\vert #1 \right\vert\kern-0.25ex\right\vert\kern-0.25ex\right\vert\kern-0.25ex\right\vert}}
\newcommand{\vertiii}[1]{{\left\vert\kern-0.25ex\left\vert\kern-0.25ex\left\vert #1 \right\vert\kern-0.25ex\right\vert\kern-0.25ex\right\vert}}
\newcommand{\norm}[1]{\|#1\|}
\newcommand{\mfu}{\mathfrak{u}}
\newcommand{\smfu}{\breve{\mathfrak{u}}}
\newcommand{\mrw}{\mathring{w}}
\newcommand{\mrH}{\mathring{H}}
\newcommand{\mrA}{\mathring{A}}
\newcommand{\ih}{\sqrt{\frac{3}{\Lambda}}}
\newcommand{\hmfu}{\hat{\mathfrak{u}}}
\newcommand{\emfu}{E^3\hat{\mathfrak{u}}}
\newcommand{\Rbb}{\mathbb{R}}
\newcommand{\Zbb}{\mathbb{Z}}
\newcommand{\Tbb}{\mathbb{T}}
\newcommand{\Pbb}{\mathbb{P}}
\newcommand{\nnb}{\nonumber}
\newcommand{\del}[1]{{\partial_{#1}}}
\newcommand{\AND}{{\quad\text{and}\quad}}
\newcommand{\Hs}{H^{s-1}}
\newcommand{\Hsss}{H^{s-2}}
\newcommand{\Li}{L^\infty}
\newcommand{\la}{\langle}
\newcommand{\ra}{\rangle}
\newcommand{\al}[2]{
\begin{align}\label{E:#1}
  #2
\end{align}
}
\newcommand{\ga}[2]{
	\begin{gather}\label{E:#1}
	#2
	\end{gather}
}
\newcommand{\ali}[1]{
\begin{align}
  #1
\end{align}
}
\newcommand{\gat}[1]{
	\begin{gather}
	#1
	\end{gather}
}
\newcommand{\als}[1]{
\begin{align*}
  #1
\end{align*}
}
\newcommand{\gas}[1]{
	\begin{gather*}
	#1
	\end{gather*}
}
\newcommand{\p}[1]{
\begin{pmatrix}
  #1
\end{pmatrix}
}
\DeclareMathOperator{\diag}{diag}
\begin{document}

\title{Newtonian limits of isolated cosmological systems on long time scales}

\address{School of Mathematical Sciences, 9 Rainforest Walk, Monash University, Clayton, VIC 3800, Australia}
\email{chao.liu.math@foxmail.com}
\email{todd.oliynyk@monash.edu}
\author{Chao Liu \and Todd A. Oliynyk}

\begin{abstract}
We establish the existence of $1$-parameter families of $\epsilon$-dependent solutions to the Einstein--Euler equations
with a positive cosmological constant $\Lambda >0$ and a linear equation of state $p=\epsilon^2 K \rho$, $0<K\leq 1/3$, for the parameter
values $0<\epsilon < \epsilon_0$. These solutions
exist globally to the future, converge as $\epsilon \searrow 0$ to solutions of the cosmological Poisson--Euler equations of Newtonian gravity,
and are inhomogeneous nonlinear perturbations of FLRW fluid solutions.
\end{abstract}

\maketitle

\section{Introduction} \label{S:INTRO}
Gravitating relativistic perfect fluids are governed by the Einstein--Euler equations. The dimensionless version of these equations
with a cosmological constant $\Lambda$ is given by
\begin{align}
    \tilde{G}^{\mu\nu}+\Lambda\tilde{g}^{\mu\nu}&=\tilde{T}^{\mu\nu}, \label{E:ORIGINALEESYSTEM.a}\\
    \tilde{\nabla}_\mu\tilde{T}^{\mu\nu}&=0, \label{E:ORIGINALEESYSTEM.b}
\end{align}
where $\tilde{G}^{\mu\nu}$ is the Einstein tensor of the metric
\begin{align*}
  \tilde{g}=\tilde{g}_{\mu\nu}d\bar{x}^\mu d\bar{x}^\nu,
\end{align*}
and
\begin{align*}
  \tilde{T}^{\mu\nu}=(\bar{\rho}+ \bar{p})\tilde{v}^\mu \tilde{v}^\nu +\bar{p}\tilde{g}^{\mu\nu}
\end{align*}
is the perfect fluid stress--energy tensor. Here, $\bar{\rho}$ and $\bar{p}$ denote
the fluid's proper energy density and pressure, respectively, while $\tilde{v}^\nu$ is the fluid four-velocity,
which we assume is normalized by
\begin{equation} \label{vnorm}
  \tilde{v}^\mu \tilde{v}_\mu=-1.
\end{equation}

In this article, we assume a positive cosmological constant $\Lambda > 0$ and
restrict our attention to barotropic fluids  with a linear equation of state of the form
\begin{align*}
  \bar{p}=\epsilon^2 K\bar{\rho}, \qquad 0 < K \leq \frac{1}{3}.
\end{align*}
The dimensionless parameter $\epsilon$ can be identified with the ratio
\begin{equation*}
\epsilon = \frac{v_T}{c},
\end{equation*}
where $c$ is the speed of light and  $v_T$ is a characteristic speed associated to the fluid.
Understanding the behavior of solutions to \eqref{E:ORIGINALEESYSTEM.a}--\eqref{E:ORIGINALEESYSTEM.b}
in the limit $\epsilon \searrow 0$ is known as the \textit{Newtonian limit}, which has been the subject of many investigations. Most work in
this subject has been carried out in the setting of isolated systems and has almost exclusively involved formal calculations, see
\cite{Blanchet:2014,Blanchet_et_al:2005,Chandrasekhar:1965,Dautcourt:1964,Ehlers:1986,Einstein_et_al:1938,FutamaseItoh:2007,Kunzle:1972,Kunzle:1976,KunzleDuval:1986} and
references therein, with a few exceptions \cite{oli1,Oliynyk:CMP_2009,ren} where rigorous
results were established. Due to questions surrounding the physical interpretation of large-scale cosmological simulations  using Newtonian gravity and
the role of Newtonian gravity in cosmological averaging, interest in the Newtonian limit and the related post-Newtonian expansions has shifted from
the isolated systems setting to the cosmological one. Here too, the majority of results are based on formal
calculations \cite{BuchertRasanen:2012,Clarkson_etal:2011,Ellis:2011,Green_Wald:2011,gre,Hwangetal:2008,HwangNoh:2013, HwangNoh:2014,
KopeikinPetrov:2013,KopeikinPetrov:2014,MatarreseTerranova:1996,Milillo_et_al:2015,NohHwang:2012,Rasanen:2010,Yamamoto_et_al:2016} with
the articles \cite{oli3,oli4,Oliynyk:PRD_2014,oli6} being the only exceptions where rigorous results have been obtained.

From a cosmological perspective, the most important family of solutions to the system \eqref{E:ORIGINALEESYSTEM.a}--\eqref{E:ORIGINALEESYSTEM.b}
are the Friedmann-Lema\^itre-Robertson-Walker (FLRW) solutions that represent a homogenous, fluid filled universe undergoing accelerated
expansion.  Letting $(\bar{x}^i)$, $i=1,2,3$, denote the standard periodic coordinates on the 3-torus\footnote{Here, $\mathbb{T}^n_\epsilon = [0,\epsilon]^n/\sim$ where
$\sim$ is equivalence relation that follows from the identification of the sides of the box $[0,\epsilon]^n$. When $\epsilon=1$, we will simply write
$\mathbb{T}^n$.} $\mathbb{T}^3_\epsilon$ and
$t=\bar{x}^0$ a time coordinate on the interval $(0,1]$, the
FLRW solutions on the manifold
\begin{equation*}
M_{\epsilon} = (0,1]\times \mathbb{T}^3_\epsilon
\end{equation*}
are defined by
\begin{align}
\tilde{h}(t) &= -\frac{3}{\Lambda t^2} dtdt + a(t)^2 \delta_{ij}d\bar{x}^i d\bar{x}^j, \label{FLRW.a}\\
\tilde{v}_{H}(t) &= -t\sqrt{\frac{\Lambda}{3}}\partial_{t}, \label{FLRW.b}\\
\rho_H(t) &= \frac{\rho_H(1)}{a(t)^{3(1+\epsilon^2 K)}} \label{FLRW.c},
\end{align}
where $\rho_H(1)$ is the initial density (freely specifiable) and $a(t)$ satisfies
\begin{equation}\label{E:TPTA}
-t a'(t) =a(t)\ih \sqrt{\frac{\Lambda}{3}+\frac{\rho_H(t)}{3}},\qquad a(1)=1.
\end{equation}

Throughout this article, we will refer to the global coordinates $(\bar{x}^\mu)$ on $M_\epsilon$ as \textit{relativistic coordinates}. In order to discuss the Newtonian
limit and the sense in which solutions converge as $\epsilon \searrow 0$,
we need to introduce the spatially rescaled coordinates
$(x^\mu)$ defined by
\begin{align}
  t=\bar{x}^0=x^0 \quad \text{and} \quad \bar{x}^i=\epsilon x^i, \qquad \epsilon > 0, \label{Ncoordinates}
\end{align}
which we refer to as \textit{Newtonian coordinates}.  We note that the Newtonian coordinates define a global coordinate system on the
$\epsilon$-independent manifold
\begin{equation*}
M := M_1 = (0,1]\times \mathbb{T}^3.
\end{equation*}

\begin{remark} \label{torient}
Due to our choice of time coordinate $t$ on $(0,1]$, the future lies in the direction of \textit{decreasing} $t$ and
timelike infinity is located at $t=0$.
\end{remark}

\begin{remark}
The nonstandard form of the FLRW solution and the  $\epsilon$-dependence in the manifold $M_\epsilon$ is a consequence of our starting point for the Newtonian limit, which
differs from the standard formulation in that the time interval has been compactified from $[0,\infty)$ to $(0,1]$ and the light cones of the metric \eqref{FLRW.a} do not flatten as $\epsilon \searrow 0$.
For comparison, we observe that
the standard formulation can be obtained by first switching to Newtonian coordinates, which removes the $\epsilon$-dependence from
the spacetime manifold, followed by the introduction of a new time coordinate
according to
\begin{align}\label{E:TIMETRANSFORMATION}
  t=e^{-\sqrt{\frac{\Lambda}{3}}\tau},
\end{align}
which undoes the compactification of the time interval. These new coordinates define a global coordinate system on
the $\epsilon$-independent manifold $[0,\infty)\times \Tbb^3$
on which the FLRW metric can be expressed as
\begin{equation*}
\hat{h}= - d\tau d\tau +  \epsilon^2 \hat{a}(\tau)\delta_{ij} dx^i dx^j
\end{equation*}
where $\hat{a}(\tau)=a(e^{-\sqrt{\frac{\Lambda}{3}}\tau})$.
%We note that standard presentation
%of the FRLW metric is obtained from this metric by setting $\epsilon=1$. Furthermore,
Dividing through by $\epsilon^2$ yields the metric
\begin{equation*}
\hat{h}_\epsilon = -\frac{1}{\epsilon^2} d\tau d\tau + \hat{a}(\tau)\delta_{ij} dx^i dx^j,
\end{equation*}
which is now in the standard form for taking the Newtonian limit. In particular,
we observe that the light cones of this metric flatten out as $\epsilon \searrow 0$.
\end{remark}

\begin{remark} Throughout this article, we take the homogeneous initial density $\rho_H(1)$ to be independent of
$\epsilon$. All of the results established in this article remain true if $\rho_H(1)$ is allowed to depend on $\epsilon$ in a $C^1$ manner,
that is the map $[0,\epsilon_0)\ni \epsilon \longmapsto \rho_H^\epsilon(1) \in \Rbb_{>0}$ is $C^1$ for
some $\epsilon_0 > 0$.
\end{remark}

\begin{remark} \label{FLRWlimrem}
As we show in \S\ref{FLRWanal},  FLRW solutions $\{a,\rho_H\}$ depend regularly on $\epsilon$ and have well-defined Newtonian limits.
Letting
\begin{equation} \label{arhoringdef}
\mathring{a} = \lim_{\epsilon\searrow 0} a \quad \text{and} \quad \mathring{\rho}_H = \lim_{\epsilon\searrow 0} \rho_H
\end{equation}
denote the Newtonian limit of $a$ and $\rho_H$, respectively, it then follows from \eqref{FLRW.c} and
\eqref{E:TPTA} that $\{\mathring{a},\mathring{\rho}_H\}$ satisfy
\begin{equation*}\label{rhoringdef}
\mathring{\rho}_H =  \frac{\mathring{\rho}_H(1)}{\mathring{a}(t)^{3}}
\end{equation*}
and
\begin{equation*} \label{aringdef}
-t \mathring{a}'(t) =\mathring{a}(t)\ih \sqrt{\frac{\Lambda}{3}+\frac{\mathring{\rho}_H(t)}{3}},\qquad \mathring{a}(1)=1,
\end{equation*}
which define the Newtonian limit of the FLRW equations.
\end{remark}

In the articles \cite{oli3,oli4}, the second author established the existence of $1$-parameter families of solutions\footnote{To convert the $1$-parameter solutions to the Einstein--Euler equations from \cite{oli3,oli4} to solutions
of \eqref{E:ORIGINALEESYSTEM.a}--\eqref{E:ORIGINALEESYSTEM.b}, the metric, four-velocity, time coordinate and spatial coordinates must each be rescaled by an appropriate powers of $\epsilon$, after which the rescaled time coordinate must be transformed
according to the formula  \eqref{E:TIMETRANSFORMATION}.}
$\{\tilde{g}^{\mu\nu}_\epsilon, \bar{\rho}_\epsilon,\tilde{v}^\mu_\epsilon\}$, $0<\epsilon < \epsilon_0$, to
\eqref{E:ORIGINALEESYSTEM.a}--\eqref{E:ORIGINALEESYSTEM.b}, which include the above family of FLRW solutions, on spacetime regions of the form
\begin{equation*}
(T_1,1] \times \Tbb^3_\epsilon \subset M_\epsilon,
\end{equation*}
for some $T_1 \in (0,1]$, that converge, in a suitable sense, as $\epsilon \searrow 0$ to solutions of the cosmological Poisson--Euler equations of Newtonian gravity. Although this result rigorously established the existence of a wide class of solutions to the Einstein--Euler equations that admit a (cosmological) Newtonian limit,  the local-in-time nature of the result left open the question of what happens on long time scales. In particular, the question of the existence of $1$-parameter families of solutions that converge globally to the future as $\epsilon \searrow 0$ was not addressed. In light of the importance of Newtonian gravity in cosmological simulations \cite{Crocce_et_al:2010,Evrard_et_al:2002,Springel:2005,Springel_et_al:2005}, this question needs to be resolved in order to understand on what time scales Newtonian cosmological simulations can be trusted to  approximate relativistic cosmologies. In this article, we resolve this question under a small initial data condition. Informally, we establish the existence of $1$-parameter families of $\epsilon$-dependent solutions
to \eqref{E:ORIGINALEESYSTEM.a}--\eqref{E:ORIGINALEESYSTEM.b} that:
\textit{(i)} are defined for $\epsilon \in (0,\epsilon_0)$ for some fixed constant $\epsilon_0>0$,
 \textit{(ii)} exist globally on $M_\epsilon$, % and are geodesically complete to the future,
  \textit{(iii)} converge, in a suitable sense, as $\epsilon \searrow 0$ to solutions of the cosmological Poisson--Euler equations of Newtonian gravity, and
  \textit{(iv)} are small, nonlinear perturbations of the FLRW fluid solutions \eqref{FLRW.a}--\eqref{E:TPTA}.
The precise statement of our results can be found in Theorem \ref{T:MAINTHEOREM}.

Before proceeding with the statement of Theorem \ref{T:MAINTHEOREM}, we first fix our notation and conventions, and define a number of new variables and equations.

\subsection{Notation\label{Notation}}

\subsubsection{Index of notation} An index containing frequently used definitions and nonstandard notation can be found in ``Appendix \ref{index}."

\subsubsection{Indices and coordinates\label{iandc}}  Unless stated otherwise, our indexing convention will be as follows: we use lower case Latin letters, e.g., $i, j,k$, for spatial indices that run from $1$ to $n$, and lower case Greek letters, e.g., $\alpha, \beta, \gamma$, for spacetime indices
that run from $0$ to $n$. When considering the Einstein--Euler equations, we will focus on the physical case where $n=3$, while all of the PDE results established in this article hold in arbitrary dimensions.

For scalar functions $f(t,\bar{x}^i)$ of the relativistic coordinates, we let
\begin{equation} \label{Neval}
\underline{f}(t,x^i) := f(t,\epsilon x^i)
\end{equation}
denote the representation of $f$ in Newtonian coordinates.

\subsubsection{Derivatives\label{Derivatives}}
Partial derivatives with respect to the  Newtonian coordinates $(x^\mu)=(t,x^i)$ and the relativistic coordinates $(\bar{x}^\mu)=(t,\bar{x}^i)$ will be denoted by $\partial_\mu = \partial/\partial x^\mu$ and
$\bar{\partial}_{\mu} = \partial/\partial \bar{x}^\mu$, respectively, and we use
$Du=(\partial_j u)$ and $\partial u = (\partial_\mu u)$ to denote the spatial and spacetime gradients, respectively, with respect to the Newtonian coordinates, and similarly $\bar{\partial} u = (\bar{\partial}_\mu u)$ to denote the spacetime gradient with
respect to the relativistic coordinates.
We also use Greek letters to denote multi-indices, e.g.,
$\alpha = (\alpha_1,\alpha_2,\ldots,\alpha_n)\in \mathbb{Z}_{\geq 0}^n$, and employ the standard notation $D^\alpha = \partial_{1}^{\alpha_1} \partial_{2}^{\alpha_2}\cdots
\partial_{n}^{\alpha_n}$ for spatial partial derivatives. It will be clear from context whether a Greek letter stands for a spacetime coordinate index or a multi-index.

Given a vector-valued map $f(u)$, where $u$ is a vector, we use $D f$ and $D_u f$ interchangeably to denote the derivative with respect to the vector $u$, and use the standard notation
\begin{equation*}
  D f(u)\cdot \delta u := \left.\frac{d}{dt}\right|_{t=0} f(u+t\delta u)
\end{equation*}
for the action of the linear operator $D f$ on the vector $\delta u$. For vector-valued maps $f(u,v)$ of two (or more)
variables, we use the notation $D_1 f$ and $D_u f$ interchangeably for the partial
derivative with respect to the first variable, i.e.,
\begin{equation*}
  D_u f(u,v)\cdot \delta u := \left.\frac{d}{dt}\right|_{t=0} f(u+t\delta u,v),
\end{equation*}
and a similar notation for the partial derivative with respect to the other variable.

%More generally, if $f(u)=\{a_{ij}(u)\}$ is a matrix and $w$, $u$ are vectors in the same dimension, then we denote
%$[D_u f(u)\cdot w]$ as a matrix in the same size as $f(u)$ with entries $D_u a_{ij}(u)\cdot w$, that is $[D_u f(u)\cdot w]=\{ w \cdot D_u a_{ij}(u)\}$.

\subsubsection{Function spaces\label{Functionspaces}}
Given a finite-dimensional vector space $V$, we let
$H^s(\mathbb{T}^n,V)$, $s\in \mathbb{Z}_{\geq 0}$,
denote the space of maps from $\mathbb{T}^n$ to $V$ with $s$ derivatives in $L^2(\Tbb^n)$. When the
vector space $V$ is clear from context, we write $H^s(\mathbb{T}^n)$ instead of $H^s(\mathbb{T},V)$.
Letting
\begin{equation*}
\langle{u,v\rangle} = \int_{\mathbb{T}^n} (u(x),v(x))\, d^n x,
\end{equation*}
where $(\cdot,\cdot)$
is a fixed inner product on $V$, denote the standard $L^2$ inner product, the $H^s$ norm is defined by
\begin{equation*}
\|u\|_{H^s}^2 = \sum_{0\leq |\alpha|\leq s} \langle D^\alpha u, D^\alpha u \rangle.
\end{equation*}
For any fixed basis $\{\mathbf{e}_I\}^N_{I=1}$ of $V$, we follow \cite{oli3} and define a subspace of $H^s(\mathbb{T}^n,V)$ by
\begin{equation*}
  \bar{H}^s(\mathbb{T}^n,V)=\biggl\{u(x)=\sum_{I=1}^N u^I(x)\mathbf{e}_I\in H^s(\mathbb{T}^n,V)\biggl|\langle 1,u^I\rangle=0 \text{ for } I=1, 2, \ldots, N\biggr\}.
\end{equation*}
Specializing to $n=3$, we define, for fixed $\epsilon_0 >0 $ and $r>0$, the spaces
\begin{equation*}
X^s_{\epsilon_0,r}(\mathbb{T}^3) = (-\epsilon_0,\epsilon_0)\times B_r(H^{s+1}(\mathbb{T}^3,\mathbb{S}_3)) \times H^s(\mathbb{T}^3,\mathbb{S}_3) \times B_r(\bar{H}^s(\mathbb{T}^3)) \times \bar{H}^s(\mathbb{T}^3,\mathbb{R}^3),
\end{equation*}
where $\mathbb{S}_N$ denotes the space of symmetric $N\times N$ matrices, and here and throughout this article,
we use, for any Banach space $Y$, $B_r(Y)=\{\, y\in Y \, | \, \|y\|_Y < r \,\}$ to denote the open ball of radius $r$.

To handle the smoothness of coefficients that appear in various equations, we introduce the spaces
\begin{equation*}
E^{p}((0,\epsilon_0)\times (T_1,T_2)\times U,V),\quad p \in \Zbb_{\geq 0},
\end{equation*}
which are defined to be the set of $V$-valued maps $f(\epsilon,t,\xi)$ that
are smooth on the open set $(0,\epsilon_0)\times (T_1,T_2)\times U$, where $U$ $\subset$ $\Tbb^n \times \Rbb^N$
is open, and for which there exist constants $C_{k,\ell}>0$, $(k,\ell)\in \{0,1,\ldots,p\}\times \Zbb_{\geq 0}$,
such that
\begin{equation*}
|\del{t}^k  D_\xi^\ell f(\epsilon,t,\xi)| \leq C_{k,\ell}, \quad \forall \,
(\epsilon,t,\xi) \in  (0,\epsilon_0)\times (T_1,T_2)\times U.
\end{equation*}
If $V=\Rbb$ or $V$ is clear from context, we will drop the $V$ and simply write $E^{p}((0,\epsilon_0)\times (T_1,T_2)\times U)$. Moreover,
we will use the notation $E^{p}((T_1,T_2)\times U,V)$ to denote the subspace of $\epsilon$-independent maps. Given $f\in E^{p}((0,\epsilon_0)\times (T_1,T_2)\times U,V)$, we note, by uniform continuity, that the limit
$f_0(t,\xi) := \lim_{\epsilon \searrow 0}f(\epsilon,t,\xi)$ exists and defines an element of $E^{p}((T_1,T_2)\times U,V)$.

\subsubsection{Constants\label{Constants}}
We employ that standard notation
\begin{equation*}
a \lesssim b
\end{equation*}
for inequalities of the form
\begin{equation*}
a \leq C b
\end{equation*}
in situations where the precise value or dependence on
other quantities of the constant $C$ is not required. On the other hand, when the dependence of the constant
on other inequalities needs to be specified, for example if the constant depends on the norms $\|u\|_{L^\infty}$ and $\|v\|_{L^\infty}$, we use the notation
\begin{equation*}
C = C(\|u\|_{L^\infty},\|v\|_{L^\infty}).
\end{equation*}
Constants of this type will always be nonnegative, non-decreasing, continuous functions of their arguments, and in general, $C$ will be used
to denote constants that may change from line to line. However, when we want to isolate
a particular constant for use later on, we will label the constant with a subscript, e.g., $C_1, C_2, C_3$, etc.

\subsubsection{Remainder terms\label{remainder}}
In order to simplify the handling of remainder terms whose exact form is not important, we will use, unless otherwise stated,
 upper case calligraphic letters, e.g.,
  $\mathcal{S}(\epsilon,t,x,\xi)$, $\mathcal{T}(\epsilon,t,x,\xi)$ and $\mathcal{U}(\epsilon,t,x,\xi)$, to denote vector-valued maps that are elements
of the space $E^0\bigl( (0,\epsilon_0)\times (0,2)\times \Tbb^n \times B_R\bigl(\mathbb{R}^N\bigr)\bigr)$, and upper case letters in typewriter font, e.g.,
$\texttt{S}(\epsilon,t,x,\xi)$, $\texttt{T}(\epsilon,t,x,\xi)$ and $ \texttt{U} (\epsilon,t,x,\xi)$, to denote vector-valued maps that are elements
of the space $E^1\bigl( (0,\epsilon_0)\times (0,2)\times \Tbb^n \times B_R\bigl(\mathbb{R}^N\bigr)\bigr)$.

\subsection{Conformal Einstein--Euler equations\label{conformalEinsteinEuler}}
The method we use to establish the existence of $\epsilon$-dependent families of solutions to the Einstein--Euler equations that
exist globally to the future is based on the one developed in \cite{oli5}.
Following \cite{oli5}, we introduce the conformal metric
\begin{equation}\label{E:CONFORMALTRANSF}
  \bar{g}^{\mu\nu}=e^{2\Psi}\tilde{g}^{\mu\nu}
\end{equation}
and the conformal four velocity
\begin{equation}\label{E:CONFORMALVELOCITY}
  \bar{v}^\mu=e^\Psi\tilde{v}^\mu.
\end{equation}
Under this change in variables, the Einstein equation \eqref{E:ORIGINALEESYSTEM.a} transforms as
\begin{equation}\label{E:CONFORMALEINSTEIN1}
  \bar{G}^{\mu\nu}=\bar{T}^{\mu\nu}:=e^{4\Psi}\tilde{T}^{\mu\nu}-e^{2\Psi}\Lambda\bar{g}^{\mu\nu}
  +2(\bar{\nabla}^\mu\bar{\nabla}^\nu\Psi-\bar{\nabla}^\mu\Psi\bar{\nabla}^\nu\Psi)
  -(2\bar{\Box}\Psi+|\bar{\nabla}\Psi|^2_{\bar{g}})
  \bar{g}^{\mu\nu},
\end{equation}
where $\bar{\Box}=\bar{\nabla}^\mu\bar{\nabla}_\mu$, $|\bar{\nabla}\Psi|^2_{\bar{g}}=\bar{g}^{\mu\nu}\bar{\nabla}_\mu\Psi\bar{\nabla}_\nu\Psi$, and here and in the following, unless otherwise specified, we raise and lower all coordinate tensor indices using the conformal metric.
Contracting the free indices of \eqref{E:CONFORMALEINSTEIN1} gives
\begin{align*}%\label{E:RT}
  \bar{R}=4\Lambda-\bar{T},
\end{align*}
where $\bar{T}=\bar{g}_{\mu\nu}\bar{T}^{\mu\nu}$, which we can use
to write \eqref{E:CONFORMALEINSTEIN1} as
\begin{align}
  -2\bar{R}^{\mu\nu}=-4\bar{\nabla}^\mu\bar{\nabla}^\nu\Psi+4\bar{\nabla}^\mu\Psi\bar{\nabla}^\nu\Psi
  -2\left[\bar{\Box}\Psi
  +2|\bar{\nabla}\Psi|^2+\left(\frac{1-\epsilon^2K}{2}\bar{\rho}+\Lambda\right)e^{2\Psi}\right]\bar{g}
  ^{\mu\nu}& \notag \\-2e^{2\Psi}(1+\epsilon^2K&)\bar{\rho} \bar{v}^\mu \bar{v}^\nu. \label{E:EXPANSIONOFEIN}
\end{align}
We will refer to these equations as the \textit{conformal Einstein equations}.

Letting $\tilde{\Gamma}^\gamma_{\mu\nu}$ and $\bar{\Gamma}^\gamma_{\mu\nu}$ denote the Christoffel symbols of the metrics $\tilde{g}_{\mu\nu}$
and $\bar{g}_{\mu\nu}$, respectively, the difference $\tilde{\Gamma}^\gamma_{\mu\nu}-
\bar{\Gamma}^\gamma_{\mu\nu}$ is readily calculated to be
\begin{equation*} \label{Gammadif}
\tilde{\Gamma}^\gamma_{\mu\nu}-\bar{\Gamma}^\gamma_{\mu\nu} =
\bar{g}^{\gamma \alpha}\bigl(\bar{g}_{\mu \alpha}\bar{\nabla}_\nu\Psi
+ \bar{g}_{\nu\alpha}\bar{\nabla}_\mu\Psi - \bar{g}_{\mu\nu} \bar{\nabla}_\alpha\Psi \bigr).
\end{equation*}
Using this, we can express the Euler equations \eqref{E:ORIGINALEESYSTEM.b} as
\begin{equation}\label{Confeul}
\bar{\nabla}_\mu \tilde{T}^{\mu \nu} = -6\tilde{T}^{\mu\nu}\nabla_\mu\Psi +\bar{g}_{\alpha\beta}\tilde{T}^{\alpha\beta}
\bar{g}^{\mu\nu}\bar{\nabla}_\mu\Psi,
\end{equation}
which we refer to as the \textit{conformal Euler equations}.

\begin{remark}
Due to our choice of time orientation, the conformal fluid four-velocity $\bar{v}^\mu$, which we assume is future oriented, satisfies
\begin{equation*} \label{future}
\bar{v}^0 < 0.
\end{equation*}
We also note that $\bar{v}^\mu$ is normalized by
\begin{equation} \label{normal}
\bar{v}^\mu\bar{v}_\mu = -1,
\end{equation}
which is a direct consequence of \eqref{vnorm}, \eqref{E:CONFORMALTRANSF} and \eqref{E:CONFORMALVELOCITY}.
\end{remark}

\subsection{Conformal factor\label{Conformalfactor}} Following  \cite{oli5}, we choose
\begin{align}\label{E:CONFORMALFACTOR}
  \Psi=-\ln{t}
\end{align}
for the conformal factor, and for later use, we introduce the background metric
\begin{align}\label{E:CONFORMALFLRW}
  \bar{h}=-\frac{3}{\Lambda}dtdt+E^2(t)\delta_{ij}d\bar{x}^id\bar{x}^j,
\end{align}
where
\begin{align}\label{E:DEFE}
  E(t)=a(t)t,
\end{align}
which is conformally related to the FLRW metric \eqref{FLRW.a}.
Using \eqref{E:TPTA},  we observe that $E(t)$ satisfies
\begin{align}\label{E:PTE}
  \partial_t E(t)=\frac{1}{t}E(t)\Omega(t),
\end{align}
where $\Omega(t)$ is defined by
\begin{align}\label{E:OMEGADEF}
  \Omega(t)=1-\ih\sqrt{\frac{\Lambda}{3}+\frac{\rho_H(t)}{3}}.
\end{align}
A short calculation then shows that the non-vanishing Christoffel symbols of the background metric \eqref{E:CONFORMALFLRW} are given by
\begin{align}\label{E:HOMCHRIS}
  \bar{\gamma}^0_{ij}=\frac{\Lambda}{3t}E^2\Omega\delta_{ij} \quad\text{and}\quad
  \bar{\gamma}^i_{j0}=\frac{1}{t} \Omega \delta^i_j,
\end{align}
from which we compute
\begin{align}\label{E:GAMMAFLRW}
  \bar{\gamma}^\sigma:=\bar{h}^{\mu\nu}\bar{\gamma}^\sigma_{\mu\nu}=\frac{\Lambda}{t}\Omega
  \delta^\sigma_0.
\end{align}

\subsection{Wave gauge\label{Wavegauge}}
For the hyperbolic reduction of the conformal Einstein equations, we use the \textit{wave gauge} from \cite{oli5}  that is defined by
\begin{align}\label{E:WAVEGAUGE}
   \bar{Z}^\mu = 0,
\end{align}
where
\begin{equation} \label{Zdef}
\bar{Z}^\mu = \bar{X}^\mu+\bar{Y}^\mu
\end{equation}
with
\begin{align}
  \bar{X}^\mu&:= \bar{\Gamma}^\mu
  - \bar{\gamma}^\mu =-\bar{\partial}_\nu\bar{g}^{\mu\nu}
  +\frac{1}{2}\bar{g}^{\mu\sigma}\bar{g}_{\alpha\beta}\bar{\partial}_\sigma\bar{g}^{\alpha\beta}
  -\frac{\Lambda}{t}\Omega\delta^\mu_0 \qquad  \bigl(\bar{\Gamma}^\mu=\bar{g}^{\sigma\nu}\bar{\Gamma}^\mu_{\sigma\nu}\bigr)
\label{E:XY}
\intertext{and}
  \bar{Y}^\mu&:=-2\bar{\nabla}^\mu\Psi+ \frac{2\Lambda}{3t} \delta^\mu_0 =-2(\bar{g}^{\mu\nu}-\bar{h}^{\mu\nu})\bar{\nabla}_\nu\Psi. \label{Ydef}
\end{align}

\subsection{Variables\label{vardefs}}
To obtain variables that are simultaneously suitable for establishing global existence and taking Newtonian limits, we switch to Newtonian coordinates $(x^\mu)=(t,x^i)$ and employ the following rescaled version
of the variables introduced in \cite{oli5}:
\begin{align}
  u^{0\mu}&=\frac{1}{\epsilon}\frac{\underline{\bar{g}^{0\mu}}-\bar{\eta}^{0\mu}}{2t}, \label{E:u.a} \\
  u^{0\mu}_0&=\frac{1}{\epsilon}\left(\underline{\bar{\partial}_0\bar{g}^{0\mu}}-\frac{3(\underline{\bar{g}^{0\mu}}
  -\bar{\eta}^{0\mu})}{2t}\right) \label{E:u.b},\\
  u^{0\mu}_i&=\frac{1}{\epsilon}\underline{\bar{\partial}_i\bar{g}^{0\mu}}, \label{E:u.c} \\
  u^{ij}(t,x) &=\frac{1}{\epsilon}\Bigl(\underline{\bar{\mathfrak{g}}^{ij}}-\bar{\eta}^{ij}), \label{E:u.d} \\
  u^{ij}_\mu &=\frac{1}{\epsilon}\underline{\bar{\partial}_{\mu}\bar{\mathfrak{g}}^{ij}}, \label{E:u.e} \\
  u&=\frac{1}{\epsilon}\underline{\bar{\mathfrak{q}}}, \label{E:u.f} \\
  u_\mu&=\frac{1}{\epsilon}\underline{\bar{\partial}_\mu\bar{\mathfrak{q}}}, \label{E:u.g} \\
 z_i&=\frac{1}{\epsilon}\underline{\bar{v}_i},\label{E:z.b}\\
 \zeta&=\frac{1}{1+\epsilon^2 K}\ln\bigl(t^{-3(1+\epsilon^2 K)}\underline{\bar{\rho}}\bigr), \label{E:ZETA}\\
 \intertext{and}
  \delta\zeta&=\zeta-\zeta_H, \label{E:DELZETA}
\end{align}
where
\begin{gather}
\bar{\mathfrak{g}}^{ij}=\alpha^{-1}\bar{g}^{ij},\qquad \alpha:=
(\det{\check{g}_{ij}})^{-\frac{1}{3}}=(\det{\bar{g}^{kl}})^{\frac{1}{3}}, \qquad \check{g}_{ij}=(\bar{g}^{ij})^{-1}, \label{E:GAMMA}\\
\bar{\mathfrak{q}}
=\bar{g}^{00}-\bar{\eta}^{00}-\frac{\Lambda}{3}\ln{\alpha}-\frac{2\Lambda}{3}\ln{E}\label{E:q}, \\
\bar{\eta}^{\mu\nu} = -\frac{\Lambda}{3}\delta^\mu_0\delta^\nu_0+\delta_{i}^\mu\delta^\nu_j\delta^{ij},  \label{etabardef}
\intertext{and}
  \zeta_H(t)=\frac{1}{1+\epsilon^2 K}\ln\bigl(t^{-3(1+\epsilon^2 K)}\rho_H(t)\bigr).
 \label{E:ZETAH1}
\end{gather}
As we show below in \S\ref{FLRWanal}, $\zeta_H$ is given by the explicit formula
\begin{equation} \label{E:ZETAH3}
  \zeta_H(t)=\zeta_H(1)-\frac{2}{1+\epsilon^2K}\ln{\left(\frac{C_0-t^{3(1+\epsilon^2K)}}{C_0-1}\right)},
\end{equation}
where the constants $C_0$ and $\zeta_H(1)$ are defined by
\begin{equation} \label{C0def}
 C_0=\frac{\sqrt{\Lambda+\rho_H(1)}+\sqrt{\Lambda}}{\sqrt{\Lambda+\rho_H(1)}-\sqrt{\Lambda}}>1
\end{equation}
and
\begin{equation}\label{zetaH1}
\zeta_H(1)=\frac{1}{1+\epsilon^2K}\ln{ \rho_H(1)},
\end{equation}
respectively. Letting
\begin{equation} \label{zetaHringdef}
\ring{\zeta}_H = \lim_{\epsilon\searrow 0} \zeta_H
\end{equation}
denote the Newtonian limit of $\zeta_H$, it is clear from the formula \eqref{E:ZETAH3} that
\begin{equation} \label{zetaHringform}
\ring{\zeta}_H (t) = \ln{ \rho_H(1)}- 2\ln{\left(\frac{C_0-t^{3}}{C_0-1}\right)}.
\end{equation}
For later use, we also define
\begin{align}
 z^i &= \frac{1}{\epsilon}\underline{\bar{v}^i}.\label{E:z.a}
\end{align}

\begin{remark}
It is important to emphasize that the above variables are defined on the $\epsilon$-independent manifold $M=(0,1]\times \mathbb{T}^3$. Effectively, we are treating components of
the geometric quantities with respect to the relativistic coordinates as scalars defined on $M_\epsilon$ and we
are pulling them back as scalars to $M$ by transforming to Newtonian coordinates. This process is necessary to obtain variables that have a well defined Newtonian limit.
We stress that for any fixed $\epsilon >0$, the gravitational and matter fields $\{\bar{g}^{\mu\nu},\bar{v}^\mu,\bar{\rho}\}$ on $M_\epsilon$ are completely determined by
the fields $\{u^{0\mu}, u^{ij},u,z_i,\zeta\}$ on $M$.
\end{remark}

\subsection{Conformal Poisson-Euler equations\label{CPEequations}}
The $\epsilon \searrow 0$ limit of the conformal Einstein--Euler equations on $M$
are the \textit{conformal cosmological Poisson--Euler equations}, which are defined by
\begin{align}
    \partial_t \mathring{\rho}+\sqrt{\frac{3}{\Lambda}}\partial_j\left(\mathring{\rho}\mathring{z}^j\right)
    &=\frac{3(1-\mathring{\Omega})}{t}\mathring{\rho}, \label{E:COSEULERPOISSONEQ.a}\\
    \sqrt{\frac{\Lambda}{3}}\mathring{\rho}\partial_t\mathring{z}^j+K
    \partial^j\mathring{\rho}+\mathring{\rho}\mathring{z}^i\partial_i\mathring{z}^j
    &=\sqrt{\frac{\Lambda}{3}}\frac{1}{t}\mathring{\rho}\mathring{z}^j-\frac{1}{2}
    \frac{3}{\Lambda}\mathring{\rho}\partial^j\mathring{\Phi}\quad\Bigl(\partial^j:= \frac{\delta^{ji}}{\mathring{E}^2} \partial_i\Bigr), \label{E:COSEULERPOISSONEQ.b}\\
    \Delta\mathring{\Phi}&=\frac{\Lambda}{3}\frac{\mathring{E}^2}{t^2} \Pi \mathring
    {\rho}  \qquad (\Delta:=\delta^{ij}\partial_i\partial_j), \label{E:COSEULERPOISSONEQ.c}
\end{align}
where $\Pi$ is the projection operator defined by
\begin{equation} \label{Pidef}
\Pi u = u - \langle 1, u \rangle,
\end{equation}
for $u\in L^2(\mathbb{T}^3)$,
\begin{equation}\label{Eringform}
\mathring{E}(t) =
  \left(\frac{C_0-t^{3}}{C_0-1}\right)^{\frac{2}{3}}
\end{equation}
and
\begin{equation} \label{Oringdef}
\ring{\Omega}(t) = \frac{2t^3}{t^3-C_0},
\end{equation}
with $C_0$ given by \eqref{C0def}.

It will be important for our analysis to introduce the modified density variable $\mathring{\zeta}$ defined by
\begin{equation*} \label{zetaringdef}
 \mathring{\zeta} = \ln(t^{-3}\mathring{\rho}),
\end{equation*}
which is the nonrelativistic version of the variable $\zeta$ introduced above, see \eqref{E:ZETA}. A short calculation then
shows that the conformal cosmological Poisson--Euler equations can be expressed in terms of this modified density as follows:
\begin{align}
    \partial_t \mathring{\zeta}+\sqrt{\frac{3}{\Lambda}}\bigl( \mathring{z}^j\partial_j \mathring{\zeta} + \partial_j\mathring{z}^j\bigr)
    &=-\frac{3\mathring{\Omega}}{t}, \label{CPeqn1}\\
    \sqrt{\frac{\Lambda}{3}}\partial_t\mathring{z}^j+ \mathring{z}^i\partial_i\mathring{z}^j+ K
    \partial^j\mathring{\zeta}
    &=\sqrt{\frac{\Lambda}{3}}\frac{1}{t}\mathring{z}^j-\frac{1}{2}
    \frac{3}{\Lambda}\partial^j\mathring{\Phi}, \label{CPeqn2}\\
    \Delta\mathring{\Phi}&=\frac{\Lambda}{3} t \mathring{E}^2\Pi e^{\mathring{\zeta}} .  \label{CPeqn3}
\end{align}

\subsection{Main Theorem}
We are in the position to state the main theorem of the article. The proof is given in \S \ref{S:MAINPROOF}.
%------------------------------
\begin{theorem}\label{T:MAINTHEOREM}
Suppose $s\in \mathbb{Z}_{\geq 3}$, $0<K\leq \frac{1}{3}$, $\Lambda >0$, $\rho_H(1)>0$, $r>0$ and the free initial data $\{\smfu^{ij},\smfu^{ij}_0,\breve{\rho}_0,\breve{\nu}^i\}$ is chosen so that $\smfu^{ij}\in B_r(H^{s+1}(\mathbb{T}^3,\mathbb{S}_3))$,
 $\smfu^{ij}_0\in H^{s}(\mathbb{T}^3,\mathbb{S}_3)$, $\breve{\rho}_0\in B_r(\bar{H}^s(\mathbb{T}^3))$,
$\breve{\nu}^i\in \bar{H}^s(\mathbb{T}^3,\mathbb{R}^3)$.
%-------------------------------------------------------------------------------------------------------------
Then for $r>0$ chosen small enough, there exists a constant $\epsilon_0>0$ and maps
$\breve{u}^{\mu\nu} \in C^\omega\bigl(X^s_{\epsilon_0,r}(\mathbb{T}^3),H^{s+1}(\mathbb{T}^3,\mathbb{S}_{4})\bigr)$,
$\breve{u} \in C^\omega\bigl(X^s_{\epsilon_0,r}(\mathbb{T}^3),H^{s+1}(\mathbb{T}^3)\bigr)$,
$\breve{u}^{\mu\nu}_0 \in C^\omega\bigl(X^s_{\epsilon_0,r}(\mathbb{T}^3),H^{s}(\mathbb{T}^3,\mathbb{S}_{4})\bigr)$, $\breve{u}_0 \in C^\omega\bigl(X^s_{\epsilon_0,r}(\mathbb{T}^3),H^{s}(\mathbb{T}^3)\bigr)$,
$\breve{z}=(\breve{z}_i) \in C^\omega\bigl(X^s_{\epsilon_0,r}(\mathbb{T}^3),H^{s}(\mathbb{T}^3,\mathbb{R}^3\bigr)\bigr)$, and
$\delta\breve{\zeta} \in C^\omega\bigl(X^s_{\epsilon_0,r}(\mathbb{T}^3),H^{s}(\mathbb{T}^3)\bigr)$,
such that\footnote{See Lemma \ref{L:INITIALTRANSFER} for details. }
\begin{align*}
     u^{\mu0}|_{t=1} &:= \breve{u}^{\mu0}(\epsilon, \smfu^{kl},\smfu^{kl}_0, \breve{\rho}_0, \breve{\nu}^k) = \epsilon \frac{\Lambda}{6}\Delta^{-1}\breve{\rho}_0 \delta^\mu_0+ \textrm{\em O}(\epsilon^2),\\
     u^{ij}|_{t=1} &:= \breve{u}^{ij}(\epsilon, \smfu^{kl},\smfu^{kl}_0, \breve{\rho}_0, \breve{\nu}^k) = \epsilon^2\left(\smfu^{ij}
     -\frac{1}{3}\smfu^{pq}\delta_{pq}\delta^{ij}\right) + \textrm{\em O}(\epsilon^3),\\
     u|_{t=1} &:= \breve{u}(\epsilon, \smfu^{kl},\smfu^{kl}_0, \breve{\rho}_0, \breve{\nu}^k) = \epsilon^2\frac{2\Lambda}{9}\smfu^{ij}\delta_{ij} + \textrm{\em O}(\epsilon^3),\\
     z_{i}|_{t=1} &:= \breve{z}_i(\epsilon, \smfu^{kl},\smfu^{kl}_0, \breve{\rho}_0, \breve{\nu}^k) = \frac{\breve{\nu}^j\delta_{ij}}{\rho_H(1)+\breve{\rho}_0} + \textrm{\em O}(\epsilon),\\
     \delta \zeta|_{t=1} &:= \delta\breve{\zeta}(\epsilon, \smfu^{kl},\smfu^{kl}_0, \breve{\rho}_0, \breve{\nu}^k) = \ln{\left(1+\frac{\breve{\rho}_0}{\rho_H(1)}\right)} + \textrm{\em O}(\epsilon^2), \\
u_0|_{t=1}&:=\breve{u}_0(\epsilon, \smfu^{ij},\smfu^{ij}_0, \breve{\rho}_0, \breve{\nu}^i) =   \textrm{\em O}(\epsilon)
\intertext{and}
u^{\mu\nu}_0|_{t=1}&:=\breve{u}^{\mu\nu}_0(\epsilon, \smfu^{kl},\smfu^{kl}_0, \breve{\rho}_0, \breve{\nu}^k) =   \textrm{\em O}(\epsilon)
\end{align*}
determine via the formulas \eqref{E:u.a}, \eqref{E:u.b}, \eqref{E:u.d}, \eqref{E:u.f}, \eqref{E:z.b}, \eqref{E:ZETA}, and
\eqref{E:DELZETA} a
solution of the gravitational and gauge constraint equations, see \eqref{E:CONSTRAINT}-\eqref{E:WAVECONSTRAINT}
and Remark \ref{wavegaugerem}.
Furthermore, there exists a $\sigma>0$, such that if
  \begin{align*}%\label{E:SMALLNESSOFDATA}
    \|\smfu^{ij}\|_{H^{s+1}}+\|\smfu^{ij}_0\|_{H^{s}}+\|\breve{\rho}_0\|_{H^s}+\|
  \breve{\nu}^i\|_{H^s} \leq \sigma,
  \end{align*}
  then there exist maps
  \begin{align*}
      u^{\mu\nu}_{\epsilon} & \in C^0((0,1], H^{s}(\mathbb{T}^3,\mathbb{S}_4))\cap C^1((0,1], H^{s-1}(\mathbb{T}^3,\mathbb{S}_4)),\\
u^{\mu\nu}_{\gamma,\epsilon} & \in C^0((0,1], H^{s}(\mathbb{T}^3,\mathbb{S}_4))\cap C^1((0,1], H^{s-1}(\mathbb{T}^3,\mathbb{S}_4)),\\
      u_\epsilon & \in C^0((0,1], H^{s}(\mathbb{T}^3))\cap C^1((0,1], H^{s-1}((\mathbb{T}^3)),\\
u_{\gamma,\epsilon} & \in C^0((0,1], H^{s}(\mathbb{T}^3))\cap C^1((0,1], H^{s-1}((\mathbb{T}^3)),\\
      \delta\zeta_\epsilon& \in C^0((0,1], H^{s}(\mathbb{T}^3))\cap C^1((0,1], H^{s-1}(\mathbb{T}^3)),\\
      z^\epsilon_i & \in C^0((0,1], H^{s}(\mathbb{T}^3),\mathbb{R}^3))\cap C^1((0,1], H^{s-1}(\mathbb{T}^3,\mathbb{R}^3)),
  \end{align*}
for $\epsilon \in (0,\epsilon_0)$, and
\begin{align*}
  \mathring{\Phi} & \in C^0((0,1], H^{s+2}(\mathbb{T}^3))\cap C^1((0,1], H^{s+1}(\mathbb{T}^3)),\\
  \delta \mathring{\zeta} & \in C^0((0,1], H^{s}(\mathbb{T}^3))\cap C^1((0,1], H^{s-1}(\mathbb{T}^3)),\\
  \mathring{z}_i & \in C^0((0,1], H^{s}(\mathbb{T}^3,\mathbb{R}^3))\cap C^1((0,1], H^{s-1}(\mathbb{T}^3,\mathbb{R}^3)),\\
\end{align*}
such that
\begin{enumerate}[(i)]
  \item \label{I} $\{u^{\mu\nu}_\epsilon(t, x), u_\epsilon(t, x), \delta\zeta_\epsilon(t, x), z_i^\epsilon(t, x)\}$ determines, via \eqref{E:CONFORMALTRANSF}, \eqref{E:CONFORMALVELOCITY}, \eqref{normal},  \eqref{E:u.a}, \eqref{E:u.d}, \eqref{E:u.f}, \eqref{E:z.b},
and \eqref{E:ZETA}-\eqref{etabardef}, a $1$-parameter family of solutions to the Einstein--Euler equations \eqref{E:ORIGINALEESYSTEM.a}-\eqref{E:ORIGINALEESYSTEM.b} in the wave gauge \eqref{E:WAVEGAUGE} on $M_\epsilon$,
  \item \label{II} $\{\mathring{\Phi}(t,x), \mathring{\zeta}(t,x):=\delta\mathring{\zeta}+\mathring{\zeta}_H, \mathring{z}^i(t,x):=
\mathring{E}(t)^{-2}\delta^{ij}\mathring{z}_j(t,x)\}$, with $\mathring{\zeta}_H$ and $\mathring{E}$ given
by \eqref{zetaHringform} and \eqref{Eringform}, respectively, solves the conformal cosmological Poisson-Euler equations \eqref{CPeqn1}-\eqref{CPeqn3} on $M$ with initial data $\mathring{\zeta}|_{t=1}= \ln(\rho_H(1)+ \breve{\rho}_0)$ and $\mathring{z}^i|_{t=1}=\breve{\nu}^i/(\rho_H(1)+\breve{\rho}_0)$,
  \item  \label{III} the uniform bounds
  \begin{gather*}
        \|\delta\mathring{\zeta}\|_{L^\infty((0,1],H^{s})} +  \|\mathring{\Phi}\|_{L^\infty((0,1],H^{s+2})} +  \|\mathring{z}_j\|_{L^\infty((0,1], H^{s})}+
      \|\delta\zeta_\epsilon\|_{L^\infty((0,1],H^{s})} +  \|\mathring{z}_j^\epsilon\|_{L^\infty((0,1], H^{s})} \lesssim 1
\intertext{and}
 \|u^{\mu\nu}_\epsilon\|_{L^\infty((0,1],H^{s})}+
\|u^{\mu\nu}_{\gamma,\epsilon}\|_{L^\infty((0,1],H^{s})} +  \|u_\epsilon\|_{L^\infty((0,1],H^{s})} +\|u_{\gamma,\epsilon}\|_{L^\infty((0,1],H^{s})} \lesssim 1,
\end{gather*}
hold for $\epsilon \in (0, \epsilon_0)$,
\item \label{IV} and the uniform error estimates
      \begin{gather*}
        \|\delta\zeta_\epsilon-\delta\mathring{\zeta}\|_{L^\infty((0,1],H^{s-1})} +
% \biggl\|\underline{v^0}-\sqrt{\frac{\Lambda}{3}}\biggr\|_{L^\infty((0,1],H^{s-1})}+
\| z_j^\epsilon-\mathring{z}_j\|_{L^\infty((0,1]\times H^{s-1})}\lesssim \epsilon, \\
        \|u^{\mu\nu}_{\epsilon,0}\|_{L^\infty((0,1],H^{s-1})}+
\|u^{\mu\nu}_{k,\epsilon}-\delta^\mu_0\delta^\nu_0\partial_k\mathring{\Phi}\|_{L^\infty((0,1],H^{s-1})}+
 \|u^{\mu\nu}_\epsilon\|_{L^\infty((0,1],H^{s-1})}\lesssim \epsilon
\intertext{and}
        \|u_{\gamma,\epsilon}\|_{L^\infty((0,1],H^{s-1})}+
\|u_\epsilon\|_{L^\infty((0,1],H^{s-1})}\lesssim \epsilon
      \end{gather*}
      hold for $\epsilon \in (0, \epsilon_0)$.
\end{enumerate}
\end{theorem}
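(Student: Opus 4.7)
The plan is to split the proof into four stages: (i) constructing the initial-data maps $\breve{u}^{\mu\nu},\breve{u},\breve{u}^{\mu\nu}_0,\breve{u}_0,\breve{z}_i,\delta\breve{\zeta}$ at $t=1$ by solving the gravitational and wave-gauge constraints \eqref{E:CONSTRAINT}--\eqref{E:WAVECONSTRAINT}; (ii) recasting the wave-gauge-reduced conformal Einstein--Euler equations in the rescaled variables of \S\ref{vardefs} as a symmetric hyperbolic system on $M$ of the singular Fuchsian form developed in \cite{oli5}, with $\epsilon$ appearing as a regular parameter down to $\epsilon=0$; (iii) invoking the global existence theorem for such systems, under a smallness hypothesis on the initial data, to obtain solutions $U_\epsilon$ on all of $(0,1]\times\mathbb{T}^3$ together with a Newtonian limit $U_0$ solving the conformal Poisson--Euler system \eqref{CPeqn1}--\eqref{CPeqn3}; (iv) translating the resulting $L^\infty_t H^s_x$ bounds and $L^\infty_t H^{s-1}_x$ convergence rates back to the original variables to read off the uniform bounds and error estimates stated in the theorem.

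For stage (i), at $t=1$ the Einstein constraints, the wave-gauge condition \eqref{E:WAVEGAUGE}, and the fluid normalization \eqref{normal} form an under-determined system on $\mathbb{T}^3$ which I would parametrize by the free data $(\smfu^{ij},\smfu^{ij}_0,\breve{\rho}_0,\breve{\nu}^i)\in X^s_{\epsilon_0,r}(\mathbb{T}^3)$. The solution operator is obtained via an analytic implicit function theorem in $H^s$, treating $\epsilon$ as an analytic parameter and exploiting the zero-mean subspaces $\bar{H}^s$ to invert $\Delta$ on $\mathbb{T}^3$. Taylor-expanding the resulting analytic maps about $\epsilon=0$ yields the stated leading-order expansions; in particular, the $O(\epsilon)$ piece of $u^{00}|_{t=1}$ coincides with $\frac{\Lambda}{6}\Delta^{-1}\breve{\rho}_0$, which is exactly the Newtonian potential generated by the density contrast $\breve{\rho}_0$.

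For stage (ii), substituting \eqref{E:u.a}--\eqref{E:DELZETA} into \eqref{E:EXPANSIONOFEIN} and \eqref{Confeul}, and using \eqref{E:TPTA}, \eqref{E:PTE}, \eqref{E:HOMCHRIS}, I verify that the first-order system for $U=(u^{\mu\nu},u^{\mu\nu}_\mu,u,u_\mu,\delta\zeta,z_i)$ takes the model form
\begin{equation*}
B^0(\epsilon,t,U)\partial_t U + B^i(\epsilon,t,U)\partial_i U = \tfrac{1}{t}\mathcal{B}(\epsilon,t,U)U + F(\epsilon,t,U),
\end{equation*}
with $B^0$ symmetric positive-definite, $B^i$ symmetric, and $\mathcal{B}$ possessing the block-dissipative structure required by the Fuchsian global-existence theorem of \cite{oli5}. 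Two features are crucial: the $1/t$ singularity comes from the conformal factor $\Psi=-\ln t$ and the FLRW Christoffels \eqref{E:HOMCHRIS}, and is exactly what produces dissipation toward $t=0$; while the factor $1/\epsilon$ built into \eqref{E:u.a}--\eqref{E:z.b} absorbs the potentially singular spatial derivatives inherited from the change of coordinates \eqref{Ncoordinates}, ensuring that all coefficients lie in $E^p$ down to $\epsilon=0$. At $\epsilon=0$, elimination of the constraint-determined components of $U$ recovers \eqref{CPeqn1}--\eqref{CPeqn3} with $\mathring{\Phi}$ identified with $\frac{\Lambda}{6}\Delta^{-1}\bigl(t\mathring{E}^2\Pi e^{\mathring{\zeta}}\bigr)$. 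Wave-gauge propagation is standard: $\bar{Z}^\mu$ satisfies a linear homogeneous wave equation, and the constraint solution from stage (i) arranges vanishing Cauchy data at $t=1$, so $\bar{Z}^\mu\equiv 0$ throughout and solutions of the reduced system are genuine Einstein--Euler solutions.

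The main obstacle is the uniform-in-$\epsilon$ structural verification in stage (ii). The system possesses two independent degenerations---the Fuchsian singularity as $t\searrow 0$ and the post-Newtonian singularity as $\epsilon\searrow 0$---and the variables of \S\ref{vardefs} are calibrated precisely so that the two are simultaneously compatible; nonetheless, off-diagonal blocks coupling the $(u,u_\mu)$ equations to the fluid variables and to $(u^{\mu\nu},u^{\mu\nu}_\mu)$ contain terms that naively appear as $O(1/\epsilon)$ or $O(1/(\epsilon t))$. Showing that these terms actually telescope, via cancellations enforced by the definitions \eqref{E:q} and \eqref{E:ZETA}, into coefficients in $E^p((0,\epsilon_0)\times(0,2)\times U)$---so that $\mathcal{B}$ and $F$ admit limits as $(\epsilon,t)\to(0,0)$ and the block-dissipative positivity of $\mathcal{B}$ holds uniformly---is the core algebraic computation, structurally analogous to but more intricate than the local-in-time matching of \cite{oli3,oli4}. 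Once this structure is in hand, the remaining analysis is a direct application of the energy/continuation machinery of \cite{oli5}, with the convergence rate $\|U_\epsilon-U_0\|_{L^\infty_t H^{s-1}_x}\lesssim\epsilon$ coming from a Gronwall estimate on the Fuchsian energy of the difference $U_\epsilon-U_0$.
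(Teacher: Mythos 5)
Your outline captures the right high-level architecture, but there is a genuine structural gap in stage (ii) that would derail the argument. You claim that the rescaling built into the variables \eqref{E:u.a}--\eqref{E:z.b} ``absorbs the potentially singular spatial derivatives'' so that the system takes the form $B^0\partial_t U + B^i\partial_i U = \tfrac{1}{t}\mathcal{B}U + F$ with all coefficients in $E^p$ down to $\epsilon=0$, after which the Fuchsian machinery of \cite{oli5} applies directly. This is not what happens. The reformulated system \eqref{E:REALEQ} necessarily retains a term $\tfrac{1}{\epsilon}\mathbf{C}^i\partial_i\mathbf{U}$ with \emph{constant}, symmetric $\mathbf{C}^i$; the $1/\epsilon$ singular spatial derivatives coming from the wave operator in Newtonian coordinates do not cancel, and the coefficients are emphatically not regular as $\epsilon\searrow 0$. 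What saves the day is that the constancy and antisymmetry of $\langle D^\alpha U,\mathbf{C}^i\partial_i D^\alpha U\rangle$ makes these terms drop out of the $H^s$ energy identity, which is the Browning--Klainerman--Kreiss--Majda--Schochet mechanism, and the actual technical work of \S\ref{S:MODEL} is to \emph{combine} this singular-limit energy structure with the Fuchsian $\tfrac{1}{t}\mathfrak{A}\mathbb{P}$ structure of \cite{oli5} under the joint Assumptions \ref{ASS1}; it is not a direct application of \cite{oli5}. Your sketch also omits two further ingredients needed to bring the equations into that class: the Newtonian potential subtraction $w^{0\mu}_k=u^{0\mu}_k-\delta^0_0\delta^\mu_0\partial_k\Phi$ of \S\ref{Npotsub}, without which the source term $\hat f^{0\mu}$ carries a genuinely $1/\epsilon$-singular nonlocal piece $-\tfrac{1}{\epsilon}\tfrac{\Lambda}{3}\tfrac{1}{t^2}E^2\delta\rho\,\delta^\mu_0$ that cannot be written as a constant-coefficient $\tfrac{1}{\epsilon}\mathbf{C}^i\partial_i$ term; and the auxiliary scalar $\phi$ of \eqref{E:DEFOFPHISMALL}, promoted to an evolution variable, which controls the spatial mean of the density perturbation.

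A second, smaller gap concerns the error estimate (iv). ``A Gronwall estimate on the Fuchsian energy of the difference $U_\epsilon-U_0$'' is not enough, because the leading-order $\epsilon$ correction of the gravitational components is not zero but is tied to the Newtonian potential: the right object is $z=\tfrac{1}{\epsilon}(w-\mathring w-\epsilon v)$ for a corrector $v$ built from $\mathring\Phi$ and the solution of the limit Poisson--Euler system (see $\mathbf{V}$ in \eqref{E:V}--\eqref{E:V5} and Theorem \ref{T:MAINMODELTHEOREM}). Without that corrector the $O(\epsilon)$ rate in \eqref{E:FINALEST1b}--\eqref{E:FINALEST1c}, and hence in item (iv) of the theorem, would fail for the components $u^{\mu\nu}_k$, which must converge to $\delta^\mu_0\delta^\nu_0\partial_k\mathring\Phi$, not to $0$. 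Your stage (i) is, by contrast, essentially the paper's \S\ref{S:INITIALIZATION} (Lottermoser-style reformulation plus analytic implicit function theorem on the zero-mean subspaces), and is fine as sketched.
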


\subsection{Future directions}
Although the $1$-parameter families of $\epsilon$-dependent solutions to the Einstein--Euler equations
from Theorem \ref{T:MAINTHEOREM} do provide a positive answer to the question of the existence of non-homogeneous relativistic
cosmological solutions that are globally approximated to the future by solutions of Newtonian gravity, it does not resolve the question for initial data that is relevant to our Universe. This is because these solutions have a characteristic size $\sim \epsilon$ and should be interpreted as cosmological versions of isolated systems \cite{gre,Oliynyk:PRD_2014,oli6}.  This defect was remedied on short time scales in \cite{oli6}. There the local-in-time existence of $1$-parameter families of $\epsilon$-dependent solutions to the Einstein--Euler equations that converge to solutions of the cosmological Poisson--Euler equations on \textit{cosmological spatial scales} was established.

In work that is currently in preparation \cite{liu1}, we combine the techniques developed in \cite{oli6} with a generalization of the ones developed in this article to extend the local-in-time existence
result from \cite{oli6} to a global-in-time result. This resolves the existence question of non-homogeneous relativistic cosmological solutions
that are globally approximated to the future on cosmological scales by solutions of Newtonian gravity, at least for initial data that is a small perturbation of FLRW initial data.
However, this is far from the end of the story because  there are relativistic effects that are important for precision cosmology that are not captured by the Newtonian solutions.
To understand these relativistic effects, higher-order post-Newtonian (PN) expansions are required starting with the 1/2-PN expansion, which is, by definition, the $\epsilon$ order correction
to the Newtonian gravity. In particular, it can be shown  \cite{OliRob} that the $1$-parameter families of solutions must admit a 1/2-PN expansion in order to view them on large scales as a linear perturbation of FLRW solutions. The importance of this result is that it shows it is possible to have rigorous solutions that fit within the standard cosmological paradigm of linear perturbations of FLRW metrics on large scales while, at the same time, are fully nonlinear on small scales of order $\epsilon$. Thus the natural next step is to extend the results of \cite{liu1} to include the existence of $1$-parameter families of $\epsilon$-dependent solutions to the Einstein--Euler equations
that admit 1/2-PN expansions globally to the future on cosmological scales. This is work that is currently in progress.

\subsection{Prior and related work} The future nonlinear stability of the FLRW fluid solutions for a linear equation of state $p=K\rho$
was first established  under the condition $0<K<1/3$ and the assumption of zero fluid vorticity
by Rodnianski and Speck \cite{RodnianskiSpeck:2013} using a generalization of a wave-based method developed by Ringstr\"{o}m in \cite{rin1}.
Subsequently, it has been shown \cite{Friedrich:2016,HadzicSpeck:2015,LubbeKroon:2013,spe}
that this future nonlinear stability result remains true for fluids with nonzero vorticity and
also for the equation of state parameter values $K=0$ and $K=1/3$, which correspond to dust and pure radiation, respectively.
It is worth noting that the stability results established in \cite{LubbeKroon:2013} and \cite{Friedrich:2016} for $K=1/3$ and
$K=0$, respectively,
rely on Friedrich's conformal method \cite{Friedrich:1986,Friedrich:1991},
which is completely different from the techniques used in \cite{HadzicSpeck:2015,RodnianskiSpeck:2013,spe} for the parameter values $0\leq K<1/3$.

In the Newtonian setting, the global existence to the future of solutions to the cosmological Poisson--Euler equations was established in
\cite{bra1} under a small initial data assumption and for a class of polytropic equations of state.
% that exclude the linear equations of state.

A new method was introduced in \cite{oli5} to prove the future nonlinear stability of the FLRW fluid solutions that was based on
a wave formulation of a conformal version of the Einstein--Euler equations. The global existence results in this article are established using
this approach. We also note that this method was recently used to establish the nonlinear stability of the FLRW fluid solutions that satisfy the generalized Chaplygin equation of state \cite{LeFlochWei:2015}.

\subsection{Overview}
In \S \ref{S:FORMULATIONOFEQ},  we employ the variables \eqref{E:u.a}--\eqref{E:DELZETA} and the wave gauge \eqref{E:WAVEGAUGE} to write the
conformal Einstein--Euler system, given by \eqref{E:EXPANSIONOFEIN} and \eqref{Confeul}, as a
non-local symmetric hyperbolic system, see \eqref{E:REALEQ}, that is jointly singular in $\epsilon$ and $t$.

In \S \ref{EEcont}, we state and prove a local-in-time existence and uniqueness result along with a continuation principle for solutions of the reduced
conformal Einstein--Euler
equations and discuss how solutions to the reduced conformal Einstein--Euler equations determine solutions to the singular system \eqref{E:REALEQ}. Similarly, in \S \ref{PEcont},
we state and prove a local-in-time existence and uniqueness result and continuation principle for solutions of the conformal cosmological Poisson--Euler equations \eqref{CPeqn1}--\eqref{CPeqn3}.

We establish in \S \ref{S:MODEL} uniform a priori estimates  for solutions to a class of symmetric hyperbolic equations that are
jointly singular in $\epsilon$ and $t$, and include both the formulation \eqref{E:REALEQ} of the conformal Einstein--Euler equations and the $\epsilon \searrow 0$
limit of these equations. We also establish \textit{error estimates}, that is, a priori estimates for the difference between solutions of
the singular hyperbolic equation and the corresponding $\epsilon \searrow 0$ limit equation.

In \S  \ref{S:INITIALIZATION}, we construct $\epsilon$-dependent $1$-parameter families of initial data for the reduced conformal Einstein--Euler equations that satisfy the constraint equations on
the initial hypersurface $t=1$ and limit as $\epsilon \searrow 0$ to solutions of the conformal cosmological Poisson--Euler equations.

Using the results from \S \ref{S:FORMULATIONOFEQ} to \S \ref{S:INITIALIZATION}, we complete the proof of Theorem \ref{T:MAINTHEOREM} in \S \ref{S:MAINPROOF}.

\section{A singular symmetric hyperbolic formulation of the conformal Einstein--Euler equations}\label{S:FORMULATIONOFEQ}
In this section, we employ the variables \eqref{E:u.a}--\eqref{E:DELZETA} and the wave gauge \eqref{E:WAVEGAUGE} to cast the
conformal Einstein--Euler system, given by \eqref{E:EXPANSIONOFEIN} and \eqref{Confeul}, into a form that is suitable for analyzing the limit $\epsilon \searrow 0$ globally to the future. More specifically, we show that the Einstein--Euler system can be
written as a symmetric hyperbolic system that is jointly singular in $\epsilon$ and $t$, and for which the singular terms have a specific structure. Crucially, the $\epsilon$-dependent singular terms are of a form that has been well studied beginning with the pioneering work of Browning, Klainerman, Kreiss and Majda \cite{bro,kla2,kla1,kre1,sch1,sch2}, while
the $t$-dependent singular terms are of
the type analyzed in \cite{oli5}.

\subsection{Analysis of the FLRW solutions\label{FLRWanal}}
As a first step in the derivation,  we find explicit formulas for the functions
 $\Omega(t)$, $\rho_H(t)$ and $E(t)$ that will be needed to show that the transformed conformal Einstein--Euler systems is of
the form analyzed in  \S \ref{S:MODEL}. We begin by differentiating \eqref{E:OMEGADEF} and observe, with the help of \eqref{FLRW.c}, \eqref{E:DEFE} and \eqref{E:PTE}, that it satisfies
the differential equation
\begin{align}\label{E:RICCATI1}
  -t\partial_t(1-\Omega)+\frac{3}{2}(1+\epsilon^2K)(1-\Omega)^2=\frac{3}{2}(1+\epsilon^2K).
\end{align}
Integrating gives
\begin{align}\label{E:OMEGAREP}
  \Omega(t)=\frac{2t^{3(1+\epsilon^2K)}}{t^{3(1+\epsilon^2K)}-C_0},
\end{align}
where $C_0$ is as defined above by \eqref{C0def}.
Then by \eqref{E:OMEGADEF}, we find that
\begin{align}\label{E:RHOHOM}
 \rho_H(t)=\frac{4C_0\Lambda t^{3(1+\epsilon^2K)}}{(C_0-t^{3(1+\epsilon^2K)})^2},
\end{align}
which, in turn, shows that $\zeta_H(t)$, as defined by \eqref{E:ZETAH1}, is given by the formula \eqref{E:ZETAH3}.

It is clear from the above formulas that $\Omega$, $\rho$ and $\zeta_H$, as functions of $(t,\epsilon)$, are in
$C^2([0,1]\times [0,\epsilon_0])\cap W^{3,\infty}([0,1]\times [-\epsilon_0,\epsilon_0])$ for any fixed $\epsilon_0 > 0$.
In particular, we can represent $t^{-1}\Omega$ and $\del{t}\Omega$ as
\begin{equation*}\label{E:OMEGATIMEIDENTITY}
  \frac{1}{t}\Omega= E^{-1}\partial_tE=t^{2+3\epsilon^2K}\texttt{Q}_1(t) \qquad \text{and} \qquad \partial_t\Omega=  t^{2+3\epsilon^2K}\texttt{Q}_2(t),
\end{equation*}
respectively, where we are employing the notation from \S \ref{remainder} for the remainder terms
$\texttt{Q}_1$ and $\texttt{Q}_2$. 

Using \eqref{E:OMEGAREP},  we can integrate \eqref{E:PTE} to obtain
\begin{align}\label{E:EREP}
  E(t)=  \exp{\left(\int_1^t\frac{2s^{2+3\epsilon^2K}}{s^{3(1+\epsilon^2K)}-C_0}ds\right)}=
  \left(\frac{C_0-t^{3(1+\epsilon^2K)}}{C_0-1}\right)^{\frac{2}{3(1+\epsilon^2K)}} \geq 1
\end{align}
for $t\in [0,1]$. From this formula, it is clear that $E\in C^2([0,1]\times  [-\epsilon_0,\epsilon_0])\cap W^{3,\infty}([0,1]\times  [-\epsilon_0,\epsilon_0])$, and that the Newtonian limit of $E$, denoted $\ring{E}$ and defined by
\begin{equation*} \label{Eringdef}
\ring{E}(t) = \lim_{\epsilon\searrow 0} E(t),
\end{equation*}
is given by the formula \eqref{Eringform}. Similarly, we denote the Newtonian limit of $\Omega$ by
\begin{equation*} \label{Oringlim}
\mathring{\Omega}(t) = \lim_{\epsilon\searrow 0} \Omega(t),
\end{equation*}
which we see from \eqref{E:OMEGAREP} is given by the formula \eqref{Oringdef}.

For later use, we observe that $E$, $\Omega$, $\rho_H$ and $\zeta_H$ satisfy
\begin{align}
  -E^{-1}\partial_t^2 E+\frac{1}{t}E^{-1}\partial_t E=\frac{1}{2\Lambda t^2} (1+3\epsilon^2K) \rho_H, \label{E:FRI1}\\
  E^{-1}\partial^2_t E+2E^{-2}(\partial_t E)^2-\frac{5}{t} E^{-1}\partial_t E=\frac{3}{2\Lambda t^2}(1-\epsilon^2K)\rho_H \label{E:FRI2}
\end{align}
and
\begin{align}\label{E:PTZETAH}
  \partial_t \zeta_H =-\frac{3}{t}\Omega=-3E^{-1}\partial_tE=-\bar{\gamma}^i_{i0}=-\bar{\gamma}^i_{0i}
  =t^{2+3\epsilon^2K}\texttt{Q}_3(t)
\end{align}   
as can be verified by a straightforward calculation using the formulas \eqref{E:ZETAH3} and \eqref{E:OMEGAREP}-\eqref{E:EREP}. By \eqref{zetaHringform} and \eqref{Oringdef}, it is easy to verify
\al{PTZETAH2}{
\del{t}\mathring{\zeta}_H=-\frac{3}{t}\mathring{\Omega}=\frac{6t^2}{C_0-t^3}.
}
We also record the following useful expansions of $t^{1+3\epsilon^2 K}$, $E(\epsilon, t)$ and $\Omega(\epsilon, t)$:  %which will be employed in the following context repeatedly. By Taylor's theorem, $t^{1+3\epsilon^2 K}$ can be expanded as
\al{CHI1}{
	t^{1+3\epsilon^2 K}=t+ \epsilon^2 \mathcal{X}(\epsilon,t) \quad \text{where} \quad  \mathcal{X}(\epsilon,t)=\frac{6K}{\epsilon^2}\int^\epsilon_0 \lambda t^{1+3\lambda^2 K}\ln t d\lambda
}
and
\al{EOEXP}{
	E(\epsilon, t)=\mathring{E}(t)+\epsilon\texttt{E}(\epsilon, t) \AND \Omega(\epsilon, t)=\mathring{\Omega}(t)+\epsilon \texttt{A}(\epsilon, t)
}
for $(\epsilon,t)\in (0,\epsilon_0)\times (0,1]$, where $\mathcal{X}$, $\texttt{E}$ and $\texttt{A}$ are all remainder terms
as defined in \S \ref{remainder}.

\subsection{The reduced conformal Einstein equations}
The next step in transforming the conformal Einstein--Euler system is
to replace the conformal Einstein equations  \eqref{E:EXPANSIONOFEIN} with the gauge
reduced version given by
\begin{align}
  -2\bar{R}^{\mu\nu}+&2\bar{\nabla}^{(\mu}\bar{Z}^{\nu)}+\bar{A}_\sigma^{\mu\nu}\bar{Z}^\sigma=-4\bar{\nabla}^\mu
  \bar{\nabla}^\nu\Psi+4\bar{\nabla}^\mu\Psi\bar{\nabla}^\nu\Psi  \notag \\
&-2\biggl[\bar{\Box}\Psi
  +2|\bar{\nabla}\Psi|^2
 +\biggl(\frac{1-\epsilon^2K}{2}\bar{\rho}+\Lambda\biggr)e^{2\Psi}\biggr]\bar{g}
  ^{\mu\nu}-2e^{2\Psi}(1+\epsilon^2K)\bar{\rho} \bar{v}^\mu \bar{v}^\nu,  \label{E:REDUCEDEINSTEIN}
\end{align}
where
\begin{align*}
  \bar{A}_\sigma^{\mu\nu}:=-\bar{X}^{(\mu}\delta^{\nu)}_\sigma+\bar{Y}^{(\mu}\delta^{\nu)}_\sigma.
\end{align*}
We will refer to these equations as the \textit{reduced conformal Einstein equations}.

\begin{proposition} \label{wgprop}
If the wave gauge \eqref{E:WAVEGAUGE} is satisfied, $\Psi$ is chosen as \eqref{E:CONFORMALFACTOR} and $\bar{\gamma}^\nu$ is given by \eqref{E:GAMMAFLRW}, then the following relations hold:
  \begin{align*}
  \bar{\nabla}^{(\mu}\bar{\gamma}^{\nu)}= \bar{g}^{0(\mu}\delta^{\nu)}_0\frac{\Lambda}{t}
  \left(\partial_t\Omega-\frac{1}{t}\Omega\right)-\frac{\Lambda}{2t}\Omega \partial_t \bar{g}^{\mu\nu} ,  \\
  \bar{\Box}\Psi=  \frac{1}{t^2}\bar{g}^{00}-\frac{1}{t} \bar{Y}^0+ \frac{1}{t} \bar{\gamma}^0,\qquad
  |\bar{\nabla}\Psi|^2 =  \frac{1}{t^2}\bar{g}^{00},\\
  \bar{Y}^\mu \bar{Y}^\nu= 4\bar{\nabla}^\mu \Psi \bar{\nabla}^\nu \Psi +\frac{8\Lambda}{3t^2} \delta^{(\mu}_0\bar{g}^{\nu)0} + \frac{4\Lambda^2}{9t^2} \delta^\mu_0 \delta^\nu_0 \\
\intertext{and}
  \bar{\nabla}^{(\mu}\bar{Y}^{\nu)}=-2\bar{\nabla}^\mu \bar{\nabla}^\nu \Psi -\frac{2\Lambda}{3t^2} \bar{g}^{0(\mu}\delta^{\nu)}_0 - \frac{\Lambda}{3t}\bar{\partial}_t \bar{g}^{\mu\nu}.
\end{align*}
\end{proposition}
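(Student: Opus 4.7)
The plan is to verify the five identities by direct coordinate computation, exploiting the explicit form of the conformal factor $\Psi=-\ln t$ and of $\bar{\gamma}^\sigma=(\Lambda/t)\Omega\,\delta^\sigma_0$ from \eqref{E:GAMMAFLRW}, together with the definition \eqref{Ydef} of $\bar{Y}^\mu$. The wave gauge \eqref{E:WAVEGAUGE} enters only once, for the $\bar{\Box}\Psi$ identity, via $\bar{\Gamma}^0=\bar{X}^0+\bar{\gamma}^0=-\bar{Y}^0+\bar{\gamma}^0$. The facts I would record upfront are $\bar{\nabla}_\alpha\Psi=-t^{-1}\delta^0_\alpha$ and hence $\bar{\nabla}^\mu\Psi=-t^{-1}\bar{g}^{\mu 0}$; the first immediately gives the $|\bar{\nabla}\Psi|^2$ identity, and, by elementary expansion of the square $(-2\bar{\nabla}^\mu\Psi+(2\Lambda/3t)\delta^\mu_0)(-2\bar{\nabla}^\nu\Psi+(2\Lambda/3t)\delta^\nu_0)$, also the $\bar{Y}^\mu\bar{Y}^\nu$ identity.

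The central technical step I would prove first is the symmetrization lemma
\[
\bar{g}^{\alpha(\mu}\bar{\Gamma}^{\nu)}_{\alpha 0}=-\tfrac{1}{2}\partial_t\bar{g}^{\mu\nu}.
\]
This follows by inserting the Christoffel formula $\bar{\Gamma}^\nu_{\alpha 0}=\tfrac{1}{2}\bar{g}^{\nu\beta}(\partial_\alpha\bar{g}_{0\beta}+\partial_0\bar{g}_{\alpha\beta}-\partial_\beta\bar{g}_{\alpha 0})$: once the outer $(\mu\nu)$-symmetrization is applied, the product $\bar{g}^{\alpha\mu}\bar{g}^{\nu\beta}$ becomes symmetric in $\alpha\leftrightarrow\beta$, which cancels the first and third Christoffel terms; the surviving middle term converts to $-\tfrac{1}{2}\partial_t\bar{g}^{\mu\nu}$ via the inverse-metric identity $\partial_0\bar{g}^{\mu\nu}=-\bar{g}^{\mu\alpha}\bar{g}^{\nu\beta}\partial_0\bar{g}_{\alpha\beta}$.

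With this lemma in hand, identity (1) drops out: treating $\bar{\gamma}^\nu=(\Lambda/t)\Omega\,\delta^\nu_0$ as a vector in these coordinates, $\bar{\nabla}^\mu\bar{\gamma}^\nu$ splits into the $\partial$-piece $\bar{g}^{\mu 0}\delta^\nu_0\,\partial_t((\Lambda/t)\Omega)$ plus $(\Lambda/t)\Omega\,\bar{g}^{\mu\alpha}\bar{\Gamma}^\nu_{\alpha 0}$; symmetrizing, applying the lemma, and using $\partial_t((\Lambda/t)\Omega)=(\Lambda/t)(\partial_t\Omega-t^{-1}\Omega)$ yields the stated expression. Identity (5) is analogous, applied to $t^{-1}\delta^\mu_0$, with the observation that $\bar{\nabla}^{(\mu}\bar{\nabla}^{\nu)}\Psi=\bar{\nabla}^\mu\bar{\nabla}^\nu\Psi$ since second covariant derivatives on scalars are symmetric. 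For identity (2), I expand $\bar{\Box}\Psi=\bar{g}^{\mu\nu}(\partial_\mu\partial_\nu\Psi-\bar{\Gamma}^\sigma_{\mu\nu}\partial_\sigma\Psi)$ for $\Psi=-\ln t$, obtaining $t^{-2}\bar{g}^{00}+t^{-1}\bar{\Gamma}^0$; the wave gauge then replaces $\bar{\Gamma}^0$ with $-\bar{Y}^0+\bar{\gamma}^0$, closing the identity.

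There is no serious obstacle beyond bookkeeping. The one subtlety worth flagging is that $\bar{\gamma}^\nu$ and $t^{-1}\delta^\nu_0$ are not genuine vector fields, but their covariant derivatives computed by the vector rule in these fixed coordinates produce precisely the terms that will later combine with the $\bar{X}^\mu$ part of $\bar{Z}^\mu$ in the reduced Einstein equation \eqref{E:REDUCEDEINSTEIN} to yield manifestly tensorial quantities, which is exactly the role this proposition plays in the sequel.
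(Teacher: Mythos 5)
Your proposal is correct and follows exactly the route the paper intends: the paper dispenses with details, citing \eqref{E:CONFORMALFACTOR}, \eqref{E:GAMMAFLRW} and \eqref{Zdef}--\eqref{Ydef} and ``straightforward computation,'' and your argument supplies precisely that computation, with the symmetrization lemma $\bar{g}^{\alpha(\mu}\bar{\Gamma}^{\nu)}_{\alpha 0}=-\tfrac{1}{2}\bar{\partial}_t\bar{g}^{\mu\nu}$ doing the real work for the first and last identities. Your observation that the wave gauge only enters in the $\bar{\Box}\Psi$ identity also matches the paper's own footnote to Remark \ref{wgrem}.
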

\begin{proof}
The proof follows from the formulas \eqref{E:CONFORMALFACTOR}, \eqref{E:GAMMAFLRW} and \eqref{Zdef}-\eqref{Ydef} via
straightforward computation.
\end{proof}
\begin{remark} \label{wgrem}
For the purposes of proving a priori estimates, we can always assume that the wave gauge \eqref{E:WAVEGAUGE} holds
since this gauge condition is known to propagate for solutions of the reduced Einstein--Euler equations assuming that the gravitational constraint
equations and the gauge constraint $\bar{Z}^\mu=0$ are satisfied on the initial hypersurface. The implication for our strategy of
obtaining global solutions to the future by extending local-in-time solutions via a continuation principle through
the use of a priori estimates is that we can assume that the wave gauge $\bar{Z}^\mu=0$
is satisfied, which, in particular, means that we can freely use the relations\footnote{In fact, the only
relation from Proposition \ref{wgprop} that relies on the gauge condition $\bar{Z}^\mu=0$ being satisfied is
$\bar{\Box}\Psi=  \frac{1}{t^2}\bar{g}^{00}-\frac{1}{t} \bar{Y}^0+ \frac{1}{t} \bar{\gamma}^0$.}
from Proposition \ref{wgprop} in the following.
\end{remark}

A short computation using the relations from Proposition \ref{wgprop} then shows that the reduced conformal Einstein Equations \eqref{E:REDUCEDEINSTEIN} can be written as
\begin{align}
  -2\bar{R}^{\mu\nu}+2\bar{\nabla}^{(\mu}\bar{X}^{\nu)}-\bar{X}^\mu\bar{X}^\nu
  +\frac{2\Lambda}{t}\Omega\bar{g}^{\mu\nu}
  =\frac{2\Lambda}{3t}\partial_t \bar{g}^{\mu\nu}-\frac{4\Lambda}{3t^2}\left(\bar{g}^{00}+\frac{\Lambda}{3}\right)\delta^\mu_0\delta^\nu_0
   -\frac{4\Lambda}{3t^2}\bar{g}^{0k}\delta^{(\mu}_0\delta^{\nu)}_k \hspace{1.5cm}& \notag\\
   -\frac{2}{t^2}\bar{g}^{\mu\nu}\left(\bar{g}^{00}+\frac{\Lambda}{3}\right)
   -(1-\epsilon^2K)\frac{\bar{\rho}}{t^2}\bar{g}^{\mu\nu}-2(1+\epsilon^2K)\frac{\bar{\rho}}{t^2} \bar{v}^\mu \bar{v}^\nu.&  \label{RedEin}
\end{align}
Recalling the formula (e.g., see \cite{fri, rin3})
\begin{align*} %\label{E:REPOFR}
  \bar{R}^{\mu\nu}=\frac{1}{2}\bar{g}^{\lambda\sigma}\bar{\partial}_\lambda\bar{\partial}_\sigma \bar{g}^{\mu\nu}+\bar{\nabla}^{(\mu}\bar{\Gamma}^{\nu)}+\frac{1}{2}(Q^{\mu\nu}-\bar{X}^\mu \bar{X}^\nu),
\end{align*}
where
\begin{equation}\label{E:QMUNU}
\begin{aligned}
  Q^{\mu\nu}=\bar{g}^{\lambda\sigma}\bar{\partial}_\lambda(\bar{g}^{\alpha\mu}\bar{g}^{\rho\nu})
  \bar{\partial}_\sigma\bar{g}_{\alpha\rho}
  +2\bar{g}^{\alpha\mu}\bar{\Gamma}^\eta_{\lambda\alpha}\bar{g}_{\eta\delta}\bar{g}^
  {\lambda\gamma}\bar{g}^{\rho\nu}\bar{\Gamma}^\delta_{\rho
  \gamma}+4\bar{\Gamma}^\lambda_{\delta\eta}
  \bar{g}^{\delta\gamma}\bar{g}_{\lambda(\alpha}\bar{\Gamma}^\eta_{\rho)
  \gamma}\bar{g}^{\alpha\mu}\bar{g}^{\rho\nu}
  +(\bar{\Gamma}^\mu
  -\bar{\gamma}^\mu)(\bar{\Gamma}^\nu-\bar{\gamma}^\nu),
\end{aligned}
\end{equation}
we can express \eqref{RedEin} as
\begin{align}
  -\bar{g}^{\lambda\sigma}\bar{\partial}_\lambda\bar{\partial}_\sigma\bar{g}^{\mu\nu}
  -2\bar{\nabla}^{(\mu}\bar{\gamma}^{\nu)}-Q^{\mu\nu}+\frac{2\Lambda}{t^2}\Omega\bar{g}^{\mu\nu}=\frac{2\Lambda}{3t}\partial_t \bar{g}^{\mu\nu}-\frac{4\Lambda}{3t^2}\left(\bar{g}^{00}+\frac{\Lambda}{3}\right)\delta^\mu_0
  \delta^\nu_0 \hspace{3.2cm}  \notag \\
   -\frac{4\Lambda}{3t^2}\bar{g}^{0k}\delta^{(\mu}_0\delta^{\nu)}_k
   -\frac{2}{t^2}\bar{g}^{\mu\nu}\left(\bar{g}^{00}+\frac{\Lambda}{3}\right)
   -(1-\epsilon^2K)\frac{\bar{\rho}}{t^2}\bar{g}^{\mu\nu}-2(1+\epsilon^2K)\frac{\bar{\rho}}{t^2} \bar{v}^\mu \bar{v}^\nu.
 \label{E:REDUCEDCONFEINSTEINEQ1}
\end{align}

By construction, the quadruple $\{\Psi,\bar{h}_{\mu\nu},\rho_H,\bar{v}_H^\mu\}$, see \eqref{FLRW.b}, \eqref{FLRW.c}, \eqref{E:CONFORMALFACTOR} and \eqref{E:CONFORMALFLRW}, is the conformal representation of a FLRW solution, and as such, it satisfies the conformal Einstein equations \eqref{E:CONFORMALEINSTEIN1} under the replacement
$\{\bar{g}_{\mu\nu},\bar{\rho},\bar{v}\}$ $\mapsto$ $\{\bar{h}_{\mu\nu},\rho_H,e^{\Psi}\tilde{v}_H^\mu\}$. Since $\bar{X}^\mu$ and $\bar{Y}^\mu$
vanish when $\bar{g}_{\mu\nu} \longmapsto \bar{h}_{\mu\nu}$, it is clear that the conformal Einstein equations \eqref{E:CONFORMALEINSTEIN1}
and the reduced conformal Einstein equations \eqref{E:REDUCEDCONFEINSTEINEQ1} coincide under the
replacement $\{\bar{g}_{\mu\nu},\bar{\rho},\bar{v}\}$ $\mapsto$ $\{\bar{h}_{\mu\nu},\rho_H,e^{\Psi}\tilde{v}_H^\mu\}$, and thus it
follows that $\bar{h}_{\mu\nu}$ satisfies
\begin{equation}\label{E:FRIFORM1}
\begin{aligned}
  -\bar{h}^{00}\bar{\partial}^2_0\bar{h}^{\mu\nu}
  -2\bar{\nabla}_H^{(\mu}\bar{\gamma}^{\nu)}-Q_H^{\mu\nu}+\frac{2\Lambda}{t^2}\Omega\bar{h}^{\mu\nu}=\frac{2\Lambda}{3t}\partial_t \bar{h}^{\mu\nu}
   -(1-\epsilon^2K)\frac{\rho_H}{t^2}\bar{h}^{\mu\nu}-2(1+\epsilon^2K)\frac{\rho_H}{t^2} \frac{\Lambda}{3} \delta_0^\mu \delta_0^\nu,
\end{aligned}
\end{equation}
where $\bar{\nabla}_H$ is the Levi-Civita connection of $\bar{h}_{\mu\nu}$,
\begin{align*}
  \bar{\nabla}_H^{(\mu}\bar{\gamma}^{\nu)}=\bar{h}^{0(\mu}\delta^{\nu)}_0\frac{\Lambda}{t}
  \left(\partial_t\Omega-\frac{1}{t}\Omega\right)-\frac{2\Lambda}{t}\Omega \del{t}\bar{h}^{\mu\nu}
\end{align*}
and
\begin{equation*}\label{E:QMUNUH}
\begin{aligned}
  Q_H^{\mu\nu}=& \bar{h}^{\lambda\sigma}\bar{\partial}_\lambda(\bar{h}^{\alpha\mu}\bar{h}
  ^{\rho\nu})\bar{\partial}_\sigma\bar{h}_{\alpha\rho}
  +2\bar{h}^{\alpha\mu}\bar{\gamma}^\eta_{\lambda\alpha}\bar{h}_{\eta\delta}
  \bar{h}^{\lambda\gamma}\bar{h}^{\rho\nu}\bar{\gamma}^\delta_{\rho
  \gamma}+4\bar{\gamma}^\lambda_{\delta\eta}
  \bar{h}^{\delta\gamma}\bar{h}_{\lambda(\alpha}\bar{\gamma}^\eta_{\rho)
  \gamma}\bar{h}^{\alpha\mu}\bar{h}^{\rho\nu}.
\end{aligned}
\end{equation*}
Using the formulas \eqref{E:HOMCHRIS} for the Christoffel symbols of $\bar{h}_{\mu\nu}$, it is not difficult to verify via
a routine calculation that independent components of the equation \eqref{E:FRIFORM1} agree up to scaling by a constant
with the equations \eqref{E:FRI1}-\eqref{E:FRI2}.

Setting $\nu=0$ and subtracting \eqref{E:FRIFORM1} from \eqref{E:REDUCEDCONFEINSTEINEQ1}, we obtain the equation
\begin{align}
 & -\bar{g}^{\lambda\sigma}\bar{\partial}_\lambda\bar{\partial}_\sigma(\bar{g}^{\mu0}
  -\bar{h}^{\mu0})
  -2(\bar{\nabla}^{(\mu}\bar{\gamma}^{0)}-\bar{\nabla}_H^{(\mu}\bar{\gamma}^{0)})
  -(Q^{\mu0}-Q_H^{\mu0})
+\frac{2\Lambda}{t^2}\Omega(\bar{g}^{\mu0}-\bar{h}^{\mu0})   \notag \\
 & \hspace{2.5cm} =\frac{2\Lambda}{3t}\partial_t (\bar{g}^{\mu0}
  -\bar{h}^{\mu0})-\frac{4\Lambda}{3t^2}\left(\bar{g}^{00} +\frac{\Lambda}{3}\right)\delta^\mu_0
  \delta^0_0
   -\frac{4\Lambda}{3t^2}\bar{g}^{0k}\delta^{(\mu}_0\delta^{0)}_k
   -\frac{2}{t^2}\bar{g}^{\mu0}\left(\bar{g}^{00}+\frac{\Lambda}{3}\right) \notag \\
&  \hspace{1.0cm} -(1-\epsilon^2K)\frac{1}{t^2}\Bigl[(\bar{\rho}-\rho_H)\bar{g}^{\mu0}
+\rho_H(\bar{g}^{\mu0}
   -\bar{h}^{\mu0})\Bigr]
   -2(1+\epsilon^2K)\frac{1}{t^2} \left[(\bar{\rho}-\rho_H)\bar{v}^\mu \bar{v}^0+\rho_H\left(\bar{v}^\mu \bar{v}^0-\frac{\Lambda}{3}\delta^\mu_0\right)\right]
\label{E:REDUCEDCONFEINSTEINEQ2}
\end{align}
for the difference $\bar{g}^{\mu0}-\bar{h}^{\mu 0}$. This equation is close to the form that we are seeking. The final step needed to
complete the transformation is to introduce a non-local modification which effectively subtracts out the contribution due to the Newtonian
potential.

For the spatial components, a more complicated transformation is required to bring those equations into the desired form. The first step is
to contract the $\mu=i, \nu=j$  components of \eqref{E:REDUCEDCONFEINSTEINEQ1} with $\check{g}_{ij}$, where we recall
that $(\check{g}_{kl})=(\bar{g}^{kl})^{-1}$ . A straightforward calculation,
using the identity $\check{g}_{kl}\bar{\partial}_\mu \bar{g}^{kl}=-3\alpha \bar{\partial}_\mu \alpha^{-1}$ (recall
$\alpha = \det(\bar{g}^{kl})$) and \eqref{E:REDUCEDCONFEINSTEINEQ2}
with $\mu=0$, shows that $\mathfrak{\bar{q}}$, defined previously by \eqref{E:q},
satisfies the equation
\begin{align}
 &-\bar{g}^{\lambda\sigma}\bar{\partial}_\lambda\bar{\partial}_\sigma\bar{\mathfrak{q}}
    -2\Lambda\bar{g}^{00}\frac{1}{t}\left(\partial_t\Omega-\frac{1}{t}\Omega\right)
    -2\bar{g}^{\lambda 0}
    \bar{\Gamma}^0_{\lambda 0}\frac{\Lambda}{t}\Omega+\frac{2\Lambda^2}{9t}\Omega\bar{\Gamma}^k_{i0}\delta^i_k
    +\frac{2\Lambda^2}{9t}
    \Omega\bar{g}^{0(i}\bar{\Gamma}^{j)}_{00}\check{g}_{ij} \notag\\
& \hspace{1.5cm}-\frac{2\Lambda}{3}\bar{g}^{00}
    \left(E^{-1}\partial^2_t E-E^{-2} (\partial_t E)^2\right)-Q+\frac{2\Lambda}{t^2}\Omega \left( \bar{g}^{00}-\frac{\Lambda}{3} \right)=
\frac{2\Lambda}{3t}\bar{\partial}_0 \bar{\mathfrak{q}}
+ \frac{4\Lambda^2}{9t}E^{-1}\partial_t E  \notag \\
&\hspace{2.5cm} -\frac{2}{t^2} \left(\bar{g}^{00}+\frac{\Lambda}{3}\right)^2-(1-\epsilon^2 K) \frac{\bar{\rho}}{t^2}\left( \bar{g}^{00}-\frac{\Lambda}{3} \right) - 2(1+\epsilon^2 K) \frac{\bar{\rho}}{t^2} \left(\bar{v}^0\bar{v}^0-\frac{\Lambda}{9}\check{g}_{ij}\bar{v}^i\bar{v}^j \right), \label{E:QEQ}
\end{align}
where
\begin{align}
  Q=Q^{00}+\frac{\Lambda}{9}\bar{g}^{\lambda\sigma}\bar{\partial}_\lambda\check{g}_{ij} \bar{\partial}_\sigma\bar{g}^{ij}-\frac{\Lambda}{9}\check{g}_{ij}Q^{ij} \notag.
\end{align}
Under the replacement $\{\bar{g}_{\mu\nu},\bar{\rho},\bar{v}\}$ $\mapsto$ $\{\bar{h}_{\mu\nu},\rho_H,e^{\Psi}\tilde{v}_H^\mu\}$,
equation \eqref{E:QEQ} becomes
\begin{align}
    &-2\Lambda\bar{h}^{00}\frac{1}{t}\left(\partial_t\Omega-\frac{1}{t}\Omega\right)
    -2\bar{h}^{\lambda 0}
    \bar{\gamma}^0_{\lambda 0}\frac{\Lambda}{t}\Omega+\frac{2\Lambda^2}{9t}\Omega\bar{\gamma}^k_{i0}\delta^i_k
    +\frac{2\Lambda^2}{9t}
    \Omega\bar{h}^{0(i}\bar{\gamma}^{j)}_{00}\check{h}_{ij}-\frac{2\Lambda}{3}\bar{h}^{00}
    \left(E^{-1}\partial^2_t E-E^{-2} (\partial_t E)^2\right)\notag \\
& \hspace{2.0cm}-Q_H+\frac{2\Lambda}{t^2}\Omega \left( \bar{h}^{00}-\frac{\Lambda}{3} \right) = \frac{4\Lambda^2}{9t}E^{-1}\partial_t E -(1-\epsilon^2 K) \frac{\rho_H}{t^2}\left( \bar{h}^{00}-\frac{\Lambda}{3} \right) - 2(1+\epsilon^2 K) \frac{\rho_H}{t^2} \frac{\Lambda}{3}, \label{E:QEQH}
\end{align}
where
\begin{align}
  Q_H=Q_H^{00}+\frac{\Lambda}{9}\bar{h}^{\lambda\sigma}\bar{\partial}_\lambda\check{h}_{ij} \bar{\partial}_\sigma\bar{h}^{ij}-\frac{\Lambda}{9}\check{h}_{ij}Q_H^{ij} \qquad \text{and} \qquad \check{h}_{ij}:=(\bar{h}^{kl})^{-1}=E^2\delta_{ij}, \notag
\end{align}
which, for the reasons discussed above, is satisfied by the conformal FRLW solution  $\{\Psi,\bar{h}_{\mu\nu},\rho_H,\bar{v}_H^\mu\}$.
Taking the difference between \eqref{E:QEQ} and \eqref{E:QEQH} yields the following equation for $\mathfrak{\bar{q}}$:
\begin{align}
    &-\bar{g}^{\lambda\sigma}\bar{\partial}_\lambda\bar{\partial}_\sigma\bar{\mathfrak{q}}
    -2\Lambda(\bar{g}^{00}-\bar{h}^{00})\frac{1}{t}\left(\partial_t\Omega-\frac{1}{t}\Omega\right)
    -2(\bar{g}^{\lambda 0}
    \bar{\Gamma}^0_{\lambda 0}-\bar{h}^{\lambda 0}
    \bar{\gamma}^0_{\lambda 0})\frac{\Lambda}{t}\Omega+\frac{2\Lambda^2}{9t}\Omega(\bar{\Gamma}^k_{i0}-\bar{\gamma}^k_{i0})
    \delta^i_k \notag \\
& \hspace{0.5cm} +\frac{2\Lambda^2}{9t}
    \Omega(\bar{g}^{0(i}\bar{\Gamma}^{j)}_{00}\check{g}_{ij}-
    \bar{h}^{0(i}\bar{\gamma}^{j)}_{00}\check{h}_{ij})-\frac{2\Lambda}{3}(\bar{g}^{00}-\bar{h}^{00})
    \left(E^{-1}\partial^2_t E-E^{-2} (\partial_t E)^2\right) +\frac{2\Lambda}{t^2}\Omega \left( \bar{g}^{00}-\bar{h}^{00} \right) \notag \\
    & \hspace{1.0cm} -(Q-Q_H) = \frac{2\Lambda}{3t}\bar{\partial}_0 \bar{\mathfrak{q}} -\frac{2}{t^2} \left(\bar{g}^{00}+\frac{\Lambda}{3}\right)^2-(1-\epsilon^2 K) \frac{1}{t^2}(\bar{\rho}-\rho_H)\left( \bar{g}^{00}-\frac{\Lambda}{3} \right)  -(1-\epsilon^2 K) \frac{1}{t^2}\rho_H\left( \bar{g}^{00}+\frac{\Lambda}{3} \right)  \notag \\
& \hspace{4.0cm}- 2(1+\epsilon^2 K) \frac{1}{t^2}\left[(\bar{\rho}-\rho_H) \left(\bar{v}^0\bar{v}^0-\frac{\Lambda}{9}\check{g}_{ij}\bar{v}^i\bar{v}^j \right)+\rho_H \left(\bar{v}^0\bar{v}^0-\frac{\Lambda}{3}-\frac{\Lambda}{9}\check{g}_{ij}\bar{v}^i\bar{v}^j \right)\right]. \label{E:REDUCEDCONFEINSTEINEQ3}
\end{align}

Next, denote
\begin{align*}%\label{E:TRACEFREEOP}
  \mathcal{L}^{ij}_{kl}=\delta^i_k\delta^j_l-\frac{1}{3}\check{g}_{kl}\bar{g}^{ij},
\end{align*}
and apply $\frac{1}{\alpha}\mathcal{L}^{ij}_{lm}$ to \eqref{E:REDUCEDCONFEINSTEINEQ1} with $\mu=l$, $\nu=m$. A calculation using the identities
\begin{align*}
  \alpha^{-1}\mathcal{L}^{ij}_{lm}\bar{\partial}_\sigma\bar{g}^{lm}=\bar{\partial}_\sigma
  \bar{\mathfrak{g}}^{ij} \qquad \text{and} \qquad \mathcal{L}^{ij}_{lm}\bar{g}^{lm}=0,
\end{align*}
where we recall that $\mathfrak{\bar{g}}^{ij}$ is defined by \eqref{E:GAMMA},
then shows that $\mathfrak{\bar{g}}^{ij}$ satisfies the equation
\begin{align}
    -\bar{g}^{\lambda\sigma}\bar{\partial}_\lambda\bar{\partial}_\sigma(\bar{\mathfrak{g}}^{ij}
    -\delta^{ij})
    -\frac{2}{\alpha}\mathcal{L}^{ij}_{lm}\bar{\nabla}^{(l}\bar{\gamma}^{m)}-\tilde{Q}^{ij}
    =\frac{2\Lambda}{3t}\partial_t(\bar{\mathfrak{g}}^{ij}-\delta^{ij})-\frac{2(1+\epsilon^2 K)}{\alpha}\frac{\bar{\rho}}{t^2} \mathcal{L}^{ij}_{lm}\bar{v}^l\bar{v}^m, \label{E:FRAKGIJ}
\end{align}
where
\begin{align*}
  \tilde{Q}^{ij}=-\bar{g}^{\lambda\sigma}\bar{\partial}_\lambda\biggl(\frac{1}{\alpha}\mathcal{L}^{ij}
  _{lm}\biggr) \bar{\partial}_\sigma\bar{g}^{lm}+\frac{1}{\alpha}\mathcal{L}^{ij}_{lm}Q^{lm}.
\end{align*}
Making the replacement $\{\bar{g}_{\mu\nu},\bar{\rho},\bar{v}\}$ $\mapsto$ $\{\bar{h}_{\mu\nu},\rho_H,e^{\Psi}\tilde{v}_H^\mu\}$,
equation \eqref{E:FRAKGIJ} becomes
\begin{align}\label{E:FRAKGIJH}
  -\frac{2}{\alpha_H}\mathcal{L}^{ij}_{lm,H}\bar{\nabla}_H^{(l}\bar{\gamma}^{m)}-\tilde{Q}^{ij}_H=0,
\end{align}
where
\begin{align*}
  \tilde{Q}_H^{ij}=-\bar{h}^{\lambda\sigma}\bar{\partial}_\lambda\biggl(\frac{1}{\alpha_H}\mathcal{L}^{ij}
  _{lm,H}\biggr) \bar{\partial}_\sigma\bar{h}^{lm}+\frac{1}{\alpha_H}\mathcal{L}^{ij}_{lm,H}Q^{lm}_H,
  \qquad \alpha_{H} =(\det{\check{h}_{ij}})^{-\frac{1}{3}}=E^{-2}
\end{align*}
and
\begin{align*}
  \mathcal{L}^{ij}_{kl,H}=\delta^i_k\delta^j_l-\frac{1}{3}\delta_{kl}\delta^{ij}.
\end{align*}
Subtracting \eqref{E:FRAKGIJ} by \eqref{E:FRAKGIJH} gives
\begin{align}
    -\bar{g}^{\lambda\sigma}\bar{\partial}_\lambda\bar{\partial}_\sigma(\bar{\mathfrak{g}}^{ij}
    -\delta^{ij})
    -2\biggl(\frac{1}{\alpha}\mathcal{L}^{ij}_{lm}\bar{\nabla}^{(l}\bar{\gamma}^{m)}
    -\frac{1}{\alpha_H}&\mathcal{L}^{ij}_{lm,H}\bar{\nabla}_H^{(l}\bar{\gamma}^{m)}\biggr)
    -(\tilde{Q}^{ij}-\tilde{Q}^{ij}_H) \notag \\
    &=\frac{2\Lambda}{3t}\partial_t(\bar{\mathfrak{g}}^{ij}-\delta^{ij})-
\frac{2(1+\epsilon^2 K)}{\alpha}\frac{\bar{\rho}}{t^2} \mathcal{L}^{ij}_{lm}\bar{v}^l\bar{v}^m. \label{E:REDUCEDCONFEINSTEINEQ4}
\end{align}

\subsection{$\epsilon$-expansions and remainder terms\label{epexpansions}} The next step in the transformation of the reduced conformal Einstein--Euler equations requires us to understand the lowest-order $\epsilon$-expansion for a number of quantities.  We compute and collect together these
expansions in this section. Throughout this section, we work in Newtonian coordinates, and we frequently employ the notation
\eqref{Neval} for evaluation in Newtonian coordinates, and the notation from
\S\ref{remainder} for remainder terms.

First, we observe, using \eqref{E:u.a}, \eqref{E:u.f} and \eqref{E:q}, that we can write $\alpha$
as
\begin{align}\label{E:SMALLGAMMA}
\underline{\alpha} =E^{-2}\exp{\biggl(\epsilon \frac{3}{\Lambda}\left(2tu^{00}-u\right)\biggr)}.
%\quad\text{and}\quad
%\gamma^{-1}=E^2\exp{\left(\epsilon \frac{3}{\Lambda}\left(u-2tu^{00}\right)\right)}
\end{align}
Using this, we can write the conformal metric as
\begin{equation}\label{E:GIJ}
   \underline{ \bar{g}^{ij}}
    = E^{-2}\delta^{ij}+\epsilon \Theta^{ij},
\end{equation}
where
\begin{equation} \label{Thetaijdef}
  \Theta^{ij} = \Theta^{ij}(\epsilon,t,u,u^{\mu\nu}):= \frac{1}{\epsilon}\left(\underline{\alpha}
    -E^{-2}\right)(\delta^{ij}+\epsilon u^{ij})+E^{-2}u^{ij},
\end{equation}
and $\Theta^{ij}$ satisfies  $\Theta^{ij}(\epsilon,t,0,0)=0$  %$\Theta^{ij} \in E^{1,0}((0,\epsilon_0)\times(0,T_0]\times \Rbb \times \mathbb{S}_4)$ (by chain rule to differentiating $\Theta^{ij}$ with respect to $\epsilon$ with $\mathfrak{T}$, $\chi$, $E$ and $\Omega$ as the intermediate variables and applying \eqref{E:CHI1}-\eqref{E:ERINGORINGIN2}, one can verify this. And from now on, the subsequential remainder terms which are in $E^{1,0}$ are based on the same reason and for the simplicity of the notations, we drop the domain of space $E^{1,0}$ by denoting it simply as $E^{1,0}$, whose domain is clear for the target variables)
and the $E^1$-regularity properties of a remainder term, see \S\ref{remainder}.
By the definition of $u^{0\mu}$, see \eqref{E:u.a}, we have that
\begin{align}\label{E:G0MU}
 \underline{\bar{g}^{0\mu}}=\bar{\eta}^{0\mu}+2\epsilon t u^{0\mu}
\end{align}
and, see \eqref{E:u.b} and \eqref{E:u.c},  for the derivatives
\begin{align}\label{E:PIG}
 \underline{\bar{\partial}_0\bar{g}^{0\mu}}=\epsilon(u^{0\mu}_0+3u^{0\mu}),
%\quad 2t\partial_tu^{0\mu}=u^{0\mu}_0+u^{0\mu}
\quad \text{and}\quad \underline{\bar{\partial}_i \bar{g}^{0\mu}}=2t\partial_i u^{0\mu}=\epsilon u^{0\mu}_i.
\end{align}
Additionally, by \eqref{E:SMALLGAMMA}, we see, with the help of \eqref{E:PTE}, \eqref{E:u.a}--\eqref{E:u.c}
and \eqref{E:u.f}--\eqref{E:u.g},  that
\begin{align*}
  \partial_t \underline{\alpha}=-2\underline{\alpha}\frac{1}{t}\Omega+\epsilon\underline{\alpha}\frac{3}{\Lambda}
  (3u^{00}+u^{00}_0-u_0)\quad \text{and}\quad \partial_i\underline{\alpha}=\epsilon^2\frac{3}{\Lambda}
\underline{\alpha}( u^{00}_i-u_i).
\end{align*}
Then differentiating \eqref{E:GIJ}, we find, using the above expression and \eqref{E:u.d}--\eqref{E:u.e}, that
\begin{align}
  \underline{\bar{\partial}_\sigma\bar{g}^{ij}}
%&-\frac{2}{t}\Omega\delta^0_\sigma
  %(E^{-2}\delta^{ij}+\epsilon \Theta^{ij})+\epsilon \underline{\alpha} \left[u^{ij}_\sigma+ \frac
  %{3}{\Lambda}(3u^{00}\delta^0_\sigma+u^{00}_\sigma-u_\sigma)(\delta^{ij}+\epsilon u^{ij})\right]
  =\bar{\partial}_\sigma\bar{h}^{ij}-\epsilon\frac{2}{t}\delta^0_\sigma\Omega \Theta^{ij}+\epsilon \underline{\alpha} \left[u^{ij}_\sigma+ \frac
  {3}{\Lambda}(3u^{00}\delta^0_\sigma+u^{00}_\sigma-u_\sigma)(\delta^{ij}+\epsilon u^{ij})\right] .  \label{E:PGIJ}
\end{align}

Since   $\check{g}_{ij}$ is, by definition, the inverse of $\bar{g}^{ij}$,  it follows from \eqref{E:GIJ} and Lemma \ref{E:EXPANSIONOFINVERSE2}
that we can express $\check{g}_{ij}$ as
\begin{align}\label{E:CHECKGIJ}
  \underline{\check{g}_{ij}}
  =E^2\delta_{ij}+\epsilon \texttt{S}_{ij}(\epsilon,t,u,u^{\mu\nu}),
\end{align}
where $\texttt{S}_{ij}(\epsilon,t,0,0)=0$. From \eqref{E:GIJ}, \eqref{E:G0MU} and  Lemma \ref{E:EXPANSIONOFINVERSE2}, we then see that
\begin{align}\label{E:G_MUNU}
  \underline{\bar{g}_{\mu\nu}}=\bar{h}_{\mu\nu}+\epsilon \Xi_{\mu\nu}(\epsilon,t, u^{\sigma\gamma}, u),
\end{align}
where $\Xi_{\mu\nu}$ satisfies $\Xi_{\mu\nu}(\epsilon,t, 0, 0)=0$ and the $E^1$-regularity properties of a remainder term.
Due to the identity
\begin{align}\label{E:IDENTITYPG}
  \bar{\partial}_\lambda\bar{g}_{\mu\nu}=-\bar{g}_{\mu\sigma}\bar{\partial}_\lambda
\bar{g}^{\sigma\gamma}
\bar{g}_{\gamma\nu}
\end{align}
we can easily derive from \eqref{E:PGIJ} and \eqref{E:G_MUNU} that
\begin{align}\label{E:PGMUNU}
  \underline{\bar{\partial}_\sigma \bar{g}_{\mu\nu}}=\bar{\partial}_\sigma \bar{h}_{\mu\nu}+
\epsilon \texttt{S}_{\mu\nu\sigma}(\epsilon,t, u^{\alpha\beta}, u, u^{\alpha\beta}_\gamma, u_\gamma),
\end{align}
where $\texttt{S}_{\mu\nu\sigma}(\epsilon,t, 0, 0, 0, 0)=0$, 
which in turn, implies that 
\begin{align}\label{E:CHRISTOFFEL}
  \underline{\bar{\Gamma}^\sigma_{\mu\nu}}-\bar{\gamma}^\sigma_{\mu\nu}&=\epsilon
\texttt{I}^\sigma_{\mu\nu}(\epsilon,t, u^{\alpha\beta}, u, u^{\alpha\beta}_\gamma, u_\gamma),
\end{align}
where $\texttt{I}^\sigma_{\mu\nu}( \epsilon,t, 0,0,0,0)=0$.
Later, we will also need the explicit form of the next order term in the $\epsilon$-expansion for $\bar{\Gamma}^i_{00}$. To compute this, we first observe that the expansions 
\begin{align}
	\underline{\bar{\partial}_0\bar{g}_{k0}}&=\epsilon \frac{3}{\Lambda} \delta_{ki}E^2 [u^{0i}_0+(3+4\Omega)u^{0i}]+\epsilon^2 \texttt{S}_{k00}(\epsilon,t, u^{\alpha\beta}, u, u^{\alpha\beta}_\gamma, u_\gamma)\label{E:DGEP1}
\intertext{and}
	\underline{\bar{\partial}_k\bar{g}_{00}}&=-\epsilon \left(\frac{3}{\Lambda} \right)^2 u^{00}_k+\epsilon^2 \texttt{S}_{00k}(\epsilon,t, u^{\alpha\beta}, u, u^{\alpha\beta}_\gamma, u_\gamma),\label{E:DGEP2}
\end{align}
where $\texttt{S}_{k00}(\epsilon,t,0,0,0,0)=\texttt{S}_{00k}(\epsilon,t,0,0,0,0)=0$, 
follow from \eqref{E:GIJ}, \eqref{E:G0MU}, \eqref{E:PGIJ}, \eqref{E:IDENTITYPG} and a straightforward calculation.
Using \eqref{E:DGEP1} and \eqref{E:DGEP2},  it is then not difficult to verify that
\begin{align} \label{E:GAMMAI00}
	\underline{\bar{\Gamma}^i_{00}}=&\epsilon \frac{3}{\Lambda}[u^{0i}_0+(3+4\Omega)u^{0i}]+\epsilon\frac{1}{2}\left(\frac{3}{\Lambda}\right)^2E^{-2}\delta^{ik}u^{00}_k
+\epsilon^2\texttt{I}^i_{00}(\epsilon,t, u^{\alpha\beta}, u, u^{\alpha\beta}_\gamma, u_\gamma),  \\
	\underline{\bar{\Gamma}^i_{i0}}-\underline{\bar{\gamma}^i_{i0}}=%&-\frac{1}{2}(\bar{g}_{kj}-\bar{h}_{kj})\del{t}\bar{h}^{kj}-\frac{1}{2}\bar{h}_{kj}\del{t}(\bar{g}^{kj}-\bar{h}^{kj})+\epsilon^2\mathcal{\check{I}}^i_{i0}(\epsilon,t, u^{\alpha\beta}, u, u^{\alpha\beta}_\gamma, u_\gamma),\\
	& \epsilon \Xi_{kj}E^{-2}\frac{\Omega}{t}\delta^{kj}-\epsilon \frac{1}{2} E^2\delta_{kj}\bigl[-\frac{2}{t}\Omega \Theta^{ij}+E^{-2}\bigl(u^{ij}_0+\frac{3}{\Lambda}(3u^{00} + u^{00}_0-u_0 )\delta^{ij}\bigr)\bigr]+\epsilon^2\texttt{I}^i_{i0}(\epsilon,t, u^{\alpha\beta}, u, u^{\alpha\beta}_\gamma, u_\gamma), \label{E:GSUBSTG}
\end{align}
where $\texttt{I}^i_{00}(\epsilon,t,0,0,0,0)=0$ and $\texttt{I}^i_{i0}(\epsilon,t,0,0,0,0)=0$. 

Continuing on, we observe from \eqref{E:ZETA} that we can express the proper energy density
in terms of $\zeta$  by
\begin{align}\label{E:ZETA2}
  \rho:= \underline{\bar{\rho}}= t^{3(1+\epsilon^2 K)}e^{(1+\epsilon^2 K) \zeta},
\end{align}
and correspondingly,  by \eqref{E:ZETAH1},
\begin{align}\label{E:ZETAH2}
  \rho_H= t^{3(1+\epsilon^2 K)}e^{(1+\epsilon^2 K) \zeta_H }
\end{align}
for the FLRW proper energy density.
From \eqref{E:DELZETA}, \eqref{E:ZETA2} and \eqref{E:ZETAH2}, it is then clear that we can express the difference between
the proper energy densities $\rho$ and $\rho_H$ in terms of $\delta\zeta$ by
\begin{align}\label{E:DELRHO} \delta \rho:=\rho-\rho_H=t^{3(1+\epsilon^2K)}e^{(1+\epsilon^2K)\zeta_H}
  \Bigl(e^{(1+\epsilon^2K)\delta\zeta}-1\Bigr).
\end{align}

Due to the normalization $\bar{v}^\mu\bar{v}_\mu=-1$, only three components of $\bar{v}_\mu$ are independent. Solving
$\bar{v}^\mu\bar{v}_\mu=-1$
for $\bar{v}_0$ in terms of the components $\bar{v}_{i}$, we obtain
\begin{align*}
  \bar{v}_0=\frac{-\bar{g}^{0i}\bar{v}_i+\sqrt{(\bar{g}^{0i}\bar{v}_i)^2-\bar{g}^{00}
  (\bar{g}^{ij}\bar{v}_i\bar{v}_j+1)}}{\bar{g}^{00}},
\end{align*}
which, in turn, using definitions \eqref{E:u.a}, \eqref{E:u.d}, \eqref{E:u.f}, \eqref{E:z.b}, we can write as
\begin{equation} \label{E:V_0}
\underline{\bar{v}_0}=-\frac{1}{\sqrt{-\underline{\bar{g}^{00}}}}+ \epsilon^2 \texttt{V}_2(\epsilon,t, u, u^{\mu\nu}, z _j),
\end{equation}
where  $\texttt{V}_2(\epsilon,t, u, u^{\mu\nu},0)=0$.
From this and the definition
$\bar{v}^0 =  \bar{g}^{0\mu}\bar{v}_\mu$,  we get
\begin{equation} \label{E:V^0}
\underline{\bar{v}^0}=\sqrt{-\underline{\bar{g}^{00}}}+  \epsilon^2
\texttt{W}_2(\epsilon,t, u, u^{\mu\nu}, z _j),
\end{equation}
where  $\texttt{W}_2(\epsilon,t, u, u^{\mu\nu},0)=0$. 
We also observe that
\begin{align}\label{E:VELOCITY}
 \underline{\bar{v}^k}=\epsilon (2tu^{0k}\underline{\bar{v}_0}+
\underline{\bar{g}^{ik}} z _i)\quad \text{and} \quad z ^k=2tu^{0k}\underline{\bar{v}_0}+\underline{\bar{g}^{ik}} z _i
\end{align}
follow immediately from definitions \eqref{E:z.b} and \eqref{E:z.a}.
For later use,
note that $z^k$ can also be written in terms $(\underline{\bar{g}^{\mu\nu}}, z_j)$ by
\begin{equation}\label{E:Z_IANDZ^I}
   z ^i=\underline{\bar{g}^{ij} }z _j+\frac{\underline{\bar{g}^{i0}}}{\underline{\bar{g}^{00}}}
\left[-\underline{\bar{g}^{0j}} z _j
  +\frac{1}{\epsilon}\sqrt{-\underline{\bar{g}^{00}}}\sqrt{1-\frac{1}{\underline{\bar{g}^{00}}}\epsilon^2(\underline{\bar{g}^{0j}}
   z _j)^2+\epsilon^2\underline{\bar{g}^{jk}} z _j z _k}\right].
\end{equation}

Using the above expansions, we are able to expand  $Q^{\mu\nu}$, $Q$ and $\tilde{Q}^{ij}$.
\begin{proposition}\label{T:Q}
  $Q^{\mu\nu}$, $Q$ and $\tilde{Q}^{ij}$ admit the following expansions:
  \begin{align*}
    Q^{\mu\nu}-Q_H^{\mu\nu}=\epsilon\mathcal{Q}^{\mu\nu}, \quad  % \label{E:Q1.a}\\
    Q-Q_H=\epsilon\mathcal{Q},  % \label{E:Q1.b}
\quad \text{and} \quad
   \tilde{Q}^{ij}-\tilde{Q}_H^{ij}=\epsilon\mathcal{\tilde{Q}}{}^{ij}, %\label{E:Q1.c}
  \end{align*}
where
  \begin{align*}
    \mathcal{Q}^{\mu\nu}=E^{-2}\frac{\Omega}{t}\mathcal{R}^{\mu\nu \gamma}(t) u_\gamma^{00}+\frac{\Omega}{t}\mathcal{R}^{\mu\nu}(t,\mathbf{u})+\epsilon \mathcal{S}^{\mu\nu}(\epsilon,t,u^{\alpha\beta},u,u^{\alpha\beta}_\sigma, u_\sigma), \\ % \label{E:Q.a}\\
    \mathcal{Q}=E^{-2}\frac{\Omega}{t}\mathcal{R}^{\gamma}(t) u_\gamma^{00}+ \frac{\Omega}{t}\mathcal{R}(t, \mathbf{u})+\epsilon \mathcal{S}(\epsilon,t,u^{\alpha\beta},u,u^{\alpha\beta}_\sigma, u_\sigma) ,\\ %\label{E:Q.b} \\
    \mathcal{\tilde{Q}}^{ij}=E^{-2}\frac{\Omega}{t}\mathcal{\tilde{R}}^{ij \gamma}(t) u_{\gamma}^{00}+\frac{\Omega}{t}\mathcal{\tilde{R}}{}^{ij}(t,\mathbf{u})+\epsilon \mathcal{\tilde{S}}{}^{ij}(\epsilon,t,u^{\alpha\beta},u,u^{\alpha\beta}_\sigma, u_\sigma), 
%\label{E:Q.c}
  \end{align*}
with\footnote{Here, in line with our conventions, see  \S \ref{remainder}, the quantities written with calligraphic letters, e.g., $\mathcal{S}$ and $\mathcal{R}$,
denote remainder terms, and consequently also satisfy the properties of remainder terms.}    $\mathbf{u}=(u^{\alpha\beta},u,u^{0i}_\sigma,u^{ij}_\sigma, u_\sigma)$,
 $\{\mathcal{R}^{\mu\nu},\mathcal{R},\mathcal{\tilde{R}}{}^{ij}\}$  linear in $\mathbf{u}$,
$\{\mathcal{R}^{\mu\nu \gamma}, \mathcal{R}^{\gamma},\mathcal{\tilde{R}}{}^{ij \gamma}\}$ satisfy
  \begin{align*}  % \label{E:PTREST}
    |\partial_t \mathcal{R}^{\mu\nu k}(t)|+ |\partial_t \mathcal{R}^{k}(t)|+ |\partial_t \mathcal{\tilde{R}}{}^{ij k}(t)|\lesssim t^{2},
  \end{align*}
and the terms $\{\mathcal{S}^{\mu\nu},\mathcal{S},\mathcal{\tilde{S}}{}^{ij}\}$ vanish
for $(\epsilon,t,u^{\alpha\beta},u,u^{\alpha\beta}_\sigma, u_\sigma)$$=$$(\epsilon,t,0,0,0,0)$.
\end{proposition}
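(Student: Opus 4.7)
The plan is to obtain each expansion by substituting the $\epsilon$-expansions from \S\ref{epexpansions} for the metric, its derivatives, and the Christoffel symbols into the definitions of $Q^{\mu\nu}$, $Q$ and $\tilde{Q}^{ij}$, and then subtracting the analogous FLRW quantities $Q^{\mu\nu}_H$, $Q_H$, and $\tilde{Q}^{ij}_H$ which are obtained under the replacement $\{\bar{g}^{\mu\nu},\bar{\Gamma}^\sigma_{\mu\nu},\bar{\Gamma}^\mu-\bar\gamma^\mu\}\mapsto\{\bar{h}^{\mu\nu},\bar\gamma^\sigma_{\mu\nu},0\}$. By \eqref{E:GIJ}, \eqref{E:G0MU}, \eqref{E:PGIJ}, \eqref{E:G_MUNU}, \eqref{E:PGMUNU} and \eqref{E:CHRISTOFFEL}, every factor entering $Q^{\mu\nu}$ is of the form $(\text{background})+\epsilon(\text{perturbation})$. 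Expanding the quadratic-in-derivative and quadratic-in-$\bar\Gamma$ structure of $Q^{\mu\nu}$, the order $\epsilon^0$ part coincides exactly with $Q^{\mu\nu}_H$ and drops out, leaving an overall factor of $\epsilon$. Terms quadratic in the perturbation carry an additional $\epsilon$ and can be absorbed into the $\epsilon\mathcal{S}^{\mu\nu}$ remainder. What remains at linear order in the perturbation is the sum of products in which exactly one factor is a background quantity; since $\bar\gamma^\sigma_{\mu\nu}$ and $\bar\partial_\sigma\bar{h}_{\mu\nu}$ are, by \eqref{E:HOMCHRIS} and \eqref{E:PTE}, each of the form $\frac{\Omega}{t}$ times a smooth $E$-dependent tensor, every such linear contribution carries a factor of $\frac{\Omega}{t}$, producing the structure $\frac{\Omega}{t}\mathcal{R}^{\mu\nu}(t,\mathbf{u})$. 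Finally, the only place where $u^{00}_\gamma$ enters is through $\underline{\bar\partial_\sigma\bar{g}^{ij}}$ in \eqref{E:PGIJ} via the $\alpha$-factor, and a short bookkeeping calculation isolates its coefficient as $E^{-2}\frac{\Omega}{t}\mathcal{R}^{\mu\nu\gamma}(t)u^{00}_\gamma$; this is exactly the reason $u^{00}_\gamma$ is excluded from $\mathbf{u}$.

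The derivation of the expansion for $Q$ is essentially the same: $Q$ is a linear combination of $Q^{\mu\nu}$ and the extra terms $\bar{g}^{\lambda\sigma}\bar\partial_\lambda\check{g}_{ij}\bar\partial_\sigma\bar{g}^{ij}$ and $\check{g}_{ij}Q^{ij}$, for which we use \eqref{E:CHECKGIJ} together with \eqref{E:PGIJ} in the same manner. For $\tilde Q^{ij}$, one additionally needs to expand $\alpha^{-1}\mathcal{L}^{ij}_{lm}=\alpha^{-1}_H\mathcal{L}^{ij}_{lm,H}+\epsilon(\cdots)$, which follows at once from \eqref{E:SMALLGAMMA} and \eqref{E:GIJ}; the subsequent subtraction of $\tilde Q^{ij}_H$ proceeds identically.

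It remains to verify the bound $|\partial_t\mathcal{R}^{\mu\nu k}(t)|+|\partial_t\mathcal{R}^k(t)|+|\partial_t\tilde{\mathcal{R}}^{ijk}(t)|\lesssim t^2$. The coefficients $\mathcal{R}^{\mu\nu k}$, $\mathcal{R}^k$, $\tilde{\mathcal{R}}^{ijk}$ are polynomials in $E$, $E^{-1}$, and $\Omega/t$; a $t$-derivative therefore produces either $\partial_t E=E\Omega/t$ or $\partial_t\Omega=t^{2+3\epsilon^2K}\texttt{Q}_2(t)$, each of which is $\lesssim t^2$ for $t\in(0,1]$ by \eqref{E:OMEGATIMEIDENTITY} and the bound $t^{2+3\epsilon^2K}\le t^2$. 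The remaining regularity claims for the $\mathcal{S}$-terms follow from the fact that products and smooth functions of remainder terms are again remainder terms in the sense of \S\ref{remainder}.

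The main technical obstacle is the bookkeeping involved in identifying the precise linear-in-$\mathbf{u}$ contribution and, in particular, in verifying that all derivatives $u^{00}_\gamma$ can be collected into a single term with an $E^{-2}$ prefactor. This requires organizing the three quadratic pieces of \eqref{E:QMUNU} together with the $\alpha$-derivative contributions from \eqref{E:SMALLGAMMA} and exploiting the contraction $\mathcal{L}^{ij}_{lm}\bar{g}^{lm}=0$ in the $\tilde Q^{ij}$ case to eliminate several potentially troublesome linear terms. Beyond this, the argument is direct algebra using only the previously established expansions.
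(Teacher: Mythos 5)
Your overall strategy — substitute the $\epsilon$-expansions of the metric, its derivatives, and the Christoffel symbols into $Q^{\mu\nu}$, note the $\epsilon^0$ part cancels against $Q^{\mu\nu}_H$, absorb quadratic-in-perturbation contributions into $\epsilon\mathcal{S}^{\mu\nu}$, and get the $\Omega/t$ prefactor on the linear part from the background derivatives and Christoffels — is the same as the paper's. The paper packages this more cleanly as a Taylor expansion of $Q^{\mu\nu}$ regarded as a function of $(\bar{g},\bar{\partial}\bar{g})$ about $(\bar{h},\bar{\partial}\bar{h})$, with the factor $E^{-2}\frac{\Omega}{t}$ pulled out via $\bar{\partial}\bar{h}=E^{-2}\frac{\Omega}{t}\mathfrak{h}$ and linearity of the differentials $DQ^{\mu\nu}_\ell$ in their second slot.

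There is, however, a concrete error in your bookkeeping. You claim that $u^{00}_\gamma$ enters the linearized expansion only through the $\alpha$-factor in $\underline{\bar{\partial}_\sigma\bar{g}^{ij}}$, eqn.\ \eqref{E:PGIJ}. This is false: by \eqref{E:PIG} together with the definitions \eqref{E:u.b}--\eqref{E:u.c}, $u^{00}_\gamma$ enters directly through $\underline{\bar{\partial}_\gamma\bar{g}^{00}}$, and the paper's explicit formula \eqref{gexp3} for $\mathcal{T}^{\mu\nu}_\gamma$ shows $u^{00}_\gamma$ appearing in two distinct places — the $\delta^\mu_0\delta^\nu_0$ components (from $\bar{\partial}\bar{g}^{00}$, with no extra $E^{-2}$) and the $\delta^\mu_i\delta^\nu_j$ components (from $\alpha$, with the extra $E^{-2}$). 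Carrying out your recipe as stated would miss the first family, leaving an incomplete coefficient $\mathcal{R}^{\mu\nu\gamma}(t)$. The structural reason every $u^{00}_\gamma$ contribution carries the overall $E^{-2}\frac{\Omega}{t}$ prefactor is not the $\alpha$-factor at all; it is that the linear part of $Q^{\mu\nu}-Q^{\mu\nu}_H$ is linear in $\bar{\partial}\bar{h}=E^{-2}\frac{\Omega}{t}\mathfrak{h}$, so every term in $DQ_2^{\mu\nu}(\bar{h},\bar{\partial}\bar{h})\cdot\mathcal{T}$ receives that prefactor regardless of which entry of $\mathcal{T}$ produced the $u^{00}_\gamma$.

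Relatedly, your verification of $|\partial_t\mathcal{R}^{\mu\nu k}|\lesssim t^2$ begins from the premise that $\mathcal{R}^{\mu\nu k}$ is a polynomial in $E$, $E^{-1}$ and $\Omega/t$, but if that were true the bound would fail: $\partial_t(\Omega/t)=O(t^{1+3\epsilon^2K})$ is not $\lesssim t^2$ for small $\epsilon$. (Indeed your own computation switches silently to $\partial_t\Omega$ rather than $\partial_t(\Omega/t)$.) In fact $\mathcal{R}^{\mu\nu k}(t)$ depends only on $E(t)$ and $\Lambda$ — the $\Omega/t$ has been fully extracted as the explicit prefactor, and the $u^{00}_k$ coefficient of $\mathcal{T}^{\mu\nu}_\gamma$ is purely $E$-dependent — so the derivative produces only $\partial_t E=E\,\Omega/t\lesssim t^2$, which is what gives the bound.
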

\begin{proof}
First, we observe that we can write  $Q^{\mu\nu}$ as  $Q^{\mu\nu}=Q^{\mu\nu}(\underline{\bar{g}},\underline{\bar{\partial}\bar{g}})$,
where $Q^{\mu\nu}(\underline{\bar{g}},\underline{\bar{\partial}\bar{g}})$ is quadratic in $\underline{\bar{\partial}\bar{g}}= (\underline{\bar{\partial}_\gamma\bar{g}^{\mu\nu}})$
and analytic in $\underline{\bar{g}}=(\underline{\bar{g}^{\mu\nu}})$ on the region $\det(\underline{\bar{g}}) < 0$.
Since $Q_H^{\mu\nu} = Q^{\mu\nu}(\bar{h},\bar{\partial}\bar{h})$, we can expand
$Q^{\mu\nu}(\bar{h}+\epsilon\mathcal{S},\bar{\partial}\bar{h}+\epsilon\mathcal{T})$ to get
\begin{equation} \label{Qexp}
Q^{\mu\nu}(\bar{h}+\epsilon\mathcal{S},\bar{\partial}\bar{h}+\epsilon\mathcal{T})
- Q^{\mu\nu}_H =\epsilon DQ_1^{\mu\nu}(\bar{h},\bar{\partial}\bar{h}) \cdot \mathcal{S}
+ \epsilon DQ_2 ^{\mu\nu}(\bar{h},\bar{\partial}\bar{h}) \cdot \mathcal{T}  +
\epsilon^2 \mathcal{G}^{\mu\nu}\bigl(\epsilon,\bar{h},\bar{\partial}\bar{h},\mathcal{S},\mathcal{T}\bigr)
\end{equation}
where  $\mathcal{G}^{\mu\nu}$ is analytic in all variables and vanishes for $(\mathcal{S},\mathcal{T})=(0,0)$,
and $DQ_1^{\mu\nu}$ and $DQ_2^{\mu\nu}$ are linear in their second variable. By \eqref{E:GIJ}, \eqref{E:G0MU}, \eqref{E:PIG}
and \eqref{E:PGIJ}, we can choose
\begin{equation*}
\mathcal{S}=(\mathcal{S}^{\mu\nu}(\epsilon,t,u,u^{\alpha\beta}))
\quad \text{and} \quad
\mathcal{T}=(\mathcal{T}^{\mu\nu}_\gamma(\epsilon,t,u,u^{\alpha\beta},u_\sigma,u^{\alpha\beta}_\sigma))
\end{equation*}
for appropriate
remainder terms $\mathcal{S}^{\mu\nu}$ and $\mathcal{T}^{\mu\nu}$, so that
\begin{equation*}
\underline{\bar{g}} = \bar{h}+\epsilon\mathcal{S} \quad \text{and} \quad
\underline{\bar{\partial}\bar{g}}= \bar{\partial}\bar{h}+\epsilon\mathcal{T}.
\end{equation*}
\begin{comment}
  From this and \eqref{Qexp}, we obtain the expansion
\begin{equation} \label{QexpA}
Q^{\mu\nu}
- Q^{\mu\nu}_H =\epsilon DQ_1^{\mu\nu}(\bar{h},\bar{\partial}\bar{h}) \cdot \mathcal{S}
+ \epsilon DQ_2 ^{\mu\nu}(\bar{h},\bar{\partial}\bar{h}) \cdot \mathcal{T}  +
\epsilon^2 \mathcal{G}^{\mu\nu}\bigl(\epsilon,\bar{h},\bar{\partial}\bar{h},\mathcal{S},\mathcal{T}\bigr).
\end{equation}
\end{comment}
Using the fact that
\begin{equation*} \label{hbarrep}
\bar{h} = \biggl(-\frac{\Lambda}{3}\delta^\mu_0\delta^\nu_0 + \frac{1}{E^2}\delta^\mu_i\delta^\nu_j \delta^{ij}\biggr)
\quad \text{and} \quad \bar{\partial}\bar{h} = E^{-2}\frac{\Omega}{t} \mathfrak{h},
\end{equation*}
where $\mathfrak{h}= \bigl(-2\delta_\gamma^0\delta^\mu_i\delta^\nu_j)$, we can, using the linearity on
the second variable of the derivatives $DQ^{\mu\nu}_\ell$, $\ell=1,2$, write \eqref{Qexp} as
 \begin{equation} \label{QexpB}
Q^{\mu\nu}
- Q^{\mu\nu}_H =  \epsilon E^{-2}\frac{\Omega}{t} DQ_1^{\mu\nu}(\bar{h},\mathfrak{h}) \cdot \mathcal{S}
+   \epsilon E^{-2}\frac{\Omega}{t} DQ_2 ^{\mu\nu}(\bar{h},\mathfrak{h}) \cdot \mathcal{T}  +
\epsilon^2 \mathcal{G}^{\mu\nu}\bigl(\epsilon,\bar{h},\bar{\partial}\bar{h},\mathcal{S},\mathcal{T}\bigr).
\end{equation}

Expanding $\Theta^{ij}$, recall \eqref{Thetaijdef}, as
\begin{equation*}
\Theta^{ij} = \frac{1}{E^2}\biggl(\frac{3}{\Lambda}(2tu^{00}-u)\delta^{ij} + u^{ij}\biggr) + \epsilon \mathcal{A}^{ij}(\epsilon,t,u,u^{\mu\nu}),
\end{equation*}
where $\mathcal{A}^{ij}(\epsilon,t,0,0)=0$, we see from \eqref{E:GIJ}, \eqref{E:G0MU}, \eqref{E:PIG} and \eqref{E:PGIJ} that
\begin{equation} \label{gexp2}
\mathcal{S}^{\mu\nu} = \frac{1}{\epsilon}\bigl(\underline{\bar{g}^{\mu\nu}} - \bar{h}^{\mu\nu}\bigr) =  2 t \delta^\mu_0\delta^\nu_0 u^{00}
+ 4 t\delta^{(\mu}_0 \delta^{\nu)}_j u^{0j} + \delta^{\mu}_i\delta^\nu_j
\frac{1}{E^2}\biggl(\frac{3}{\Lambda}(2tu^{00}-u)\delta^{ij} + u^{ij}\biggr) + \epsilon
\mathcal{B}^{\mu\nu}(\epsilon,t,u,u^{\alpha\beta}),
\end{equation}
where $\mathcal{B}^{\mu\nu}(\epsilon,t,0,0)=0$, and
\begin{align}
\mathcal{T}^{\mu\nu}_\gamma &= \frac{1}{\epsilon}\bigl(\underline{\bar{\partial}_\gamma\bar{g}^{\mu\nu}} -
\bar{\partial}_\gamma \bar{h}^{\mu\nu}\bigr)  =
\delta^\mu_0\delta^\nu_0 \delta^0_\gamma (u^{00}_0+3u^{00}) + \delta^\mu_0\delta^\nu_0 \delta^k_\gamma
u^{00}_k + 2\delta^{(\mu}_0\delta^{\nu)}_j \delta^0_\gamma (u^{0j}_0+3u^{0j})
+ 2\delta^{(\mu}_0\delta^{\nu)}_j \delta^k_\gamma u_k^{0j} \notag \\
& -\frac{2}{t}\Omega \delta^{\mu}_i\delta^\nu_j \delta^0_\gamma \frac{1}{E^2}\biggl[\frac{3}{\Lambda}(2tu^{00}-u)\delta^{ij} + u^{ij} \biggr] +
\frac{1}{E^2} \left[u^{ij}_\sigma+ \frac
  {3}{\Lambda}(3u^{00}\delta^0_\sigma+u^{00}_\sigma-u_\sigma)\delta^{ij}\right] +\epsilon
\mathcal{C}^{\mu\nu}_\gamma(\epsilon,t,u,u^{\alpha\beta},u_\sigma,u^{\alpha\beta}_\sigma),
\label{gexp3}
\end{align}
where $\mathcal{C}^{\mu\nu}_\gamma(\epsilon,t,0,0,0,0)=0$. The stated expansion for $Q^{\mu\mu}$ is then an immediate
consequence of \eqref{QexpB}, \eqref{gexp2}, \eqref{gexp3} and the boundedness and regularity properties of $E$ and $\Omega$, see \S\ref{FLRWanal} for details. The expansions for $Q$ and $\tilde{Q}^{ij}$ can be established in a similar fashion.
\end{proof}

Finally, we collect the last $\epsilon$-expansions that will be needed in the following proposition. The proof follows from the same
arguments used to prove Proposition \ref{T:Q} above.

\begin{proposition}\label{T:CORRECTIONTERM}
The following expansions hold:
  \begin{align*}
   &\hspace{3.0cm} 2(\underline{\bar{\nabla}^{(\mu}\bar{\gamma}^{0)}}-\bar{\nabla}_H^{(\mu}\bar{\gamma}^{0)})
  -\frac{2\Lambda}{t^2}\Omega(\underline{\bar{g}^{\mu0}}-\bar{h}^{\mu0})=  \epsilon  \mathcal{E}^{\mu0},   \\
      &2(\underline{\bar{g}^{00}}-\bar{h}^{00})\frac{\Lambda}{t}\left(\partial_t\Omega-\frac{1}{t}\Omega\right)
    +2(\underline{\bar{g}^{\lambda 0}\bar{\Gamma}^0_{\lambda 0}}-\bar{h}^{\lambda 0}\bar{\gamma}^0_{\lambda 0})\frac{\Lambda}{t}\Omega-\frac{2\Lambda^2}{9t}\Omega(\underline{\bar{\Gamma}^k_{i0}}-\bar{\gamma}^k_{i0})
    \delta^i_k\\
    &-\frac{2\Lambda^2}{9t}
    \Omega(\underline{\bar{g}^{0(i}\bar{\Gamma}^{j)}_{00}\check{g}_{ij}}-
    \bar{h}^{0(i}\bar{\gamma}^{j)}_{00}\check{h}_{ij})+\frac{2\Lambda}{3}(\underline{\bar{g}^{00}}-
\bar{h}^{00})
    \left(E^{-1}\partial^2_t E-E^{-2} (\partial_t E)^2\right)
     -\frac{2\Lambda}{t^2}\Omega \left( \underline{\bar{g}^{00}}-\bar{h}^{00} \right)
     =  \epsilon   \mathcal{E}
\intertext{and}
    &\hspace{3.0cm} 2\left(\underline{|g|^{\frac{1}{3}}\mathcal{L}^{ij}_{lm}\bar{\nabla}^{(l}\bar{\gamma}^{m)}}
    -|h|^{\frac{1}{3}}\mathcal{L}^{ij}_{lm,H}\bar{\nabla}_H^{(l}\bar{\gamma}^{m)}\right)=  \epsilon   \mathcal{\tilde{E}}^{ij},
  \end{align*}
 where
  \begin{align*}
    \mathcal{E}^{\mu0}&=
    E^{-2} \frac{\Omega}{t} \mathcal{F}^{\mu0 \gamma}(t)u^{00}_\gamma+\frac{\Omega}{t} \mathcal{F}^{\mu0}(t, \mathbf{u})+\epsilon \mathcal{S}^{\mu0}(\epsilon,t,u^{\alpha\beta},u,u^{\alpha\beta}_\sigma, u_\sigma)\\
    \mathcal{E}&= E^{-2} \frac{\Omega}{t} \mathcal{F}^{\gamma}(t)u^{00}_\gamma +\frac{\Omega}{t}\mathcal{F}(t,\mathbf{u})+\epsilon \mathcal{S}(\epsilon,t,u^{\alpha\beta},u,u^{\alpha\beta}_\sigma, u_\sigma),\\
    \mathcal{\tilde{E}}^{ij}&= E^{-2} \frac{\Omega}{t} \mathcal{\tilde{F}}{}^{ij \gamma}(t)u_\gamma+\frac{\Omega}{t}\mathcal{\tilde{F}}{}^{ij}(t, \mathbf{u})+\epsilon \mathcal{\tilde{S}}{}^{ij}(\epsilon,t,u^{\alpha\beta},u,u^{\alpha\beta}_\sigma, u_\sigma),
  \end{align*}
with $\mathbf{u}=(u^{\alpha\beta},u,u^{0i}_\sigma,u^{ij}_\sigma, u_\sigma)$, $\{\mathcal{F}^{\mu 0 \gamma}, \mathcal{F}^{\gamma}, \mathcal{\tilde{F}}{}^{ij \gamma}\}$  satisfy
  \begin{equation*}
    |\partial_t \mathcal{F}^{\mu\nu \gamma}(t)| + |\partial_t \mathcal{F}^{\gamma}(t)| +
 |\partial_t \mathcal{\tilde{F}}{}^{ij \gamma}(t)|\lesssim t^{2},
  \end{equation*}
$\{\mathcal{F}^{\mu0}, \mathcal{F},\mathcal{\tilde{F}}{}^{ij}\}$ are linear in
$\mathbf{u}$, and the remainder terms $\{\mathcal{S}^{\mu0}, \mathcal{S},\mathcal{\tilde{S}}{}^{ij}\}$
vanish for
$(\epsilon,t,u^{\alpha\beta},u,u^{\alpha\beta}_\sigma, u_\sigma)$$=$$(\epsilon,t,0,0,0,0)$.
\end{proposition}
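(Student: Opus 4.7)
The plan is to follow the same Taylor-expansion strategy used in the proof of Proposition \ref{T:Q}, exploiting the fact that every quantity on the left-hand side of each claimed expansion vanishes identically under the replacement $\{\bar{g}^{\mu\nu},\bar{\partial}\bar{g}^{\mu\nu}\}\longmapsto\{\bar{h}^{\mu\nu},\bar{\partial}\bar{h}^{\mu\nu}\}$. This means that each left-hand side can be written as $F(\underline{\bar{g}},\underline{\bar{\partial}\bar{g}})-F(\bar{h},\bar{\partial}\bar{h})$ for a function $F$ that is analytic in its arguments on the region $\det(\underline{\bar g})<0$, and hence as a single-variable Taylor remainder of the form $\epsilon\int_0^1 DF(\bar{h}+s\epsilon\mathcal{S},\bar{\partial}\bar{h}+s\epsilon\mathcal{T})\cdot(\mathcal{S},\mathcal{T})\,ds$, where $\mathcal{S}$ and $\mathcal{T}$ are the remainder-valued maps supplied by the expansions \eqref{gexp2}--\eqref{gexp3}.

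For the first expansion, I would first use Proposition \ref{wgprop} to write
\begin{align*}
2\bar{\nabla}^{(\mu}\bar{\gamma}^{0)}-\frac{2\Lambda}{t^2}\Omega\,\bar{g}^{\mu 0}
=\frac{2\Lambda}{t}\Bigl(\partial_t\Omega-\tfrac{1}{t}\Omega\Bigr)\bar{g}^{0(\mu}\delta^{0)}_0
-\frac{\Lambda}{t}\Omega\,\partial_t\bar{g}^{\mu 0}-\frac{2\Lambda}{t^2}\Omega\,\bar{g}^{\mu 0},
\end{align*}
and subtract the analogous expression with $\bar{g}\mapsto\bar{h}$. The result is an $E^0$-linear functional of the differences $\bar{g}^{\mu 0}-\bar{h}^{\mu 0}$ and $\partial_t(\bar{g}^{\mu 0}-\bar{h}^{\mu 0})$. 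Substituting \eqref{E:G0MU} and \eqref{E:PIG} and isolating the $E^{-2}\tfrac{\Omega}{t}u^{00}_\gamma$ contribution (which arises from the $-\tfrac{\Lambda}{t}\Omega\,\partial_t\bar{g}^{\mu 0}$ term specialised to the component $\bar{\partial}_0\bar{g}^{00}$) produces the decomposition $\mathcal{E}^{\mu 0}=E^{-2}\tfrac{\Omega}{t}\mathcal{F}^{\mu 0\gamma}(t)u^{00}_\gamma+\tfrac{\Omega}{t}\mathcal{F}^{\mu 0}(t,\mathbf{u})+\epsilon\mathcal{S}^{\mu 0}$. The required bound $|\partial_t\mathcal{F}^{\mu 0\gamma}(t)|\lesssim t^2$ on the derivative coefficient follows from the smoothness of $E^{-2}\Omega/t$ (see \S\ref{FLRWanal}) combined with the expansion \eqref{E:CHI1}, which shows the troublesome factor $t^{-1}\Omega$ carries a $t^{2+3\epsilon^2K}$ behaviour.

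The second expansion is handled identically in spirit: expanding each of $\bar{g}^{00}-\bar{h}^{00}$, $\bar{g}^{\lambda 0}\bar{\Gamma}^0_{\lambda 0}-\bar{h}^{\lambda 0}\bar{\gamma}^0_{\lambda 0}$, $\bar{\Gamma}^k_{i0}-\bar{\gamma}^k_{i0}$, $\bar{g}^{0(i}\bar{\Gamma}^{j)}_{00}\check g_{ij}-\bar{h}^{0(i}\bar{\gamma}^{j)}_{00}\check h_{ij}$ around the FLRW background using \eqref{E:G0MU}, \eqref{E:CHRISTOFFEL}, \eqref{E:GAMMAI00} and \eqref{E:GSUBSTG}, collecting linear-in-$\mathbf{u}$ pieces together into $\mathcal{F}$, and sweeping quadratic-and-higher contributions into the remainder $\epsilon\mathcal{S}$. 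The prefactor $E^{-2}\Omega/t$ in the $u^{00}_\gamma$ term arises from the leading-order behaviour of $\partial_t\bar{g}^{\mu\nu}$ and of $\bar{\Gamma}^i_{00}$ once the explicit derivatives in \eqref{E:PGIJ} are inserted. For the third expansion, the only genuinely new ingredient is the need to expand $\underline{\alpha}=E^{-2}+O(\epsilon)$ via \eqref{E:SMALLGAMMA} and the trace-free projector $\mathcal{L}^{ij}_{lm}=\mathcal{L}^{ij}_{lm,H}+O(\epsilon)$ using \eqref{E:GIJ} and \eqref{E:CHECKGIJ}; once these are in hand the same product-rule Taylor expansion yields the claimed structure of $\mathcal{\tilde{E}}^{ij}$.

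The main technical obstacle, as in Proposition \ref{T:Q}, is bookkeeping: one must track with care which of the many contributions (arising from differentiation of $E$ and $\Omega$, from the off-diagonal mixed $\bar{g}^{0i}\bar{\Gamma}^j_{00}$-type products, and from the subtraction of the trace) actually produce the $E^{-2}\Omega/t\,u^{00}_\gamma$ pattern and which may be absorbed into the linear-in-$\mathbf{u}$ term $\tfrac{\Omega}{t}\mathcal{F}$, while simultaneously checking that every remaining piece is at least quadratic in $(u,u^{\mu\nu},u_\gamma,u^{\mu\nu}_\gamma)$ and hence legitimately $\epsilon\mathcal{S}$. The derivative estimate $|\partial_t\mathcal{F}^{\cdots\gamma}(t)|\lesssim t^2$ is the most delicate qualitative statement: it relies on the fact that $\partial_t(E^{-2}\Omega/t)$ picks up the explicit $t^{2+3\epsilon^2K}$ factor from $\Omega$ (see the expressions for $t^{-1}\Omega$ and $\partial_t\Omega$ in \S\ref{FLRWanal}), so verifying it amounts to checking that none of the genuinely singular $t$-dependence from the background leaks into the coefficients of the spatial derivatives of $u^{00}$ and $u$.
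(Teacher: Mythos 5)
The proposal is correct and takes essentially the same approach as the paper, which disposes of Proposition \ref{T:CORRECTIONTERM} in a single line by appeal to "the same arguments used to prove Proposition \ref{T:Q}." Your sketch—Taylor-expanding each left-hand side (which vanishes at $\{\bar h,\bar\partial\bar h\}$) about the FLRW background using the $\epsilon$-expansions \eqref{E:G0MU}, \eqref{E:PIG}, \eqref{E:CHRISTOFFEL}, \eqref{E:GAMMAI00}, \eqref{E:GSUBSTG}, \eqref{E:CHECKGIJ}, isolating the linear-in-$u^{00}_\gamma$ and linear-in-$\mathbf{u}$ contributions, and tracing the $|\partial_t\mathcal{F}^{\cdots\gamma}|\lesssim t^2$ bound to the $t^{2+3\epsilon^2K}$ behaviour of $\Omega/t$ and $\partial_t\Omega$ recorded in \S\ref{FLRWanal}—is exactly the argument the paper intends, merely spelled out.
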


\subsection{Newtonian potential subtraction\label{Npotsub}}
Switching to Newtonian coordinates, a straightforward calculation, with the help of Propositions \ref{T:Q} and \ref{T:CORRECTIONTERM}, shows that the reduced conformal Einstein equations given by  \eqref{E:REDUCEDCONFEINSTEINEQ2},  \eqref{E:REDUCEDCONFEINSTEINEQ3} and \eqref{E:REDUCEDCONFEINSTEINEQ4}
can be written in first-order form using the variables \eqref{E:u.a}--\eqref{E:DELZETA} and \eqref{E:z.a} as follows:
\begin{align}
\tilde{B}^0\partial_0\begin{pmatrix}
u^{0\mu}_0\\u^{0\mu}_k\\u^{0\mu}
\end{pmatrix}+\tilde{B}^k\partial_k\begin{pmatrix}
u^{0\mu}_0\\u^{0\mu}_l\\u^{0\mu}
\end{pmatrix}+\frac{1}{\epsilon}\tilde{C}^k\partial_k\begin{pmatrix}
u^{0\mu}_0\\u^{0\mu}_l\\u^{0\mu}
\end{pmatrix}&=\frac{1}{t}\mathfrak{\tilde{B}}\mathbb{P}_2\begin{pmatrix}
u^{0\mu}_0\\u^{0\mu}_l\\u^{0\mu}
\end{pmatrix}+\hat{S}_1, \label{E:EIN1}\\
\tilde{B}^0\partial_0\begin{pmatrix}
u^{ij}_0\\u^{ij}_k\\u^{ij}
\end{pmatrix}+\tilde{B}^k\partial_k\begin{pmatrix}
u^{ij}_0\\u^{ij}_l\\u^{ij}
\end{pmatrix}+\frac{1}{\epsilon}\tilde{C}^k\partial_k\begin{pmatrix}
u^{ij}_0\\u^{ij}_l\\u^{ij}
\end{pmatrix}&=-\frac{2 E^2\underline{\bar{g}^{00}}}{t}\breve{\mathbb{P}}_2 \begin{pmatrix}
u^{ij}_0\\u^{ij}_l\\u^{ij}
\end{pmatrix}+ \hat{S}_2, \label{E:EIN2}\\
\tilde{B}^0\partial_0\begin{pmatrix}
u_0\\u_k\\u
\end{pmatrix}+\tilde{B}^k\partial_k\begin{pmatrix}
u_0\\u_l\\u
\end{pmatrix}+\frac{1}{\epsilon}\tilde{C}^k\partial_k\begin{pmatrix}
u_0\\u_l\\u
\end{pmatrix}&=-\frac{2 E^2 \underline{\bar{g}^{00}}}{t}\breve{\mathbb{P}}_2 \begin{pmatrix}
u_0\\u_l\\u
\end{pmatrix}+\hat{S}_3, \label{E:EIN3}
\end{align}
where
\begin{align}\label{E:EINBk}
\tilde{B}^0=E^2\begin{pmatrix}
-\underline{\bar{g}^{00}}& 0 & 0\\
0 & \underline{\bar{g}^{kl}} & 0 \\
0 & 0 & -\underline{\bar{g}^{00}}
\end{pmatrix},
\qquad
\tilde{B}^k=E^2\begin{pmatrix}
-4tu^{0k} & -\Theta^{kl} & 0\\
-\Theta^{kl} & 0 & 0 \\
0 & 0 & 0
\end{pmatrix},
\end{align}
\begin{align} \label{E:EINCk}
\tilde{C}^k=\begin{pmatrix}
0 & - \delta^{kl} & 0\\
- \delta^{kl} & 0 & 0 \\
0 & 0 & 0
\end{pmatrix},
\qquad
\mathfrak{\tilde{B}}=E^2\begin{pmatrix}
-\underline{\bar{g}^{00}} & 0 & 0\\
0 & \frac{3}{2}\underline{\bar{g}^{ki}} & 0 \\
0 & 0 & -\underline{\bar{g}^{00}}
\end{pmatrix},
\end{align}
\begin{align} \label{E:EINP2}
\mathbb{P}_2=\begin{pmatrix}
\frac{1}{2} & 0 & \frac{1}{2}\\
0 & \delta^l_i & 0 \\
\frac{1}{2} & 0 & \frac{1}{2}
\end{pmatrix},
\qquad
\breve{\mathbb{P}}_2=\begin{pmatrix}
1 & 0 & 0\\
0 & 0 & 0 \\
0 & 0 & 0
\end{pmatrix},
\end{align}
\begin{align} \label{E:EINS1}
\hat{S}_1=E^2\begin{pmatrix}
\mathcal{Q}^{0\mu}+ \mathcal{E}^{\mu 0}+4\epsilon u^{00}u^{0\mu}_0-4\epsilon u^{0\mu}u^{00}+6\epsilon u^{0k}u^{0\mu}_k -2u^{0\mu}(1-\epsilon^2K) t^{2+3\epsilon^2K} e^{(1+\epsilon^2K)(\zeta_H+\delta\zeta)}+\hat{f}^{0\mu}\\
0 \\
0
\end{pmatrix},
\end{align}
\begin{align}
     \hat{f}^{0\mu}=&-2(1+\epsilon^2 K) t^{1+3\epsilon^2 K}e^{(1+\epsilon^2 K)(\zeta_H+\delta\zeta)} \left(\frac{1}{\epsilon}\left(\underline{\bar{v}^0}-\sqrt{\frac{\Lambda}{3}}\right)\left(\underline{\bar{v}^0}+\sqrt{\frac{\Lambda}{3}}\right)\delta^\mu_0 +
\underline{\bar{v}^0} z^k\delta^\mu_k\right) \notag \\
&\hspace{4.3cm}-\epsilon K \Lambda  t^{1+3\epsilon^2 K} e^{(1+\epsilon^2 K)\zeta_H} (e^{(1+\epsilon^2 K)\delta \zeta}-1)\delta^\mu_0-\frac{1}{\epsilon}\frac{\Lambda}{3}\frac{1}{t^2}\delta^\mu_0\delta \rho, \label{E:EINfhat}
 \end{align}

\begin{align} \label{E:EINS2}
\hat{S}_2=E^2\begin{pmatrix}
\mathcal{\tilde{Q}}^{ij}+ \mathcal{\tilde{E}}^{ij}+4\epsilon u^{00}u^{ij}_0+\hat{f}^{ij}\\
0 \\
-\underline{\bar{g}^{00}}u^{ij}_0
\end{pmatrix},
\end{align}
\begin{equation} \label{E:EINS3}
\hat{S}_3=E^2
\begin{pmatrix}
\mathcal{Q}+ \mathcal{E}+4\epsilon u^{00}u_0-8\epsilon(u^{00})^2 + \hat{f}\\
0 \\
-\underline{\bar{g}^{00}}u_0
\end{pmatrix},
\end{equation}
\begin{align}
\hat{f}^{ij} = -2\epsilon(1+\epsilon^2K)\underline{\alpha}^{-1}
\underline{\mathcal{L}^{ij}_{kl}} t^{1+3\epsilon^2K} e^{(1+\epsilon^2K)\zeta}
z ^k z ^l, \label{E:fij}
\end{align}
and
\begin{align}
\hat{f} = & -\epsilon K \frac{4\Lambda}{3} t^{1+3\epsilon^2K} e^{(1+\epsilon^2K) \zeta_H}\bigr(e^{(1+\epsilon^2 K)\delta\zeta}-1\bigr)+2\epsilon(1+\epsilon^2K)
\frac{\Lambda}{9}\underline{\check{g}_{ij}} t^{1+3\epsilon^2K} e^{(1+\epsilon^2K)\zeta} z ^i z ^j  \nnb \\
&  -2(1-\epsilon^2 K)u^{00} t^{2+3\epsilon^2K} e^{(1+\epsilon^2K)(\zeta_H+\delta\zeta)} -2 (1+\epsilon^2 K) t^{1+3\epsilon^2K} e^{(1+\epsilon^2K)\zeta}\left(\underline{v^0}+\sqrt{\frac{\Lambda}{3}}\right)
 \frac{1}{\epsilon}\left(\underline{\bar{v}^0}-\sqrt{\frac{\Lambda}{3}}\right). \label{E:f}
\end{align}

At this point, it is important to stress that equations \eqref{E:EIN1}-\eqref{E:EIN3} are completely equivalent
to the reduced conformal Einstein equations for $\epsilon > 0$. Moreover, these equations are almost of the form that we need in order
to apply the results of \S\ref{S:MODEL}. Since the term $\frac{1}{\epsilon}\left(\underline{\bar{v}^0}-\sqrt{\frac{\Lambda}{3}}\right)$ is easily seen to be regular in $\epsilon$ from the expansion \eqref{E:V^0} ,  the only $\epsilon$-singular term left
is $-\frac{1}{\epsilon}\frac{\Lambda}{3}\frac{1}{t^2}E^2\delta\rho\delta^\mu_0$, which can be found in the quantity $\hat{f}^{0\mu}$. Following the method introduced in \cite{oli1} and then adapted to the cosmological
setting in \cite{oli3}, we can remove the singular part of this term while preserving
the required structure via the introduction of the shifted variable
 \begin{align}\label{E:WPHI}
  w^{0\mu}_k=u^{0\mu}_k-\delta^0_0\delta^\mu_0\partial_k \Phi,
\end{align}
where $\Phi$ is the potential defined by solving the Poisson equation
\begin{align}\label{E:DEFOFPHI}
  \triangle\Phi:=\frac{\Lambda}{3}\frac{1}{t^2}E^2
  \Pi \rho^{\frac{1}{1+\epsilon^2K}} =\frac{\Lambda}{3} E^2
   t e^{ \zeta_H}
  \Pi e^{ \delta\zeta} \qquad (\Delta = \delta^{ij}\partial_i\partial_j),
\end{align}
which, as we shall show, reduces to the (cosmological) Newtonian gravitational field equations in the limit $\epsilon \searrow 0$.
In this sense, we can view \eqref{E:WPHI} as the subtraction of the gradient of the Newtonian potential from the
gravitational field component $u^{00}_k$.
%Similarly, we also define
%\begin{align}\label{E:DEFOFPHINEWTON}
 % \triangle\mathring{\Phi}=\delta^{ij}\partial_i\partial_j\mathring{\Phi}
 % =\frac{\Lambda}{3}\frac{1}{t^2}E^2
 % \Pi\delta\mathring{\rho}
%\end{align}

Using \eqref{E:DEFOFPHI}, we can decompose $-\frac{1}{\epsilon}\frac{\Lambda}{3}\frac{1}{t^2}E^2\delta\rho\delta^\mu_0$ as
\begin{align} \label{phidecomp}
  -\frac{1}{\epsilon}\frac{\Lambda}{3}\frac{1}{t^2}E^2\delta\rho\delta^\mu_0=-\frac{1}{\epsilon}\delta^\mu_0 \Delta\Phi-\frac{\Lambda}{3}  \delta^\mu_0 E^2  t^{1+3\epsilon^2 K}\phi+\epsilon E^2 \mathcal{S}^\mu(\epsilon, t, \delta\zeta),
\end{align}
where
\begin{equation}\label{E:DEFOFPHISMALL}
     \phi:=\frac{1}{\epsilon}\left\langle 1, \frac{1}{t^{3(1+\epsilon^2 K)}}\delta \rho \right\rangle= \frac{1}{\epsilon}  (\mathds{1}-\Pi) e^{(1+\epsilon^2K)\zeta_H}\left(e^{(1+\epsilon^2K)\delta\zeta}-1\right)
\end{equation}
and
\als%{RHOREM}
{
	\mathcal{S}^\mu(\epsilon, t, \delta\zeta)%=\frac{\Lambda}{3t^2} \frac{1}{\epsilon^2}\Pi (\delta\varrho-\delta\rho)\delta^\mu_0= \frac{\Lambda}{3t^2}E^2\frac{1}{\epsilon^2}\Pi\bigl[ (\rho^{\frac{1}{1+\epsilon^2K}}-\rho)-(\rho_H^{\frac{1}{1+\epsilon^2K}}-\rho_H)\bigr]\\
	=\frac{\Lambda}{3} \frac{1}{\epsilon^2}\Pi\bigl[te^{\zeta_H}(e^{\delta\zeta}-1)-t^{1+3\epsilon^2K}e^{(1+\epsilon^2K)\zeta_H}(e^{(1+\epsilon^2K)\delta\zeta}-1)\bigr]\delta^\mu_0,
	}
which obviously satisfies $\mathcal{S}^\mu(\epsilon, t, 0)=0$.
Although it is not obvious at the moment, $\phi$ is regular in $\epsilon$, and consequently, with this knowledge, it is clear from
\eqref{phidecomp} that $-\frac{1}{\epsilon}\delta^\mu_0 \Delta\Phi$ is the only $\epsilon$-singular term
in $ -\frac{1}{\epsilon}\frac{\Lambda}{3}\frac{1}{t^2}E^2\delta\rho$.
A straightforward computation using \eqref{E:DEFOFPHI} and \eqref{phidecomp} along with the expansions from
Propositions  \ref{T:Q} and \ref{T:CORRECTIONTERM} then shows that
replacing  $u^{0\mu}_k$ in \eqref{E:EIN1} with the shifted variable \eqref{E:WPHI} removes
the $\epsilon$-singular term  $-\frac{1}{\epsilon}\delta^\mu_0 \Delta\Phi$ and yields the equation
\begin{align}\label{E:FINALEINSTEINEQUATIONS1}
\tilde{B}^0\partial_0\begin{pmatrix}
u^{0\mu}_0\\w^{0\mu}_k\\u^{0\mu}
\end{pmatrix}+\tilde{B}^k\partial_k\begin{pmatrix}
u^{0\mu}_0\\w^{0\mu}_l\\u^{0\mu}
\end{pmatrix}+\frac{1}{\epsilon}\tilde{C}^k\partial_k\begin{pmatrix}
u^{0\mu}_0\\w^{0\mu}_l\\u^{0\mu}
\end{pmatrix}=\frac{1}{t}\mathfrak{\tilde{B}}\mathbb{P}_2\begin{pmatrix}
u^{0\mu}_0\\w^{0\mu}_l\\u^{0\mu}
\end{pmatrix}+\tilde{G}_1+\tilde{S}_1,
\end{align}
where
\gat{	\tilde{G}_1=E^2\begin{pmatrix}
		-E^{-2}\frac{\Omega}{t} \bigl[\mathcal{D}^{0\mu0}(t) u^{00}_0+ \mathcal{D}^{0\mu k}(t) w^{00}_k \bigr] -\frac {\Omega}{t}\mathcal{D}^{0\mu}(t, \mathbf{u})+4\epsilon u^{00}u^{0\mu}_0-4\epsilon u^{0\mu}u^{00}%+6\epsilon u^{0k}u^{0\mu}_k
		+f^{0\mu}\\
		0 \\
		0
	\end{pmatrix},  \label{E:G1}\\
	\tilde{S}_1 =\p{  -\frac{\Omega}{t} \mathcal{D}^{0\mu k}\del{k}\Phi+\Theta^{kl}\delta^\mu_0\partial_k\partial_l
		\Phi\\
		\frac{3}{2}\frac{1}{t}\delta^\mu_0\underline{\bar{g}^{kl}}
		\partial_l\Phi-\underline{\bar{g}^{kl}}\delta^\mu_0\partial_0\partial_l\Phi \\
		0
	}+\epsilon \p{ \mathcal{S}^{\mu}(\epsilon,t,u^{\alpha\beta}, u, u^{\alpha\beta}_\sigma, u_\sigma) \\0 \\ 0 }, \label{E:S1}  \\
	 \mathcal{D}^{0\mu \nu}(t)=  - \mathcal{R}^{0\mu \nu}(t) - \mathcal{F}^{0\mu \nu}(t), \quad \mathcal{D}^{0\mu}(t,\mathbf{u}) = - \mathcal{R}^{0\mu}(t,\mathbf{u}) - \mathcal{F}^{0\mu}(t,\mathbf{u}) \label{E:D1}
	 }
\begin{comment}
\begin{equation*}
\tilde{S}_1=E^2\begin{pmatrix}
\mathcal{Q}^{0\mu}+ \mathcal{E}^{\mu 0}+4\epsilon u^{00}u^{0\mu}_0-4\epsilon u^{0\mu}u^{00}+6\epsilon u^{0k}u^{0\mu}_k+\Theta^{kl}\delta^\mu_0\partial_k\partial_l
\Phi+f^{0\mu}\\
\frac{3}{2}\frac{1}{t}\delta^\mu_0\underline{\bar{g}^{kl}}
\partial_l\Phi-\underline{\bar{g}^{kl}}\delta^\mu_0\partial_0\partial_l\Phi  \\
0
\end{pmatrix}
\end{equation*}
\end{comment}
and
\begin{align}
    f^{0\mu}=&-2(1+\epsilon^2 K) t^{1+3\epsilon^2 K}e^{(1+\epsilon^2 K)(\zeta_H+\delta\zeta)}
\left(\frac{1}{\epsilon}\left(\underline{\bar{v}^0}-\sqrt{\frac{\Lambda}{3}}\right)\left(\underline{\bar{v}^0}+\sqrt{\frac{\Lambda}{3}}\right)\delta^\mu_0 +
\underline{\bar{v}^0} z^k\delta^\mu_k\right) \notag \\
& -\Lambda\epsilon K  t^{1+3\epsilon^2 K} e^{(1+\epsilon^2 K)\zeta_H} (e^{(1+\epsilon^2 K)\delta \zeta}-1)\delta^\mu_0  -2u^{0\mu}(1-\epsilon^2K) t^{2+3\epsilon^2K} e^{(1+\epsilon^2K)(\zeta_H+\delta\zeta)} \notag \\ &- \frac{\Lambda}{3}  t^{1+3\epsilon^2 K} \phi \delta^\mu_0  +\epsilon\mathcal{S}^\mu(\epsilon, t, \delta\zeta). \label{E:f2}
\end{align}
With the help of the expansions from Propositions  \ref{T:Q} and \ref{T:CORRECTIONTERM}, we further decompose $\hat{S}_2$ and $\hat{S}_3$ into a sum of local and non-local terms given by
\al{S2S3}{
	\hat{S}_2=\tilde{G}_2+\tilde{S}_2 \AND \hat{S}_3=\tilde{G}_3+\tilde{S}_3,
	}
where
\begin{gather}
%\tilde{S}_2=E^2\begin{pmatrix}
% \mathcal{\tilde{Q}}^{ij}+ \mathcal{\tilde{E}}^{ij}+4\epsilon u^{00}u^{ij}_0-2\epsilon(1+\epsilon^2K)\underline{\alpha}^{-1}
%\underline{\mathcal{L}^{ij}_{kl}} t^{1+3\epsilon^2K} e^{(1+\epsilon^2K)\zeta}
% z ^k z ^l\\
%0 \\
%-\underline{\bar{g}^{00}}u^{ij}_0
%\end{pmatrix},\\
\tilde{G}_2=E^2\p{-E^{-2}\frac{\Omega}{t} \bigl[\mathcal{\tilde{D}}^{ij0}(t) u^{00}_0+ \mathcal{\tilde{D}}^{ijk}(t) w^{00}_k \bigr] -\frac {\Omega}{t}\mathcal{\tilde{D}}^{ij}(t, \mathbf{u})+4\epsilon u^{00}u^{ij}_0+\hat{f}^{ij }\\
	0 \\
	-\underline{\bar{g}^{00}}u^{ij}_0
	},  \label{E:G2}   \\
	\tilde{S}_2 =\p{-\frac{\Omega}{t} \mathcal{\tilde{D}}^{ijk}\del{k}\Phi \\ 0 \\ 0}+\epsilon \p{ \mathcal{S}^{ij}(\epsilon,t,u^{\alpha\beta}, u, u^{\alpha\beta}_\sigma, u_\sigma) \\0 \\ 0 }, \label{E:S2}
	\end{gather}
	\begin{gather}
	%\tilde{S}_3=E^2
	%\begin{pmatrix}
	% \mathcal{Q}+ \mathcal{E}+4\epsilon u^{00}u_0-8\epsilon(u^{00})^2 + \hat{f}\\
	%0 \\
	%-\underline{\bar{g}^{00}}u_0
	%\end{pmatrix}, \\
	\tilde{G}_3=E^2
	\begin{pmatrix}
	-E^{-2}\frac{\Omega}{t} \bigl[\mathcal{D}^{0}(t) u^{00}_0+ \mathcal{D}^{k}(t) w^{00}_k \bigr] -\frac {\Omega}{t}\mathcal{D}(t, \mathbf{u})+4\epsilon u^{00}u_0-8\epsilon(u^{00})^2  + \hat{f}\\
	0 \\
	-\underline{\bar{g}^{00}}u_0
	\end{pmatrix}, \label{E:G3} \\
	\tilde{S}_3=\p{-\frac{\Omega}{t} \mathcal{D}^{k}\del{k}\Phi \\ 0 \\ 0}+\epsilon \p{ \mathcal{S} (\epsilon,t,u^{\alpha\beta}, u, u^{\alpha\beta}_\sigma, u_\sigma) \\0 \\ 0 },  \label{E:S3} \\
	\mathcal{\tilde{D}}^{ij \nu}(t)= -\mathcal{\tilde{R}}^{ij \nu}(t)- \mathcal{\tilde{F}}^{ij \nu}(t),    \quad  	\mathcal{\tilde{D}}^{ij}(t,\mathbf{u})= -\mathcal{\tilde{R}}^{ij}(t,\mathbf{u})- \mathcal{\tilde{F}}^{ij}(t,\mathbf{u}), \label{E:D4} \\
	\mathcal{D}^\mu(t)= - \mathcal{R}^{\mu}(t)-\mathcal{F}^{\mu}(t) \quad \text{and} \quad \mathcal{D} (t,\mathbf{u})= - \mathcal{R} (t,\mathbf{u})-\mathcal{F} (t,\mathbf{u}). \label{E:D5}
	\end{gather}
Not only is the system of equations \eqref{E:EIN2} \eqref{E:EIN3} and \eqref{E:FINALEINSTEINEQUATIONS1} completely equivalent to the reduced conformal Einstein equations for any $\epsilon > 0$, but  it is now of the form required to apply the results from \S\ref{S:MODEL}.
This completes our transformation of the reduced conformal Einstein equations.

\subsection{The conformal Euler equations\label{conformalEul}}
With the transformation of the reduced conformal Einstein equations complete, we now turn to the problem of transforming
the conformal Euler equations.
We begin observing that it follows from the computations from \cite[\S2.2]{oli6}  that conformal Euler equations \eqref{Confeul} can be written in
Newtonian coordinates as
\begin{align}\label{E:FINALEULEREQUATIONS1}
  \bar{B}^0\partial_0\begin{pmatrix}
    \zeta\\
     z ^i
  \end{pmatrix}+
  \bar{B}^k\partial_k\begin{pmatrix}
    \zeta\\
     z ^i
  \end{pmatrix}=
  \frac{1}{t}\mathcal{\bar{B}}\hat{\mathbb{P}}_2\begin{pmatrix}
    \zeta\\
     z ^i
  \end{pmatrix}+\bar{S},
\end{align}
where
  \begin{align*}
  \bar{B}^0 &=\begin{pmatrix}
    1 & \epsilon\frac{L^0_i}{\underline{\bar{v}^0}}\\
    \epsilon \frac{L^0_j}{\underline{\bar{v}^0}} & K^{-1}M_{ij}
  \end{pmatrix} %\label{barB0}
  ,\\
  \bar{B}^k &=\begin{pmatrix}
    \frac{1}{\epsilon}\frac{\underline{\bar{v}^k}}{\underline{\bar{v}^0}} & \frac{L^k_i}{\underline{\bar{v}^0}}\\
    \frac{L^k_j}{\underline{\bar{v}^0}} & K^{-1}M_{ij}\frac{1}{\epsilon}\frac{\underline{\bar{v}^k}}{\underline{\bar{v}^0}}
  \end{pmatrix}
  =\begin{pmatrix}
    \frac{1}{\underline{\bar{v}^0}} z ^k & \frac{1}{\underline{\bar{v}^0}}\delta^k_i\\
    \frac{1}{\underline{\bar{v}^0}}\delta^k_j & K^{-1}\frac{1}{\underline{\bar{v}^0}}M_{ij} z ^k
  \end{pmatrix} %\label{barBk}
  ,\\
  \mathfrak{\bar{B}} &=\begin{pmatrix}
    1 & 0\\
    0 & -K^{-1}(1-3\epsilon^2K)\frac{\underline{\bar{g}_{ik}}}{\underline{\bar{v}_0}\underline{\bar{v}^0}}
  \end{pmatrix}, \label{barBfr}  \\
  \hat{\mathbb{P}}_2&=\begin{pmatrix}
    0 & 0\\
    0 & \delta^k_{j}
  \end{pmatrix}, %\label{hatP2}
  \\
  \bar{S}&=\begin{pmatrix}
    -L^\mu_i\underline{\bar{\Gamma}^i_{\mu\nu}}\,\underline{\bar{v}^\nu}\frac{1}{\underline{\bar{v}^0}} \\
    -K^{-1}(1-3\epsilon^2K)\frac{1}{\underline{\bar{v}_0}}\underline{\bar{g}_{0j}}
    -K^{-1}M_{ij}\underline{\bar{v}^\mu}\frac{1}{\epsilon}\underline{\bar{\Gamma}^i_{\mu\nu}} \,
\underline{\bar{v}^\nu}\frac{1}{\underline{\bar{v}^0}}
  \end{pmatrix}, %\label{barS}
  \\
  L^\mu_i&=\delta^\mu_i-\frac{\underline{\bar{v}_i}}{\underline{\bar{v}_0}}\delta^\mu_0 %\label{Lmudef}
\intertext{and}
  M_{ij}&=\underline{\bar{g}_{ij}}-\frac{\underline{\bar{v}_i}}{\underline{\bar{v}_0}}
\underline{\bar{g}_{0j}}-\frac{\underline{\bar{v}_j}}{\underline{\bar{v}_0}}
  \underline{\bar{g}_{0i}}
  +\frac{\underline{\bar{g}_{00}}}{(\underline{\bar{v}_0})^2}\underline{\bar{v}_i}\,\underline{\bar{v}_j}.
%\label{Mijdef}
%=\bar{g}_{ij}+\epsilon^2\mathcal{S}().
\end{align*}
In order to bring \eqref{E:FINALEULEREQUATIONS1} into the required form, we perform a change in variables from $z^i$ to $z_j$,
which are related via  a map of the form
$z^i=z^i(z_j,\underline{\bar{g}^{\mu\nu}})$, see \eqref{E:Z_IANDZ^I}.
Denoting the Jacobian of the transformation by
\begin{equation*}
  J^{im}:=\frac{\partial z ^i}{\partial z _m},
\end{equation*}
we observe that
\begin{equation*}
  \partial_\sigma z ^i=J^{im}\partial_\sigma z _m+\delta_\sigma^0 \frac{\partial z ^i}{\partial\underline{\bar{g}^{\mu\nu}}}
\underline{\bar{\partial}_0\bar{g}^{\mu\nu}} +\epsilon \delta_\sigma^j \frac{\partial z ^i}{\partial\underline{\bar{g}^{\mu\nu}}}
\underline{\bar{\partial}_j\bar{g}^{\mu\nu}}.
\end{equation*}
Multiplying \eqref{E:FINALEULEREQUATIONS1} by the block matrix $\diag{(1, J^{jl})}$ and changing variables from $(\zeta,z^i)$ to
$(\delta \zeta,z_j)$, where we recall from \eqref{E:DELZETA} that $\delta \zeta = \zeta - \zeta_H$, we can write the conformal Euler equations
\eqref{E:FINALEULEREQUATIONS1} as
\begin{align}\label{E:FINALEULEREQUATIONS}
  B^0\partial_0\begin{pmatrix}
    \delta\zeta\\
     z _m
  \end{pmatrix}+
  B^k\partial_k\begin{pmatrix}
    \delta\zeta\\
     z _m
  \end{pmatrix}=
  \frac{1}{t}\mathfrak{B}\hat{\mathbb{P}}_2\begin{pmatrix}
    \delta\zeta\\
     z _m
  \end{pmatrix}+\hat{S},
\end{align}
where
  \begin{align}
  B^0&=\begin{pmatrix}
    1 & \epsilon\frac{L^0_i}{\underline{\bar{v}^0}}J^{im}\\
    \epsilon \frac{L^0_j}{\underline{\bar{v}^0}}J^{jl} & K^{-1}M_{ij}J^{jl}J^{im}
\end{pmatrix}, \label{E:EULERB0}\\
  B^k&=\begin{pmatrix}
    \frac{1}{\underline{\bar{v}^0}} z ^k & \frac{1}{\underline{\bar{v}^0}}J^{km}\\
    \frac{1}{\underline{\bar{v}^0}}J^{kl}& K^{-1}\frac{1}{\underline{\bar{v}^0}}M_{ij}J^{jl}J^{im} z ^k
  \end{pmatrix}, \label{E:EULERBk} \\
  \mathfrak{B}&=\begin{pmatrix}
    1 & 0\\
    0 & -K^{-1}(1-3\epsilon^2K)\frac{1}{\underline{\bar{v}_0}\underline{\bar{v}^0}}J^{ml}
  \end{pmatrix} \label{E:EULERBfr}
\end{align}
and
\begin{align}
  \hat{S}=&\begin{pmatrix}
    -L^0_i\underline{\bar{\Gamma}^i_{00}}-L^\mu_i\underline{\bar{\Gamma}^i_{\mu j}}\,\underline{\bar{v}^j} \frac{1}{\underline{\bar{v}^0}}+(\bar{\gamma}^i_{i0}-\underline{\bar{\Gamma}^i_{i0}}) \\
    -K^{-1}J^{jl}M_{ij}\underline{\bar{v}^\mu}\frac{1}{\epsilon}\underline{\bar{\Gamma}^i_{\mu\nu}}
\,\underline{\bar{v}^\nu}\frac{1}{\underline{\bar{v}^0}}+\epsilon\frac{L^0_j}{\underline{\bar{v}^0}}J^{jl}\bar{\gamma}^i_{i0}
  \end{pmatrix}-\begin{pmatrix}
    \epsilon\frac{L^0_i}{\underline{\bar{v}^0}}\frac{\partial z ^i}{\partial\underline{\bar{g}^{\mu\nu}}}
    \underline{\bar{\partial_0}\bar{g}^{\mu\nu}}+\epsilon\frac{\delta^k_i}{\underline{\bar{v}^0}}\frac{\partial z ^i}
    {\partial\underline{\bar{g}^{\mu\nu}}}
    \underline{\bar{\partial}_k\bar{g}^{\mu\nu}}\\
    K^{-1}M_{ij}J^{jl}\frac{\partial z ^i}{\partial\underline{\bar{g}^{\mu\nu}}}
\underline{\bar{\partial}_0\bar{g}^{\mu\nu}}
    +\epsilon K^{-1}\bar{M}_{ij}\frac{ z ^k}{\underline{\bar{v}^0}}J^{jl}\frac{\partial z ^i}{\partial\underline{\bar{g}^{\mu\nu}}}\underline{\bar{\partial}_k\bar{g}^{\mu\nu}}
  \end{pmatrix}. \nonumber
\end{align}

By direct calculation, we see from \eqref{E:Z_IANDZ^I} and the expansions \eqref{E:GIJ} and \eqref{E:G0MU}
that 
\begin{equation}\label{E:JACOBI}
  J^{ik}
=E^{-2}\delta^{ik}+\epsilon \Theta^{ik}+\epsilon^2 \texttt{S}^{ik}(\epsilon,t,u,u^{\mu\nu},z_j),
\end{equation}
where  $\texttt{S}^{ik}(\epsilon,t,0,0,0)=0$. 
Similarly, it is not difficult to see from \eqref{E:Z_IANDZ^I} and
the expansions  \eqref{E:GIJ} and \eqref{E:G0MU}--\eqref{E:PGIJ} that
\begin{gather}
\delta_\sigma^0 \frac{\partial z ^i}{\partial\underline{\bar{g}^{\mu\nu}}}
\underline{\bar{\partial}_0\bar{g}^{\mu\nu}} +\epsilon \delta_\sigma^j \frac{\partial z ^i}{\partial\underline{\bar{g}^{\mu\nu}}}\underline{\bar{\partial}_j\bar{g}^{\mu\nu}}
= -2\delta^0_\sigma\biggl(E^{-2}\frac{\Omega}{t}z_j\delta^{ij} +\sqrt{\frac{3}{\Lambda}}(u^{0i}_0 + 3u^{0i})\biggr)
+ \epsilon \mathcal{S}^i(\epsilon,t,u,u^{\alpha\beta},u_\gamma,u^{\alpha\beta}_\gamma,z_j)
\label{E:JACOBI2}%\\
\intertext{and}
\epsilon  \frac{\delta^k_i}{\underline{\bar{v}^0}}\frac{\partial z^i}{\partial \bar{g}^{\mu\nu}}\bar{\partial}_k\bar{g}^{\mu\nu}=- \epsilon \frac{6}{\Lambda} u^{0i}_k\delta^k_i+\epsilon^2 \mathcal{S} (\epsilon,t,u,u^{\alpha\beta},u_\gamma,u^{\alpha\beta}_\gamma,z_j),
\end{gather}
where  $\mathcal{S}^i(\epsilon,t,0,0,0,0,0)=0$ and $\mathcal{S}(\epsilon,t,0,0,0,0,0)=0$. %and $\mathcal{S}^i$, $\mathcal{S} \in E^{1,0}$.
We further note that  the term $-K^{-1}J^{jl}M_{ij}\underline{\bar{v}^\mu}\frac{1}{\epsilon}\underline{\bar{\Gamma}^j_{\mu\nu}}
\underline{\bar{v}^\nu}\frac{1}{\underline{\bar{v}^0}}$ found in $\hat{S}$
above is not singular in $\epsilon$. This can be seen from the
expansions \eqref{E:GIJ}, \eqref{E:G0MU}, \eqref{E:G_MUNU} and \eqref{E:GAMMAI00}, which can be used to calculate
\begin{align}
  \frac{1}{\epsilon}\underline{\bar{\Gamma}^j_{\mu\nu}} \underline{v^\mu}\underline{v^\nu}= & 2\underline{\bar{\Gamma}^j_{0i}} \underline{v^0} z^i+ \epsilon\underline{\bar{\Gamma}^j_{ik}} z^i z^k+\frac{1}{\epsilon}\underline{\bar{\Gamma}^j_{00}} \underline{v^0} \underline{v^0} \nonumber\\
  =& \sqrt{\frac{\Lambda}{3}}\frac{2\Omega}{t}E^{-2} z_j\delta^{ij}+ \bigl[u^{0i}_0+(3+4\Omega)u^{0i}\bigr]+\frac{1}{2}\frac{3}{\Lambda} E^{-2}\delta^{ik}u^{00}_k + \epsilon \mathcal{S}^j(\epsilon,t,u,u^{\alpha\beta},u_\gamma,u^{\alpha\beta}_\gamma,z_j), \label{Ssingterm}
\end{align}
where  $\mathcal{S}^j(\epsilon,t,0,0,0,0,0)=0$. % and $\mathcal{S}^j \in E^{1,0}$.

Using the expansions \eqref{E:JACOBI}, \eqref{E:JACOBI2} and \eqref{Ssingterm} in conjunction
with \eqref{E:GIJ}, \eqref{E:G0MU}, \eqref{E:G_MUNU}, \eqref{E:CHRISTOFFEL}, \eqref{E:V_0}, \eqref{E:V^0} and \eqref{E:VELOCITY},
we can expand the matrices $\{B^0, B^k,\mathfrak{B}\}$ and source term $S$ defined above as follows:
  \begin{align}
  B^0%=\p{
%  	1 & 0 \\
%  	0 & K^{-1}\bar{g}^{lm}
%  }+\epsilon^2 \mathcal{S}^0(\epsilon,t,u,u^{\alpha\beta},u_\gamma,u^{\alpha\beta}_\gamma,z_j)  \nnb \\&
=&\p{
  	1 & 0 \\
  	0 & K^{-1}E^{-2}\delta^{lm}
  }+\epsilon \p{0 & 0 \\ 0 & K^{-1} \Theta^{lm}}+\epsilon^2 \texttt{S}^0(\epsilon,t,u,u^{\alpha\beta},u_\gamma,u^{\alpha\beta}_\gamma,z_j), \label{E:B0REMAINDER}\\
  B^k%&= \p{ (-\bar{g}^{00})^{-\frac{1}{2}}z^k  &  (-\bar{g}^{00})^{-\frac{1}{2}} \bar{g}^{km}  \\  (-\bar{g}^{00})^{-\frac{1}{2}} \bar{g}^{kl}  &   K^{-1} (-\bar{g}^{00})^{-\frac{1}{2}}\bar{g}^{lm} z^k} +\epsilon^2 \mathcal{S}^k(\epsilon,t,u,u^{\alpha\beta},u_\gamma,u^{\alpha\beta}_\gamma,z_j)  \nnb \\
  =&\sqrt{\frac{3}{\Lambda}}\p{
  	z^k & E^{-2} \delta^{km}\\
  	E^{-2}\delta^{kl} & K^{-1} E^{-2}\delta^{lm}z^k
  	}+\epsilon \sqrt{\frac{3}{\Lambda}} \p{\frac{3}{\Lambda}t u^{00} z^k & \Theta^{km}+\frac{3}{\Lambda} t u^{00}E^{-2} \delta^{km} \\ \Theta^{kl}+\frac{3}{\Lambda} t u^{00}E^{-2} \delta^{kl} & K^{-1}\bigl(\Theta^{lm}+\frac{3}{\Lambda} t u^{00} E^{-2} \delta^{lm}\bigr) z^k } \nnb \\&+\epsilon^2 \texttt{S}^k(\epsilon,t,u,u^{\alpha\beta},u_\gamma,u^{\alpha\beta}_\gamma,z_j), \label{E:BkREMAINDER}  \\
  \mathfrak{B}%&=    \p{
  	%1 & 0 \\
  	%0 & K^{-1} (1-3\epsilon^2 K) \bar{g}^{lm}
  	%}  +\epsilon^2 \mathcal{\tilde{S}}(\epsilon,t,u,u^{\alpha\beta},u_\gamma,u^{\alpha\beta}_\gamma,z_j),   \\
    =& \p{
  	1 & 0 \\
  	0 & K^{-1} (1-3\epsilon^2 K) E^{-2}\delta^{lm}
  	}+\epsilon \p{0 & 0 \\ 0 &  K^{-1}\Theta^{lm}} +\epsilon^2 \texttt{S}(\epsilon,t,u,u^{\alpha\beta},u_\gamma,u^{\alpha\beta}_\gamma,z_j)  \label{E:BCALREMAINDER}
\intertext{and}
  \hat{S}%=&\p{0 \\
   	%-K^{-1}\left[\sqrt{\frac{3}{\Lambda}}\bigl(-u^{0l}_0+(-3+4\Omega)u^{0l}\bigr)+\frac{1}{2}\left(\frac{3}{\Lambda}\right)^{\frac{3}{2}}E^{-2}\delta^{lk}u^{00}_k\right]	
    % }+\epsilon \mathcal{\hat{S}}(\epsilon,t,u,u^{\alpha\beta},u_\gamma,u^{\alpha\beta}_\gamma,z_j), \label{E:SREMAINDER}\\
  =&\p{0 \\
  	-K^{-1}\left[\sqrt{\frac{3}{\Lambda}}\bigl(-u^{0l}_0+(-3+4\Omega)u^{0l}\bigr)+\frac{1}{2}\left(\frac{3}{\Lambda}\right)^{\frac{3}{2}}E^{-2}\delta^{lk}u^{00}_k\right]	
  }  \nnb  \\
  &  +\epsilon \p{-\Xi_{kj}E^{-2}\frac{\Omega}{t}\delta^{kj}+\frac{1}{2} E^2\delta_{kj}\bigl[-\frac{2}{t}\Omega \Theta^{kj}+E^{-2}\bigl(u^{kj}_0+\frac{3}{\Lambda}(3u^{00} + u^{00}_0-u_0 )\delta^{kj}\bigr)\bigr]+\frac{6}{\Lambda} u^{0i}_k\delta^k_i  \\  \mathcal{S}_1(\epsilon,t,u,u^{\alpha\beta},u_\gamma,u^{\alpha\beta}_\gamma,z_j)}  \nnb \\ &    +\epsilon^2 \mathcal{S}_2(\epsilon,t,u,u^{\alpha\beta},u_\gamma,u^{\alpha\beta}_\gamma,z_j), \label{E:SREMAINDER}
\end{align}
where the remainder terms $\texttt{S}^0$, $\texttt{S}^k$, $\texttt{S}$, $\mathcal{S}_1$ and  $\mathcal{S}_2$ all vanish
for $(u,u^{\alpha\beta},u_\gamma,u^{\alpha\beta}_\gamma,z_j)=(0,0,0,0,0)$. 
We further decompose $\hat{S}$ into a local and non-local part by writing
\al{S}{\hat{S}=G+S,}
where
\al{G}{G=\p{0 \\
		-K^{-1}\left[\sqrt{\frac{3}{\Lambda}}\bigl(-u^{0l}_0+(-3+4\Omega)u^{0l}\bigr) +\frac{1}{2}\left(\frac{3}{\Lambda}\right)^{\frac{3}{2}}E^{-2}\delta^{lk}w^{00}_k\right]
	},	}
and
\al{S2a}{
	S= \p{
		0\\ -K^{-1}\frac{1}{2}\left(\frac{3}{\Lambda}\right)^{\frac{3}{2}}E^{-2}\delta^{lk} \del{k}\Phi
	} + \epsilon \mathcal{S}(\epsilon,t,u,u^{\alpha\beta},u_\gamma,u^{\alpha\beta}_\gamma,z_j).
	}

\subsection{The complete evolution system\label{completeevolution}}
To complete the transformation of reduced conformal Einstein--Euler equations, we need to treat $\phi$, defined by \eqref{E:DEFOFPHISMALL},
as an independent variable and derive an evolution equation for it.  To do so, we see from \eqref{E:FINALEULEREQUATIONS} that
we can write the time derivative of $\delta\zeta$ as
\begin{align} \label{dtzeta}
\partial_t\delta\zeta = & e_0 (B^0)^{-1}\biggl[-B^k\partial_k\begin{pmatrix}\delta\zeta \\ z_m \end{pmatrix}
+ \frac{1}{t}\mathfrak{B}\mathbb{\hat{P}}_2\begin{pmatrix}\delta\zeta \\ z_m \end{pmatrix} + \hat{S}\biggr]  \nnb \\ =&-\sqrt{\frac{\Lambda}{3}}\bigl(z^k\del{k}\delta\zeta+E^{-2}\delta^{km}\del{k}z_m\bigr)+\epsilon \mathcal{S}(\epsilon,t,u,u^{\alpha\beta},u_\gamma,u^{\alpha\beta}_\gamma,z_j),
\end{align}
where $e_0=(1,0,0,0)$ and $\mathcal{S}$ vanishes
for $(u,u^{\alpha\beta},u_\gamma,u^{\alpha\beta}_\gamma,z_j)=(0,0,0,0,0)$.
Noting that \eqref{E:PTZETAH} is equivalent to
\begin{align*}
  \frac{1}{1+\epsilon^2 K}\partial_t \rho_H=\frac{3}{t}\rho_H-\frac{3}{t}\Omega\rho_H,
\end{align*}
it follows directly from the definition of $\delta\zeta$, $\rho$ and $\delta\rho$, see \eqref{E:DELZETA}, \eqref{E:ZETA2} and
\eqref{E:DELRHO}, that
\begin{align}\label{E:PTZETA}
  \partial_t(\delta\zeta)=\frac{1}{1+\epsilon^2 K}\frac{1}{\rho}\partial_t(\delta\rho)+\frac{3}{t}(\Omega-1)\frac{\delta\rho}{\rho}
\end{align}
and
\begin{equation} \label{dkzeta}
  \partial_k (\delta\zeta)=\frac{1}{1+\epsilon^2 K}\frac{1}{\rho}\partial_k \rho.
\end{equation}
Using \eqref{E:PTZETA}, \eqref{dkzeta} and \eqref{E:B0REMAINDER}-\eqref{E:SREMAINDER}, we can write \eqref{dtzeta} as
\begin{align}
  \frac{1}{1+\epsilon^2 K}\partial_t (\delta\rho)+\frac{3}{t}(\Omega-1)\delta\rho+ \sqrt{\frac{3}{\Lambda}}\partial_k(\rho z^k)=-\epsilon \left(\frac{3}{\Lambda}\right)^{\frac{3}{2}}t^{4+3\epsilon^2 K} \del{k}\bigl( u^{00} e^{(1+\epsilon^2 K)\zeta} z^k\bigr)  \nnb \\
  +\epsilon t^{3(1+\epsilon^2 K)} \check{S}+ \epsilon^2 t^{3(1+\epsilon^2 K)}  \mathcal{S}(\epsilon,t,u,u^{\alpha\beta},u_\gamma,u^{\alpha\beta}_\gamma,z_j,\partial_k z_j, \delta\zeta,\partial_k\delta \zeta) ,
\label{E:PTDRHO}
\end{align}
where
\begin{align*}
\check{S}
= &  %+\sqrt{\frac{3}{\Lambda}}z_l E^{-2} \biggl( u^{kl}_k+\frac{3}{\Lambda}(u^{00}_k-u_k)\delta^{kl} \biggr)
\frac{1}{2} E^2\delta_{kj}\biggl[-\frac{2}{t}\Omega \Theta^{kj}+E^{-2}\bigl(u^{kj}_0+\frac{3}{\Lambda}(3u^{00} + u^{00}_0-u_0 )\delta^{kj}\bigr)\biggr] e^{(1+\epsilon^2 K)(\zeta_H+\delta\zeta)} +\frac{6}{\Lambda} u^{0i}_k\delta^k_i e^{(1+\epsilon^2 K)(\zeta_H+\delta\zeta)}  \nnb \\
&-\Xi_{kj} E^{-2}\frac{\Omega}{t}\delta^{kj} e^{(1+\epsilon^2 K)(\zeta_H+\delta\zeta)} %\label{Ycaldef}
\end{align*}
and the remainder term $\mathcal{S}$ satisfies  $\mathcal{S}(\epsilon,t,0,0,0,0,0,0,\delta\zeta,0)=0$. %and $\mathcal{S} \in E^{1,0}$.
 Taking the $L^2$ inner product of  \eqref{E:PTDRHO} with $1$ and
then multiplying by $1/(\epsilon t^{3{(1+\epsilon^2K)}})$, we obtain the desired evolution equation for $\phi$ given by
\begin{equation} \label{E:EQUATIONSOFPHI}
\partial_t \phi = \acute{G}+\acute{S} ,
\end{equation}
where
\begin{gather}
	\acute{G}=(1+\epsilon^2 K) \bigl\langle 1, \check{S} \bigr\rangle
	- \frac{3(1+\epsilon^2K)\Omega}{t}\phi \\
	\intertext{and}
	\acute{S} =\epsilon (1+\epsilon^2 K) \bigl\langle 1,  \mathcal{S}(\epsilon,t,u,u^{\alpha\beta},u_\gamma,u^{\alpha\beta}_\gamma,z_j,\partial_k z_j, \delta\zeta,\partial_k\delta \zeta)  \bigr\rangle. \label{Ycaltildef}
\end{gather}
%and $\acute{S} \in E^{1,0}$.

Next, we incorporate the shifted variable \eqref{E:WPHI}
into our set of gravitational variables by defining the vector quantity
\begin{align}\label{E:U1}
 \mathbf{U}_1=(u^{0\mu}_0, w^{0\mu}_k, u^{0\mu}, u^{ij}_0, u^{ij}_k, u^{ij}, u_0, u_k, u)^T,
\end{align}
and then combine this with the fluid variables and $\phi$ by defining
\begin{align}\label{E:REALVAR}
 \mathbf{U}=(\mathbf{U}_1, \mathbf{U}_2, \phi)^T,
\end{align}
where
\begin{align}\label{E:REALVAR1}
  \mathbf{U}_2=(\delta\zeta, z _i)^T.
\end{align}
Gathering \eqref{E:EIN2}, \eqref{E:EIN3}, \eqref{E:FINALEINSTEINEQUATIONS1}, \eqref{E:FINALEULEREQUATIONS} and \eqref{E:EQUATIONSOFPHI} together, we arrive at the following complete evolution equation for $\mathbf{U}$:
\begin{equation}\label{E:REALEQ}
  \begin{aligned}
    \mathbf{B}^0\partial_t \mathbf{U}+\mathbf{B}^i\partial_i \mathbf{U}+\frac{1}{\epsilon}\mathbf{C}^i\partial_i\mathbf{U}=\frac{1}{t}\mathbf{B}\mathbf{P}
    \mathbf{U}+\mathbf{H}+\mathbf{F},
  \end{aligned}
\end{equation}
where
\begin{align}
  \mathbf{B}^0=\begin{pmatrix}
    \tilde{B}^0 & 0 & 0 & 0 & 0 \\
    0 & \tilde{B}^0 & 0 & 0 & 0 \\
    0 & 0 & \tilde{B}^0 & 0 & 0 \\
    0 & 0 & 0 & B^0 & 0 \\
    0 & 0 & 0 & 0 & 1
  \end{pmatrix},
  \quad
  \mathbf{B}^i=\begin{pmatrix}
    \tilde{B}^i & 0 & 0 & 0 & 0 \\
    0 & \tilde{B}^i & 0 & 0 & 0 \\
    0 & 0 & \tilde{B}^i & 0 & 0 \\
    0 & 0 & 0 & B^i & 0 \\
    0 & 0 & 0 & 0 & 0
  \end{pmatrix},
  \quad
  \mathbf{C}^i=\begin{pmatrix}
    \tilde{C}^i & 0 & 0 & 0 & 0 \\
    0 & \tilde{C}^i & 0 & 0 & 0 \\
    0 & 0 & \tilde{C}^i & 0 & 0 \\
    0 & 0 & 0 & 0 & 0 \\
    0 & 0 & 0 & 0 & 0
  \end{pmatrix}, \label{E:REALEQa}
\end{align}

\begin{align}
  \mathbf{B}=\begin{pmatrix}
    \mathfrak{\tilde{B}} & 0 & 0 & 0 & 0 \\
    0 & -2E^2\underline{\bar{g}^{00}}I & 0 & 0 & 0 \\
    0 & 0 & -2E^2\underline{\bar{g}^{00}}I & 0 & 0 \\
    0 & 0 & 0 & \mathfrak{B} & 0 \\
    0 & 0 & 0 & 0 & 1
  \end{pmatrix},
  \quad
  \mathbf{P}=\begin{pmatrix}
    \mathbb{P}_2 & 0 & 0 & 0 & 0 \\
    0 & \mathbb{\breve{P}}_2 & 0 & 0 & 0 \\
    0 & 0 & \mathbb{\breve{P}}_2 & 0 & 0 \\
    0 & 0 & 0 & \hat{\mathbb{P}}_2 & 0 \\
    0 & 0 & 0 & 0 & 0
  \end{pmatrix}, \label{E:REALEQb}
\end{align}
\begin{align}
  \mathbf{H}%=\mathbf{H}(\epsilon,t,\mathbf{U})
  =(\tilde{G}_1,
    \tilde{G}_2,
    \tilde{G}_3,
    G, \acute{G} ) ^T  \AND
\mathbf{F}%=\mathbf{F}(\epsilon,t,\mathbf{U}, \del{k}\Phi, \del{l}\del{k}\Phi, \del{t}\del{k}\Phi)
    =(\tilde{S}_1,
    \tilde{S}_2,
    \tilde{S}_3,
    S, \acute{S} ) ^T . \label{E:REALEQc}
\end{align}

The importance of equation \eqref{E:REALEQ} is threefold. First, solutions of the reduced conformal Einstein--Euler equations determine solutions of
\eqref{E:REALEQ} as we shall show in the following section. Second, equation \eqref{E:REALEQ} is of the required form so that the a priori estimates from \S\ref{S:MODEL} apply to its solutions. Finally, estimates for solutions of \eqref{E:REALEQ} that are determined from solutions of the reduced conformal Einstein--Euler equations imply estimates for solutions of the reduced conformal Einstein--Euler equations. In this way, we are able  to use the evolution equation \eqref{E:REALEQ} in conjunction with the a priori estimates from \S\ref{S:MODEL} to establish, for appropriate small data, the global existence of 1-parameter families of $\epsilon$-dependent solutions to the conformal Einstein--Euler equations that exist globally to the future and converge in the limit $\epsilon \searrow 0$
to solutions of the cosmological conformal Poisson-Euler equations of Newtonian gravity.

\section{Reduced conformal Einstein--Euler equations: local existence and continuation\label{EEcont}}

In this section, we consider the local-in-time existence and uniqueness of solutions to the reduced Einstein--Euler
equations and discuss how these solutions determine solutions of \eqref{E:REALEQ}. Furthermore, we
establish a continuation principle for the Einstein--Euler equations which is based on bounding the
$H^s$ norm of $\mathbf{U}$ for $s \in \Zbb_{\geq 3}$.

\begin{proposition} \label{rcEEexist}
Suppose $s\in \Zbb_{\geq 3}$, $\epsilon_0>0$, $\epsilon \in (0,\epsilon_0)$, $T_0 \in (0,1]$,  $(\bar{g}^{\mu\nu}_0)$ $\in$ $H^{s+1}(\mathbb{T}_\epsilon^3,\mathbb{S}_4)$,
and $(\bar{g}^{\mu\nu}_1)$ $\in$ $H^{s}(\mathbb{T}_\epsilon^3,\mathbb{S}_4)$,
 $(\bar{v}^\mu_0)$ $\in$  $H^{s}(\mathbb{T}_\epsilon^3,\mathbb{R}^4)$ and $\bar{\rho}_0$ $\in$ $H^{s}(\mathbb{T}_\epsilon^3)$,
where $\bar{v}^\mu_0$ is normalized by $\bar{g}_{0\mu\nu}\bar{v}^{\mu}_0\bar{v}^\nu_0=-1$, and $\det(\bar{g}^{\mu\nu}_0) < 0$
and $\bar{\rho}_0>0$ on
$\mathbb{T}^3_\epsilon$. Then there exists a $T_1 \in (0,T_0]$ and
a unique classical solution
\begin{equation*}
(\bar{g}^{\mu\nu},\bar{v}^{\mu},\bar{\rho})\in \bigcap_{\ell=0}^2 C^{\ell}((T_1,T_0],H^{s+1-\ell}(\mathbb{T}^3_\epsilon))
\times \bigcap_{\ell=0}^1  C^{\ell}((T_1,T_0],H^{s-\ell}(\mathbb{T}_\epsilon^3)) \times \bigcap_{\ell=0}^1  C^{\ell}((T_1,T_0],H^{s-\ell}(\mathbb{T}_\epsilon^3)),
\end{equation*}
of the reduced conformal Einstein--Euler equations, given by \eqref{Confeul} and \eqref{RedEin}, on the spacetime region
$(T_1,T_0]\times \mathbb{T}^3_\epsilon$ that satisfies
\begin{equation*}
(\bar{g}^{\mu\nu},\bar{\partial_0}\bar{g}^{\mu\nu},\bar{v}^\mu,\bar{\rho})|_{t=T_0} =( \bar{g}^{\mu\nu}_0,\bar{g}^{\mu\nu}_1,\bar{v}^\mu_0,\bar{\rho}_0).
\end{equation*}
 Morover,
\begin{itemize}
\item[(i)] there exists a unique $\Phi \in \bigcap_{\ell=0}^1  C^{\ell}((T_1,T_0],\bar{H}^{s+2-\ell}(\mathbb{T}^3))$ that solves
equation \eqref{E:DEFOFPHI},
\item[(ii)] the vector $\mathbf{U}$, see \eqref{E:REALVAR}, is well-defined, lies in the space
\begin{equation*}
\mathbf{U} \in  \bigcap_{\ell=0}^1  C^{\ell}((T_1,T_0],H^{s-\ell}(\mathbb{T}^3,\mathbb{V})),
\end{equation*}
where
\begin{equation*}
\mathbb{V} = \mathbb{R}^4\times\mathbb{R}^{12}\times \mathbb{R}^4\times\mathbb{S}_3\times  (\mathbb{S}_3)^3\times\mathbb{S}_3\times \mathbb{R}
\times\mathbb{R}^3\times\mathbb{R} \times \mathbb{R}\times\mathbb{R}^3\times\mathbb{R},
\end{equation*}
and solves \eqref{E:REALEQ} on the spacetime region $(T_1,T_0]\times \mathbb{T}^3$, and
\item[(iii)] there exists a constant $\sigma > 0$, independent of $\epsilon\in (0,\epsilon_0)$ and $T_1\in (0,T_0)$, such that if $\mathbf{U}$ satisfies
\begin{equation*}
\norm{\mathbf{U}}_{L^\infty((T_1,T_0],H^s(\mathbb{T}^3))} < \sigma,
\end{equation*}
then the solution $(\bar{g}^{\mu\nu},\bar{v}^{\mu},\bar{\rho})$ can be uniquely continued as a classical solution
with the same regularity
to the larger spacetime region $(T^*_1,T_0]\times \mathbb{T}^3_\epsilon$ for some $T^*_1 \in (0,T_1)$.
\end{itemize}
\end{proposition}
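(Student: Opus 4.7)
My approach is to build the proof in three stages matching the three parts of the statement, then address the continuation principle separately.

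For part (i), I would exploit the fact that in the wave gauge \eqref{E:WAVEGAUGE} the reduced conformal Einstein equations \eqref{E:REDUCEDEINSTEIN} are, to leading order, a system of quasilinear wave equations for $\bar{g}^{\mu\nu}$, with principal symbol $-\bar{g}^{\lambda\sigma}\bar{\partial}_\lambda\bar{\partial}_\sigma\bar{g}^{\mu\nu}$ as in \eqref{E:REDUCEDCONFEINSTEINEQ1}. The conformal Euler equations \eqref{Confeul}, after using $\bar{v}^\mu\bar{v}_\mu=-1$ to solve for $\bar{v}^0$ in terms of $(\bar{v}_i)$, form a first-order quasilinear symmetric hyperbolic system in $(\bar{v}_i,\bar{\rho})$; this is essentially the structure already displayed in \eqref{E:FINALEULEREQUATIONS1}. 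Performing a standard first-order reduction of the Einstein wave equations (taking $\bar{\partial}_\mu\bar{g}^{\alpha\beta}$ as auxiliary unknowns) then produces a single quasilinear symmetric hyperbolic system for $(\bar{g}^{\mu\nu},\bar{\partial}\bar{g}^{\mu\nu},\bar{v}_i,\bar{\rho})$. Since $\bar{g}^{\mu\nu}_0$ is non-degenerate and $\bar{\rho}_0>0$ on $\mathbb{T}^3_\epsilon$, the natural $B^0$-matrix is uniformly positive definite in a neighbourhood of $t=T_0$, and the initial data belongs to $H^s$ with $s\geq 3 > n/2+1$. The classical local existence and uniqueness theorem for quasilinear symmetric hyperbolic systems then furnishes the solution $(\bar{g}^{\mu\nu},\bar{v}^\mu,\bar{\rho})$ in the stated function spaces on some interval $(T_1,T_0]$.

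For part (ii), I would first solve the Poisson equation \eqref{E:DEFOFPHI} pointwise in $t$: the right-hand side lies in $\bar{H}^s(\mathbb{T}^3)$ by construction of $\Pi$, so elliptic regularity on the torus produces a unique $\Phi(t,\cdot)\in \bar{H}^{s+2}(\mathbb{T}^3)$ with $\langle 1,\Phi\rangle=0$; time regularity follows by formally differentiating \eqref{E:DEFOFPHI} in $t$ and using the time regularity of $\delta\zeta$ obtained in part (i). The components of $\mathbf{U}_1$, $\mathbf{U}_2$ are then defined directly by \eqref{E:u.a}--\eqref{E:DELZETA}, \eqref{E:z.b} and \eqref{E:WPHI}, and the scalar $\phi$ by \eqref{E:DEFOFPHISMALL}; the claimed regularity of $\mathbf{U}$ follows immediately from the regularity of the underlying fields. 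That $\mathbf{U}$ satisfies \eqref{E:REALEQ} is \emph{by construction}: this is precisely the content of the derivation in \S\ref{S:FORMULATIONOFEQ}, where each block of \eqref{E:REALEQ} is produced by rewriting the reduced Einstein and Euler equations in the new variables and subtracting the Newtonian potential contribution via $\Phi$.

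For part (iii), I would invoke the standard continuation principle for symmetric hyperbolic systems applied to the first-order reduction of the reduced conformal Einstein--Euler system constructed in part (i). By Sobolev embedding, smallness of $\|\mathbf{U}\|_{L^\infty((T_1,T_0],H^s)}$ with $s\geq 3$ yields uniform $W^{1,\infty}$ control on $\mathbf{U}$. Using the definitions \eqref{E:u.a}--\eqref{E:DELZETA}, the $\epsilon$-expansions of \S\ref{epexpansions} (in particular \eqref{E:G_MUNU}, \eqref{E:PGMUNU}, \eqref{E:V_0}, \eqref{E:V^0}), and elliptic regularity applied to \eqref{E:DEFOFPHI} to recover $\partial_k\Phi$ from $\|\delta\zeta\|_{H^s}$, one translates this into uniform bounds on $\bar{g}^{\mu\nu}-\bar{h}^{\mu\nu}$, $\bar{\partial}\bar{g}$, $\delta\rho$ and $\bar{v}$. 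Choosing $\sigma$ sufficiently small guarantees that throughout $(T_1,T_0]$ the conformal metric stays uniformly Lorentzian, $\bar{\rho}$ stays uniformly positive, $\bar{v}^\mu$ stays uniformly timelike, and the $B^0$-matrix of the underlying first-order system remains uniformly positive definite. The standard continuation criterion then allows one to extend the solution past $t=T_1$ to some $(T_1^\ast,T_0]$ with $T_1^\ast\in(0,T_1)$.

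The main obstacle I anticipate is the book-keeping required to ensure that all constants in the continuation argument are uniform in $\epsilon\in(0,\epsilon_0)$: the $\epsilon^{-1}$ factors present in \eqref{E:REALEQ} through $\mathbf{C}^i$ and in the intermediate variables \eqref{E:u.a}--\eqref{E:z.b} must cancel against the small differences they multiply, which is exactly what the expansions of \S\ref{epexpansions} and the Newtonian potential subtraction \eqref{E:WPHI}--\eqref{E:DEFOFPHISMALL} are designed to achieve. Once this cancellation is exploited, the reduced Einstein--Euler system retains its symmetric hyperbolic structure with $\epsilon$-uniform hyperbolicity constants, and the continuation follows from the local existence theorem of part (i) applied at time $T_1$ to the propagated data.
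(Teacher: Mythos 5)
Your proposal follows essentially the same three-stage strategy as the paper: standard local existence/continuation for symmetric hyperbolic systems after a first-order reduction, elliptic regularity for the Poisson equation to define $\Phi$ and hence $\mathbf{U}$, and then translating $H^s$-smallness of $\mathbf{U}$ via Sobolev embedding and the $\epsilon$-expansions of \S\ref{epexpansions} into the uniform $W^{1,\infty}$ and pointwise nondegeneracy controls needed to invoke the continuation criterion. The paper routes the third step through an intermediate collection of variables $\mathbf{u}=(u^{\mu\nu},u^{\mu\nu}_\gamma,u,u_\gamma,\delta\zeta,z_i)$, bounding $\|\mathbf{u}\|_{H^s}$ by $\|\mathbf{U}\|_{H^s}$ using the $\Phi$ estimate, but this is organizational bookkeeping rather than a different idea; your account captures the substance accurately.
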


\begin{proof}
We begin by noting that the
reduced conformal Einstein--Euler equations are well defined as long as the conformal metric $\bar{g}^{\mu\nu}$ remains non-degenerate and the conformal fluid four-velocity remains future directed, that is,
\begin{equation} \label{welldef}
\det(\bar{g}^{\mu\nu})< 0 \AND \bar{v}^0 < 0.
\end{equation}
Since it is well known that the reduced Einstein--Euler equations can be written as a symmetric
hyperbolic system\footnote{This follows from writing the wave equation \eqref{E:REDUCEDCONFEINSTEINEQ1} in first order form and using one of the various methods for expressing  the relativistic conformal Euler equations as a symmetric hyperbolic system. One particular way of writing the conformal Euler equations in symmetric hyperbolic form is given in \S\ref{conformalEul} which is a variation of the method introduced by Rendall in \cite{Rendall:1992}. For other elegant approaches, see
\cite{braKarp,frauen,walton}.} provided that $\rho$ remains strictly positive, we obtain from standard local existence and continuation results for
symmetric hyperbolic systems, e.g. Theorems 2.1 and 2.2 of \cite{maj}, the  existence of a
unique local-in-time classical solution
\begin{equation}\label{locreg}
(\bar{g}^{\mu\nu},\bar{v}^{\mu},\bar{\rho})\in \bigcap_{\ell=0}^2 C^{\ell}((T_1,T_0],H^{s+1-\ell}(\mathbb{T}^3_\epsilon))
\times \bigcap_{\ell=0}^1  C^{\ell}((T_1,T_0],H^{s-\ell}(\mathbb{T}_\epsilon^3)) \times \bigcap_{\ell=0}^1  C^{\ell}((T_1,T_0],H^{s-\ell}(\mathbb{T}_\epsilon^3))
\end{equation}
of the reduced conformal Einstein--Euler equations, for some time $T_1\in (0,T_0)$, that
satisfies
\begin{equation*}
(\bar{g}^{\mu\nu},\bar{\partial_0}\bar{g}^{\mu\nu},\bar{v}^\mu,\bar{\rho})|_{t=T_0} =( \bar{g}^{\mu\nu}_0,\bar{g}^{\mu\nu}_1,\bar{v}^\mu_0,\bar{\rho}_0)
\end{equation*}
for given initial data
\begin{equation*}
(\bar{g}^{\mu\nu},\bar{\partial_0}\bar{g}^{\mu\nu},\bar{v}^\mu,\bar{\rho}_0)
\in H^{s+1}(\mathbb{T}_\epsilon^3,\mathbb{S}_4)\times H^{s}(\mathbb{T}_\epsilon^3,\mathbb{S}_4)
\times H^{s}(\mathbb{T}_\epsilon^3,\mathbb{R}^4)\times H^s(\mathbb{T}_\epsilon^3)
\end{equation*}
satisfying \eqref{welldef} and $\bar{\rho}_0>0$ on the initial hypersurface $t=T_0$. Moreover,
if the solution satisfies
\begin{gather}
\det(\bar{g}^{\mu\nu}(\bar{x}^\gamma))\leq c_1< 0, \quad \bar{v}^0(\bar{x}^\gamma) \leq c_2 < 0 \label{contA}
\intertext{and}
\bar{\rho}(\bar{x}^\gamma) \geq c_3 > 0 \notag
\end{gather}
for all $(\bar{x}^\gamma)$ $\in$ $(T_1,T_0]\times \mathbb{T}^3_\epsilon$,
for some constants $c_i$, $i=1,2,3$,
and
\begin{equation*}\label{contB}
\norm{\bar{g}^{\mu\nu}}_{L^\infty((T_1,T_0],W^{1,\infty}(\mathbb{T}^3_\epsilon))}+
\norm{\bar{\partial}\bar{g}^{\mu\nu}}_{L^\infty((T_1,T_0],W^{1,\infty}(\mathbb{T}^3_\epsilon))}
+ \norm{\bar{v}^{\mu}}_{L^\infty((T_1,T_0],W^{1,\infty}(\mathbb{T}^3_\epsilon))}
+ \norm{\bar{\rho} }_{L^\infty((T_1,T_0],W^{1,\infty}(\mathbb{T}^3_\epsilon))}< \infty,
\end{equation*}
then there exists a time $T^*_1 \in (0,T_1)$ such that the solution uniquely extends
to the spacetime region $(T^*_1,T_0]\times \mathbb{T}^3_\epsilon$ with the same regularity as
given by \eqref{locreg}.

Next, we set
\begin{equation*}
\mathbf{u} = (u^{\mu\nu},u^{\mu\nu}_\gamma,u,u_\gamma,\delta\zeta,z_i),
\end{equation*}
where $u^{\mu\nu}$, $u^{\mu\nu}_\gamma$, $u$, $u_\gamma$, $\delta\zeta$ and $z_i$ are
computed from the solution \eqref{locreg} via the definitions from \S\ref{vardefs}.
From the definitions \eqref{E:ZETA} and \eqref{E:DELZETA}, the formulas \eqref{E:ZETAH3}-\eqref{zetaH1},
the expansions \eqref{E:GIJ}-\eqref{E:G0MU}  and \eqref{E:V^0},  and Sobolev's inequality, see
Theorem \ref{sobolev}, that there exists
a constant $\sigma > 0$, independent of $T_1 \in (0,T_0)$ and $\epsilon \in (0,\epsilon_0)$, such that
 \begin{equation} \label{thetadef}
\norm{(u^{\mu\nu},u,\delta\zeta,z_i)}_{L^\infty((T_1,T_0],H^s(\mathbb{T}^3))}< \sigma
\end{equation}
implies that the inequalities \eqref{contA} and $t^{-3(1+\epsilon^2 K)}\bar{\rho}\geq c_3>0$ hold for some constants $c_i$, $i=1,2,3$.
Moreover, for $\sigma$ small enough, we see from the Moser inequality from
Lemma \ref{moserlemC} and the expansions \eqref{E:GIJ}-\eqref{E:PGIJ} and
\eqref{E:V^0}-\eqref{E:VELOCITY} that
\begin{align*}
&\norm{\bar{g}^{\mu\nu}}_{L^\infty((T_1,T_0],W^{1,\infty}(\mathbb{T}^3_\epsilon))}+
\norm{\bar{\partial}\bar{g}^{\mu\nu}}_{L^\infty((T_1,T_0],W^{1,\infty}(\mathbb{T}^3_\epsilon))} \notag \\
&\hspace{1.0cm}+ \norm{\bar{v}^{\mu}}_{L^\infty((T_1,T_0],W^{1,\infty}(\mathbb{T}^3_\epsilon))} + \norm{\bar{\rho} }_{L^\infty((T_1,T_0],W^{1,\infty}(\mathbb{T}^3_\epsilon))} \leq C(\sigma)\bigl(\norm{\mathbf{u}}_{L^\infty((T_1,T_0],H^s(\mathbb{T}^3))}+1\bigr). %\label{contC}
\end{align*}
Thus by the continuation principle,  there exists a $\sigma > 0$ such that if
\eqref{thetadef} holds and
\begin{equation} \label{contD}
\norm{\mathbf{u}}_{L^\infty((T_1,T_0],H^s(\mathbb{T}^3))} < \infty,
\end{equation}
then the solution \eqref{locreg} can be uniquely continued as a classical solution
with the same regularity
to the larger spacetime region $(T^*_1,T_0]\times \mathbb{T}^3_\epsilon$ for some $T^*_1 \in (0,T_1)$.

Since $\Delta : \bar{H}^{s+2}(\mathbb{T}^3) \longrightarrow \bar{H}^s(\mathbb{T}^3)$ is an
isomorphism,  we can solve \eqref{E:DEFOFPHI} to get
\begin{equation} \label{Phicont}
\frac{1}{t}\Phi = \frac{\Lambda}{3}E^2 e^{ \zeta_H}
\Delta^{-1}\Pi e^{ \delta\zeta} \in
\bigcap_{\ell=0}^1  C^{\ell}((T_1,T_0],\bar{H}^{s+2-\ell}(\mathbb{T}^3)).
\end{equation}
As $\zeta_H$ and  $E$ are uniformly bounded on $(0,1]$, see \eqref{E:ZETAH3} and \eqref{E:EREP}, it then follows via the
Moser inequality from
Lemma \ref{moserlemC} that the derivative $\del{k}\Phi$ satisfies the bound
\begin{equation*}
\norm{t^{-1}\partial_k \Phi(t)}_{H^{s+1}(\mathbb{T}^3))} \leq
C\bigl(\norm{\delta\zeta(t)}_{H^s(\mathbb{T}^3)}\bigr)\norm{\delta\zeta(t)}_{H^s(\mathbb{T}^3)}
\end{equation*}
uniformly for $(t,\epsilon)\in (T_1,T_0]\times (0,\epsilon_0)$, where $C$ is independent of initial data and the times $\{T_1,T_2\}$.
But, this implies via the definition of $\mathbf{U}$, see \eqref{E:REALVAR}, that
\begin{equation*}
\norm{\mathbf{u}}_{L^\infty((T_1,T_0],H^s(\mathbb{T}^3))} \leq C\bigl(\norm{\mathbf{U}}_{L^\infty((T_1,T_0],H^s(\mathbb{T}^3))}\bigr)
\norm{\mathbf{U}}_{L^\infty((T_1,T_0],H^s(\mathbb{T}^3))}.
\end{equation*}
Since
\begin{equation*}
\norm{(u^{\mu\nu},u,\delta\zeta,z_i)}_{L^\infty((T_1,T_0],H^s(\mathbb{T}^3))}\leq \norm{\mathbf{U}}_{L^\infty((T_1,T_0],H^s(\mathbb{T}^3))},
\end{equation*}
we find that
\begin{equation} \label{contF}
\norm{\mathbf{U}}_{L^\infty((T_1,T_0],H^s(\mathbb{T}^3))} < \sigma
\end{equation}
implies that the inequalities  \eqref{thetadef} and \eqref{contD} both hold.
In particular, this shows that  if \eqref{contF} holds for $\sigma>0$ small enough, then the solution \eqref{locreg} can be uniquely continued as a classical solution
with the same regularity
to the larger spacetime region $(T^*_1,T_0]\times \mathbb{T}^3_\epsilon$ for some $T^*_1 \in (0,T_1)$.
\end{proof}

\section{Conformal cosmological Poisson-Euler equations: local existence and continuation\label{PEcont}}

In this section, we consider the local-in-time existence and uniqueness of solutions to the conformal cosmological Poisson-Euler
equations, and we
establish a continuation principle that is based on bounding the
$H^s$ norm of $(\mathring{\zeta},\mathring{z}^j)$.

\begin{proposition} \label{PEexist}
Suppose $s\in \Zbb_{\geq 3}$, $\mathring{\zeta}_0 \in H^{s}(\mathbb{T}^3)$ and $ (\mathring{z}^i_0)\in  H^{s}(\mathbb{T}^3,\mathbb{R}^3)$.
Then there exists a  $T_1 \in (0,T_0]$ and a unique classical solution
\begin{equation*}
(\mathring{\zeta},\mathring{z}^i,\mathring{\Phi})\in \bigcap_{\ell=0}^1 C^{\ell}((T_1,T_0],H^{s-\ell}(\mathbb{T}^3))
\times \bigcap_{\ell=0}^1  C^{\ell}((T_1,T_0],H^{s-\ell}(\mathbb{T}^3,\mathbb{R}^3)) \times \bigcap_{\ell=0}^1  C^{\ell}((T_1,T_0],H^{s+2-\ell}(\mathbb{T}^3)),
\end{equation*}
of the conformal cosmological Poisson-Euler equations, given by \eqref{CPeqn1}-\eqref{CPeqn3}, on the spacetime region
$(T_1,T_0]\times \mathbb{T}^3$ that satisfies
\begin{equation*}
(\mathring{\zeta},\mathring{z}^i)|_{t=T_0} =( \mathring{\zeta}_0,\mathring{z}^i_0)
\end{equation*}
on the initial hypersurface $t=T_0$. Futhermore, if
\begin{equation*}
\norm{(\mathring{\zeta},\mathring{z}^i)}_{L^\infty((T_1,T_0],H^s)} < \infty,
\end{equation*}
then the solution $(\mathring{\zeta},\mathring{z}^i,\mathring{\Phi})$ can be uniquely continued as a classical solution with the same regularity to the larger spacetime region $(T_1^*,T_0]\times \mathbb{T}^3$ for some $T_1^*\in (0,T_1)$.
\end{proposition}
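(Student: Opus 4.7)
The plan is to first eliminate the non-local potential $\mathring{\Phi}$ by inverting the Laplacian in \eqref{CPeqn3}. Since the projector $\Pi$ removes the spatial mean, the right-hand side of \eqref{CPeqn3} belongs to $\bar{H}^s(\mathbb{T}^3)$, and because $\Delta\colon \bar{H}^{s+2}(\mathbb{T}^3) \longrightarrow \bar{H}^s(\mathbb{T}^3)$ is an isomorphism, we obtain the explicit representation
\begin{equation*}
\mathring{\Phi}(t)=\tfrac{\Lambda}{3}\, t\, \mathring{E}(t)^{2}\, \Delta^{-1}\Pi\, e^{\mathring{\zeta}(t)}.
\end{equation*}
The uniform boundedness of $\mathring{E}$ on $(0,1]$ from \eqref{Eringform}, combined with the Moser inequality of Lemma \ref{moserlemC}, implies that the map $\mathring{\zeta}\mapsto\mathring{\Phi}$ is smooth from $H^s(\mathbb{T}^3)$ into $\bar{H}^{s+2}(\mathbb{T}^3)$ with norm controlled by a continuous function of $\|\mathring{\zeta}\|_{H^s}$. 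Substituting this representation back into \eqref{CPeqn2} yields a closed quasi-linear first-order system for $U := (\mathring{\zeta},\mathring{z}^j)^T$ driven by a non-local, smoothing lower-order source.

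Next I would verify that the resulting closed system is symmetric hyperbolic. Multiplying the $\mathring{z}^j$-equation by $K^{-1}\mathring{E}^{2}\delta_{jm}$ and leaving the $\mathring{\zeta}$-equation unchanged casts the principal part in the manifestly symmetric form
\begin{equation*}
\begin{pmatrix} 1 & 0 \\ 0 & K^{-1}\mathring{E}^{2}\delta_{jm}\end{pmatrix}\partial_{t}U+\sqrt{\tfrac{3}{\Lambda}}\begin{pmatrix} \mathring{z}^{k} & \delta^{k}_{j} \\ \delta^{k}_{m} & K^{-1}\mathring{E}^{2}\delta_{jm}\mathring{z}^{k}\end{pmatrix}\partial_{k}U = \mathcal{G}(t,\mathring{\zeta},\mathring{z},\mathring{\Phi}),
\end{equation*}
whose $B^0$ is positive definite since $K>0$ and $\mathring{E}(t)>0$ on $(0,1]$. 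The source $\mathcal{G}$ absorbs the $t^{-1}$ geometric factor from $\mathring{\Omega}/t$ together with the non-local contribution $\partial^j\mathring{\Phi}$, and on every compact sub-interval of $(0,1]$ it depends smoothly on its arguments with $H^s$ norm bounded, via Moser estimates, by a continuous function of $\|\mathring{\zeta}\|_{H^s}+\|\mathring{z}^i\|_{H^s}$.

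Given this structure, the local existence and uniqueness claim in the stated regularity class follows directly from the standard local existence theorem for quasi-linear symmetric hyperbolic systems, e.g.\ Theorem 2.1 of \cite{maj}, and the asserted regularity of $\mathring{\Phi}$ is then immediate from elliptic regularity applied to the Poisson formula above. For the continuation claim, a uniform $L^\infty((T_1,T_0],H^s)$ bound on $(\mathring{\zeta},\mathring{z}^i)$ yields, via the elliptic estimate, a uniform $L^\infty((T_1,T_0],H^{s+2})$ bound on $\mathring{\Phi}$, and hence a uniform $W^{1,\infty}$ bound on all coefficients and source terms of the symmetric hyperbolic system, whereupon the standard continuation principle (Theorem 2.2 of \cite{maj}) extends the solution to a larger interval $(T_1^\ast,T_0]$. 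The only step requiring genuine care is verifying that the Poisson-induced term fits into the symmetric hyperbolic framework as a bona fide lower-order source; because $\partial^j\mathring{\Phi}$ is genuinely smoothing in $\mathring{\zeta}$, it does not interfere with the principal-part analysis and the required Lipschitz estimates in $H^s$ reduce to routine Moser inequalities.
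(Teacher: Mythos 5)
Your proposal follows the paper's proof essentially verbatim: invert the Laplacian to eliminate $\mathring{\Phi}$, symmetrize the closed system for $(\mathring{\zeta},\mathring{z}^j)$, invoke Theorems 2.1 and 2.2 of \cite{maj}, and recover the regularity of $\mathring{\Phi}$ and the continuation criterion via Moser estimates and Sobolev embedding. The only slip is in the symmetrizing multiplier: since \eqref{CPeqn2} carries a $\sqrt{\Lambda/3}$ in front of $\partial_t\mathring{z}^j$, you must multiply by $K^{-1}\mathring{E}^2\sqrt{3/\Lambda}\,\delta_{jm}$ (as the paper does) to land on the block matrix you display; multiplying by $K^{-1}\mathring{E}^2\delta_{jm}$ alone leaves a stray $\sqrt{\Lambda/3}$ in the $(j,m)$ block of $B^0$.
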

\begin{proof}
Using the fact that $\Delta \: :\: \bar{H}^{s+2} \longrightarrow \bar{H}^s$ is an isomorphism, we can solve the Poisson equation
\eqref{E:COSEULERPOISSONEQ.c} by setting
\begin{equation} \label{PEexist1}
\mathring{\Phi} = \frac{\Lambda}{3}t \mathring{E}^2 \Delta^{-1} \Pi e^{\mathring{\zeta}} .
\end{equation}
We  can use this to write \eqref{CPeqn1}-\eqref{CPeqn3} as
\begin{align}
    \partial_t \mathring{\zeta}+\sqrt{\frac{3}{\Lambda}}\bigl( \mathring{z}^j\partial_j \mathring{\zeta} + \partial_j\mathring{z}^j\bigr)
    &=-\frac{3\mathring{\Omega}}{t}, \label{PEexist2}\\
    \sqrt{\frac{\Lambda}{3}}\partial_t\mathring{z}^j+ \mathring{z}^i\partial_i\mathring{z}^j+ K
    \partial^j\mathring{\zeta}
    &=\sqrt{\frac{\Lambda}{3}}\frac{1}{t}\mathring{z}^j-\frac{1}{2}
    t \mathring{E}^2  \partial^j \Delta^{-1} \Pi e^{\mathring{\zeta}} . \label{PEexist3}
\end{align}
It is then easy to see that this system can be cast in symmetric hyperbolic form by multiplying \eqref{PEexist3} by
$\mathring{E}^2 K^{-1}\sqrt{\frac{3}{\Lambda}} $.  Even though the resulting system is non-local due to the
last term in \eqref{PEexist3}, all of the standard local existence and uniqueness results and continuation principles
that are valid for local symmetric hyperbolic systems, e.g. Theorems 2.1 and 2.2 of \cite{maj}, continue to apply.
Therefore it follows that there exists a
unique local-in-time classical solution
\begin{equation}\label{PElocreg}
(\mathring{\zeta},\mathring{z}^i)\in \bigcap_{\ell=0}^1 C^{\ell}((T_1,T_0],H^{s-\ell}(\mathbb{T}^3))
\times \bigcap_{\ell=0}^1  C^{\ell}((T_1,T_0],H^{s-\ell}(\mathbb{T}^3,\mathbb{R}^3))
\end{equation}
of \eqref{PEexist2}-\eqref{PEexist3} for some time $T_1\in (0,T_0)$ that
satisfies
\begin{equation*}
(\mathring{\zeta},\mathring{z}^i)_{t=T_0} =(\mathring{\zeta}_0,\mathring{z}^i_0)
\end{equation*}
for given initial data $(\mathring{\zeta}_0,\mathring{z}^i_0)
\in H^{s}(\mathbb{T}^3)\times H^{s}(\mathbb{T}^3,\mathbb{R}^3)$.
Moreover,
if the solution satisfies
\begin{equation*}
\norm{\mathring{\zeta}}_{L^\infty((T_1,T_0],W^{1,\infty})}+
\norm{\mathring{z}^i}_{L^\infty((T_1,T_0],W^{1,\infty})}
< \infty,
\end{equation*}
then there exists a time $T^*_1 \in (0,T_1)$ such that the solution \eqref{PElocreg} uniquely extends
to the spacetime region $(T^*_1,T_0]\times \mathbb{T}^3$ with the same regularity. By Sobolev's inequality, see Theorem \ref{sobolev},
this is clearly implied by the stronger condition
\begin{equation*}
\norm{(\mathring{\zeta},\mathring{z}^i)}_{L^\infty((T_1,T_0],H^s)} < \infty.
\end{equation*}
Finally from \eqref{PEexist1}, \eqref{PElocreg} and the Moser inequality from Lemma \ref{moserlemC}, it is clear that
\begin{equation*}
\mathring{\Phi} \in \bigcap_{\ell=0}^1  C^{\ell}((T_1,T_0],H^{s+2-\ell}(\mathbb{T}^3)).
\end{equation*}
\end{proof}

\begin{corollary} \label{PEcor}
If the initial modified density $\zeta_0 \in H^s(\Tbb^3)$ from Proposition \ref{PEexist} is chosen so that
\begin{equation*}
\mathring{\zeta}_0 = \ln\biggl(\frac{\mathring{\rho}_H(T_0) + \breve{\rho}_0}{T_0^3}\biggr),
\end{equation*}
where  $\mathring{\rho}_H  = \frac{4C_0 \Lambda t^3}{(C_0-t^3)^2}$, $\breve{\rho}_0 \in \bar{H}^s(\Tbb^3)$,
and $\mathring{\rho}_H(T_0) + \breve{\rho}_0 > 0$ in $\Tbb^3$,
then the solution $(\mathring{\zeta},\mathring{z}{}^i,\mathring{\Phi})$ to the conformal cosmological Poisson-Euler
equations from Proposition \ref{PEexist} satisfies
\begin{equation*}
\Pi \mathring{\rho} = \delta \mathring{\rho} := \mathring{\rho}-\mathring{\rho}_H \quad \text{in $(T_1,T_0]\times \Tbb^3$}.
\end{equation*}
\end{corollary}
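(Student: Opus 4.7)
The identity $\Pi\mathring{\rho} = \delta\mathring{\rho}$ is equivalent, via the definition $\Pi u = u - \langle 1,u\rangle$ together with the spatial constancy of $\mathring{\rho}_H$ and the fact that $\mathrm{vol}(\Tbb^3) = 1$, to the scalar statement that the spatial mean of $\mathring{\rho}(t,\cdot)$ equals $\mathring{\rho}_H(t)$ for every $t\in(T_1,T_0]$. The plan is therefore to propagate this mean-value identity forward in time from the initial hypersurface via a linear scalar ODE.

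First I would recover $\mathring{\rho}$ from the solution $\mathring{\zeta}$ produced by Proposition \ref{PEexist} using $\mathring{\rho} = t^3 e^{\mathring{\zeta}}$, and recall that \eqref{CPeqn1}--\eqref{CPeqn3} are equivalent to \eqref{E:COSEULERPOISSONEQ.a}--\eqref{E:COSEULERPOISSONEQ.c}, so that $\mathring{\rho}$ satisfies the continuity equation
\begin{equation*}
\partial_t \mathring{\rho} + \sqrt{\tfrac{3}{\Lambda}}\,\partial_j(\mathring{\rho}\mathring{z}^j) = \frac{3(1-\mathring{\Omega})}{t}\mathring{\rho}.
\end{equation*}
Integrating against the constant function $1$ over $\Tbb^3$, the divergence term vanishes by periodicity, and since $\mathring{\Omega}$ is a function of $t$ alone this yields the linear ODE
\begin{equation*}
\frac{d}{dt}\langle 1,\mathring{\rho}\rangle = \frac{3(1-\mathring{\Omega})}{t}\langle 1,\mathring{\rho}\rangle.
\end{equation*}

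Next I would verify, using the explicit formulas \eqref{E:RHOHOM} and \eqref{Oringdef}, that $\mathring{\rho}_H$ satisfies the same ODE. A direct computation gives
\begin{equation*}
\frac{d}{dt}\ln\mathring{\rho}_H = \frac{3}{t} + \frac{6t^2}{C_0-t^3} = \frac{3(C_0+t^3)}{t(C_0-t^3)} = \frac{3(1-\mathring{\Omega})}{t}.
\end{equation*}
The prescribed initial condition $\mathring{\zeta}_0 = \ln((\mathring{\rho}_H(T_0)+\breve{\rho}_0)/T_0^3)$ implies $\mathring{\rho}(T_0,\cdot) = \mathring{\rho}_H(T_0) + \breve{\rho}_0$, and since $\breve{\rho}_0\in\bar{H}^s(\Tbb^3)$ by hypothesis, we have $\langle 1,\breve{\rho}_0\rangle = 0$, whence $\langle 1,\mathring{\rho}(T_0,\cdot)\rangle = \mathring{\rho}_H(T_0)$. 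Uniqueness of solutions to the scalar linear ODE on $(T_1,T_0]$ then forces $\langle 1,\mathring{\rho}(t,\cdot)\rangle = \mathring{\rho}_H(t)$ throughout the existence interval, delivering the claim.

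No serious obstacle is anticipated, since this is essentially a conservation law argument. The only two points requiring care are (i) the algebraic reduction from $\Pi\mathring{\rho} = \delta\mathring{\rho}$ to the mean-value identity, which uses only the definition of $\Pi$ and spatial homogeneity of $\mathring{\rho}_H$, and (ii) the ODE compatibility check between the coefficient $3(1-\mathring{\Omega})/t$ appearing in the integrated continuity equation and the logarithmic derivative of the explicit formula for $\mathring{\rho}_H$; both are mechanical and rely only on the FLRW formulas already established in \S\ref{FLRWanal}.
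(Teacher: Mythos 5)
Your proposal is correct and follows essentially the same route as the paper: both integrate the continuity equation \eqref{E:COSEULERPOISSONEQ.a} against $1$, check that $\mathring{\rho}_H$ satisfies the resulting linear scalar ODE, match the data at $t=T_0$ using $\breve{\rho}_0\in\bar{H}^s$, and invoke ODE uniqueness. Your explicit verification of $\frac{d}{dt}\ln\mathring{\rho}_H = \frac{3(1-\mathring{\Omega})}{t}$ is a worthwhile spelling-out of what the paper leaves as ``a direct computation,'' but it is the same argument.
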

\begin{proof}
Since $\mathring{\rho}=t^3 e^{\mathring{\zeta}}$ satisfies \eqref{E:COSEULERPOISSONEQ.a}, we see
after applying $\la 1, \cdot \ra$ to this equations that $\la 1, \mathring{\rho} \ra$ satisfies
\begin{equation*}
\frac{d\;}{dt}\la 1, \mathring{\rho}(t) \ra = \frac{3(1-\mathring{\Omega}(t))}{t} \la 1, \mathring{\rho}(t) \ra, \quad T_1 < t \leq T_0,
\end{equation*}
while from the choice of initial data, we have
\begin{equation*}
\la 1, \mathring{\rho}(T_0) \ra = \mathring{\rho}_H(T_0).
\end{equation*}
By a direct computation, we observe with the help of \eqref{Oringdef} that  $\mathring{\rho}_H  = \frac{4C_0 \Lambda t^3}{(C_0-t^3)^2}$ satisfies the differential equation
 \begin{equation}  \label{PEcor0}
\frac{d\;}{dt}\mathring{\rho}_H(t) = \frac{3(1-\mathring{\Omega}(t))}{t}\mathring{\rho}_H(t) \quad 0 < t \leq T_0,
\end{equation}
and hence, that
\begin{equation} \label{PEcor1}
\la 1, \mathring{\rho}(t) \ra = \mathring{\rho}_H(t), \quad T_1 < t \leq T_0,
\end{equation}
by the uniqueness of solutions to the initial value problem for ordinary differential equations. The proof now follows since
\begin{equation*}
\Pi \mathring{\rho} \overset{\eqref{Pidef}}{=} \mathring{\rho} - \la 1, \mathring{\rho}\ra
\overset{\eqref{PEcor1}}{=} \mathring{\rho} - \mathring{\rho}_H(t)
\quad \text{in $(T_1,T_0]\times \Tbb^3$}.
\end{equation*}
\end{proof}

\begin{remark} \label{PEcorrem}
Letting
\begin{equation} \label{deltazetaringdef}
\delta \mathring{\zeta} = \mathring{\zeta} - \mathring{\zeta}_H,
\end{equation}
where, see \eqref{arhoringdef}, \eqref{zetaHringform} and \eqref{E:RHOHOM},
\begin{equation} \label{deltazetaringH}
\mathring{\zeta}_H = \ln(t^{-3}\mathring{\rho}_H),
\end{equation}
it is clear that the initial condition
\begin{equation*}
\mathring{\zeta}|_{t=T_0} = \ln\biggl(\frac{\mathring{\rho}_H(T_0) + \breve{\rho}_0}{T_0^3}\biggr),
\end{equation*}
from Corollary \ref{PEcor} is equivalent to the initial condition
\begin{equation*}
\delta \mathring{\zeta}|_{t=T_0} = \ln\biggl(1+\frac{\breve{\rho}_0}{\mathring{\rho}_H(T_0)}\biggr)
\end{equation*}
for $\delta \mathring{\zeta}$.
\end{remark}

\section{Singular symmetric hyperbolic systems} \label{S:MODEL}

In this section, we establish uniform a priori estimates for solutions to a class of symmetric hyperbolic systems that are
jointly singular in $\epsilon$ and $t$, and include both the formulation of the reduced conformal Einstein--Euler equations given by \eqref{E:REALEQ} and the $\epsilon \searrow 0$
limit of these equations. We also establish \textit{error estimates}, that is, a priori estimates for the difference between solutions of the
$\epsilon$-dependent singular symmetric hyperbolic systems and their corresponding $\epsilon \searrow 0$ limit equations.

The $\epsilon$-dependent singular terms that appear in the symmetric hyperbolic systems we consider
are of a type that have been well studied, see \cite{bro,kla2,kla1,kre1,sch1,sch2}, while the
$t$-dependent singular terms are of the type analyzed in \cite{oli5}. The uniform a priori estimates established
here follow from combining the energy estimates from \cite{bro,kla2,kla1,kre1,sch1,sch2} with those
from \cite{oli5}.

\begin{remark}
In this section, we switch to the standard time orientation, where the future is located in the direction of increasing time, while keeping the singularity located at $t=0$. We
do this in order to make the derivation of the energy estimates in this section as similar as possible to those for non-singular symmetric hyperbolic systems, which we expect will
make it easier for readers familiar with such estimates to follow the arguments below. To get back to the time orientation used to formulate the conformal Einstein--Euler equations, see Remark \ref{torient}, we need only apply the trivial time transformation $t \mapsto -t$.
\end{remark}

\subsection{Uniform estimates\label{S:MODELuni}}
We will establish uniform a priori estimates for the following class of equations:
\begin{align}
  A^0  \partial_0 U+A^i  \partial_i U+\frac{1}{\epsilon}C^i\partial_i U=\frac{1}{t}\mathfrak{A} \mathbb{P}  U +H \quad &\mbox{in} \quad[T_0, T_1)\times\mathbb{T}^n, \label{E:MODELEQ2a}
%   U|_{t=T_0} =\p{\mrw^0(x)\\ \mru^0(x)}+\epsilon \p{s^0(\epsilon,x)\\ r^0(\epsilon,x)} \quad &\text{in} \quad \{T_0\}\times\mathbb{T}^n,
%\label{E:MODELEQ2b}
\end{align}
where
\begin{align*}
U &= (w, u)^T, \\
A^0&=\p{A^0_1(\epsilon,t,x,w) & 0\\
                        0 & A^0_2(\epsilon,t,x,w)}, \\
A^i&= \p{A^i_1 (\epsilon,t,x,w)& 0\\
                        0 & A^i_2(\epsilon,t,x,w)}, \\
C^i&=\p{C^i_1 & 0\\
0 & C^i_2 },  \quad  \Pbb=\p{\Pbb_1 & 0\\
0 & \Pbb_2 }, \\
\mathfrak{A}&= \p{\mathfrak{A}_1 (\epsilon,t,x,w)& 0\\
                        0 & \mathfrak{A}_2(\epsilon,t,x,w)},\\
H&= \p{H_1(\epsilon,t,x,w)\\ H_2(\epsilon,t,x,w,u)+R_2 }+\p{F_1(\epsilon,t,x)\\ F_2(\epsilon,t,x)},\\
R_2&=\frac{1}{t}
M_2(\epsilon, t,x,w,u) \Pbb_3 U,
\end{align*}
and the following assumptions hold for fixed constants $\epsilon_0,R >0$, $T_0 < T_1 < 0$ and $s\in \Zbb_{>n/2+1}$:

\begin{ass}\label{ASS1}$\;$

\begin{enumerate}
\item  \label{A:CONSC} The $C^i_a$, $i=1,\ldots,n$ and $a=1,2$, are constant, symmetric $N_a\times N_a$ matrices.
\item  The $\mathbb{P}_a$, $a=1,2$, are constant, symmetric $N_a\times N_a$ projection matrices, i.e. $\mathbb{P}_a^2= \mathbb{P}_a$. We use $\mathbb{P}_a^{\perp}=\mathds{1}-\mathbb{P}_a$  to denote the complementary projection matrix.
\item  \label{A:GH} The source terms $H_a(\epsilon,t,x,w)$, $a=1,2$, $F_a(\epsilon, t, x)$, $a=1,2$, and
$M_2(\epsilon,t,x,w,u)$ satisfy
$H_1 \in E^0\bigl((0,\epsilon_0)\times (2 T_0,0)\times \Tbb^n \times B_R(\Rbb^{N_1}),\Rbb^{N_1}\bigr)$,
$H_2 \in E^0\bigl((0,\epsilon_0)\times (2 T_0,0)\times \Tbb^n \times B_R(\Rbb^{N_1})
\times B_R(\Rbb^{N_2})\times B_R((\Rbb^{N_1})^n),\Rbb^{N_2}\bigr)$,
$F_a \in C^0\bigl((0,\epsilon_0)\times [T_0,T_1), H^s(\Tbb^n,\Rbb^{N_a})\bigr)$,
$M_2  \in E^0\bigl((0,\epsilon_0)\times (2 T_0,0)\times \Tbb^n \times B_R(\Rbb^{N_1})
\times B_R(\Rbb^{N_2}),\mathbb{M}_{N_2\times N_2}\bigr)$, and
\begin{equation*}
H_1(\epsilon,t,x,0) = 0, \quad H_2(\epsilon,t,x,0,0) = 0 \AND M_2(\epsilon,t,x,0,0) = 0
\end{equation*}
for all $(\epsilon,t,x)\in (0,\epsilon_0)\times (2 T_0,0)\times \Tbb^n $.

\item  \label{A:Bi} The matrix valued maps $A_a^i(\epsilon,t,x,w)$, $i=0,\ldots,n$ and $a=1,2$, satisfy $A^i_a \in E^0\bigl((0,\epsilon_0)\times (2 T_0,0)\times \Tbb^n \times B_R(\Rbb^{N_a}),\mathbb{S}_{N_a}\bigr)$.

\item \label{A:B0} The matrix valued maps
		$A_a ^0(\epsilon,t,x, w)$, $a=1,2$, and $\mathfrak{A}_a(\epsilon,t,x, w)$, $a=1,2$, can be decomposed as
		\begin{gather}
		A_a^0(\epsilon,t,x, w)=\mathring{A}_a^0(t)+\epsilon \tilde{A}_a^0(\epsilon,t,x, w),\label{E:DECOMPOSITIONOFA01}\\
			\mathfrak{A}_a(\epsilon,t,x, w)=\mathring{\mathfrak{A}}_a(t)+\epsilon \tilde{\mathfrak{A}}_a(\epsilon,t,x, w),\label{E:DECOMPOSITIONOFCALB}
		\end{gather}
		where $\mathring{A}_a^0 \in E^1\bigl((2 T_0,0),\mathbb{S}_{N_a}\bigr)$,
 $\mathring{\mathfrak{A}}_a \in E^1\bigl((2 T_0,0),\mathbb{M}_{N_a\times  N_a}\bigr)$,
$\tilde{A}_a^0 \in E^1\bigl((0,\epsilon_0)\times (2 T_0,0)\times \Tbb^n \times B_R(\Rbb^{N_1}),\mathbb{S}_{N_a}\bigr)$,
$\tilde{\mathfrak{A}}_a \in E^0\bigl((0,\epsilon_0)\times (2 T_0,0)\times \Tbb^n \times B_R(\Rbb^{N_1}),\mathbb{M}_{N_a\times N_a}\bigr)$, and\footnote{Or in other words,  the matrices $\tilde{\mathfrak{A}}_a|_{w=0}$ and
$\tilde{A}_a^0|_{w=0}$ depend only on $(\epsilon,t)$.}
\begin{equation} \label{DA0}
D_x\tilde{\mathfrak{A}}_a(\epsilon,t,x,0)=D_x\tilde{A}_a^0(\epsilon,t,x,0)=0
\end{equation}
for all $(\epsilon,t,x)\in (0,\epsilon_0)\times (2 T_0,0)\in \mathbb{T}^n$.
%\begin{equation*}
%\tilde{A}_a^0(\epsilon,t,x,0)=  \tilde{\mathfrak{A}}_a(\epsilon,t,x,0)=0
%\end{equation*}
%for all $(\epsilon,t,x)\in (0,\epsilon_0)\times (2 T_0,0)\times \Tbb^n $.

\item   \label{A:B}  For $a=1,2$, the matrix $\mathfrak{A}_a$ commutes with $\mathbb{P}_a$, i.e.
\begin{align}\label{E:COMMUTEPANFB}
[\mathbb{P}_a, \mathfrak{A}_a(\epsilon,t,x,w)]=0
\end{align}
for all $(\epsilon,t,x,w)\in(0, \epsilon_0)\times (2T_0, 0) \times\mathbb{T}^n \times B(\mathbb{R}^{N_1}) $.

\item  $\Pbb_3$ is a symmetric  $(N_1+N_2)\times (N_1+N_2)$ projection matrix that satisfies
\begin{gather}
\Pbb\Pbb_3 =\Pbb_3\Pbb=\Pbb_3,  \label{E:P32a} \\
 \Pbb_3 A^i(\epsilon,t,x,w) \Pbb_3^\perp =
\Pbb_3 C^i \Pbb_3^\perp= \Pbb_3 \mathfrak{A}(\epsilon,t,x,w) \Pbb_3^\perp = 0  \label{E:P32b}
\intertext{and}
[\Pbb_3,A^0(\epsilon,t,x,w)] = 0  \label{E:P32c}
\end{gather}
for all $(\epsilon,t,x,w)\in(0, \epsilon_0)\times (2T_0, 0) \times\mathbb{T}^n \times B_R(\mathbb{R}^{N_1})$,
where $\mathbb{P}_3^\perp = \mathds{1}-\mathbb{P}_3$ defines the complementary projection matrix.

\item   \label{E:B0ANDCALB}  There exists constants $\kappa$, $\gamma_1$, $\gamma_2>0$, such that
\begin{align}
\frac{1}{\gamma_1}\mathds{1}\leq A_a^0(\epsilon,t,x,w)\leq\frac{1}{\kappa}\mathfrak{A}_a(\epsilon,t,x,w)\leq \gamma_2\mathds{1} \label{E:KAPPAB0CALB}
\end{align}
for all $(\epsilon,t,x,w)\in(0, \epsilon_0)\times (2T_0, 0) \times\mathbb{T}^n \times B(\mathbb{R}^{N_1})$ and $a=1,2$.

\item  \label{A:PBP}  For $a=1,2$, the matrix $A_a^0$ satisfies
\begin{align}\label{E:PAP}
\mathbb{P}_a ^{\perp}A_a^0(\epsilon,t,x, \mathbb{P}_1 ^{\perp}w)\mathbb{P}_a =\mathbb{P}_a A_a^0(\epsilon,t,x, \mathbb{P}_1 ^{\perp}w)\mathbb{P}_a ^{\perp}=0
\end{align}
for all $(\epsilon,t,x,w)\in(0, \epsilon_0)\times (2T_0, 0)\times\mathbb{T}^n \times B(\mathbb{R}^{N_1}) $.

\item   \label{A:PDECOMPOSABLE} For $a=1,2$, the matrix $\Pbb_a^\perp[D_w A_a^0\cdot(A_1^0)^{-1}\mathfrak{A}_1\mathbb{P}_1w]
\Pbb_a^\perp$ can be decomposed as
\al{ADEC}{
\Pbb_a^\perp \bigl[D_w A_a^0(\epsilon,t,x,w)\cdot \bigl(A^0_1(\epsilon,t,x,w)\bigr)^{-1}\mathfrak{A}_1(\epsilon,t,x,w)\mathbb{P}_1w\bigr] \Pbb_a^\perp =
t\mathfrak{S}_a(\epsilon, t, x, w)+\mathfrak{T}_a(\epsilon, t, x, w, \Pbb_1 w)
			}
for some
$\mathfrak{S}_a \in E^0\bigl((0,\epsilon_0)\times (2T_0,0)\times \Tbb^n\times B_R(\Rbb^{N_1}),
\mathbb{M}_{N_a\times N_a}\bigr)$, $a=1,2$, and
$\mathfrak{T}_a \in E^0\bigl((0,\epsilon_0)\times (2T_0,0)\times \Tbb^n\times B_R(\Rbb^{N_1})\times \Rbb^{N_1},
\mathbb{M}_{N_a\times N_a}\bigr)$, $a=1,2$, where the  $\mathfrak{T}_a(\epsilon, t, x, w, \xi)$ are quadratic in $\xi$.
\end{enumerate}
\end{ass}

Before proceeding with the analysis, we take a moment to make a few
observations about the structure of the singular system \eqref{E:MODELEQ2a}. First, if $\mathfrak{A}=0$, then the singular term
$\frac{1}{t}\mathfrak{A}\mathbb{P}U$  disappears from \eqref{E:MODELEQ2a} and it becomes a regular symmetric hyperbolic system.
Uniform $\epsilon$-independent a priori estimates that are valid for $t\in [T_1,0)$ would then follow,  under a suitable small initial data assumption, as a direct consequence
of the energy estimates from \cite{bro,kla2,kla1,kre1,sch1,sch2}. When $\mathfrak{A}\neq 0$, the  positivity assumption
\eqref{E:KAPPAB0CALB} guarantees that the singular term $\frac{1}{t}\mathfrak{A}\mathbb{P}U$ acts like
a friction term. This allows us to generalize the energy estimates from \cite{bro,kla2,kla1,kre1,sch1,sch2} in such a way as to obtain, under a suitable small initial data assumption, uniform $\epsilon$-independent a priori estimates that are valid on the time interval $[T_1,0)$; see
\eqref{E:ENERGEST1}, \eqref{E:ENERGEST2} and \eqref{E:ENERGEST3}
 for the key differential inequalities used to derive these a priori estimates.

\begin{remark} \label{decouple}
The equation for $w$ decouples from the system
\eqref{E:MODELEQ2a} and is given by
	\begin{align}
	A_1^0  \partial_0 w+A_1^i  \partial_i w+\frac{1}{\epsilon}C_1^i\partial_i w =\frac{1}{t}\mathfrak{A}_1 \mathbb{P}_1  w +H_1+F_1 \quad &\text{in} \quad[T_0, T_1)\times\mathbb{T}^n.  \label{E:MODELEQ1a}
%	w|_{t=T_0} = \mrw^0(x) +\epsilon s^0(\epsilon,x)  \quad &\text{in} \quad \{T_0\}\times\mathbb{T}^n. \label{E:MODELEQ1b}
	\end{align}
\end{remark}

\begin{remark} $\;$

\begin{enumerate}
\item
By Taylor expanding $A^0_a(\epsilon,t,x,\Pbb^\perp_1 w+ \Pbb_1 w)$ in the variable $\Pbb_1 w$, it follows
from  \eqref{E:PAP} that there exist matrix valued maps $\hat{A}^0_a, \breve{A}^0_a \in E^1\bigl((0,\epsilon_0)\times (2 T_0,0)\times \Tbb^n \times B_R\bigl(\Rbb^{N_1}\bigr),\mathbb{M}_{N_a \times N_a}\bigr)$, $a=1,2$, such that
\begin{align}
			\mathbb{P} ^\perp_a  A^0_a(\epsilon,t,x, w) \mathbb{P}_a =\mathbb{P}_a ^\perp [\hat{A}^0_a(\epsilon,t,x, w)\cdot\mathbb{P}_1 w]\mathbb{P}_a \label{E:PPERPB0P}
\intertext{and}
			\mathbb{P}_a A^0_a(\epsilon,t,x, w) \mathbb{P}_a^\perp=\mathbb{P}_a  [\breve{A}^0_a(\epsilon,t,x, w)\cdot\mathbb{P}_1 w]\mathbb{P}_a ^\perp\label{E:PB0PPERP}
			\end{align}
for all $(\epsilon,t,x,w)\in(0, \epsilon_0)\times (2T_0, 0)\times\mathbb{T}^n \times B(\mathbb{R}^{N_1}) $.
\item
It is not difficult to see that the assumptions \eqref{E:KAPPAB0CALB} and  \eqref{E:PAP} imply that
\begin{align*}
\mathbb{P}_a ^{\perp}\bigl(A_a^0(\epsilon,t,x, \mathbb{P}_1 ^{\perp}w)\bigr)^{-1}\mathbb{P}_a =
\mathbb{P}_a \bigl(A_a^0(\epsilon,t,x, \mathbb{P}_1 ^{\perp}w)\bigr)^{-1}\mathbb{P}_a ^{\perp}=0
\end{align*}
for all $(\epsilon,t,x,w)\in(0, \epsilon_0)\times (2T_0, 0)\times\mathbb{T}^n \times B(\mathbb{R}^{N_1}) $.
By Taylor expanding $(A^0_a(\epsilon,t,x,\Pbb^\perp_1 w+ \Pbb_1 w))^{-1}$ in the variable $\Pbb_1 w$, it follows that
there exist matrix valued maps $\hat{B}^0_a, \breve{B}^0_a \in E^1\bigl((0,\epsilon_0)\times (2 T_0,0)\times \Tbb^n \times B_R\bigl(\Rbb^{N_1}\bigr),\mathbb{M}_{N_a \times N_a}\bigr)$, $a=1,2$, such that
\begin{align}
			\mathbb{P} ^\perp_a\bigl(A^0_a(\epsilon,t,x, w)\bigr)^{-1}\mathbb{P}_a =\mathbb{P}_a ^\perp [\hat{B}^0_a(\epsilon,t,x, w)\cdot\mathbb{P}_1 w]\mathbb{P}_a \label{E:PPERPB0Pa}
\intertext{and}
			\mathbb{P}_a\bigl(A^0_a(\epsilon,t,x, w)\bigr)^{-1}\mathbb{P}_a^\perp=\mathbb{P}_a  [\breve{B}^0_a(\epsilon,t,x, w)\cdot\mathbb{P}_1 w]\mathbb{P}_a ^\perp\label{E:PB0PPERPa}
			\end{align}
for all $(\epsilon,t,x,w)\in(0, \epsilon_0)\times (2T_0, 0)\times\mathbb{T}^n \times B(\mathbb{R}^{N_1}) $.
	\end{enumerate}
\end{remark}

To facilitate the statement and proof of our a priori estimates for solutions of the system \eqref{E:MODELEQ2a},
we introduce the following energy norms:
\begin{definition} \label{energynorms}
  Suppose  $w \in L^\infty([T_0,T_1)\times \mathbb{T}^n,\Rbb^{N_1})$, $k \in \mathbb{Z}_{\geq 0}$, and
$\{\mathbb{P}_a,A_a^0\}$, $a=1,2$, are as defined above. Then for maps $f_a$, $a=1,2$, and $U$ from the torus $\mathbb{T}^n$ into
$R^{N_a}$ and $R^{N_1}\times R^{N_2}$, respectively, the
\emph{energy norms},
denoted $\vertiii{f_a}_{a,H^s}$ and
$\vertiii{U}_{H^s}$, of $f_a$ and $U$ are defined by
  \begin{gather*}
    \vertiii{f_a}^2_{a,H^k}:=\sum_{0\leq |\alpha|\leq k}\langle D^\alpha f_a, A_a^0\bigl(\epsilon,t,\cdot,w(t,\cdot)\bigr) D^\alpha f_a\rangle\\
\intertext{and}
    \vertiii{U}^2_{H^k}:=\sum_{0\leq |\alpha|\leq k}\langle D^\alpha U, A^0\bigl(\epsilon,t,\cdot,w(t,\cdot)\bigr) D^\alpha U\rangle,
\end{gather*}
respectively. In addition to the energy norms, we also define, for $T_0 < T \leq T_1$, the spacetime norm
of maps $f_a$, $a=1,2$, from $[T_0,T)\times \mathbb{T}^n$ to $R^{N_a}$ by
\begin{equation*}
 \|f_a\|_{M_{\mathbb{P}_a ,k}^\infty([T_0, T)\times\mathbb{T}^n)}:=\|f_a\|_{L^{\infty}([T_0, T),H^k)}+ \left(-\int_{T_0}^T\frac{1}{t}\|\mathbb{P}_a f_a (t)\|^2_{H^k}dt\right)^\frac{1}{2}.
\end{equation*}
\end{definition}

\begin {remark}
For  $w \in L^\infty([T_0,T_1)\times \mathbb{T}^n,\Rbb^{N_1})$ satisfying $\norm{w}_{L^\infty([T_0,T_1)\times \mathbb{T}^n)} < R$,
we observe, by \eqref{E:KAPPAB0CALB}, that the standard Sobolev norm $\norm{\cdot}_{H^k}$ and
the energy norms  $\vertiii{\,\cdot\,}_{a,H^k}$, $a=1,2$, are equivalent since they satisfy
  \begin{align*}%\label{E:EQUIVNORM}
    \frac{1}{\sqrt{\gamma_1}}\|\cdot\|_{H^k}\leq \vertiii{\,\cdot\,}_{a,{H^k}} \leq \sqrt{\gamma_2}\|\cdot\|_{H^k}.
  \end{align*}
\end{remark}

With the preliminaries out of the way, we are now ready to state and prove a priori estimates for solutions of the system
 \eqref{E:MODELEQ2a} that are uniform in $\epsilon$.
\begin{theorem}\label{L:BASICMODEL}
  Suppose $R>0$, $s\in \mathbb{Z}_{\geq n/2+1}$, $T_0 < T_1 < 0$, $\epsilon_0 > 0$, $\epsilon\in(0, \epsilon_0)$, Assumptions \ref{ASS1} hold, the map
 \als%{MAXSOL}
 {
  	U=(w,u) \in \bigcap_{\ell=0}^1 C^\ell([T_0,T_1), H^{s-\ell}(\mathbb{T}^n,\Rbb^{N_1})) \times
\bigcap_{\ell=0}^1 C^\ell([T_0,T_1), H^{s-1-\ell}(\mathbb{T}^n,\Rbb^{N_2}))
  	}
defines a solution of the system \eqref{E:MODELEQ2a}, and for $t\in [T_0,T_1)$, the source terms $F_a$, $a=1,2$, satisfy
the estimates
\ga{F_I}{
			\|F_1(\epsilon,t)\|_{H^s}
			\leq C(\|w\|_{\Li([T_0,t),H^s)})\|w(t)\|_{H^s}
		}
and
\ga{F_I2}{
		\|F_2(\epsilon,t)\|_{\Hs}	\leq C\bigl(\|w\|_{\Li([T_0,t),H^s)},\|u\|_{\Li([T_0,t),\Hs)}\bigr)(\|w(t)\|_{H^s}+\|u(t)\|_{\Hs}),
		}
where the constants $C(\|w\|_{\Li([T_0,t),H^s)})$ and
$C\bigl(\|w\|_{\Li([T_0,t),H^s)},\|u\|_{\Li([T_0,t),\Hs)})$ are independent of $\epsilon \in (0,\epsilon_0)$
and $T_1 \in (T_0,0]$.
Then there exists a $\sigma>0$ independent of $\epsilon \in (0,\epsilon_0)$ and $T_1 \in (T_0,0)$, such that if initially
\begin{align*}%\label{E:INITIALDATAMODEL2}
  \|w(T_0)\|_{H^s}  \leq \sigma  \AND \|u(T_0)\|_{\Hs} \leq \sigma,
\end{align*}
then
\begin{equation*} %\label{wLinfty}
\norm{w}_{L^\infty([T_0,T_1)\times \Tbb^n)} \leq \frac{R}{2}
\end{equation*}
and there exists a constant $C>0$, independent of $\epsilon\in (0,\epsilon_0)$ and $T_1 \in (T_0,0)$, such that
\als%{IMPROVEDEST}
{
 \|w\|_{M^\infty_{\Pbb_1, s}([T_0,t)\times \Tbb^n)}+\|u\|_{M^\infty_{\Pbb_2, s-1}([T_0,t)\times \Tbb^n)} -\int_{T_0}^{t}
\frac{1}{\tau} \|\Pbb_3 U\|_{\Hs}\, d\tau  \leq C\sigma
}
for $T_0 \leq t < T_1$.
\end{theorem}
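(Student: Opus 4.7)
The strategy is to derive uniform-in-$\epsilon$ energy estimates in the norms of Definition \ref{energynorms}, adapting the classical Klainerman--Kreiss--Majda--Schochet technique (which treats the $\frac{1}{\epsilon}C^i\partial_i$ term) to the cosmological setting of \cite{oli5} (which treats the $\frac{1}{t}\mathfrak{A}\mathbb{P}$ term), and then closing via a Gronwall/bootstrap argument. The first simplification, noted already in Remark \ref{decouple}, is that the equation for $w$ decouples; consequently I would first establish estimates for $w$ at $s$ derivatives, then feed the resulting bound into the coupled equation for $u$ estimated at $s-1$ derivatives. For each $|\alpha|\leq s$ (resp.\ $|\alpha|\leq s-1$) I apply $D^\alpha$, pair with $D^\alpha w$ (resp.\ $D^\alpha u$) using $A_a^0$ as the inner-product weight, and use that $C_a^i$ is constant and symmetric (Assumption \ref{A:CONSC}) together with symmetry of $A_a^i$ to kill the $\frac{1}{\epsilon}$-singular flux after integration by parts. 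The nonlinearities $A_a^i\partial_i$ and $H_a$ produce commutators controlled by Moser estimates on $H^{s-1}$, using $H_a|_{w=u=0}=0$; the hypotheses \eqref{E:F_I} and \eqref{E:F_I2} then absorb $F_a$ into the energy itself.

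The dissipative singular term is treated as follows: applying \eqref{E:COMMUTEPANFB} to insert a $\mathbb{P}_a$ and then using the coercivity $\mathfrak{A}_a \geq \kappa A_a^0$ from \eqref{E:KAPPAB0CALB}, the pairing of $\frac{1}{t}\mathfrak{A}_a\mathbb{P}_a D^\alpha(\cdot)$ with $D^\alpha(\cdot)$ produces, since $t<0$, a coercive contribution $\frac{2\kappa}{-t}\|\mathbb{P}_a D^\alpha(\cdot)\|^2$ on the good side of the inequality, up to commutator errors bounded by $\frac{1}{-t}\|\mathbb{P}_a w\|_{H^s}\cdot\|w\|_{H^s}$ and a bounded piece from the $\epsilon\tilde{\mathfrak{A}}_a$ part of \eqref{E:DECOMPOSITIONOFCALB}. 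The delicate bookkeeping step is the $\partial_t$ falling on $A_a^0$ when differentiating $\vertiii{\cdot}^2_{a,H^k}$: writing $\partial_t A_a^0 = \partial_t \mathring A_a^0 + \epsilon\partial_t \tilde A_a^0 + (D_w A_a^0)\cdot\partial_t w$ and substituting $\partial_t w$ from the equation produces a nominally $\frac{1}{t}$-singular piece $\frac{1}{t}(D_w A_a^0)\cdot (A_1^0)^{-1}\mathfrak{A}_1\mathbb{P}_1 w$. Sandwiching the full expression between $\mathbb{P}_a$ and $\mathbb{P}_a^\perp$ projectors and using \eqref{E:PAP}, \eqref{E:PPERPB0P}--\eqref{E:PB0PPERP}, \eqref{E:PPERPB0Pa}--\eqref{E:PB0PPERPa} reduces the only genuinely problematic block to the $\mathbb{P}_a^\perp\cdots\mathbb{P}_a^\perp$ one, which by \eqref{E:ADEC} splits as $t\mathfrak{S}_a$ (where the $t$ kills the $1/t$) plus $\mathfrak{T}_a(\mathbb{P}_1 w)$ quadratic in $\mathbb{P}_1 w$; the latter is absorbed by the $\frac{1}{-t}\|\mathbb{P}_1 w\|^2_{H^s}$ dissipation already generated, provided $\sigma$ is small.

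The new feature in the equation for $u$ is the non-local remainder $R_2=\frac{1}{t}M_2(\epsilon,t,x,w,u)\mathbb{P}_3 U$. Since $M_2|_{w=u=0}=0$, the Moser estimate gives $\|M_2(\epsilon,t,\cdot,w,u)\|_{H^{s-1}}\leq C(\vertiii{w}_{1,H^s},\vertiii{u}_{2,H^{s-1}})(\|w\|_{H^s}+\|u\|_{H^{s-1}})$. The crucial observation is the identity $\mathbb{P}_3=\mathbb{P}\mathbb{P}_3=\mathbb{P}_3\mathbb{P}$ from \eqref{E:P32a}, which forces $\mathbb{P}_3 U$ to land inside the range of $\mathbb{P}$; together with the compatibilities \eqref{E:P32b}--\eqref{E:P32c}, this makes the contribution of $R_2$ to the energy estimate appear only through $-\int_{T_0}^t\tau^{-1}\|\mathbb{P}_3 U\|_{H^{s-1}}\,d\tau$, which is exactly the term that is being controlled on the left-hand side through the $M^\infty_{\mathbb{P}_a,k}$ norms; a Cauchy--Schwarz in the measure $-d\tau/\tau$ then lets this be absorbed into the dissipation once $\sigma$ is small.

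Combining these manipulations yields, for $a=1$ at order $s$ and $a=2$ at order $s-1$, differential inequalities of the schematic form
\begin{equation*}
\partial_t\vertiii{\cdot}^2_{a,H^k} + \frac{2\kappa}{-t}\|\mathbb{P}_a(\cdot)\|^2_{H^k} \leq C\bigl(\vertiii{w}_{1,H^s},\vertiii{u}_{2,H^{s-1}}\bigr)\Bigl(\vertiii{w}^2_{1,H^s}+\vertiii{u}^2_{2,H^{s-1}} + \tfrac{1}{-t}\|\mathbb{P}_3 U\|^2_{H^{s-1}}\Bigr),
\end{equation*}
which, after summing, integrating from $T_0$ to $t$, and invoking Gronwall, closes by a standard continuity argument provided the initial data size $\sigma$ is chosen small enough to beat the constants from Moser, from $\mathfrak{T}_a$, and from $M_2$. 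The main obstacle, as anticipated, is this last coupling between the $\mathbb{P}_3$ non-local term and the dissipative part of the norm: one must verify that the $\frac{1}{-t}\|\mathbb{P}_3 U\|^2_{H^{s-1}}$ piece produced by $R_2$ does not overwhelm the dissipation $\frac{2\kappa}{-t}\|\mathbb{P}_a(\cdot)\|^2$, which is where the smallness of $\sigma$ enters decisively and where the structure \eqref{E:P32a}--\eqref{E:P32c} of $\mathbb{P}_3$ relative to $\mathbb{P}$ is indispensable. The a priori bound $\|w\|_{L^\infty([T_0,T_1)\times\mathbb{T}^n)}\leq R/2$ then follows from Sobolev embedding $H^s\hookrightarrow L^\infty$ (recall $s>n/2+1$) and the just-established energy bound, ensuring all of the $E^0$-coefficient assumptions remain valid throughout.
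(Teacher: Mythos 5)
Your outline captures most of the paper's structure correctly: the bootstrap setup, the decoupled treatment of $w$ at level $s$ followed by $u$ at level $s-1$, the use of the constant symmetric $C_a^i$ to annihilate the $\epsilon^{-1}$-singular flux, the Moser/commutator estimates, the treatment of $\partial_t A_a^0$ via substitution from the equation and the projector sandwich invoking \eqref{E:PAP}, \eqref{E:PPERPB0P}--\eqref{E:PB0PPERPa} and \eqref{E:ADEC}, and the small-data absorption of the quadratic commutator errors into the $\frac{1}{|t|}\|\Pbb_a(\cdot)\|^2$ dissipation.

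However, there is a genuine gap in your treatment of the non-local remainder $R_2 = \frac{1}{t}M_2\Pbb_3 U$. The pairing of $R_2$ against $D^\alpha u$ produces a term of the schematic form $\frac{1}{|t|}\|u\|_{H^{s-1}}\|M_2\|_{H^{s-1}}\|\Pbb_3 U\|_{H^{s-1}}$, which is \emph{linear} in $\|\Pbb_3 U\|$. Your appeal to ``Cauchy--Schwarz in $-d\tau/\tau$'' would, after Young's inequality, produce the pair $\frac{\delta}{|t|}\|\Pbb_3 U\|^2_{H^{s-1}}$ and $\frac{\delta^{-1}}{|t|}\|M_2\|^2\|u\|^2_{H^{s-1}}$; the second piece cannot be absorbed, because the available coercivity controls only $\frac{1}{|t|}\|\Pbb_2 u\|^2_{H^{s-1}}$ and not $\frac{1}{|t|}\|u\|^2_{H^{s-1}}$, so the resulting Gronwall inequality would blow up as $t\to 0^-$. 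Also the assertion that $-\int\tau^{-1}\|\Pbb_3 U\|_{H^{s-1}}\,d\tau$ ``is exactly the term controlled by the $M^\infty_{\Pbb_a,k}$ norms'' is not correct: those norms control the \emph{squared} quantities $-\int\tau^{-1}\|\Pbb_a(\cdot)\|^2_{H^k}d\tau$, and the first-power integral is a distinct object that must be produced separately. The paper closes this by deriving an additional \emph{first-order} differential inequality for $Z=\vertiii{\Pbb_3 U}_{H^{s-1}}$ obtained by applying $A^0 D^\alpha \Pbb_3(A^0)^{-1}$ to the system (equation \eqref{E:PROJEQ}), for which the structural hypotheses \eqref{E:P32a}--\eqref{E:P32c} ensure $\Pbb_3 U$ itself satisfies a symmetric hyperbolic system with a coercive $\frac{1}{t}\mathfrak{A}\Pbb_3$ damping, yielding $\partial_t Z \leq C(Z+\vertiii{w}+\vertiii{u})+\frac{\kappa}{2t}Z$. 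The linear-in-$Z$ bad term from $R_2$ in the $u$-inequality, after the rescaling $X\mapsto X/\sigma$ (so $C_3 X/\sigma<\kappa/4$ by bootstrap), is then bounded by $-\frac{\kappa}{4t}Z$ and absorbed into the $\frac{\kappa}{2t}Z$ dissipation from the $Z$-inequality. Without deriving this separate $Z$-inequality, the argument does not close.
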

\begin{proof}
Letting $C_{\textrm{Sob}}$ denote the constant from the Sobolev inequality, we have that
\begin{equation*}
\norm{w(T_0)}_{L^\infty} \leq C_{\textrm{Sob}}\norm{w(T_0)}_{H^s} \leq C_{\textrm{Sob}}\sigma.
\end{equation*}
We then choose $\sigma$ to satisfy
\begin{equation} \label{sigmaC1}
\sigma \leq \min\biggl\{1, \frac{\hat{R}}{4}\biggr\},
\end{equation}
where $\hat{R} = \frac{R}{2 C_{\textrm{Sob}}}$,
so that
\begin{equation*} \label{winit1}
\norm{w(T_0)}_{L^\infty} \leq \frac{R}{8}.
\end{equation*}
Next, we define
\begin{align*}
K_1(t)=\|w\|_{L^\infty([T_0, t), H^s)} \AND  K_2(t)=\|u\|_{L^\infty([T_0, t), H^{s-1})},
\end{align*}
and observe that $K_1(T_0)+K_2(T_0) \leq \hat{R}/2$, and hence, by continuity,
either $K_1(t)+K_2(t) < \hat{R}$ for all $t\in [T_0,T_1)$, or
else there exists a first time $T_* \in (T_0,T_1)$ such that $K_1(T_*)+K_2(T_*) = \hat{R}$. Letting $T_* = T_1$ if the first case holds, we then have
that
\begin{equation} \label{K1ineq}
K_1(t)+K_2(t) < \hat{R}, \quad 0\leq t < T_*,
\end{equation}
where $T_* = T_1$ or else $T_*$ is the first time in $(T_0,T_1)$ for which $K_1(T_*) +K_2(T_*)= \hat{R}$.

Before proceeding the proof, we first establish a number of preliminary estimates, which we collect together in the following Lemma.
\begin{lemma}\label{L:PREEST}
There exists constants $C(K_1(t))$ and  $C(K_1(t),K_2(t))$, both independent of $\epsilon\in (0,\epsilon_0)$ and $T_*\in (T_0,T_1]$, such that the following estimates hold
for $T_0 \leq t < T_*<0$:
\gat{
    -\frac{2}{t}\sum_{|\alpha|\leq s }\langle D^\alpha w, A_1^0[(A_1^0)^{-1}\mathfrak{A}_1, D^\alpha]\mathbb{P}_1 w\rangle
   \leq    -\frac{1}{t}C(K_1 ) \|w\|_{H^s} \|\mathbb{P}_1 w\|^2_{H^s},  \label{E:INEQ1a}\\
    -\frac{2}{t}\sum_{|\alpha|\leq s-1 }\langle D^\alpha u, A_2^0[(A_2^0)^{-1}\mathfrak{A}_2, D^\alpha]\mathbb{P}_2 u\rangle
   \leq   -\frac{1}{t}C(K_1 ) (\|u\|_{\Hs}+\|w\|_{H^s})(\|\mathbb{P}_2u\|^2_{\Hs}+\|\mathbb{P}_2w\|^2_{H^s}),  \label{E:INEQ1b}
}
%\al{YPU}{-\frac{2}{t}\sum_{|\alpha|\leq s-1 }\la D^\alpha u, A^0_2 [D^\alpha, (A^0_2)^{-1}] \Pbb_2 Y_2 u] \ra\leq  -\frac{1}{t} C(K_1) (\|w\|_{H^s}+\|u\|_{\Hs}) (\|\Pbb_1 w\|_{H^s}^2+\|\Pbb_2 u\|_{\Hs}^2)}
\ali{
  -\sum_{ |\alpha|\leq s}\langle D^\alpha w, A_1^0[ D^\alpha,(A_1 ^0)^{-1}A_1^i]\partial_i w\rangle
  \leq & C(K_1 ) \|w\|_{H^s}^2,  \label{E:INEQ2a}\\
  -\sum_{ |\alpha|\leq s-1}\langle D^\alpha u, A_2 ^0[ D^\alpha,(A_2 ^0)^{-1}A_2^i]\partial_i u\rangle
  \leq  & C(K_1 ) \|u\|_{\Hs}^2,  \label{E:INEQ2b}
}
\gat{
    -\sum_{ |\alpha|\leq s}\langle D^\alpha w, [\tilde{A}_1^0, D^\alpha](A_1^0)^{-1}C_1^i\partial_i w\rangle \leq    C(K_1)\|w\|^2_{H^s},  \label{E:INEQ3a}\\
    -\sum_{ |\alpha|\leq s-1}\langle D^\alpha u, [\tilde{A}_2^0, D^\alpha](A_2^0)^{-1}C_2^i\partial_i u\rangle \leq  C(K_1 )\|u\|^2_{\Hs},  \label{E:INEQ3b}\\
    \sum_{|\alpha|\leq s}\langle D^\alpha w, (\partial_t A_1^0) D^\alpha w\rangle
    \leq    C(K_1) \|w\|^2_{H^s}-\frac{1}{t}C(K_1 )  \|w\|_{H^s} \|\mathbb{P}_1w\|^2_{H^s}, \label{E:INEQ5a}\\
    \sum_{|\alpha|\leq s-1}\langle D^\alpha u, (\partial_tA_2^0) D^\alpha u\rangle
    \leq C(K_1 )  \|u\|^2_{\Hs}-\frac{1}{t}C(K_1, K_2) (\|u\|_{\Hs}+\|w\|_{H^s})
    (\|\mathbb{P}_2u\|^2_{\Hs}+\|\mathbb{P}_1w\|^2_{H^s})\label{E:INEQ5b}
}
and
\al{INEQ6}{
	\sum_{|\alpha| \leq s-1} \la D^\alpha \Pbb_3 U, (\del{t} A^0) D^\alpha \Pbb_3 U \ra%= \la D^\alpha \Pbb_3 U, \Pbb (\del{t} A^0)\Pbb D^\alpha \Pbb_3 U \ra
	\leq -\frac{1}{t} C(K_1)\|\Pbb_1 w\|_{H^s}\|\Pbb_3 U\|^2_{\Hs}+C(K_1) \|\Pbb_3 U\|^2_{\Hs}.
}
\end{lemma}
\begin{proof}
Using the properties $\mathbb{P}_1^2=\mathbb{P}_1$, $\mathbb{P}_1 + \mathbb{P}_1^\perp = \mathds{1}$,
$\mathbb{P}_1^\textrm{T} = \mathbb{P}_1$, and $D\mathbb{P}_1 = 0$ of the projection matrix $\mathbb{P}_1$ repeatedly, we compute
\begin{align}
     & -\frac{2}{t}\sum_{|\alpha|\leq s}\langle D^\alpha w, A_1^0[(A_1 ^0)^{-1}\mathfrak{A}_1, D^\alpha]\mathbb{P}_1 w\rangle \notag \\
   &\hspace{0.5cm} = -\frac{2}{t}\sum_{|\alpha|\leq s}\langle D^\alpha \mathbb{P}_1 w, A_1^0[(A_1^0)^{-1}\mathfrak{A}_1, D^\alpha]\mathbb{P}_1 w\rangle
 -\frac{2}{t}\sum_{|\alpha|\leq s}\langle D^\alpha \mathbb{P}_1^\perp w, \mathbb{P}_1^\perp A_1^0 [(A_1^0)^{-1}\mathfrak{A}_1, D^\alpha]\mathbb{P}_1 w \rangle  \notag \\
 &\hspace{0.5cm}= -\frac{2}{t}\sum_{|\alpha|\leq s}\langle D^\alpha \mathbb{P}_1 w, A_1^0[(A_1^0)^{-1}\mathfrak{A}_1, D^\alpha]\mathbb{P}_1 w\rangle
 -\frac{2}{t}\sum_{|\alpha|\leq s}\langle D^\alpha \mathbb{P}_1^\perp w, \mathbb{P}_1^\perp A_1^0 [(A_1^0)^{-1}\mathbb{P}_1\mathfrak{A}_1, D^\alpha]\mathbb{P}_1 w \rangle  && \text{(by \eqref{E:COMMUTEPANFB})} \notag \\
&\hspace{0.5cm}= -\frac{2}{t}\sum_{|\alpha|\leq s}\langle D^\alpha \mathbb{P}_1 w, A_1^0[(A_1^0)^{-1}\mathfrak{A}_1, D^\alpha]\mathbb{P}_1 w\rangle
 -\frac{2}{t}\sum_{|\alpha|\leq s}\langle D^\alpha \mathbb{P}_1^\perp w, \mathbb{P}_1^\perp A_1^0\mathbb{P}_1^\perp [\mathbb{P}_1^\perp (A_1^0)^{-1}\mathbb{P}_1\mathfrak{A}_1, D^\alpha]\mathbb{P}_1 w \rangle   \nnb\\
&\hspace{7.5cm}-\frac{2}{t}\sum_{|\alpha|\leq s}\langle D^\alpha \mathbb{P}_1^\perp w, \mathbb{P}_1^\perp A_1^0\mathbb{P}_1 [\mathbb{P}_1(A_1^0)^{-1}\mathbb{P}_1 \mathfrak{A}_1, D^\alpha] \mathbb{P}_1 w\rangle. \notag
      \end{align}
From this expression,  we obtain, with the help the Cauchy-Schwarz inequality, the calculus inequalities from Appendix \ref{A:INEQUALITIES},
the expansions \eqref{E:DECOMPOSITIONOFA01}-\eqref{E:DECOMPOSITIONOFCALB},
the relations \eqref{DA0}, \eqref{E:PPERPB0P}, and \eqref{E:PPERPB0Pa}, and the inequality \eqref{K1ineq}, the estimate
\begin{align*}
&-\frac{1}{t}\sum_{|\alpha|\leq s}\langle D^\alpha w, A_1^0[(A_1 ^0)^{-1}\mathfrak{A}_1, D^\alpha]\mathbb{P}_1 w\rangle \notag
\\
&\hspace{0.5cm}  \lesssim \notag -\frac{1}{t}\biggl[\|A_1^0\|_{H^s} \|\mathbb{P}_1 w\|_{H^s}\| D\bigl((A_1^0)^{-1}\mathfrak{A}_1\bigr)]\|_{H^{s-1}}
+\|A_1^0\|_{H^s}
        \|\mathbb{P}_1^\perp w \|_{H^s} \|D\bigl(\mathbb{P}_1^\perp(A_1^0)^{-1}\mathbb{P}_1\mathfrak{A}_1\bigr)\|_{H^{s-1}}
         \notag \\
 & \hspace{0.5cm}  + \|\mathbb{P}_1^\perp A_1^0\mathbb{P}_1\|_{H^s}
        \|\mathbb{P}_1^\perp w\|_{H^s}  \|D\bigl(\mathbb{P}_1(A_1^0)^{-1}\mathbb{P}_1\mathfrak{A}_1\bigr)\|_{H^{s-1}}
\biggr] \|\mathbb{P}_1 w\|_{H^{s-1}}
        \leq  -C(K_1 )\frac{1}{t}\|w\|_{H^s}
        \|\mathbb{P}_1 w\|_{H^s}^2 %\label{E:1}
      \end{align*}
for $T_0\leq t < T_*$, where the constant $C(K_1)$ is independent of $\epsilon \in (0,\epsilon_0)$ and $T_* \in (T_0,T_1]$. This establishes
the estimate \eqref{E:INEQ1a}. By a similar calculation, we find that
      \begin{align*}
     & -\frac{2}{t}\sum_{|\alpha|\leq s-1}\langle D^\alpha u, A_2^0[(A_2 ^0)^{-1}\mathfrak{A}_2, D^\alpha]\mathbb{P}_2 u\rangle
=-\frac{2}{t}\sum_{|\alpha|\leq s-1}\langle D^\alpha \mathbb{P}_2u, A_2^0[(A_2^0)^{-1}\mathfrak{A}_2, D^\alpha]\mathbb{P}_2u\rangle
\notag \\
 &\hspace{0.5cm} -\frac{2}{t}\sum_{|\alpha|\leq s-1}\langle D^\alpha \mathbb{P}_2^\perp u, \mathbb{P}_2^\perp A_2^0\mathbb{P}_2^\perp [\mathbb{P}_2^\perp (A_2^0)^{-1}\mathbb{P}_2  \mathfrak{A}_2, D^\alpha]\mathbb{P}_2 u\rangle
-\frac{2}{t}\sum_{|\alpha|\leq s-1}\langle D^\alpha \mathbb{P}_2^\perp u, \mathbb{P}_2^\perp A_2^0\mathbb{P}_2 [\mathbb{P}_2(A_2^0)^{-1}\mathbb{P}_2\mathfrak{A}_2, D^\alpha]\mathbb{P}_2u\rangle   \nnb\\
        &\hspace{0.5cm}\leq  -\frac{1}{t}C(K_1) \|w\|_{H^s}\|\mathbb{P}_2u\|^2_{\Hs}
        -\frac{1}{t}C(K_1)\|u\|_{\Hs}\|\mathbb{P}_1 w\|_{H^s}\|\mathbb{P}_2u\|_{\Hs}
       -\frac{1}{t}C(K_1)\|u\|_{\Hs}\|\mathbb{P}_1 w\|_{H^s}\|\mathbb{P}_2u\|_{\Hs}   \nnb\\
        &\hspace{6.0cm}\leq  -\frac{1}{t}C(K_1 ) (\|u\|_{\Hs}+\|w\|_{H^s})(\|\mathbb{P}_2u\|^2_{\Hs}+\|\mathbb{P}_2w\|^2_{H^s}),  %\label{E:2}
      \end{align*}
which establishes the estimate \eqref{E:INEQ1b}.

Next, using the calculus inequalities from Appendix  \ref{A:INEQUALITIES}, we observe that
\begin{equation*}\label{E:COMMUTATOR2}
  \begin{aligned}
    \sum_{0\leq |\alpha|\leq s-1}\langle D^\alpha u, -A_2^0[ D^\alpha,(A_2^0)^{-1} A_2^i]\partial_i u\rangle
    \lesssim & \|A_2^0\|_{L^\infty}\|u\|_{\Hs}^2\| D((A_2^0)^{-1}A_2^i)\|_{\Hs}
    \leq    C(K_1 )\|u\|_{\Hs}^2,
  \end{aligned}
\end{equation*}
which establishes the estimate \eqref{E:INEQ2b}. Since the estimates \eqref{E:INEQ2a}, \eqref{E:INEQ3a} and \eqref{E:INEQ3b}
can be obtained in a similar fashion, we omit the details.

Finally, we consider the estimates \eqref{E:INEQ5a}-\eqref{E:INEQ5b}. We begin establishing these estimates by
writing \eqref{E:MODELEQ1a} as
\begin{align*}%\label{E:EPSILONPTU}
  \epsilon \partial_0 w=\epsilon\frac{1}{t}(A_1^0)^{-1} \mathfrak{A}_1\mathbb{P}_1 w -\epsilon(A_1^0)^{-1} A_1^i \partial_i w- (A_1^0)^{-1} C_1^i\partial_i w+\epsilon(A_1^0)^{-1} H_1 +\epsilon(A_1^0)^{-1} F_1.
\end{align*}
Using this  and the expansion \eqref{E:DECOMPOSITIONOFA01}, we can express the
time derivatives $\del{t}A^0_a$, $a=1,2$, as
\begin{align}
  \partial_t A_a^0=&  D_w A_a^0 \cdot \partial_t w +D_t A_a^0    \nnb\\
  =&-D_w A_a^0\cdot(A_1^0)^{-1} A_1^i \partial_i w -[ D_w \tilde{A}_a^0 \cdot (A_1^0)^{-1} C_1^i\partial_i w] \nnb\\
  &+[ D_w A_a^0 \cdot (A_1^0)^{-1} H_1]+D_t A_a^0   +[ D_w A_a^0\cdot(A_1^0)^{-1} F_1] + \frac{1}{t}[D_w A_a^0\cdot (A_1^0)^{-1} \mathfrak{A}_1\mathbb{P}_1 w]. \label{E:PTB0}
\end{align}
Using \eqref{E:PTB0} with  $a=2$, we see, with the help of the calculus inequalities from Appendix  \ref{A:INEQUALITIES},
the Cauchy-Schwarz inequality,
the estimate \eqref{E:F_I}, and  the expansion \eqref{E:ADEC} for $a=2$, that
  \begin{align*}
    \sum_{|\alpha|\leq s-1}\langle D^\alpha u, (\partial_tA_2^0) D^\alpha u\rangle
    \leq &\sum_{|\alpha|\leq s-1}\left[\langle D^\alpha u, \mathbb{P}_2^\perp(\partial_tA_2^0)\mathbb{P}_2^\perp D^\alpha u\rangle+\langle D^\alpha u, \mathbb{P}_2^\perp(\partial_tA_2^0)\mathbb{P}_2 D^\alpha u\rangle \right.\nnb\\
&\hspace{2.0cm} \left.+\langle D^\alpha u, \mathbb{P}_2(\partial_tA_2^0)\mathbb{P}_2^\perp D^\alpha u\rangle
    +\langle D^\alpha u, \mathbb{P}_2(\partial_tA_2^0)\mathbb{P}_2 D^\alpha u\rangle\right]\nnb\\
    \leq & C(K_1) \|u\|^2_{\Hs}    -\frac{2}{t}\|u\|_{\Hs}\|(A_1^0)^{-1}\mathfrak{A}_1\|_{L^\infty}\|D_wA_2^0\|_{L^\infty}
    \|\mathbb{P}_2u\|_{\Hs}\|\mathbb{P}_1w\|_{\Hs}\nnb\\
    &-\frac{1}{t}\|\mathbb{P}_1w\|_{H^s}\|(A^0)^{-1}\mathfrak{A}\|_{L^\infty}\|D_w A_2^0\|_{L^\infty}
    \|\mathbb{P}_2u
    \|_{\Hs}^2 - \frac{1}{t}\|u\|_{\Hs}^2C(K_1)\|\mathbb{P}_1w\|^2_{\Hs}\nnb\\
    \leq &  C(K_1 )  \|u\|^2_{\Hs}-\frac{1}{t}C(K_1, K_2) (\|u\|_{\Hs}+\|w\|_{H^s})
    (\|\mathbb{P}_2u\|^2_{\Hs}+\|\mathbb{P}_1w\|^2_{H^s}). \notag
  \end{align*}
This establishes the estimate \eqref{E:INEQ5b}. Since the estimate \eqref{E:INEQ5a} can be established using similar
arguments, we omit the details. The last estimate \eqref{E:INEQ6} can also be established using similar arguments with
the help of the identity $\Pbb_3\Pbb=\Pbb\Pbb_3=\Pbb_3$. We again omit the details.
\end{proof}

Applying $A^0 D^\alpha (A^0)^{-1}$ to both sides of \eqref{E:MODELEQ2a}, we find that
\begin{align}
A^0\partial_0 D^\alpha U+A^i\partial_i D^\alpha U+  \frac{1}{\epsilon}C^i\partial_i D^\alpha U
  =&-A ^0[ D^\alpha,(A ^0)^{-1}A^i]\partial_i U-[\tilde{A}^0, D^\alpha](A^0)^{-1}C^i\partial_i U \notag \\
  &+\frac{1}{t}\mathfrak{A} D^\alpha \mathbb{P} U +\frac{1}{t}A ^0[ D^\alpha, (A ^0)^{-1}\mathfrak{A}]\mathbb{P} U +A ^0 D^\alpha[(A ^0)^{-1} H],
\label{E:ABSOLUTEPTB0}
\end{align}
where in deriving this we have used
\begin{align*}
 \frac{1}{\epsilon}[A^0, D^\alpha](A ^0)^{-1}C^i\partial_i U &\overset{\eqref{E:DECOMPOSITIONOFA01}}{=}\frac{1}{\epsilon}[\mathring{A}^0+\epsilon\tilde{A}^0, D^\alpha](A ^0)^{-1}C^i\partial_i U=[\tilde{A}^0, D^\alpha](A ^0)^{-1}C^i\partial_i U
\intertext{and}
A^0[D^\alpha,(A^0)^{-1}]C^i\del{i}U &= A^0 D^\alpha \bigl( (A^0)^{-1}C^i\del{i}U\bigr)- D^\alpha \bigl(C^i\del{i}U\bigr) \\
&=  A^0 D^\alpha \bigl( (A^0)^{-1}C^i\del{i}U\bigr) - D^\alpha( A^{0} (A^0)^{-1} C^i \del{i} U \bigr) = [A^0, D^\alpha] (A^0)^{-1} C^i \del{i} U .
\end{align*}
Writing $A^0_a$, $a=1,2$, as $A ^0_a=(A ^0_a)^{\frac{1}{2}}(A ^0_a)^{\frac{1}{2}}$, which we can do since
 $A ^0_a$ is a real symmetric and positive-definite, we see from \eqref{E:KAPPAB0CALB} that
\begin{align} \label{Afrbound}
  (A ^0_a)^{-\frac{1}{2}}\mathfrak{A_a}(A_a^0)^{-\frac{1}{2}}\geq \kappa\mathds{1}.
\end{align}
Since, by \eqref{E:COMMUTEPANFB},
\begin{align*}
  \frac{2}{t} \langle D^\alpha f,\mathfrak{A}_a D^\alpha \mathbb{P}_a f\rangle=&\frac{2}{t} \langle D^\alpha \mathbb{P}_a f, (A^0)^{\frac{1}{2}}[(A^0_a)^{-\frac{1}{2}}\mathfrak{A}_a(A^0_a)^{-\frac{1}{2}}](A^0_a)^{\frac{1}{2}} D^\alpha \mathbb{P}_a f\rangle,
\quad a=1,2,
\end{align*}
it follows immediately from \eqref{Afrbound} that
\begin{align} \label{E:KAPPACONTR}
  \frac{2}{t}\sum_{0\leq |\alpha|\leq s-1}\langle D^\alpha u,\mathfrak{A}_2 D^\alpha \mathbb{P}_2  u\rangle \leq  \frac{2\kappa}{t}\vertiii{\mathbb{P}_2 u}^2_{2,H^{s-1}}\AND
  \frac{2}{t}\sum_{0\leq |\alpha|\leq s }\langle D^\alpha w,\mathfrak{A}_1 D^\alpha \mathbb{P}_1  w\rangle \leq \frac{2\kappa}{t}\vertiii{\mathbb{P}_1 w}^2_{1,H^s}.
\end{align}
Then, differentiating $\langle D^\alpha w, A^0_1 D^\alpha w\rangle$ with respect to $t$, we see, from
the identities  $\langle D^\alpha w, C^i_1\partial_i D^\alpha w\rangle = 0$
and $2 \langle D^\alpha w, A^i_1\partial_i D^\alpha w\rangle =  - \langle D^\alpha w, (\partial_i A^i_1) D^\alpha w\rangle$, the block decomposition of
\eqref{E:ABSOLUTEPTB0}, which
we can use to determine $ D^\alpha \del{t}w$, the estimates \eqref{E:F_I} and \eqref{E:KAPPACONTR} together with
those
from Lemma \ref{L:PREEST} and the calculus inequalities from Appendix  \ref{A:INEQUALITIES}, that
\begin{align}\label{E:DTW1}
  \partial_t\vertiii{w}^2_{1,H^s}=&\sum_{0\leq |\alpha|\leq s}\langle D^\alpha w, (\partial_tA_1^0) D^\alpha w\rangle+2\sum_{0\leq |\alpha|\leq s}\langle D^\alpha w, A_1^0 D^\alpha \partial_t w\rangle   \nnb \\
  \leq &  C(K_1 ) \|w\|^2_{H^s}-\frac{1}{t}C(K_1) \|w\|_{H^s}
     \|\mathbb{P}_1w\|^2_{H^s} +\sum_{0\leq |\alpha|\leq s}\langle D^\alpha w, (\del{i}A_1^i) D^\alpha w\rangle\nnb\\
     & -\frac{2}{\epsilon}\sum_{0\leq |\alpha|\leq s} \overset{\quad =0}{\overbrace{\langle D^\alpha w, C_1^i\partial_i D^\alpha w\rangle}}
-2\sum_{0\leq |\alpha|\leq s}\langle D^\alpha w, A_1^0[ D^\alpha,(A_1^0)^{-1}A_1^i]\partial_i w\rangle\nnb\\
     &-2\sum_{0\leq |\alpha|\leq s}\langle D^\alpha w, [\tilde{A}_1^0, D^\alpha](A_1^0)^{-1}C_1^i\partial_i w\rangle
  +\frac{2}{t}\sum_{0\leq |\alpha|\leq s}\langle D^\alpha w,\mathfrak{A}_1 D^\alpha \mathbb{P}_1 w\rangle  \nnb\\
  &+\frac{2}{t}\sum_{0\leq |\alpha|\leq s}\langle D^\alpha w, A_1^0[(A_1^0)^{-1}\mathfrak{A}_1, D^\alpha] \mathbb{P}_1 w\rangle+2\sum_{0\leq |\alpha|\leq s}\langle D^\alpha w, A_1^0 D^\alpha[(A_1^0)^{-1} (H_1+ F_1)]\rangle \nnb \\
  \leq & C(K_1)\vertiii{w}^2_{1,H^s}
  +\frac{1}{t}\bigl[2 \kappa -C_1(K_1) \|w\|_{H^s} \bigr] \vertiii{\mathbb{P}_1 w}^2_{1,H^s}
\end{align}
for $t\in [T_0, T_*)$.
By similar calculation, we obtain from differentiating $\langle D^\alpha u, A^0_2 D^\alpha u\rangle$ with respect to $t$ the estimate
\begin{align}
  \partial_t\vertiii{u}^2_{2,\Hs} =&\sum_{0\leq |\alpha|\leq s-1}\langle D^\alpha u, (\partial_t A_2^0) D^\alpha u \rangle+2\sum_{0\leq |\alpha|\leq s-1}\langle D^\alpha u, A_2^0 D^\alpha \partial_t u\rangle   \nnb  \\
  \leq &  C(K_1 ) \|u\|^2_{\Hs}-\frac{1}{t}C(K_1,K_2) (\|u\|_{\Hs}+\|w\|_{H^s})
  (\|\mathbb{P}_2u\|^2_{\Hs}+\|\mathbb{P}_1w\|^2_{H^s})   \nnb  \\
  & \sum_{0\leq |\alpha|\leq s-1}\langle D^\alpha u, (\del{i} A_2^i) D^\alpha u\rangle -\frac{2}{\epsilon}\sum_{0\leq |\alpha|\leq s-1}
\overset{\quad = 0}{ \overbrace{\langle D^\alpha u, C_2^i\partial_i D^\alpha u\rangle } }  \nnb  \\
  &-2\sum_{0\leq |\alpha|\leq s-1}\langle D^\alpha u, A_2^0[ D^\alpha,(A_2^0)^{-1}A_2^i]\partial_i u\rangle
   -2\sum_{0\leq |\alpha|\leq s-1}\langle D^\alpha u, [\tilde{A}_2^0, D^\alpha](A_2^0)^{-1}C_2^i\partial_i u\rangle
   \nnb   \\
  &+\frac{2}{t}\sum_{0\leq |\alpha|\leq s}\langle D^\alpha u,\mathfrak{A}_2 D^\alpha \mathbb{P}_2 u\rangle-\frac{2}{t}\sum_{0\leq |\alpha|\leq s-1}\langle D^\alpha u, A_2^0[(A_2^0)^{-1}\mathfrak{A}_2, D^\alpha]\mathbb{P}_2u\rangle \nnb  \\
  &+2\sum_{0\leq |\alpha|\leq s-1}\left\langle D^\alpha u, A_2^0 D^\alpha[(A_2^0)^{-1} \bigr(H_2+ \frac{1}{t} M_2 \Pbb_3 U +F_2  \bigr)]\right\rangle \nnb  \\
  \leq & C(K_1,K_2)(\vertiii{u}^2_{2,\Hs}+ \vertiii{w}_{1,H^s}^2)-\frac{1}{2t}C_2(K_1,K_2)(
    \|u\|_{\Hs}+\|w\|_{H^s})\vertiii{\mathbb{P}_1w}_{1,H^s}^2  \nnb  \\
  &+\frac{1}{t}\bigl[2\kappa-C_2(K_1,K_2)(\|u\|_{\Hs}+\|w\|_{H^s})\bigr] \vertiii{\mathbb{P}_2u}^2_{2,\Hs}
- C(K_1)\frac{1}{t} (\vertiii{u}^2_{2,\Hs}+\vertiii{w}^2_{1,H^s})\vertiii{\Pbb_3 U}_{\Hs} \label{E:DTU1}
\end{align}
for $t\in [T_0, T_*)$.

Applying the operator $A^0 D^\alpha \Pbb^3(A^0)^{-1}$ to \eqref{E:MODELEQ2a}, we see,
with the help of \eqref{E:P32a}-\eqref{E:P32c}, that
\al{PROJEQ}{
A^0\partial_0 D^\alpha \Pbb_3 U+\Pbb_3 A^i\Pbb_3 \partial_i D^\alpha\Pbb_3  U+  \frac{1}{\epsilon}\Pbb_3 C^i\Pbb_3 \partial_i D^\alpha \Pbb_3 U
  =&-A ^0[ D^\alpha,(A ^0)^{-1}\Pbb_3 A^i\Pbb_3 ]\partial_i \Pbb_3 U \notag \\
-[\tilde{A}^0, D^\alpha](A^0)^{-1}\Pbb_3 C^i\Pbb_3 \partial_i \Pbb_3 U
  +\frac{1}{t}\Pbb_3 \mathfrak{A} \Pbb_3 D^\alpha  \Pbb_3 U &+\frac{1}{t}A ^0[ D^\alpha, (A ^0)^{-1}
\Pbb_3 \mathfrak{A}\Pbb_3 ]\Pbb_3  U
  +A ^0 D^\alpha[(A ^0)^{-1}\Pbb_3  H].  }
Then, by similar arguments used to derive \eqref{E:DTW1} and \eqref{E:DTU1}, we obtain from \eqref{E:PROJEQ} the
estimate
\begin{align*}
  \partial_t\vertiii{\Pbb_3 U}^2_{\Hs}   =&\sum_{0\leq |\alpha|\leq s-1}\langle D^\alpha \Pbb_3 U, (\partial_t A^0) D^\alpha \Pbb_3 U \rangle+2\sum_{0\leq |\alpha|\leq s-1}\langle D^\alpha \Pbb_3 U, \Pbb_3 A^0 \Pbb_3 D^\alpha \partial_t \Pbb_3 U\rangle    \nnb  \\
  \leq &  -\frac{1}{t} C(K_1)\|\Pbb_1 w\|_{H^s}\|\Pbb_3 U\|^2_{\Hs}+C(K_1) \|\Pbb_3 U\|^2_{\Hs}  \nnb   \\
  & +\sum_{0\leq |\alpha|\leq s-1}\langle D^\alpha \Pbb_3 U, (\del{i}A^i) D^\alpha \Pbb_3 U\rangle
-\frac{2}{\epsilon}\sum_{0\leq |\alpha|\leq s-1} \overset{\quad =0}{\overbrace{\langle D^\alpha \Pbb_3 U, C ^i\partial_i D^\alpha \Pbb_3 U\rangle }}
 \nnb  \\
  &-2\sum_{0\leq |\alpha|\leq s-1}\langle D^\alpha \Pbb_3 U, A^0[ D^\alpha,(A ^0)^{-1}A ^i]\partial_i \Pbb_3 U+[\tilde{A}^0, D^\alpha](A^0)^{-1}C^i\partial_i \Pbb_3 U\rangle
   \nnb  \\
  &+\frac{2}{t}\sum_{0\leq |\alpha|\leq s}\langle D^\alpha \Pbb_3 U,\mathfrak{A} D^\alpha \Pbb_3 U\rangle+\frac{2}{t}\sum_{0\leq |\alpha|\leq s-1}\langle D^\alpha \Pbb_3 U, A^0[(A^0)^{-1}\mathfrak{A}, D^\alpha]\Pbb_3 U\rangle   \nnb  \\
  &\hspace{5.5cm} +2\sum_{0\leq |\alpha|\leq s-1}\left\langle D^\alpha \Pbb_3 U, A^0 D^\alpha[(A^0)^{-1} \Pbb_3 H]\right\rangle   \nnb  \\
  \leq &  C(K_1 ) \|\Pbb_3 U\|^2_{\Hs}  +C(K_1)\|\Pbb_3 U\|_{\Hs}\Bigl(\| H_1\|_{\Hs}+\|H_2\|_{\Hs}+\| F_1\|_{\Hs} \nnb \\
  &\hspace{1.5cm}+\|F_2\|_{\Hs}\Bigr)
 +\frac{1}{t}\Bigl(2 \kappa-C_2(K_1,K_2 ) \bigl(\|w\|_{H^s}+\|u\|_{\Hs}\bigr)\Bigr)\vertiii{\Pbb_3 U}^2_{\Hs} \nnb\\
&\leq  C(K_1 ) \vertiii{\Pbb_3 U}^2_{\Hs}  + C(K_1,K_2)\bigl(\vertiii{w}_{1,H^s}+\vertiii{u}_{2,\Hs})\bigr)\vertiii{\Pbb_3 U}_{\Hs} \nnb \\
&\hspace{4.1cm}+\frac{1}{t}\Bigl(2 \kappa-C_2(K_1,K_2 ) \bigl(\|w\|_{H^s}+\|u\|_{\Hs}\bigr)\Bigr)\vertiii{\Pbb_3 U}^2_{\Hs}.
\end{align*}
Dividing the above estimate by $ \vertiii{\Pbb_3 U}_{\Hs}$ gives
\begin{align}
 \partial_t\vertiii{\Pbb_3 U}_{\Hs}  &\leq  C(K_1 ) \vertiii{\Pbb_3 U}_{\Hs}  + C(K_1,K_2)\bigl(\vertiii{w}_{1,H^s}+
\vertiii{u}_{2,\Hs})\bigr) \notag \\
&\hspace{2.8cm}+\frac{1}{t}\biggl(\kappa-\frac{C_2(K_1,K_2 )}{2} \bigl(\|w\|_{H^s}+\|u\|_{\Hs}\bigr)\biggr)\vertiii{\Pbb_3 U}_{\Hs}.
\label{E:DTP3U}
\end{align}

Next, we choose $\sigma>0$ small enough so that
\begin{equation*}
\Bigl(C_1(\hat{R}) + 2C_2(\hat{R},\hat{R})\Bigr)
\sigma < \frac{\kappa}{2}
\end{equation*}
in addition to \eqref{sigmaC1}.
Then since
\begin{equation*}
2\kappa - \Bigl(C_1(K_1(T_0))\norm{w(T_0)}_{H^s} + C_2(K_1(T_0),K_2(T_0))\bigl(\norm{w(T_0)}_{H^s}
+ \norm{u(T_0)}_{H^{s-1}}\bigr)\Bigr)  > \kappa,
\end{equation*}
we see by continuity that either
\begin{equation*}
2\kappa - \Bigl(C_1(K_1(t))\norm{w(t)}_{H^s} + C_2(K_1(t),K_2(t))\bigl(\norm{w(t)}_{H^s}
+ \norm{u(t)}_{H^{s-1}}\bigr)\Bigr)  > \kappa, \quad 0\leq t < T_*,
\end{equation*}
or else there exists a first time $T^* \in (0,T_*)$ such that
\begin{equation*}
2\kappa - \Bigl(C_1(K_1(T^*))\norm{w(T^*)}_{H^s} + C_2(K_1(T^*),K_2(T^*))(\norm{w(T^*)}_{H^s}
+ \norm{u(T^*)}_{H^{s-1}}\Bigr)  = \kappa.
\end{equation*}
Thus if we let $T^*=T_*$ if the first case holds, then we have that
\begin{equation} \label{E:qq}
2\kappa - \Bigl(C_1(K_1(t))\norm{w(t)}_{H^s} + C_2(K_1(t),K_2(t))\bigl(\norm{w(t)}_{H^s}
+ \norm{u(t)}_{H^{s-1}}\bigr)\Bigr)  > \kappa, \quad 0\leq t < T^*\leq T_*.
\end{equation}
Taken together, the estimates \eqref{K1ineq}, \eqref{E:DTW1}, \eqref{E:DTU1}, \eqref{E:DTP3U} and \eqref{E:qq} imply that
\begin{align}
\partial_t\vertiii{w}^2_{1,H^s} \leq & C(\hat{R})\vertiii{w}^2_{1,H^s}
  +\frac{\kappa}{t} \vertiii{\mathbb{P}_1 w}^2_{1,H^s}, \label{E:ENERGEST1}\\
\partial_t\vertiii{u}^2_{2,\Hs} \leq & C(\hat{R} )\bigl(\vertiii{u}^2_{2,\Hs}+\vertiii{w}_{1,H^s}^2\bigr) - \frac{1}{t} C_3(\hat{R}) \bigl(\vertiii{u}^2_{2,\Hs}+\vertiii{w}^2_{1,H^s}\bigr) \vertiii{\Pbb_3 U}_{\Hs} \nnb \\
  & \hspace{6.8cm}
+\frac{\kappa}{2t}\vertiii{\mathbb{P}_1w}_{1,H^s}^2+\frac{\kappa}{t} \vertiii{\mathbb{P}_2u}^2_{2,\Hs}
 \label{E:ENERGEST2}
\intertext{and}
\partial_t\vertiii{\Pbb_3 U} _{\Hs} \leq &   C(\hat{R})\bigl(\vertiii{\Pbb_3 U}_{\Hs}  + \vertiii{w}_{1,H^s}+\vertiii{u}_{2,\Hs}\bigr)+
\frac{\kappa}{2t} \vertiii{\Pbb_3 U}_{\Hs}  \label{E:ENERGEST3}
\end{align}
for $0\leq t < T^* \leq T_*$.

Next, we set
\begin{align*}
X = \vertiii{w}^2_{1,H^s} + \vertiii{u}^2_{2,\Hs}, \quad Y = \vertiii{\Pbb_1 w}^2_{1,H^s} + \vertiii{\Pbb_2 u}^2_{2,\Hs},
\AND
Z = \vertiii{\Pbb_3U}_{\Hs}.
\end{align*}
Since $C_3(\hat{R})X(T_0)/\sigma \leq C(\hat{R})\sigma$, we can choose $\sigma$ small enough so that
$C_3(\hat{R})X(T_0)/\sigma < \kappa/4$. Then by continuity, either $ C_3(\hat{R})X(t)/\sigma \leq \kappa/4$ for $t\in [T_0,T^*)$,
or else there exists a first time $T\in (T_0,T^*)$ such that $C_3(\hat{R})X(T)/\sigma = \kappa/4$. Thus if we set
$T=T^*$ if the first case holds, then we have that
\begin{equation} \label{Tdef}
C_3(\hat{R})\frac{X(t)}{\sigma} < \kappa/4, \quad T_0 \leq t < T\leq T^* \leq T_*.
\end{equation}
Adding the inequalities \eqref{E:ENERGEST1} and \eqref{E:ENERGEST2} and dividing the results by $\sigma$, we
obtain, with the help of \eqref{Tdef}, the inequality
\begin{equation} \label{Xest1}
\del{t}\biggl(\frac{X}{\sigma}\biggr) \leq C(\hat{R})\frac{X}{\sigma} -\frac{\kappa}{4t}Z +\frac{\kappa}{2t}\frac{Y}{\sigma}, \quad T_0 \leq t < T\leq T^*\leq T_*,
\end{equation}
while the inequality
\begin{equation} \label{Zest1}
\del{t}Z \leq C(\hat{R})\biggl( Z + \sigma + \frac{X}{\sigma}\biggr) + \frac{\kappa}{2t}Z, \quad T_0 \leq t < T^*\leq T_*
\end{equation}
follows from \eqref{E:ENERGEST3} and Young's inequality. Adding \eqref{Xest1} and \eqref{Zest1}, we find that
\begin{equation} \label{XZest1}
\del{t}\biggl(\frac{X}{\sigma} +Z -\frac{\kappa}{4}\int_{T_0}^t \frac{1}{\tau}\biggl(\frac{Y}{\sigma}+Z\biggr)\, d\tau  + \sigma \biggr)
\leq C(\hat{R})  \biggl(\frac{X}{\sigma} +Z -\frac{\kappa}{4}\int_{T_0}^t \frac{1}{\tau}\biggl(\frac{Y}{\sigma}+Z\biggr)
\, d\tau  + \sigma \biggr)
\end{equation}
for $T_0 \leq t < T\leq T^*\leq T_*$. Since $X(T_0)\leq C(\hat{R})\sigma^2$ and $Z(T_0) \lesssim \sigma$, it follows
directly from  \eqref{XZest1} and Gr\"{o}nwall's inequality that
\begin{equation*}
\frac{X}{\sigma} +Z -\frac{\kappa}{4}\int_{T_0}^t \frac{1}{\tau}\biggl(\frac{Y}{\sigma}+Z\biggr)\, d\tau  + \sigma
\leq e^{C(\hat{R})(t-T_0)}C(\hat{R})\sigma,  \quad T_0 \leq t < T\leq T^*\leq T_*,
\end{equation*}
from which it follows that
\begin{equation} \label{XZest2}
 \|w\|_{M^\infty_{\Pbb_1, s}([T_0,t)\times \Tbb^n)}+\|u\|_{M^\infty_{\Pbb_2, s-1}([T_0,t)\times \Tbb^n)} -
\int_{T_0}^{t} \frac{1}{\tau} \|\Pbb_3 U\|_{\Hs}\, d\tau  \leq C(\hat{R})\sigma, \quad T_0 \leq t < T\leq T^*\leq T_*,
\end{equation}
where we stress that the constant $C(\hat{R})$ is independent of $\epsilon$ and the times $T$, $T^*$, $T_*$, and $T_1$. Choosing
$\sigma$ small enough, it is then clear from the estimate \eqref{XZest2} and the definition of the times $T$, $T^*$, and $T_1$
that $T=T^*=T_*=T_1$, which completes the proof.
\end{proof}

\subsection{Error estimates\label{S:MODELerr}}
In this section, we consider solutions of the singular initial value problem
  \begin{align}
  A_1^0 (\epsilon,t,x,w)\partial_0 w+A_1^i (\epsilon,t,x,w)\partial_i w+\frac{1}{\epsilon}C_1^i\partial_i w&=\frac{1}{t}\mathfrak{A}_1(\epsilon,t,x,w)\mathbb{P}_1  w+H_1+F_1  &&\mbox{in} \quad[T_0, T_1)\times\mathbb{T}^n, \label{E:MODELEQ3a}\\
  w(x) |_{t=T_0}&= \mrw^0(x) +\epsilon s^0(\epsilon,x) &&\text{in} \quad \{T_0\}\times\mathbb{T}^n, \label{E:MODELEQ3b}
  \end{align}
where the matrices $A_1^0$, $A_1^i$, $i=1,\ldots,n$, and $\mathfrak{A}_1$ and the source terms $H_1$ and $F_1$ satisfy the conditions from
Assumption \ref{ASS1}. Our aim is to use the uniform a priori estimates from Theorem \ref{L:BASICMODEL} to establish
uniform a priori estimates for solutions of \eqref{E:MODELEQ3a}-\eqref{E:MODELEQ3b} and to establish an error estimate
between solutions of  \eqref{E:MODELEQ3a}-\eqref{E:MODELEQ3b} and solutions of the \textit{limit equation}, which
is defined by
\begin{align}
  \mathring{A}_1^0\partial_0 \mathring{w}+\mathring{A}_1^i\partial_i \mathring{w}&=\frac{1}{t}\mathring{\mathfrak{A}}_1\mathbb{P}_1\mathring{w}-C_1^i\partial_i v+\mathring{H}_1+\mathring{F}_1
&& \mbox{in} \quad[T_0, T_1)\times\mathbb{T}^n,  \label{E:LIMITINGEQa}\\
  C_1^i\partial_i\mathring{w}&=0 && \mbox{in} \quad[T_0, T_1)\times\mathbb{T}^n, \label{E:LIMITINGEQb}\\
  \mathring{w}(x)|_{t=T_0}&=\mathring{w}^0(x) &&\text{in} \quad \{T_0\}\times\mathbb{T}^n. \label{E:LIMITINGEQc}
\end{align}
In this system, $\mathring{A}^0_1$ and $\mathring{\mathfrak{A}}_1$ are defined by \eqref{E:DECOMPOSITIONOFA01} and
\eqref{E:DECOMPOSITIONOFCALB} with $a=1$, respectively,
$\mathring{A}_1^i$ and $\mathring{H}_1$ are defined by the limits
\begin{equation} \label{E:HRIN}
			\mathring{A}_1^i(t,x, \mathring{w})=\lim_{\epsilon\searrow 0}A_1^i(\epsilon,t,x, \mathring{w}) \AND
\mathring{H}_1(t,x, \mathring{w})=\lim_{\epsilon\searrow 0} H_1(\epsilon,t,x,\mathring{w}),
\end{equation}
respectively, and the following assumptions hold for fixed constants
$R >0$, $T_0 < T_1 <0$ and $s\in \Zbb_{>n/2+1}$:

\begin{ass}\label{ASS3}$\;$

\begin{enumerate}
\item \label{A3a} The source terms\footnote{The source term $\mathring{F}_1$ should be thought of as the $\epsilon \searrow 0$ limit of
$F_1$. This is made precise by the hypothesis \eqref{HFLip} of Theorem  \ref{T:MAINMODELTHEOREM}.}
$\mathring{F}_1$ and $v$ satisfy  $\mathring{F}_1 \in C^0\bigl([T_0,T_1),H^s(\Tbb^n,\Rbb^{N_1})\bigr)$
and $v\in \bigcap_{\ell=0}^1 C^\ell \bigl([T_0,T_1),H^{s+1-\ell}(\Tbb^n,\Rbb^{N_1})\bigr)$.
\item \label{A3b} The matrices $\mathring{A}_1^i$, $i=1,\ldots, n$ and the source term $\mathring{H}_1$ satisfy\footnote{From
the assumptions, see Assumption \ref{ASS1}.\eqref{A:GH}-\eqref{A:Bi}, on $A_1^i$ and $H_1$, it follows directly from
the \eqref{E:HRIN} that
 $\mathring{A}_1^i \in  E^0\big((2T_0,0)\times \Tbb^n \times B_R\bigl(\Rbb^{N_1}\bigr), \mathbb{S}_{N_1}\bigr)$
and $\mathring{H}_1 \in  E^0\big((2T_0,0)\times \Tbb^n \times B_R\bigl(\Rbb^{N_1}\bigr), \mathbb{R}^{N_1}\bigr)$.
}
$t\mathring{A}_1^i \in E^1\big((2T_0,0)\times \Tbb^n \times B_R\bigl(\Rbb^{N_1}\bigr), \mathbb{S}_{N_1}\bigr)$,
$t\mathring{H}_1  \in E^1\big((2T_0,0)\times \Tbb^n \times B_R\bigl(\Rbb^{N_1}\bigr), \mathbb{R}^{N_1}\bigr)$,
and
\begin{equation*}
    D_{t}\bigl(t\mathring{H}_1(t,x,0\bigr)\bigr)=0.
\end{equation*}
%\item \label{A3c} The matrix  $\Pbb_1\mathring{\mathfrak{A}}_1\Pbb_1$ is symmetric, that is
%\begin{equation}
%\Pbb_1\mathring{\mathfrak{A}}_1(t)\Pbb_1 = \Pbb_1\mathring{\mathfrak{A}}_1(t)^{\mathrm{T}}\Pbb_1
%\end{equation}
%for all $t\in (2T_0,0)$.
\end{enumerate}
\end{ass}

We are now ready to state and establish uniform a priori estimates for solutions of the singular initial value problem \eqref{E:MODELEQ3a}-\eqref{E:MODELEQ3b} and the associated limit equation defined by
\eqref{E:LIMITINGEQa}-\eqref{E:LIMITINGEQc}.
\begin{theorem}\label{T:MAINMODELTHEOREM}
Suppose $R>0$, $s\in\mathbb{Z}_{>n/2+1}$,  $T_0< T_1 \leq 0$, $\epsilon_0>0$, $\mrw^0\in H^s(\Tbb^n, \Rbb^M)$,
$s^0\in L^\infty\bigl((0,\epsilon_0),H^s(\Tbb^n, \Rbb^{N_1})\bigr)$, Assumptions \ref{ASS1} and \ref{ASS3}
hold,  the maps
\als%{SPWMRW}
{
	(w, \mrw) \in \bigcap_{\ell=0}^1 C^\ell\bigl([T_0,T_1),H^{s-\ell}\bigl(\Tbb^n, \Rbb^{N_1}\bigr)\bigr) \times
\bigcap_{\ell=0}^1 C^\ell\bigl([T_0,T_1),H^{s-\ell}\bigl(\Tbb^n, \Rbb^{N_1}\bigr)\bigr)
	}
define a solution to the initial value problems
\eqref{E:MODELEQ3a}-\eqref{E:MODELEQ3b} and \eqref{E:LIMITINGEQa}-\eqref{E:LIMITINGEQc}, and
for $t\in [T_0,T_1)$, the following estimate holds:
\begin{align}
\|v(t)\|_{H^{s+1}}-\frac{1}{t}\|\Pbb_1 v(t)\|_{H^{s+1}}+\|\del{t} v(t)\|_{H^s} &\leq C\bigl(\|\mrw\|_{\Li([T_0,t),H^s)}\bigr)\|\mrw(t)\|_{H^s},
\label{E:VASS} \\
\|\mathring{F}_1(t)\|_{H^s}+\|t \partial_t \mathring{F}_1(t)\|_{\Hs}&\leq C\bigl(\|\mrw\|_{\Li([T_0,t),H^s)}\bigr)\|\mrw(t)\|_{H^s}, \label{E:F_IRIN}\\
\|F_1(\epsilon,t)\|_{H^s} &\leq C\bigl(\|w\|_{\Li([T_0,t),H^s)}\bigr)\|w(t)\|_{H^s}, \label{F1est} \\
\|A^i_1(\epsilon,t,\cdot,\mathring{w}(t))-\mathring{A}^i_1(t,\cdot,\mathring{w}(t))\|_{\Hs} &\leq \epsilon
C\bigl(\|\mrw(t)\|_{\Li([T_0,t),H^s)}\bigr) \label{AiLip}
\intertext{and}
\|H_1(\epsilon,t,\cdot,\mathring{w}(t))-\mathring{H}_1(t,\cdot,\mathring{w}(t))\|_{\Hs}+\|F_1(\epsilon,t)-\mathring{F}_1(t)&\|_{\Hs} \notag \\
	\leq  \epsilon
C\bigl(\|w\|_{\Li([T_0,t),H^s)},\|\mrw\|_{\Li([T_0,t),H^s)}\bigr)(&\|w(t)\|_{H^s}+\|z(t)\|_{\Hs}+\|\mrw(t)\|_{H^s}), \label{HFLip}
\end{align}
where
\begin{equation*}\label{E:Z}
    z=\frac{1}{\epsilon}\left(w-\mathring{w}-\epsilon v\right)
 \end{equation*}
and the constants $ C\bigl(\|w\|_{\Li([T_0,t),H^s)}\bigr)$, $C\bigl(\|\mrw\|_{\Li([T_0,t),H^s)}\bigr)$ and
$C\bigl(\|w\|_{\Li([T_0,T_t),H^s)},\|\mrw\|_{\Li([T_0,t),H^s)}\bigr)$
are independent of $\epsilon \in (0,\epsilon_0)$ and the time $T_1 \in (T_0,0)$.

Then there exists a small constant $\sigma>0$, independent of $\epsilon \in (0,\epsilon_0)$  and $T_1 \in (T_0,0)$, such that if
initially
\begin{align}\label{E:INITIALDATA3}
  \|\mathring{w}^0\|_{H^s}+\|s^0\|_{H^s}
  \leq \sigma \AND C_1^i\partial_i \mathring{w}^0=0,
\end{align}
then
\begin{equation} \label{wrmwsupest}
\max\{\norm{w}_{L^\infty([T_0,T_1)\times \Tbb^n)} ,\norm{w}_{L^\infty([T_0,T_1)\times \Tbb^n)} \} \leq \frac{R}{2}
\end{equation}
and there exists a constant $C>0$, independent of $\epsilon\in (0,\epsilon_0)$ and $T_1 \in (T_0,0)$, such that
\begin{align}
&\|w\|_{M^\infty_{\mathbb{P}_1,s}([T_0,T_1)\times\mathbb{T}^n)}  +  \|\mrw\|_{M^\infty_{\mathbb{P}_1,s}([T_0,T_1)\times\mathbb{T}^n)}
  +\|t\partial_t \mrw\|_{M^\infty_{\mathds{1},s-1}([T_0,T_1)\times\mathbb{T}^n)}  \notag \\
&\hspace{5.0cm} +\int_{T_0}^{t}\|\del{t}\mrw\|_{\Hs}d\tau-\int_{T_0}^{t}\frac{1}{\tau}\|\Pbb_1 \mrw\|_{\Hs} d\tau  \leq C \sigma, \label{E:FINALEST1a}
\end{align}
\begin{gather}
 \|w-\mathring{w}\|_{L^\infty([T_0, t),H^{s-1})}  \leq \epsilon C\sigma \label{E:FINALEST1b}
\intertext{and}
 -\int_{T_0}^{t}\frac{1}{\tau}\|\mathbb{P}_1(w-\mrw)\|^2_{H^{s-1}}d\tau  \leq \epsilon^2 C\sigma^2 \label{E:FINALEST1c}
\end{gather}
for $T_0 \leq t < T_1$.
\end{theorem}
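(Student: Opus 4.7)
My plan is to apply Theorem \ref{L:BASICMODEL} three times: once to the $w$-equation alone, once (in its $\epsilon$-independent form) to the limit equation, and finally to the coupled system for the rescaled difference $z=(w-\mathring{w}-\epsilon v)/\epsilon$. Equation \eqref{E:MODELEQ3a} is exactly the decoupled $w$-block \eqref{E:MODELEQ1a} of \eqref{E:MODELEQ2a}, and hypothesis \eqref{F1est} matches \eqref{E:F_I}, so Theorem \ref{L:BASICMODEL} gives, for $\sigma$ small, the $w$-part of \eqref{E:FINALEST1a}. The limit equation \eqref{E:LIMITINGEQa}--\eqref{E:LIMITINGEQc} is a symmetric hyperbolic system with $t$-singular (but not $\epsilon$-singular) source, and the inhomogeneity $-C_1^i\partial_i v+\mathring{H}_1+\mathring{F}_1$ obeys a bound of the type \eqref{E:F_I} by \eqref{E:VASS} and \eqref{E:F_IRIN}; running the energy argument of Theorem \ref{L:BASICMODEL} with the singular term $\frac{1}{\epsilon}C^i\partial_i$ absent (the rest of the $t$-singular machinery and the positivity \eqref{E:KAPPAB0CALB} still holding for $(\mathring{A}_1^0,\mathring{\mathfrak{A}}_1)$) then yields $\|\mathring{w}\|_{M^\infty_{\mathbb{P}_1,s}}\leq C\sigma$. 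The bounds on $t\partial_t\mathring{w}$ and on $\int_{T_0}^t\|\partial_t\mathring{w}\|_{H^{s-1}}\,d\tau$ in \eqref{E:FINALEST1a} follow by reading the equation as an expression for $\partial_t\mathring{w}$ and using \eqref{E:VASS}, \eqref{E:F_IRIN}.

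Next, writing $w=\mathring{w}+\epsilon v+\epsilon z$, substituting into \eqref{E:MODELEQ3a}, using the constraint $C_1^i\partial_i\mathring{w}=0$ to absorb the worst piece $\frac{1}{\epsilon}C_1^i\partial_i\mathring{w}$, subtracting the limit equation \eqref{E:LIMITINGEQa} (whose source $-C_1^i\partial_i v$ cancels the $C_1^i\partial_i v$ contribution arising from $\frac{1}{\epsilon}C_1^i\partial_i(\epsilon v)$), dividing through by $\epsilon$, and exploiting the decomposition \eqref{E:DECOMPOSITIONOFCALB} to extract the linear sink from $\frac{1}{\epsilon t}(\mathfrak{A}_1\mathbb{P}_1w-\mathring{\mathfrak{A}}_1\mathbb{P}_1\mathring{w})$, I obtain
\begin{align*}
A_1^0\partial_0 z+A_1^i\partial_i z+\tfrac{1}{\epsilon}C_1^i\partial_i z=\tfrac{1}{t}\mathring{\mathfrak{A}}_1\mathbb{P}_1 z+S_z,
\end{align*}
where $S_z$ gathers three families of terms: (i) the regular contribution $-A_1^0\partial_0 v-A_1^i\partial_i v$, bounded in $H^{s-1}$ via \eqref{E:VASS}; (ii) the $t$-singular but $\mathbb{P}_1$-projected sources $\frac{1}{t}\mathring{\mathfrak{A}}_1\mathbb{P}_1 v$ and $\frac{1}{t}\tilde{\mathfrak{A}}_1\mathbb{P}_1 w$, each of the form $\frac{1}{t}\mathbb{P}_1(\cdot)$ thanks to \eqref{E:COMMUTEPANFB}, and controlled via the quantity $-\frac{1}{t}\|\mathbb{P}_1 v\|$ in \eqref{E:VASS} and the $M^\infty_{\mathbb{P}_1,s}$-norm of $w$ from Step~1; and (iii) the formally $\frac{1}{\epsilon}$-singular differences $\frac{1}{\epsilon}[(H_1-\mathring{H}_1)+(F_1-\mathring{F}_1)-(A_1^0-\mathring{A}_1^0)\partial_0\mathring{w}-(A_1^i-\mathring{A}_1^i)\partial_i\mathring{w}]$, whose $\frac{1}{\epsilon}$ factor is absorbed by splitting each difference into a part evaluated at $\mathring{w}$ (bounded by the Lipschitz hypotheses \eqref{AiLip}, \eqref{HFLip} and the decomposition \eqref{E:DECOMPOSITIONOFA01}) and a $w$-Lipschitz remainder in $w-\mathring{w}=\epsilon(v+z)$.

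The pair $(w,z)$ then satisfies a system of the form \eqref{E:MODELEQ2a} at regularity pair $(s,s-1)$, with $z$ in the $u$-slot, matrices $A_2^\bullet,C_2^i,\mathbb{P}_2$ equal to their $w$-counterparts, and $\mathfrak{A}_2=\mathring{\mathfrak{A}}_1$; Assumptions~\ref{ASS1} transfer from the $w$-system, the positivity of $(A_1^0,\mathring{\mathfrak{A}}_1)$ following from \eqref{E:DECOMPOSITIONOFCALB} for $\epsilon_0$ small. The $\mathbb{P}_1$-projected family (ii) fits the $R_2$-structure of \eqref{E:MODELEQ2a} via $\mathbb{P}_3$ projecting onto $\mathbb{P}_1 w$, while the $\mathring{w}$-dependent contribution appears as an externally-controlled coefficient via Step~1. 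The initial datum $z(T_0)=s^0(\epsilon,\cdot)-v(T_0)$ satisfies $\|z(T_0)\|_{H^{s-1}}\leq C\sigma$ by $\|s^0\|_{H^s}\leq\sigma$ and \eqref{E:VASS} at $t=T_0$, so Theorem \ref{L:BASICMODEL} applies to produce $\|z\|_{M^\infty_{\mathbb{P}_1,s-1}([T_0,T_1)\times\mathbb{T}^n)}\leq C\sigma$ uniformly in $\epsilon\in(0,\epsilon_0)$ and $T_1\in(T_0,0)$.

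The error bound \eqref{E:FINALEST1b} then follows from $w-\mathring{w}=\epsilon(v+z)$ combined with the $H^{s-1}$-bounds on $v$ (via \eqref{E:VASS}) and on $z$; \eqref{E:FINALEST1c} follows from
\begin{align*}
-\int_{T_0}^t\tfrac{1}{\tau}\|\mathbb{P}_1(w-\mathring{w})\|_{H^{s-1}}^2\,d\tau\leq 2\epsilon^2\Bigl(-\int_{T_0}^t\tfrac{1}{\tau}\|\mathbb{P}_1 v\|_{H^{s-1}}^2\,d\tau-\int_{T_0}^t\tfrac{1}{\tau}\|\mathbb{P}_1 z\|_{H^{s-1}}^2\,d\tau\Bigr),
\end{align*}
the first integral being controlled by \eqref{E:VASS} (via $\frac{1}{|\tau|}\|\mathbb{P}_1 v\|^2\lesssim\|\mathring{w}\|_{H^s}\cdot\frac{1}{|\tau|}\|\mathbb{P}_1 v\|_{H^{s+1}}$) and the second by the $M^\infty_{\mathbb{P}_1,s-1}$-norm of $z$. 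The $L^\infty$ bounds \eqref{wrmwsupest} follow by Sobolev embedding once $\sigma$ is shrunk appropriately. The principal technical hurdle is the bookkeeping of Step~2: verifying that the $\mathbb{P}_1$-commutation structure truly allows every $1/t$-singular source in family (ii) to be paired with a $\mathbb{P}_1 z$ factor in the energy identity (so that the resulting products integrate against $1/|\tau|$), and that the $\frac{1}{\epsilon}$-prefactors in family (iii) are fully compensated by the Lipschitz $\epsilon$-factors in \eqref{AiLip} and \eqref{HFLip}, so that $S_z$ fits the hypotheses \eqref{E:F_I}--\eqref{E:F_I2} of Theorem \ref{L:BASICMODEL} for the coupled system.
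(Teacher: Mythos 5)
There is a genuine gap in your sequential scheme, concentrated in Steps 2 and 3. After your Step 2 (applying Theorem \ref{L:BASICMODEL} to the limit equation alone) you possess only the $M^\infty_{\mathbb{P}_1,s}$ bound on $\mathring{w}$, which controls $-\int_{T_0}^t\frac{1}{\tau}\|\mathbb{P}_1\mathring{w}\|^2_{H^{s}}\,d\tau$ --- an $L^2_\tau$-type quantity. But the conclusion \eqref{E:FINALEST1a} (and, more importantly, the closure of your Step 3) requires $L^1_\tau$-type bounds: $-\int_{T_0}^t\frac{1}{\tau}\|\mathbb{P}_1\mathring{w}\|_{H^{s-1}}\,d\tau\leq C\sigma$ and $-\int_{T_0}^t\frac{1}{\tau}\|t\partial_t\mathring{w}\|_{H^{s-1}}\,d\tau\leq C\sigma$. ``Reading off the equation'' gives you a pointwise bound $\|t\partial_t\mathring{w}\|_{H^{s-1}}\lesssim\sigma$, but Cauchy--Schwarz converts the $L^2$ bound into the needed $L^1$ bound only at the cost of a factor $\bigl(\int_{T_0}^t\frac{d\tau}{|\tau|}\bigr)^{1/2}\sim\ln^{1/2}(|T_0|/|t|)$, which diverges as $t\to 0^-$. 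These $L^1$ bounds are exactly what is needed in Step 3 to handle the $\frac{1}{t}$-singular cross-term $\hat{R}_2=-\frac{1}{t}\tilde{A}^0_1\,(t\partial_t\mathring{w})+\frac{1}{t}\tilde{\mathfrak{A}}_1\,\mathbb{P}_1\mathring{w}$ in the $z$-equation: since $t\partial_t\mathring{w}$ and $\mathbb{P}_1\mathring{w}$ are treated as externally prescribed from Step 2, $\hat{R}_2$ can be placed neither in the $F_2$ slot (its $H^{s-1}$ norm is not bounded by $C(\|w\|_{H^s}+\|z\|_{H^{s-1}})$ uniformly in $t$) nor in the $R_2=\frac{1}{t}M_2\mathbb{P}_3 U$ slot (those quantities are not projections of your solution vector $(w,z)$; they are components of the already-fixed $\mathring{w}$). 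Your statement that ``$\mathbb{P}_3$ projects onto $\mathbb{P}_1 w$'' does not match the actual source term, which depends on $\mathbb{P}_1\mathring{w}$ and $t\partial_t\mathring{w}$, not on $\mathbb{P}_1 w$.

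The paper resolves precisely this difficulty by \emph{promoting} $b:=\mathbb{P}_1\mathring{w}$ and $y:=t\partial_t\mathring{w}$ to bona fide solution variables: using the structural relation $\mathbb{P}_1^\perp\mathring{A}^0_1\mathbb{P}_1=\mathbb{P}_1\mathring{A}^0_1\mathbb{P}_1^\perp$ (a consequence of \eqref{E:DECOMPOSITIONOFA01} and \eqref{E:PAP}), it derives dedicated symmetric hyperbolic equations for $b$ (eq.\ \eqref{E:PW2}) and for $y$ (eq.\ \eqref{E:MODELEQ4}), packages $(w,\mathring{w})$ into the ``$w$''-slot and $(b,y,z)$ into the ``$u$''-slot of a single enlarged system \eqref{E:EQ1COR}, \eqref{E:WHOLESYS1}, and chooses $\mathbb{P}_3$ to project onto the $(b,y)$-components. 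One single application of Theorem \ref{L:BASICMODEL} to that combined system then delivers the $L^1_\tau$-integral bound $-\int_{T_0}^t\frac{1}{\tau}\|\mathbb{P}_3 U\|_{H^{s-1}}\,d\tau\leq C\sigma$ directly --- this is the additional conclusion beyond the $M^\infty$ norms --- and with it both the $b,y$ terms in \eqref{E:FINALEST1a} and the admissibility of $\hat{R}_2$ as an $R_2$-type source. Your Step 1, the derivation of the $z$-equation, the use of the constraint $C_1^i\partial_i\mathring{w}^0=0$, and the Lipschitz bookkeeping for $H_1-\mathring{H}_1$, $F_1-\mathring{F}_1$, $A_1^i-\mathring{A}_1^i$ all match the paper's argument; the missing idea is the simultaneous treatment of $b$, $y$, $z$ inside one invocation of Theorem \ref{L:BASICMODEL} rather than three sequential invocations, without which the $1/t$-singular coupling cannot be closed.
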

\begin{proof}
First, we observe, by \eqref{E:DECOMPOSITIONOFA01} and \eqref{E:PAP}, that $A^0_1$ satisfies
\begin{equation} \label{Aringproj}
\Pbb_1^\perp \mathring{A}^0_1 \Pbb_1 = \Pbb_1 \mathring{A}^0_1 \Pbb_1^\perp.
\end{equation}
Using this, we find, after applying $\Pbb_1$ to the limit equation \eqref{E:LIMITINGEQa}, that
\al{d}{
b=\Pbb_1 \mrw
}
satisfies the equation
\al{PW2}{
\Pbb_1 \mrA^0_1\Pbb_1 \del{t} b +\Pbb_1 \mrA^i_1\Pbb_1 \del{i} b =\frac{1}{t}\Pbb_1 \mathfrak{\mrA}_1 \Pbb_1 b  +
\Pbb_1 \mrH_1+ \Pbb_1\bar{F}_2,
}
where
\als%{H2BAR}
{
\bar{F}_2 = -\Pbb_1 \mrA^i_1\Pbb_1^\perp \del{i}\mrw + \Pbb_1 \mathring{F}_1-\Pbb_1 C^i_1 \del{i} v.
}
Clearly, $\bar{F}_2$ satisfies
\begin{equation} \label{Fb2est}
\norm{\bar{F}_2(t)}_{H^{s-1}} \leq C\bigl(\norm{\mrw}_{L^\infty(T_0,t),H^s}\bigr)\norm{\mrw(t)}_{H^s}
\end{equation}
for $0\leq t < T_1$ by \eqref{E:VASS}, \eqref{E:F_IRIN} and the calculus inequalities from Appendix \ref{A:INEQUALITIES}, while
\begin{equation} \label{bidata}
\norm{b(T_0)}_{H^{s-1}} \leq \norm{ \mrw^0}_{H^s} \leq \sigma,
\end{equation}
by the assumption \eqref{E:INITIALDATA3} on the initial data, and  $\Pbb_1\mathring{H}_1(t,x,\mrw)$ satisfies
\begin{equation} \label{P1H1zerp}
\Pbb_1\mathring{H}_1(t,x,0) = 0
\end{equation}
by Assumption \ref{ASS1}.(3).

Next, we set
\als%{TPTWSHORT}
{y=t\del{t} \mrw.}
In order to derive an evolution equation for $y$, we apply $t\del{t}$ to \eqref{E:LIMITINGEQa} and use
the identity
\begin{equation*}
  t\partial_t f= t D_t f+ [D_{\mrw} f\cdot t\partial_t \mrw]
  =D_t (t f)- f+ [D_{\mrw} f\cdot t\partial_t \mrw], \quad f=f(t,x,\mrw(t,x)),
\end{equation*}
to obtain
\begin{equation}\label{E:MODELEQ4}
   \mrA_1^0\partial_t y+\mrA_1^i\partial_i y
   =\frac{1}{t}\bigr(\mathbb{P}_1\mathfrak{\mrA}_1\mathbb{P}_1 +\mrA_1^0\bigr ) y-\frac{1}{t}\mathfrak{\mrA}_1 b  + \tilde{R}_2  +\tilde{H}_2+\tilde{F}_2,
\end{equation}
where
\begin{align*}
  \tilde{H}_2&=   D_t (t \mathring{H}_1)-\mrH_1 + [D_{\mrw} \mrH_1 \cdot y]+(D_t \mathring{\mathfrak{A}}_1) b-(D_t \mrA^0_1)y
%\label{Ht2def}
\intertext{and}
  \tilde{F}_2 &=  -[D_{\mrw} \mrA_1^i \cdot y]\partial_i \mrw - D_t (t\mrA^i_1) \del{i} \mrw + \mrA^i_1 \del{i} \mrw + t \partial_t \mathring{F}_1 +tC^i_1\del{i}\del{t} v. %\label{Ft2def}
\end{align*}
Note that in deriving the above equation, we have used the identity
\begin{equation} \label{Afrcom}
\mathring{\mathfrak{A}}_1 \Pbb_1 =  \Pbb_1 \mathring{\mathfrak{A}}_1  =\Pbb_1 \mathring{\mathfrak{A}}_1\Pbb_1,
\end{equation}
which follows directly from \eqref{E:DECOMPOSITIONOFCALB} and \eqref{E:COMMUTEPANFB}.
We further note by \eqref{E:VASS}, \eqref{E:F_IRIN} and Assumption \ref{ASS1}.(4) and Assumption \ref{ASS3}.(2), it is clear that $\tilde{F}_2$
and $\tilde{H}_2 = \tilde{H}_2(t,x,\mrw,b,y)$ satisfy
\begin{equation} \label{Ft2est}
\norm{\tilde{F}_2(t)}_{\Hs} \leq C\bigl(\norm{\mrw}_{H^s}\bigr)( \norm{y}_{\Hs} + \norm{\mrw}_{H^s})
\end{equation}
for $T_0\leq t < T_1$ and
\begin{equation} \label{Ht2zero}
\tilde{H}_2(t,x,0,0,0)=0,
\end{equation}
respectively. Using \eqref{E:LIMITINGEQa} and \eqref{E:INITIALDATA3}, we see that
\begin{equation*}
   y|_{t=T_0}= \Bigl[(\mrA_1^0)^{-1} \mathfrak{\mrA}_1\mathbb{P}_1 \mrw- t(\mrA_1^0)^{-1} \mrA_1^i \partial_i \mrw- t(\mrA_1^0)^{-1} C_1^i\partial_i v + t(\mrA_1^0)^{-1} \mrH_1+ t(\mrA_1^0)^{-1} \mathring{F}_1\Bigr]\Bigl|_{t=T_0},
\end{equation*}
which in turn, implies, via \eqref{E:INITIALDATA3},  \eqref{E:VASS}-\eqref{E:F_IRIN}, and the calculus
inequalities from Appendix \ref{A:INEQUALITIES}, that
\begin{equation*}\label{E:P_UT0}
  \|y\|_{\Hs}(T_0) \leq C(\sigma) \sigma.
\end{equation*}

A short computation using \eqref{E:MODELEQ3a}, \eqref{E:LIMITINGEQa} and \eqref{E:LIMITINGEQb}  shows that
\begin{align}\label{E:DIFFEQ2}
    A_1^0\partial_ t z+A_1^i\partial_iz+\frac{1}{\epsilon}C_1^i\partial_iz
    =\frac{1}{t} \mathfrak{A}_1\mathbb{P}_1z +  \hat{R}_2 +\hat{F}_2,
 \end{align}
where
\begin{gather*}
  \hat{F}_2= \frac{1}{\epsilon} (H_1-\mathring{H}_1)+\frac{1}{\epsilon} (F_1-\mathring{F}_1)
  -\frac{1}{\epsilon}(A_1^i-\mathring{A}^i_1)\partial_i \mrw
  -  A_1^i\partial_i v -  A_1^0\partial_t v +\frac{1}{t} \Pbb_1  \mathfrak{A}_1 \mathbb{P}_1v %\label{E:FHAT2}
\intertext{and}
\hat{R}_2=- \frac{1}{t}\tilde{A}_1^0 y+ \frac{1}{t}\tilde{\mathfrak{A}}_1 b,  %\label{E:RHAT2}
\end{gather*}
and we recall that $\tilde{A}_1^0$ and $\tilde{\mathfrak{A}}_1$ are defined by the expansions
\eqref{E:DECOMPOSITIONOFA01}-\eqref{E:DECOMPOSITIONOFCALB}.
Next, we estimate
\al{FHATC}{
	\frac{1}{\epsilon} \|H_1(\epsilon,&t,\cdot,w(t))-\mathring{H}_1(t,\cdot, \mathring{w}(t))\|_{\Hs}   \notag \\
	&\leq  \frac{1}{\epsilon} \|H_1(\epsilon,t,\cdot,w(t))-H_1(\epsilon,t,\cdot, \mathring{w}(t))\|_{\Hs} +
\frac{1}{\epsilon} \|H_1(\epsilon,t,\cdot,\mathring{w}(t))-\mathring{H}_1(t,\cdot, \mathring{w}(t))\|_{\Hs}   \nnb  \\
	&\leq C\bigl(\|w\|_{\Li([T_0,t),H^s)},\|\mrw\|_{\Li([T_0,t),H^s)}\bigr)(\|w(t)\|_{H^s}+\|z(t)\|_{\Hs}+\|\mrw(t)\|_{H^s}),
	}
for $T_0\leq t < T_1$, where in deriving the second inequality, we used \eqref{HFLip}, Taylor's Theorem (in the last variable),
and the calculus inequalities.
By similar arguments and \eqref{AiLip}, we also
get that
\begin{align}
\frac{1}{\epsilon}\|(A_1^i(\epsilon,t,&\cdot,w(t))-\mathring{A}^i_1(t,\cdot,\mrw(t)) \|_{\Hs} \notag \\
&\leq  C\bigl(\|w\|_{\Li([T_0,t),H^s)},\|\mrw\|_{\Li([T_0,t),H^s)}\bigr)(\|w(t)\|_{H^s}+\|z(t)\|_{\Hs} +\|\mrw(t)\|_{H^s}),
\label{FHATD}
\end{align}
again for $T_0\leq t < T_1$. Taken together, the estimates \eqref{E:VASS}, \eqref{HFLip}, \eqref{E:FHATC} and \eqref{FHATD}
along with the calculus inequalities imply that
\begin{equation} \label{Fh2est}
	\|\hat{F}_2(\epsilon,t)\|_{\Hs}\leq C(\|w\|_{\Li([T_0,t),H^s)},\|\mrw\|_{\Li([T_0,t),H^s)})(\|w(t)\|_{H^s}+\|z(t)\|_{\Hs}+\|\mrw(t)\|_{H^s})
\end{equation}
for $T_0\leq t < T_1$. Furthermore, we see from \eqref{E:VASS} and \eqref{E:INITIALDATA3} that we can estimate $z$ at $t=T_0$ by
\al{INITIALZEST}{
\|z\|_{\Hs}(T_0) \leq C(\sigma)\sigma.
}

We can combine the two equations \eqref{E:MODELEQ3a} and \eqref{E:LIMITINGEQa} together into
the equation
\al{EQ1COR}{
&\p{A_1^0 & 0 \\ 0 & \mathring{A}_1^0}\partial_t \p{w \\ \mathring{w}}+\p{A_1^i & 0 \\ 0 & \mathring{A}_1^i} \partial_i \p{ w \\ \mathring{w}} +\frac{1}{\epsilon} \p{C_1^i & 0 \\ 0 & 0 }\del{i} \p{w \\ \mrw}  \nnb \\
&\hspace{3.0cm}=\frac{1}{t}\p{\mathfrak{A}_1 & 0\\0 & \mathring{\mathfrak{A}}_1} \p{\Pbb_1 & 0 \\ 0 & \mathbb{P}_1}\p{w \\ \mrw}+\p{H_1\\ \mathring{H}_1 }+\p{F_1\\ \mathring{F}_1- C_1^i\partial_i v},
}
and collect the three equations \eqref{E:PW2}, \eqref{E:MODELEQ4} and \eqref{E:DIFFEQ2}
together into the equation
\al{WHOLESYS1}{
A_2^0\del{t} \begin{pmatrix} b\\ y\\ z\end{pmatrix}  +A_2^i\del{i} \begin{pmatrix} b\\ y\\ z\end{pmatrix} +\frac{1}{\epsilon}C_2^i\del{i} \begin{pmatrix} b\\ y\\ z\end{pmatrix}=\frac{1}{t}\mathfrak{A}_2 \Pbb_2 \begin{pmatrix} b\\ y\\ z\end{pmatrix} +H_2+ R_2+F_2,
}
where
\al{WHOLESYS2}{
A_2^0:=\p{\Pbb_1 \mrA^0_1 \Pbb_1   & 0 & 0 \\   0  & \mrA^0_1 & 0\\  0 &   0 & A^0_1 }, \quad A_2^i:=\p{\Pbb_1 \mrA^i_1 \Pbb_1   & 0 & 0  \\  0   & \mrA^i_1 & 0\\ 0   & 0 & A^i_1 },
}
\al{WHOLESYS2.5}{
C_2^i:=\p{0 & 0 & 0 \\   0  & 0 & 0\\  0   & 0 & C^i_1 }, \quad  \Pbb_2:=\p{ \Pbb_1 & 0 & 0  \\ 0 &   \mathds{1} & 0\\ 0  & 0 & \Pbb_1 }, \quad \mathfrak{A}_2=\p{\Pbb_1\mathfrak{\mrA}_1\Pbb_1 & 0  & 0 \\  -\Pbb_1\mathring{\mathfrak{A}}_1\Pbb_1  & \Pbb_1 \mathring{\mathfrak{A}}_1 \Pbb_1+\mrA^0_1 & 0 \\  0 & 0 & \mathfrak{A}_1},
}
\al{WHOLESYS3}{
H_2:=\p{\Pbb_1 \mathring{H}_1 \\ \tilde{H}_2  \\ 0 },  \quad R_2:=\p{0 \\ 0  \\  \hat{R}_2} \AND F_2:=  \p{\Pbb_1\bar{F}_2 \\ \tilde{F}_2  \\ \hat{F}_2}.
}
We remark that due to the projection operator $\Pbb_1$ that appears in the definition \eqref{E:d} of $b$
and in the top row of \eqref{E:WHOLESYS2}, the vector $(b,y,z)^{\mathrm{T}}$ takes values
in the vector space $\Pbb_1\Rbb^{N_1} \times \Rbb^{N_1} \times \Rbb^{N_1}$ and  \eqref{E:WHOLESYS2}
defines a symmetric hyperbolic system, i.e. $A^0_2$ and $A_2^i$ define symmetric linear operators on  $\Pbb_1\Rbb^{N_1} \times \Rbb^{N_1} \times \Rbb^{N_1}$ and $A^0_2$ is non-degenerate.

Setting
\begin{equation*}
\Pbb_3:=\p{0 & 0 & 0 & 0 & 0\\ 0 & 0 & 0 & 0 & 0\\ 0 & 0 & \mathbb{P}_1 & 0 & 0 \\ 0 & 0 & 0 & \mathds{1} & 0\\ 0 & 0 & 0 & 0 & 0 },
\end{equation*}
it is then not difficult to verify from the estimates \eqref{E:VASS}, \eqref{F1est}, \eqref{Fb2est},  \eqref{Ft2est}
and \eqref{Fh2est}, the initial bounds \eqref{E:INITIALDATA3}, \eqref{bidata} and \eqref{E:INITIALZEST},
the relations \eqref{Aringproj}, \eqref{P1H1zerp}, \eqref{Afrcom} and \eqref{Ht2zero}, and the assumptions
on the coefficients $\{$$A^0_1$, $A^i_1$, $\mathring{A}^0_1$, $\mathring{A}^i_1$, $\mathfrak{A}_1$,
$\mathring{\mathfrak{A}}_1$, $H$, $F$$\}$, see Assumptions \ref{ASS1} and \ref{ASS3}, that the system
consisting of \eqref{E:EQ1COR} and \eqref{E:WHOLESYS1} and the solution $U=(w,\mrw,b,y,z)^{\mathrm{T}}$
satisfy the hypotheses of Theorem \ref{L:BASICMODEL}, and thus, for $\sigma >0$ chosen small enough,
there exists a constant $C>0$ independent of $\epsilon \in (0,\epsilon_0)$ and $T_1 \in (T_0,0)$ such that
\begin{equation} \label{wmrwLinfty}
\norm{(w,\mrw)}_{L^\infty([T_0,T_1)\times \Tbb^n)} \leq \frac{R}{2}
\end{equation}
and
\begin{equation} \label{Uproofest}
 \|(w,\mrw)\|_{M^\infty_{\Pbb_1, s}([T_0,t)\times \Tbb^n)}+\|(b,y,z)\|_{M^\infty_{\Pbb_2, s-1}([T_0,t)\times \Tbb^n)} -\int_{T_0}^{t}
\frac{1}{\tau} \|\Pbb_3 U\|_{\Hs}\, d\tau  \leq C\sigma
\end{equation}
 for $T_0 \leq t < T_1$.
This completes the proof since the estimates \eqref{wrmwsupest}-\eqref{E:FINALEST1c} follow immediately from \eqref{wmrwLinfty} and \eqref{Uproofest}.
\end{proof}

%---------------- begin section 6 -------------------------------------------------------------------------------
\section{Initial data}\label{S:INITIALIZATION}
As is well known, the initial data for the reduced conformal Einstein--Euler equations cannot be chosen freely on the initial hypersurface
\begin{equation*}
\Sigma_{T_0} = \{T_0\}\times \Tbb^3 \subset M=(0,T_0]\times \Tbb^3\qquad (T_0 > 0).
\end{equation*}
Indeed, a number of constraints, which we can separate into gravitational, gauge and velocity normalization, must be satisfied on
$\Sigma_{T_0}$. There are a number of distinct methods available to solve these constraint equations. Here, we will follow the method
used in  \cite{oli3,oli4}, which is an adaptation of the method introduced by Lottermoser in \cite{lot}.

The goal of this section is to construct $1$-parameter families of $\epsilon$-dependent solutions to the constraint equations that
behave appropriately in the limit $\epsilon\searrow 0$. In order to use the method from \cite{oli3,oli4} to solve the constraint equations, we need to introduce new gravitational variables $\hmfu^{\mu\nu}$
and $\hmfu^{\mu\nu}_{\sigma}$ defined via the formulas
\begin{align}\label{E:DEFHATG}
\hat{g}^{\mu\nu}:=\theta\underline{\bar{g}^{\mu\nu}}=
E^3 (\bar{h} ^{\mu\nu}+\epsilon^2 \hmfu^{\mu\nu})=\hat{h}^{\mu\nu}+\epsilon^2 E^3 \hmfu^{\mu\nu} \qquad \text{and} \qquad
\hmfu^{\mu\nu}_{\sigma}:=\bar{\partial}_\sigma \hmfu^{\mu\nu},
\end{align}
respectively, where
\begin{align}\label{E:SMALLTHETA}
\theta=\frac{\sqrt{|\underline{\bar{g}}|}}{\sqrt{|\bar{\eta}|}}=\sqrt{\frac{\Lambda}{3}|\underline{ \bar{g}}|}, \quad | \bar{g}|=-\det{ \bar{g}_{\mu\nu}},  \quad  \hat{h}^{\mu\nu}=E^3\bar{h}^{\mu\nu} \quad \text{and} \quad |\bar{\eta}|=-\det{\bar{\eta}_{\mu\nu}}=\frac{3}{\Lambda}.
\end{align}

\medskip

\noindent \textbf{Notation:}
In the following, we will use
upper case script letters, e.g. $\mathscr{Q}(\xi)$, $\mathscr{R}(\xi)$, $\mathscr{S}(\xi)$, to denote analytic maps of
the variable $\xi$ whose exact form is
not important. The domain of analyticity of these maps will be clear from context. Generally, we will use $\mathscr{S}$
to denote maps that may change line to line, while other letters will be used to denote maps that need to be distinguished for
later use. We also introduce the following derivative notation to facilitate the statements,
\als{
\hat{\partial}_\mu=\frac{1}{\epsilon} \delta^i_\mu \del{i}+\delta^0_\mu \del{0}.
}

\medskip

The total set of constraints that we need to solve on $\Sigma_{T_0}$ are:
\begin{align}
(\underline{\bar{G}^{0\mu}} -\underline{\bar{T}^{0\mu}})|_{t=T_0}&=0 \quad (\text{Gravitational Constraints}),\label{E:CONSTRAINT}\\
\left.\left(\hat{\partial}_{\nu}(E^3\hmfu^{\mu\nu})-\frac{2}{t}E^3\hmfu^{\mu0}- \frac{2\Lambda}{3t}
\frac{\theta-E^3}{\epsilon^2} \delta^\mu_0+\frac{\theta-E^3}{\epsilon^2}
\frac{\Lambda}{t}\Omega \delta^\mu_0 \right)\right|_{t=T_0}&=0 \quad (\text{Gauge constraint}) \label{E:WAVECONSTRAINT}
\intertext{and}
(\underline{\bar{v}^\mu\bar{v}_\mu}+1)|_{t=T_0}&=0 \quad (\text{Velocity Normalization}).\label{E:NORMALIZATION}
\end{align}

\begin{remark} \label{wavegaugerem}
	It is not difficult to verify that the constraint \eqref{E:WAVECONSTRAINT} is equivalent to the wave gauge condition
$\underline{\bar{Z}^\mu}=0$ on the initial
hypersurface $\Sigma_{T_0}$. Indeed, it is enough to notice that   $\bar{\partial}_\nu (\hat{h}^{\mu\nu})=- E^3 \frac{\Lambda}{t}\Omega \delta^\mu_0$ and
	\begin{align*}
	\underline{\bar{X}^\mu} = -\hat{\partial}_\nu \underline{\bar{g}^{\mu\nu}}-\underline{\bar{g}^{\mu\nu}}\frac{1}{\sqrt{|\underline{\bar{g}}|}}
	\hat{\partial}_\nu\sqrt{|\underline{\bar{g}}|}-\frac{\Lambda}{t}\Omega\delta^\mu_0
	=\frac{1}{\theta}(-\theta \hat{\partial}_\nu \underline{\bar{g}^{\mu\nu}}
	-\underline{\bar{g}^{\mu\nu}}
	\hat{\partial}_\nu\theta)-\frac{\Lambda}{t}\Omega\delta^\mu_0   = -\frac{1}{\theta}\hat{\partial}_\nu \hat{g}^{\mu\nu}-\frac{\Lambda}{t}\Omega\delta^\mu_0.
	\end{align*}
\end{remark}

\subsection{Reduced conformal Einstein-equations}

Before proceeding, we state in the following lemma a result that will be used repeatedly in this section.
The proof follows  from the definition of $\theta$, see \eqref{E:SMALLTHETA},  and  a direct calculation. We omit the details.
\begin{lemma}\label{L:IDENTITY}
	\begin{align}\label{E:THETAEXPANSION}
	\theta(\epsilon, \hmfu^{\mu\nu})=E^6\sqrt{-\frac{3}{\Lambda}\det{(\bar{h}^{\mu\nu}+\epsilon^2\hmfu^{\mu\nu}})}
	=E^3+\frac{1}{2}\epsilon^2 E^3 \left(-\frac{3}{\Lambda}\hmfu^{00}+ E^2 \hmfu^{ij}\delta_{ij}
	\right)+\epsilon^4 \mathscr{S}(\epsilon,t, E, \hmfu^{\mu\nu}),
	\end{align}
	where $\mathscr{S}(\epsilon,t, E, 0)=0$.
\end{lemma}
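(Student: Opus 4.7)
The plan is to unpack $\theta$ in terms of $\det(\bar{h}^{\mu\nu}+\epsilon^2\hmfu^{\mu\nu})$, then Taylor expand in $\epsilon^2$ using Jacobi's formula for the derivative of the determinant, and finally take the square root.

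First I would verify the first equality. From the definition \eqref{E:DEFHATG}, $\underline{\bar{g}^{\mu\nu}}=(E^3/\theta)(\bar{h}^{\mu\nu}+\epsilon^2\hmfu^{\mu\nu})$, so taking determinants gives $\det(\underline{\bar{g}^{\mu\nu}})=(E^3/\theta)^4 \det(\bar{h}^{\mu\nu}+\epsilon^2\hmfu^{\mu\nu})$. On the other hand, $|\underline{\bar{g}}|=-\det(\underline{\bar{g}_{\mu\nu}})=-1/\det(\underline{\bar{g}^{\mu\nu}})$, so the defining relation $\theta=\sqrt{(\Lambda/3)|\underline{\bar{g}}|}$ from \eqref{E:SMALLTHETA} yields
\begin{equation*}
\theta^2 = \frac{\Lambda/3}{-\det(\underline{\bar{g}^{\mu\nu}})} = \frac{\Lambda\,\theta^4}{3 E^{12}\bigl(-\det(\bar{h}^{\mu\nu}+\epsilon^2\hmfu^{\mu\nu})\bigr)},
\end{equation*}
which solves to $\theta=E^6\sqrt{-(3/\Lambda)\det(\bar{h}^{\mu\nu}+\epsilon^2\hmfu^{\mu\nu})}$, establishing the first equality.

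Next I would expand the determinant. Using $\bar{h}^{00}=-\Lambda/3$ and $\bar{h}^{ij}=E^{-2}\delta^{ij}$, one has $\det(\bar{h}^{\mu\nu})=-\Lambda/(3E^6)$, so the leading term gives $-(3/\Lambda)\det(\bar{h}^{\mu\nu})=E^{-6}$. For the first-order correction, Jacobi's formula yields
\begin{equation*}
\left.\frac{d}{ds}\det(\bar{h}^{\mu\nu}+s\hmfu^{\mu\nu})\right|_{s=0}=\det(\bar{h}^{\mu\nu})\,\bar{h}_{\mu\nu}\hmfu^{\mu\nu}=-\frac{\Lambda}{3E^6}\left(-\frac{3}{\Lambda}\hmfu^{00}+E^2\delta_{ij}\hmfu^{ij}\right),
\end{equation*}
since $\bar{h}_{\mu\nu}$ is the inverse of $\bar{h}^{\mu\nu}$. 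Therefore
\begin{equation*}
-\frac{3}{\Lambda}\det(\bar{h}^{\mu\nu}+\epsilon^2\hmfu^{\mu\nu})=\frac{1}{E^6}\left\{1+\epsilon^2\left(-\frac{3}{\Lambda}\hmfu^{00}+E^2\delta_{ij}\hmfu^{ij}\right)+\epsilon^4\mathscr{R}(\epsilon,t,E,\hmfu^{\mu\nu})\right\},
\end{equation*}
where the remainder $\mathscr{R}$ is analytic (since the determinant is polynomial in its entries) and vanishes for $\hmfu^{\mu\nu}=0$.

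Finally I would take the square root and multiply by $E^6$. Applying the Taylor expansion $\sqrt{1+x}=1+x/2+x^2\mathscr{T}(x)$ with $x=\epsilon^2[\cdots]+\epsilon^4\mathscr{R}$, which is valid in a neighborhood of $x=0$, I obtain
\begin{equation*}
\theta = E^6\cdot\frac{1}{E^3}\left\{1+\frac{1}{2}\epsilon^2\left(-\frac{3}{\Lambda}\hmfu^{00}+E^2\delta_{ij}\hmfu^{ij}\right)+\epsilon^4\mathscr{S}(\epsilon,t,E,\hmfu^{\mu\nu})\right\},
\end{equation*}
which is the claimed expansion, with $\mathscr{S}(\epsilon,t,E,0)=0$ following from the fact that both $\mathscr{R}$ and the quadratic Taylor remainder of $\sqrt{1+x}$ vanish at $\hmfu^{\mu\nu}=0$. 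There is no real obstacle here—the only care needed is to keep track of the two layers of Taylor expansion (determinant and then square root) and to ensure that the remainder inherits the analyticity and vanishing property at $\hmfu^{\mu\nu}=0$.
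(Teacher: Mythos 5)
Your proof is correct and matches the paper's approach, which is simply a direct calculation from the definition \eqref{E:SMALLTHETA} (the paper omits the details). Both the algebraic unwinding of $\theta$ to the square-root determinant formula and the two-layer Taylor expansion (Jacobi's formula for the determinant, then $\sqrt{1+x}$) are carried out correctly, and you have properly tracked the analyticity and the vanishing of the remainder at $\hmfu^{\mu\nu}=0$.
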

Using this lemma, we can express the gauge constraint \eqref{E:WAVECONSTRAINT} as follows:
\begin{equation}\label{E:PTU000J}\left\{
\begin{aligned}
\partial_t (E^3 \hmfu^{00}) %=&-\frac{1}{\epsilon}\partial_k(E^3\hmfu^{0k})+\frac{2}{t}E^3\hmfu^{00}+\frac {2\Lambda}{3t}
%\frac{\theta-E^3}{\epsilon^2} - \frac{\theta-E^3}{\epsilon^2}
%\frac{\Lambda}{t}\Omega   \\
=&-\frac{1}{\epsilon}\partial_k(E^3\hmfu^{0k})+\frac{2}{t}E^3\hmfu^{00}+\frac {\Lambda}{3t}
E^3\left(-\frac{3}{\Lambda}\hmfu^{00}+E^2\hmfu^{ij}\delta_{ij}\right) \\ &- \frac{1}{2}E^3
\frac{\Lambda}{t}\Omega \left(-\frac{3}{\Lambda}\hmfu^{00}+E^2\hmfu^{ij}\delta_{ij}\right) +\epsilon^2\mathscr{S}(\epsilon,t, E, \Omega/t , \hmfu^{\mu\nu})\\
\partial_t (E^3 \hmfu^{j0})=&-\frac{1}{\epsilon}\partial_k(E^3 \hmfu^{jk})+\frac{2}{t} \emfu^{j0}
\end{aligned}\right.
\end{equation}
where $\mathscr{S}(\epsilon,t, E, \Omega/t , 0)=0$. The importance of the relations \eqref{E:PTU000J} is that they allow us to determine the time derivatives $\partial_0 \hmfu^{\mu 0}$
on the initial hypersurface $\Sigma_{T_0}$  from the metric variables $\hmfu^{\mu\nu}$ and their spatial
derivatives on $\Sigma_{T_0}$.

\begin{lemma}\label{E:IDENTITYPTHETA}
	\begin{align}
	\partial_t(\theta-E^3)
	=&\frac{3}{2\Lambda}\epsilon E^3 \partial_k\hmfu^{0k} +\epsilon^2 \mathscr{A}(\epsilon,t, E, \Omega/t , \hmfu^{\mu\nu}, \partial_l\hmfu^{\mu \nu}, \hmfu^{ij}_0)\label{E:PTTHETA}
\intertext{and}
	\partial_i\theta
	=&-\frac{3}{2\Lambda}\epsilon^2 E^3 \partial_i\hmfu^{00}+\frac{1}{2}\epsilon^2 E^5 \delta_{kl}\partial_i\hmfu^{kl} +\epsilon^4
\mathscr{S}_i(\epsilon,t, E, \Omega/t ,\hmfu^{\mu\nu}, \partial_l\hmfu^{\mu \nu}, \hmfu^{ij}_0),\label{E:PITHETA}
	\end{align}
	where the $\mathscr{A}$ and $\mathscr{S}_i$ are linear in $(\partial_l\hmfu^{\mu \nu},\hmfu^{ij}_0)$ and  vanish for $(\epsilon,t, E, \Omega/t ,0,0,0)=0$.
\end{lemma}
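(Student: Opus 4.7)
The plan is to differentiate the expansion \eqref{E:THETAEXPANSION} of $\theta$ supplied by Lemma \ref{L:IDENTITY} directly, and then to use the gauge constraint \eqref{E:PTU000J} to convert the $\partial_t\hmfu^{00}$ that appears into a spatial divergence with an explicit $1/\epsilon$, which combines with the overall factor of $\epsilon^2$ in the expansion to produce the $O(\epsilon)$ leading term of \eqref{E:PTTHETA}. Throughout I will exploit two structural facts: first, the background functions $E$ and $\Omega$ depend only on $t$, so $\partial_i E = 0$ and $\partial_t E = (\Omega/t) E$ by \eqref{E:PTE}; second, the remainder $\mathscr{S}$ in \eqref{E:THETAEXPANSION} is the Taylor remainder of the analytic function $\epsilon^2 \hmfu^{\mu\nu}\mapsto \theta$ after two terms, hence vanishes to second order in $\hmfu^{\mu\nu}$ at the origin.

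The spatial formula \eqref{E:PITHETA} is the simpler of the two. Because $E$ is $x$-independent, applying $\partial_i$ to \eqref{E:THETAEXPANSION} kills every explicit $E$-factor and produces
\[
\partial_i\theta \;=\; -\tfrac{3}{2\Lambda}\epsilon^{2}E^{3}\partial_i\hmfu^{00}+\tfrac{1}{2}\epsilon^{2}E^{5}\delta_{kl}\partial_i\hmfu^{kl}+\epsilon^{4}\partial_i\mathscr{S}.
\]
By the chain rule, $\partial_i\mathscr{S}=(D_{\hmfu^{\mu\nu}}\mathscr{S})\,\partial_i\hmfu^{\mu\nu}$, which is manifestly linear in $\partial_l\hmfu^{\mu\nu}$ with an $\hmfu^{\mu\nu}$-analytic coefficient, and which vanishes when $(\hmfu^{\mu\nu},\partial_l\hmfu^{\mu\nu},\hmfu^{ij}_{0})=(0,0,0)$. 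Setting $\mathscr{S}_i := \partial_i\mathscr{S}$ yields \eqref{E:PITHETA}.

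For \eqref{E:PTTHETA} I will first differentiate \eqref{E:THETAEXPANSION} in $t$,
\[
\partial_t(\theta-E^{3}) \;=\; -\tfrac{3\epsilon^{2}}{2\Lambda}\,\partial_t(E^{3}\hmfu^{00}) + \tfrac{\epsilon^{2}}{2}\,\partial_t(E^{5}\delta_{ij}\hmfu^{ij}) + \epsilon^{4}\partial_t\mathscr{S},
\]
and then substitute the gauge constraint \eqref{E:PTU000J} for $\partial_t(E^{3}\hmfu^{00})$. The singular piece $-\tfrac{1}{\epsilon}\partial_k(E^{3}\hmfu^{0k})=-\tfrac{E^{3}}{\epsilon}\partial_k\hmfu^{0k}$ of the gauge constraint, after multiplication by $-\tfrac{3\epsilon^{2}}{2\Lambda}$, produces precisely the advertised leading term $\tfrac{3}{2\Lambda}\epsilon E^{3}\partial_k\hmfu^{0k}$. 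The remaining pieces of \eqref{E:PTU000J} carry either an extra factor of $\epsilon^{2}$ or are products of $\Omega/t$ with $\hmfu^{\mu\nu}$, and are therefore absorbed into $\epsilon^{2}\mathscr{A}$. The term $\tfrac{\epsilon^{2}}{2}\partial_t(E^{5}\delta_{ij}\hmfu^{ij})$ expands as $\tfrac{5\epsilon^{2}E^{5}\Omega}{2t}\delta_{ij}\hmfu^{ij}+\tfrac{\epsilon^{2}E^{5}}{2}\delta_{ij}\hmfu^{ij}_{0}$, which is again of the required form (linear in $\hmfu^{ij}_{0}$ and in $\hmfu^{\mu\nu}$ with $\Omega/t$ coefficients).

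The subtle point, and the main obstacle, is the $\epsilon^{4}\partial_t\mathscr{S}$ contribution. Expanding $\partial_t\mathscr{S}=\partial_t^{\mathrm{expl}}\mathscr{S}+(D_{E}\mathscr{S})\partial_t E+(D_{\hmfu^{\mu\nu}}\mathscr{S})\partial_t\hmfu^{\mu\nu}$, the last piece reintroduces $\partial_t\hmfu^{00}$, which via \eqref{E:PTU000J} has a $1/\epsilon$ component. This would seem to produce a rogue $O(\epsilon^{3})$ correction to the leading term, but the quadratic vanishing of $\mathscr{S}$ at $\hmfu^{\mu\nu}=0$ (a consequence of the fact that $\theta^{2}$ is polynomial in $\epsilon^{2}\hmfu^{\mu\nu}$, so that Taylor expansion in $\epsilon^{2}\hmfu^{\mu\nu}$ through second order gives exactly the stated linear piece) forces $D_{\hmfu^{\mu\nu}}\mathscr{S}=O(\hmfu)$. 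Hence $\epsilon^{4}(D_{\hmfu^{00}}\mathscr{S})\partial_t\hmfu^{00}=\epsilon^{3}O(\hmfu)\partial_k\hmfu^{0k}+O(\epsilon^{4})$, which fits comfortably into $\epsilon^{2}\mathscr{A}$, with $\mathscr{A}$ linear in $(\partial_l\hmfu^{\mu\nu},\hmfu^{ij}_{0})$ and vanishing at $(\hmfu^{\mu\nu},\partial_l\hmfu^{\mu\nu},\hmfu^{ij}_{0})=(0,0,0)$. Collecting all contributions and identifying the resulting expression with $\mathscr{A}$ completes the derivation.
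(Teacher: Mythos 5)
Your proof is correct, but it takes a genuinely different route from the one in the paper. The paper's proof of \eqref{E:PTTHETA} starts from the exact log-determinant identity
\begin{equation*}
\theta^{-1}\partial\theta=\tfrac{1}{2}\hat{g}_{\mu\nu}\,\partial\hat{g}^{\mu\nu},
\end{equation*}
valid to all orders, and only then substitutes the gauge constraint \eqref{E:PTU000J} and the $\epsilon^2$-expansion of $\hat{g}^{\mu\nu}$. With that identity in hand, the FLRW background contribution $3E^3\Omega/t$ appears automatically and cancels the $-3E^3\Omega/t$ coming from $\partial_t(E^3)$, and the remainder is $O(\epsilon^2)$ by inspection without ever differentiating a remainder term. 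You instead differentiate the truncated expansion \eqref{E:THETAEXPANSION} directly, which is more elementary (no determinant identity needed) but forces you to control $\epsilon^4\partial_t\mathscr{S}$, since $\partial_t$ acting on the expansion remainder reintroduces $\partial_t\hmfu^{\mu 0}$ and hence, via \eqref{E:PTU000J}, a $1/\epsilon$ factor. You resolve this correctly: because $\theta$ is analytic in $\epsilon^2\hmfu^{\mu\nu}$ and \eqref{E:THETAEXPANSION} peels off the constant and linear Taylor terms, the remainder is quadratic in $\epsilon^2\hmfu^{\mu\nu}$, so $\mathscr{S}$ is quadratic in $\hmfu^{\mu\nu}$ and $D_{\hmfu}\mathscr{S}$ vanishes at $\hmfu=0$; this gives $\epsilon^4(D_{\hmfu^{\mu 0}}\mathscr{S})\partial_t\hmfu^{\mu 0}=\epsilon^3\,O(\hmfu)\,\partial_k\hmfu^{\mu k}+O(\epsilon^4)$, which is absorbable into $\epsilon^2\mathscr{A}$ with $\mathscr{A}$ linear in $(\partial_l\hmfu^{\mu\nu},\hmfu^{ij}_0)$ and vanishing at $(0,0,0)$, as required. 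In short: the paper's identity-based argument buys economy and avoids the remainder issue; your expansion-based argument buys elementariness at the cost of the extra quadratic-vanishing observation, which you supply. Your treatment of \eqref{E:PITHETA} ($\partial_i\theta$) by differentiating \eqref{E:THETAEXPANSION} in $x^i$ and invoking $x$-independence of $E$ is also fine and is essentially what the paper does.
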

\begin{proof}
	The proof of this Lemma follows from straightforward calculations; we only prove \eqref{E:PTTHETA}. Noticing
	\begin{equation}\label{E:ID5}
		\theta^{-1}\partial\theta=\frac{1}{2}\underline{\bar{g}^{\mu\nu}}\partial \underline{\bar{g}_{\mu\nu}}=\frac{1}{2}\hat{g}_{\mu\nu}\partial \hat{g}^{\mu\nu} \qquad (\hat{g}_{\mu\nu}=\theta^{-1}\underline{\bar{g}_{\mu\nu}}), 	
	\end{equation}
it is not difficult to verify that
	\begin{align*}
	\partial_t(\theta-E^3)= \frac{1}{2}\theta\hat{g}_{\mu\nu}\partial_0\hat{g}^{\mu\nu} - 3E^3\frac{\Omega}{t}
	= \frac{3}{2\Lambda}\epsilon E^3 \partial_k \hmfu^{0k}+3E^3\frac{\Omega}{t}-3E^3\frac{\Omega}{t}+\epsilon^2\mathscr{A}
	\end{align*}
follows from \eqref{E:PTU000J}.
\end{proof}
We proceed by differentiating \eqref{E:PTU000J} with respect to time $t$ to obtain, with the help of Lemma \ref{E:IDENTITYPTHETA}, the
following:
\begin{equation}\label{E:PPTU000J}\left\{
\begin{aligned}
\partial^2_t (\emfu^{00})=&\frac{1}{\epsilon^2}\partial_k\partial_i(\emfu^{ik})-\frac{1}{\epsilon}
\frac{4}{t}\partial_k (\emfu^{k0})+\frac{2}{t^2}\emfu^{00} +\left(\frac{2}{3}-\Omega\right)\frac{\Lambda}{t^2}
\frac{\theta-E^3}{\epsilon^2} \\
& - \frac{\theta-E^3}{\epsilon^2}\frac{\Lambda}{t}\partial_t\Omega
+\frac{\Lambda}{t}\left(\frac{2}{3}-\Omega\right) \partial_t\frac{\theta-E^3}{\epsilon^2}\\
%= & \frac{1}{\epsilon^2}\partial_k\partial_i(\emfu^{ik})-\frac{1}{\epsilon}
%\frac{3}{t} \left( 1 + \frac{1}{2} \Omega \right) E^3\partial_k \hmfu^{k0} +\frac{1}{2}  E^3 \frac{\Lambda}{t^2} \left(\frac{2}{3}-\Omega\right)
%\left(-\frac{3}{\Lambda}\hmfu^{00}+ E^2 \hmfu^{ij}\delta_{ij}
%\right)\\ &
%+\frac{2}{t^2}\emfu^{00} - \frac{1}{2}  E^3 \left(-\frac{3}{\Lambda}\hmfu^{00}+ E^2 \hmfu^{ij}\delta_{ij}
%\right)\frac{\Lambda}{t}\partial_t\Omega+\frac{\Lambda}{t}\left( \frac{2}{3}-\Omega\right)\mathscr{A}(\epsilon,t, E, \Omega/t ,\hmfu^{\mu\nu}, \partial_l\hmfu^{\mu k}, \hmfu^{ij}_0)\\ & +\epsilon^2\mathscr{S}(\epsilon,t, E, \Omega/t , \hmfu^{\mu\nu}, \partial_l\hmfu^{\mu k}, \hmfu^{ij}_0)\\
= & \frac{1}{\epsilon^2}\partial_k\partial_i(\emfu^{ik})-\frac{1}{\epsilon}
\frac{3}{t} \left( 1 + \frac{1}{2} \Omega \right) E^3\partial_k \hmfu^{k0}
+\frac{1}{t^2} \mathscr{S}(\epsilon,t, E, \Omega/t, \del{t}\Omega, \hmfu^{\mu\nu}, \partial_l\hmfu^{\mu k}, \hmfu^{ij}_0)\\ %& +\epsilon^2\mathscr{S}(\epsilon,t, E, \Omega/t , \hmfu^{\mu\nu}, \partial_l\hmfu^{\mu k}, \hmfu^{ij}_0)\\
\partial^2_t (\emfu^{j0})=&-\frac{1}{\epsilon}\partial_k\partial_0(\emfu^{jk})
+\frac{2}{t^2}\emfu^{j0}-\frac{2}{t}\frac
{1}{\epsilon}\partial_k(\emfu^{jk}) \\
= &-\frac{1}{\epsilon} E^3 \partial_k \hmfu^{jk}_0
+\frac{2}{t^2}\emfu^{j0}-\frac{2+3\Omega}{t}\frac
{1}{\epsilon}\partial_k(\emfu^{jk})
\end{aligned}\right.
\end{equation}
where $\mathscr{S}(\epsilon,t,E, \Omega/t ,\del{t}\Omega, 0,0,0)=0$. %and $\mathscr{S}(\epsilon,t,E, \Omega/t ,0,0,0)=0$.

Next, we consider the following reduced version of the conformal Einstein equations \eqref{E:CONFORMALEINSTEIN1}, which we write
using the metric variable $\hat{g}^{\mu\nu}$ defined by \eqref{E:DEFHATG}:
\begin{align}
\frac{1}{2\theta^2}\hat{g}^{\lambda\sigma}\hat{\partial}_\lambda \hat{\partial}_\sigma \hat{g}^{\mu\nu}+\underline{\bar{\nabla}^{(\mu}\bar{\Gamma}^{\nu)}}-\frac{1}{2\theta} \hat{g}^{\mu\nu}& \underline{\bar{\nabla}_\lambda \bar{\Gamma}^\lambda}+\frac{1}{\theta^2}\hat{Q}^{\mu\nu}(\hat{g}, \bar{\partial} \hat{g}, \theta)-\frac{1}{2}\underline{\bar{X}^\mu} \underline{\bar{X}^\nu}-\underline{\bar{\nabla}^{(\mu}\bar{Z}^{\nu)}}
+\frac{1}{2\theta}\hat{g}^{\mu\nu} \underline{\bar{\nabla}_\lambda \bar{Z}^\lambda}- \frac{1}{2}\underline{\bar{A}^{\mu\nu}_\lambda} \underline{\bar{Z}^\lambda}
\notag \\
= & e^{4\Phi}\underline{\tilde{T}^{\mu\nu}}-\frac{1}{\theta} e^{2\Phi}\Lambda\hat{g}^{\mu\nu} +2(\bar{\nabla}^\mu\bar{\nabla}^{\nu}\Psi- \bar{\nabla}^\mu \Psi \bar{\nabla}^\nu \Psi)-(2\bar{\Box} \Psi +|\bar{\nabla} \Psi|^2_{\bar{g}})\frac{1}{\theta}\hat{g}^{\mu\nu},
\label{idredEin1}
\end{align}
where
\begin{align*}
\hat{Q}^{\mu\nu}(\hat{g}, \hat{\partial} \hat{g}, \theta)= & \frac{1}{2}\theta^2 \underline{Q^{\mu\nu}} -\frac{1}{4}\hat{g}^{\mu\nu}\hat{g}_{\alpha\beta}(\theta^2 \underline{Q^{\alpha\beta}}-\theta^2 \underline{\bar{X}^\alpha} \underline{\bar{X}^\beta})-\frac{1}{2}\hat{g}^{\lambda\sigma} \hat{g}_{\alpha\beta} \hat{\partial}_\sigma\hat{g}^{\mu\nu} \hat{\partial}_\lambda \hat{g}^{\alpha\beta} \notag \\
& + \frac{1}{8} \hat{g}^{\lambda\sigma}\hat{g}^{\mu\nu}\hat{g}_{\gamma\rho}\hat{g}_{\alpha\beta} \hat{\partial}_\lambda \hat{g}^{\gamma\rho} \hat{\partial}_\sigma \hat{g}^{\alpha\beta}+\frac{1}{4} \hat{g}^{\lambda\sigma} \hat{g}^{\mu\nu} \hat{\partial}_\lambda \hat{g}_{\alpha\beta} \hat{\partial}_\sigma \hat{g}^{\alpha\beta} %\label{E:QHAT}
\end{align*}
with $Q^{\mu\nu}$ as defined previously by \eqref{E:QMUNU}. By \eqref{E:XY}, \eqref{E:QMUNU}, \eqref{E:DEFHATG}, \eqref{E:ID5}  and the identity
\begin{align}\label{E:GAMMAHATG}
\underline{\bar{\Gamma}^\lambda_{\mu\nu}}=-\hat{g}_{\sigma(\mu}\hat{\partial}_{\nu)}
\hat{g}^{\lambda\sigma}+\frac{1}{2}\hat{g}^{\lambda\sigma}\hat{g}_{\alpha\mu}
\hat{g}_{\beta\nu}\hat{\partial}_\sigma \hat{g}^{\alpha\beta}+\frac{1}{4}\left(2\hat{g}_{\alpha\beta}\delta^\lambda_{(\mu}\hat
{\partial}_{\nu)}\hat{g}^{\alpha\beta}-\hat{g}^{\lambda\sigma}\hat{g}_{\mu\nu}\hat{g}
_{\alpha\beta}\hat{\partial}_\sigma\hat{g}^{\alpha\beta}\right),
\end{align}
it is obvious that $\theta^2 \underline{Q^{\mu\nu}}$ is analytic in $\hat{g}^{\mu\nu}$, $\hat{\partial} \hat{g}^{\mu\nu}$ and $\theta$. From
this and the formula \eqref{E:GAMMAHATG}, it is clear that $\hat{Q}^{\mu\nu}$  is analytic in
 $\hat{g}^{\mu\nu}$, $\hat{\partial} \hat{g}^{\mu\nu}$ and $\theta$. Moreover, using
\eqref{E:PTU000J} and \eqref{E:GAMMAHATG}, it can be verified by a straightforward calculation that
$\hat{Q}^{\mu\nu}$ satisfies
	\begin{align*}
	\hat{Q}^{\mu\nu}(\hat{g}, \bar{\partial} \hat{g}, \theta)-\hat{Q}_H^{\mu\nu}(\hat{h}, \bar{\partial} \hat{h}, E^3)=\epsilon \mathcal{T}^{\mu\nu k}_{\alpha\beta}(t) \partial_k\hmfu^{\alpha\beta}+\epsilon^2\mathscr{\hat{Q}}{}^{\mu\nu}(\epsilon, t, E, \Omega/t ,x, \hmfu^{\alpha\beta}, \partial_k\hmfu^{\alpha\beta}, \hmfu^{ij}_0)
	\end{align*}
%	where $\hat{Q}_H^{\mu\nu}(\hat{h}, \bar{\partial} \hat{h}, E^3)$ is defined by taking $\hat{g}, \bar{\partial} \hat{g}, \theta$ as %$\hat{h}, \bar{\partial} \hat{h}, E^3$ in \eqref{E:QHAT} and
for coefficients $\mathcal{T}^{\mu\nu k}_{\alpha\beta}$ that depend only on $t$ and where
$\mathscr{\hat{Q}}{}^{\mu\nu}(\epsilon, t, E, \Omega/t ,x, \hmfu^{\alpha\beta},0,0)=0$.
%\footnote{One could calculate this $\mathcal{T}^{\mu\nu k}_{\alpha\beta}(t)$ explicitly, but it is not necessary to do that. Later, readers %will see this is enough for our proof. } $t$.

Using the easy to verify identities
\begin{gather*}
\underline{\bar{\Gamma}^\lambda_{\lambda 0}}= \frac{1}{2} \underline{\bar{g}^{\lambda\sigma}}\hat{\partial}_0\underline{\bar{g}_{\lambda\sigma}}= \frac{1}{2} \hat{g}_{\lambda\sigma}\hat{\partial}_0\hat{g}^{\lambda\sigma}=\frac{1}{\theta}
\hat{\partial}_0 \theta,  \\
\bar{\nabla}_\lambda \bar{\gamma}^\lambda=\frac{1}{t}\left(\partial_t\Omega -\frac{1}{t}\Omega +\bar{\Gamma}^\lambda_{\lambda0} \Omega \right) \quad \text{and} \qquad
\bar{\nabla}_\lambda\bar{Y}^\lambda =-2\bar{\Box}\Psi-\frac{2\Lambda}{3t^2}+\frac{2\Lambda}{3t}\bar{\Gamma}^\lambda_{\lambda 0},
\end{gather*}
we can write the reduced conformal Einstein equations \eqref{idredEin1}
as
\begin{align}
&\frac{1}{2\theta^2}\hat{g}^{\lambda\sigma}\hat{\partial}_\lambda\hat{\partial}_\sigma \hat{g}^{\mu\nu} + \bar{\nabla}^{(\mu}\bar{\gamma}^{\nu)}-\frac{1}{2\theta }\hat{g}^{\mu\nu}\frac{1}{t} \left(\partial_t\Omega-\frac{1}{t}\Omega+\underline{\bar{\Gamma}^\lambda_{\lambda 0}}\Omega \right) +\frac{1}{\theta^2}\hat{Q}^{\mu\nu} +\frac{1}{\theta}\hat{g}^{\mu\nu}\frac{1}{t}\bar{\gamma}^0 \notag \\
= & -\frac{\Lambda}{3t}\frac{1}{\theta}\hat{\partial}_0\hat{g}^{\mu\nu}+\frac{2\Lambda}{3t^2} \left[\left(\underline{\bar{g}^{00}}+\frac{\Lambda}{3}\right)\delta^\mu_0\delta^\nu_0+\underline{ \bar{g}^{0k}} \delta^{(\mu}_k
\delta^{\nu)}_0 \right]+\frac{1}{t^2}(1+\epsilon^2 K) \rho  \underline{\bar{v}^\mu} \underline{\bar{v}^\nu}+ \frac{1}{\theta}\frac{1}{t^2}\epsilon^2 K
 \rho  \hat{g}^{\mu\nu}. \label{E:INITIALEIN}
\end{align}
This equation is satisfied for the FLRW solutions \eqref{FLRW.a}-\eqref{E:TPTA}, i.e. we can substitute $\{\hat{g}^{\mu\nu},\bar{\rho},\bar{v}^\mu \}$ $\mapsto$ $\{\hat{h}^{\mu\nu},\rho_H,e^{\Psi}\tilde{v}^\mu_H\}$. Dividing the resulting FLRW
equation through by $\theta^2$, we get
\begin{align}
&\frac{1}{2 \theta^2}\hat{h}^{\lambda\sigma}\hat{\partial}_\lambda\hat{\partial}_\sigma \hat{h}^{\mu\nu} + \frac{E^6}{\theta^2}\bar{\nabla}_H^{(\mu}\bar{\gamma}^{\nu)}-\frac{E^3}{2 \theta^2}\hat{h}^{\mu\nu}\frac{1}{t} \left(\partial_t\Omega-\frac{1}{t}\Omega+\bar{\gamma}^\lambda_{\lambda 0}\Omega \right) +\frac{1}{\theta^2}\hat{Q}_H^{\mu\nu} +\frac{E^3}{\theta^2}\hat{h}^{\mu\nu}\frac{1}{t}\bar{\gamma}^0 \notag \\
= & -\frac{\Lambda}{3t}\frac{  E^3}{\theta^2}\hat{\partial}_0\hat{h}^{\mu\nu}+\frac{E^6}{\theta^2}\frac{2\Lambda}{3t^2} \left(\bar{h}^{00}+\frac{\Lambda}{3}\right)\delta^\mu_0\delta^\nu_0
+\frac{E^6}{\theta^2}\frac{\Lambda}{3}\frac{1}{t^2}(1+\epsilon^2 K)\rho_H \delta_0^\mu \delta_0^\nu+ \frac{E^3}{\theta^2}\frac{1}{t^2}\epsilon^2 K \rho_H \hat{h}^{\mu\nu}. \label{E:INITIALEINH}
\end{align}
Subtracting  \eqref{E:INITIALEINH} from \eqref{E:INITIALEIN} yields
\begin{align}
& \hat{g}^{\lambda\sigma}\hat{\partial}_\lambda\hat{\partial}_\sigma (\hat{g}^{\mu\nu}-\hat{h}^{\mu\nu}) +   (\hat{g}^{\lambda\sigma}-\hat{h}^{\lambda\sigma}) \hat{\partial}_\lambda\hat{\partial}_\sigma \hat{h}^{\mu\nu}+ 2\theta^2\left( \bar{\nabla}^{(\mu}\bar{\gamma}^{\nu)}
-\frac{E^6}{\theta^2}\bar{\nabla}_H^{(\mu}\bar{\gamma}^{\nu)}\right)  \nnb \\ &- \theta(\hat{g}^{\mu\nu}-\hat{h}^{\mu\nu})\frac{1}{t} \left(\partial_t\Omega-\frac{1}{t}\Omega\right)
+(E^3-\theta)\hat{h}^{\mu\nu}\frac{1}{t}\left(\del{t}\Omega-\frac{1}{t}\Omega \right)  +\theta(\hat{h}^{\mu\nu}-\hat{g}^{\mu\nu})\frac{1}{t}\bar{\gamma}^\lambda_{\lambda 0}\Omega \notag \\
& +(E^3-\theta)\hat{h}^{\mu\nu}\frac{1}{t}\bar{\gamma}^\lambda_{\lambda 0} \Omega +\theta\hat{g}^{\mu\nu} \frac{1}{t}(\underline{\bar{\gamma}^\lambda_{\lambda 0}}-\underline{\bar{\Gamma}^\lambda_{\lambda 0}})\Omega
%- \frac{1}{t}\theta\Omega\left(\hat{g}^{\mu\nu}\bar{\Gamma}^\lambda_{\lambda 0} - \hat{h}^{\mu\nu}\bar{\gamma}^\lambda_{\lambda 0} \right)
+ 2 (\hat{Q}^{\mu\nu} -\hat{Q}_H^{\mu\nu}) +2\theta \frac{1}{t}\bar{\gamma}^0\left(\hat{g}^{\mu\nu}-\frac{E^3} {\theta}\hat{h}^{\mu\nu}\right) \notag \\
= & -\frac{2\Lambda}{3t}\theta\hat{\partial}_0(\hat{g}^{\mu\nu} -\hat{h}^{\mu\nu})-\frac{2\Lambda}{3t}\theta\left(1-\frac{E^3}{\theta}\right) \hat{\partial}_0 \hat{h}^{\mu\nu} +\frac{4\Lambda}{3t^2} \theta^2 \left[\left(\left(\underline{\bar{g}^{00}}+\frac{\Lambda}{3}\right) -\frac{E^6}{\theta^2}  \left(\bar{h}^{00}+\frac{\Lambda}{3}\right)\right)\delta^\mu_0\delta^\nu_0+\underline{\bar{g}^{0k}} \delta^{(\mu}_k
\delta^{\nu)}_0 \right] \notag \\
& +2\theta^2\frac{1}{t^2}(1+\epsilon^2 K)\left( \rho  \underline{\bar{v}^\mu} \underline{\bar{v}^\nu} -\frac{E^6}{\theta^2}\frac{\Lambda}{3} \rho_H \delta_0^\mu \delta_0^\nu \right) + 2\theta\frac{1}{t^2}\epsilon^2 K \left( \rho  \hat{g}^{\mu\nu} - \frac{E^3}{\theta} \rho_H \hat{h}^{\mu\nu}\right). \label{E:INITIALEINDIFF}
\end{align}

\subsection{Transformation formulas}
Before proceeding, we collect in the following lemma a set of formulas that can be used to transform from the gravitational variables used in this
section to those introduced previously in \S \ref{vardefs} for the formulation of the evolution equations.

\begin{lemma}\label{L:RELATION1}
	The evolution variables $u^{0\mu}$, $u^{ij}$ and $u$ can be expressed in terms of the gravitational variables
$\hmfu^{\mu\nu}$ by the following expressions:
	\begin{align}
	u^{0\mu}=\frac{\epsilon}{2t}\left(\frac{1}{2}\hmfu^{00}\delta^\mu_0+ \hmfu^{0k}\delta^\mu_k+\frac{\Lambda}{6} E^2 \hmfu^{ij}\delta_{ij}\delta^\mu_0 \right)+\epsilon^3 \mathscr{S}^\mu(\epsilon,t,E, \Omega/t ,\hmfu^{\alpha\beta}),\label{E:U0MUANDINI}\\
	u = \epsilon \frac{2\Lambda}{9} E^2\hmfu^{ij}\delta_{ij} +\epsilon^3
\mathscr{S}(\epsilon,t,E, \Omega/t ,\hmfu^{\alpha\beta}),
\label{E:UANDINI}\\
	u^{ij}=  \epsilon E^2\left(\hmfu^{ij}-\frac{1}{3}\hmfu^{kl}\delta_{kl}\delta^{ij}\right)+
\epsilon^3\mathscr{S}^{ij}( \epsilon,t,E, \Omega/t ,\hmfu^{\alpha\beta}),\label{E:UIJANDINI}
	\end{align}
	where all of the remainder terms vanish for $(\epsilon,t,E, \Omega/t ,0)=0$. Moreover, the $0$-component of the conformal fluid four-velocity $\bar{v}^\mu$ can be written as
	\begin{align}\label{E:V0HAT}
	\underline{\bar{v}^0}=\sqrt{\frac{\Lambda}{3}}+\epsilon^2\mathscr{S}(\epsilon,t,E, \Omega/t, \hmfu^{\alpha\beta},z_j).
	\end{align}
\end{lemma}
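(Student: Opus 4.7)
The plan is to view every claim as a consequence of the single master expansion
$$\underline{\bar g^{\mu\nu}} \;=\; \theta^{-1}\bigl(\hat h^{\mu\nu}+\epsilon^2 E^3\hmfu^{\mu\nu}\bigr),$$
which comes directly from the definition \eqref{E:DEFHATG}, combined with the $\epsilon^2$--expansion of $\theta$ provided by Lemma~\ref{L:IDENTITY}. First I would insert
$$\theta^{-1} \;=\; E^{-3} - \tfrac12\epsilon^2 E^{-3}\bigl(-\tfrac{3}{\Lambda}\hmfu^{00}+E^2\hmfu^{ij}\delta_{ij}\bigr)+\epsilon^4\mathscr{S}$$
to obtain, writing $\bar\eta^{\mu\nu}$ via \eqref{etabardef},
$$\underline{\bar g^{\mu\nu}} \;=\; \bar h^{\mu\nu}+\epsilon^2\hmfu^{\mu\nu}-\tfrac12\epsilon^2\bar h^{\mu\nu}\bigl(-\tfrac{3}{\Lambda}\hmfu^{00}+E^2\hmfu^{ij}\delta_{ij}\bigr)+\epsilon^4\mathscr{S}^{\mu\nu}.$$
Taking $\mu=\nu=0$ (and using $\bar h^{00}=-\Lambda/3$) produces $\underline{\bar g^{00}}+\Lambda/3 = \tfrac12\epsilon^2\hmfu^{00}+\tfrac{\Lambda}{6}\epsilon^2 E^2\hmfu^{ij}\delta_{ij}+O(\epsilon^4)$, and $\mu=0,\nu=k$ gives $\underline{\bar g^{0k}}=\epsilon^2\hmfu^{0k}+O(\epsilon^4)$; dividing by $2t\epsilon$ and matching against \eqref{E:u.a} yields \eqref{E:U0MUANDINI}.

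Next I would compute the determinant $\alpha=(\det\underline{\bar g^{kl}})^{1/3}$ from the spatial block of the master expansion. Using $\det(M+\epsilon^2 N)=\det M\bigl(1+\epsilon^2\mathrm{tr}(M^{-1}N)+O(\epsilon^4)\bigr)$ with $M=E^{-2}\delta^{ij}$, a short trace calculation gives
$$\alpha \;=\; E^{-2}+\tfrac{\epsilon^2}{3}E^{-2}\bigl(\tfrac{9}{2\Lambda}\hmfu^{00}-\tfrac12 E^2\hmfu^{ij}\delta_{ij}\bigr)+\epsilon^4\mathscr{S},$$
from which $\alpha^{-1}$ is obtained by geometric series. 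Multiplying $\alpha^{-1}$ by $\underline{\bar g^{ij}}$ and collecting terms, the $\hmfu^{00}$ contributions cancel (this is the only non-trivial cancellation in the spatial block) and the result simplifies to
$$\underline{\bar{\mathfrak g}^{ij}} \;=\; \delta^{ij}+\epsilon^2 E^2\bigl(\hmfu^{ij}-\tfrac13\delta^{ij}\hmfu^{kl}\delta_{kl}\bigr)+\epsilon^4\mathscr{S}^{ij},$$
which together with $\bar\eta^{ij}=\delta^{ij}$ and \eqref{E:u.d} gives \eqref{E:UIJANDINI}.

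For \eqref{E:UANDINI} I would log the expression for $\alpha$ just obtained,
$$\ln\alpha \;=\; -2\ln E+\tfrac{\epsilon^2}{3}\bigl(\tfrac{9}{2\Lambda}\hmfu^{00}-\tfrac12 E^2\hmfu^{ij}\delta_{ij}\bigr)+\epsilon^4\mathscr{S},$$
and substitute into the definition \eqref{E:q} of $\bar{\mathfrak q}$. The $\ln E$ terms in $-\tfrac{\Lambda}{3}\ln\alpha$ and $-\tfrac{2\Lambda}{3}\ln E$ cancel exactly, as do the $\hmfu^{00}$ contributions coming from $\underline{\bar g^{00}}+\tfrac\Lambda3$ and from $-\tfrac\Lambda3\ln\alpha$; what survives is
$$\underline{\bar{\mathfrak q}} \;=\; \bigl(\tfrac{\Lambda}{6}+\tfrac{\Lambda}{18}\bigr)\epsilon^2 E^2\hmfu^{ij}\delta_{ij}+\epsilon^4\mathscr{S} \;=\; \tfrac{2\Lambda}{9}\epsilon^2 E^2\hmfu^{ij}\delta_{ij}+\epsilon^4\mathscr{S},$$
which, after dividing by $\epsilon$ via \eqref{E:u.f}, is exactly \eqref{E:UANDINI}. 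Finally, for \eqref{E:V0HAT} I would combine the expansion \eqref{E:V^0} with the square root of $-\underline{\bar g^{00}}=\Lambda/3+O(\epsilon^2)$ established in the first step.

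I anticipate no real obstacle; the entire argument is a careful but mechanical $\epsilon^2$--Taylor expansion. The only places where attention is needed are the cancellation of the $\hmfu^{00}$ terms in $\underline{\bar{\mathfrak g}^{ij}}$ and of the $\ln E$ and $\hmfu^{00}$ terms in $\underline{\bar{\mathfrak q}}$, together with verifying that every remainder actually is of the form $\epsilon^4\mathscr{S}$ with the claimed vanishing at $\hmfu^{\alpha\beta}=0$—both of which follow immediately from the corresponding vanishing in Lemma~\ref{L:IDENTITY} and analyticity of the matrix inverse and logarithm around the background values.
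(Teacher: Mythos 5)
Your proof is correct and follows essentially the same route as the paper: both begin from the master expansion $\underline{\bar g^{\mu\nu}}=\theta^{-1}(\hat h^{\mu\nu}+\epsilon^2E^3\hmfu^{\mu\nu})$ together with the $\theta$-expansion of Lemma~\ref{L:IDENTITY}, compute $\alpha$ via the determinant of the spatial block, and then read off $u^{0\mu}$, $u^{ij}$, $u$ and $\underline{\bar v^0}$ from the definitions \eqref{E:u.a}, \eqref{E:u.d}, \eqref{E:u.f} and \eqref{E:V^0}. The only cosmetic difference is that you write $\underline{\bar{\mathfrak g}^{ij}}=\alpha^{-1}\underline{\bar g^{ij}}$ directly, whereas the paper organizes the same computation as $(\underline\alpha\theta)^{-1}\hat g^{ij}$; the cancellations you highlight (the $\hmfu^{00}$ terms in $\underline{\bar{\mathfrak g}^{ij}}$, the $\ln E$ and $\hmfu^{00}$ terms in $\underline{\bar{\mathfrak q}}$) are exactly the ones that occur in the paper's version.
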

\begin{proof}
First, we observe that  \eqref{E:U0MUANDINI} follows directly from \eqref{E:G0MU} and Lemma \ref{L:IDENTITY}.
Next, using \eqref{E:DEFHATG}, it is not hard to show that
\begin{align*} %\label{E:UDET}
	\det{(\underline{\bar{g}^{kl}})}=(\theta E^{-3})^{-3}(E^{-6}+\epsilon^2E^{-4}\hmfu^{ij}\delta_{ij})+\epsilon^4\mathscr{S}  = E^{-6}+\frac{1}{2}\epsilon^2 E^{-6} \left(\frac{9}{\Lambda}\hmfu^{00}-E^2\hmfu^{ij}\delta_{ij}\right) +\epsilon^4\mathscr{S}(\epsilon,t, E, \Omega/t, \hmfu^{\alpha\beta}),
\end{align*}
from which it follows that
	\begin{align} \label{idalpha}
	\underline{\alpha} E^2=1+\frac{1}{6}\epsilon^2 \left(\frac{9}{\Lambda}\hmfu^{00}-E^2\hmfu^{ij}\delta_{ij}\right)+\epsilon^4 \mathscr{S}(\epsilon,t,E, \Omega/t, \hmfu^{\alpha\beta})
	\end{align}
by \eqref{E:GAMMA}. Then by \eqref{E:u.f}, \eqref{E:q} and \eqref{idalpha}, we have
	\begin{align*}  %\label{E:U}
	u=2tu^{00}-\frac{1}{\epsilon}\frac{\Lambda}{3} \ln[1+(\underline{\alpha} E^2-1)]= \epsilon \frac{2\Lambda}{9} E^2\hmfu^{ij}\delta_{ij} +\epsilon^3\mathscr{S}(\epsilon,t,E, \Omega/t, \hmfu^{\alpha\beta}),
	\end{align*}
while
	\begin{align*}
	u^{ij}=\frac{1}{\epsilon}\left((\underline{\alpha}\theta)^{-1}\hat{g}^{ij}-E^{-1}\hat{h}^{ij}\right)=\epsilon E^2\left(\hmfu^{ij}-\frac{1}{3}\hmfu^{kl}\delta_{kl}\delta^{ij}\right)+\epsilon^3\mathscr{S}^{ij}( \epsilon,t,E, \Omega/t, \hmfu^{\alpha\beta})
	\end{align*}
follows from \eqref{E:GIJ}, \eqref{E:DEFHATG}, \eqref{idalpha} and
	\begin{align*}
	(\underline{\alpha} \theta)^{-1}=E^{-1}-\epsilon^2 \frac{1}{3}E\hmfu^{ij}\delta_{ij}+\epsilon^4\mathscr{S}(\epsilon,t,E, \Omega/t, \hmfu^{\mu\nu}).
	\end{align*}
Finally, \eqref{E:V0HAT} follows from \eqref{E:V^0}, \eqref{E:G0MU} and \eqref{E:U0MUANDINI}-\eqref{E:UIJANDINI}%\eqref{E:z.b}, \eqref{E:VLOW0_1}, and \eqref{E:DEFHATG}.
\end{proof}

\subsection{Solving the constraint equations}
We now need to write the constraint equations in a form that is suitable to used the methods
from  \cite{oli3,oli4}. We begin by defining the rescaled variables
\begin{align*} %\label{E:MFUIJ}
\hmfu^{ij}|_{t=T_0}= \epsilon\smfu^{ij}, \quad \hmfu^{ij}_0|_{t=T_0}=\smfu^{ij}_0, \quad \hmfu^{0\mu}|_{t=T_0}=\smfu^{0\mu}\quad \text{and} \quad \hmfu^{0\mu}_0|_{t=T_0}=\smfu^{0\mu}_0,
\end{align*}
and noting that
\begin{align}\label{E:PKMFU}
\partial_k\hmfu^{ij}|_{t=T_0}=\epsilon\partial_k\smfu^{ij}.
\end{align}
We  then observe that the following terms from \eqref{E:INITIALEINDIFF} can be represented as
\begin{align}
\epsilon^2 E^3 \hmfu^{\lambda\sigma} \hat{\partial}_\lambda \hat{\partial}_\sigma \hat{h}^{\mu 0} + 2(\theta^2 \bar{\nabla}^{(\mu}\bar{\gamma}^{0)}-E^6 \bar{\nabla}_H^{(\mu}\bar{\gamma}^{0)})-\epsilon^2 \theta E^3 \hmfu^{\mu 0} \frac{1}{t}\left(\partial_t\Omega -\frac{1}{t}\Omega \right) +\frac{2}{t}\bar{\gamma}^0(\theta\hat{g}^{0\mu}-E^3 \hat{h}^{\mu 0})\notag \\
+(E^3-\theta)\hat{h}^{\mu\nu}\frac{1}{t}\left(\del{t}\Omega-\frac{1}{t}\Omega \right)  +\theta(\hat{h}^{\mu\nu}-\hat{g}^{\mu\nu})\frac{1}{t}\bar{\gamma}^\lambda_{\lambda 0}\Omega  +(E^3-\theta)\hat{h}^{\mu\nu}\frac{1}{t}\bar{\gamma}^\lambda_{\lambda 0} \Omega+\frac{2\Lambda}{3t}(\theta-E^3)\partial_t\hat{h}^{0\mu}\notag \\
=\epsilon^2\mathscr{S}^\mu(\epsilon,t,E, \Omega/t, \del{t}\Omega, x, \hmfu^{\alpha\beta}, \partial_k\hmfu^{\alpha\beta}, \hmfu^{ij}_0). \label{idident1}
\end{align}
We also note that
\begin{align}
\underline{\bar{\Gamma}^\lambda_{\lambda 0}}-\bar{\gamma}^\lambda_{\lambda 0}=\frac{1}{\theta}\partial_t(\theta-E^3)+\frac{E^{-3}}{\theta E^{-3}}\partial_t E^3-\frac{3}{t}\Omega=\epsilon\frac{3}{2\Lambda} \partial_k\hmfu^{0k}+\epsilon^2\mathscr{S}(\epsilon,t, E, \Omega/t, \hmfu^{\mu\nu}, \partial_k \hmfu^{\mu\nu},  \hmfu^{ij}_0), \label{idident2}
\end{align}
Using  \eqref{E:PTU000J} and \eqref{E:PPTU000J} to replace the first and second time
derivatives of $\hmfu^{\mu 0}$  by spatial derivatives of $\hmfu^{\mu\nu}$
and the time derivatives $\hmfu^{ij}_0$ in \eqref{E:INITIALEINDIFF} with $\nu = 0$, we obtain, with the help of
\eqref{idident1}-\eqref{idident2}, the following elliptic equations
on $\Sigma_{T_0}$ for $\hmfu^{\mu 0}$:
\begin{align}
& \Delta\smfu^{00}-\frac{2\Lambda}{3T_0^2}E^2(T_0) \delta \rho  + \epsilon \left( \partial_k (\mathscr{T}^{00 k}_{\alpha\beta}\smfu^{\alpha\beta}) -   \frac{\Lambda}{3} E^2(T_0) \partial_k\partial_i \smfu^{ik}+ \bigl(\frac{\Lambda}{3t}+\frac{\Lambda+1}{2t}\Omega \bigr)E^2(T_0)\del{k}\mathfrak{\breve{u}}^{k0}\right)  \nnb \\&  \hspace{7cm} + \epsilon^2
\mathscr{R}^0(\epsilon,\smfu^{\mu\nu}, \partial_k \smfu^{\mu\nu}, \smfu^{ij}_0, \partial_i\partial_j \smfu^{0\mu}, \delta \rho, z^j)=0,
\label{E:FINALCONSTRAINT1}\\
&\Delta\smfu^{i0}+\epsilon\left(\frac{\Lambda}{3}E^2(T_0)\partial_k\smfu^{ik}_0 - \sqrt{\frac{\Lambda}{3}}\frac{2}{T_0^2}
E^2(T_0) \rho z ^i +\partial_k (\mathcal{T}^{0ik}_{\alpha\beta}\hmfu^{\alpha\beta})\right)+\epsilon^2\mathscr{R}^i(\epsilon,\smfu^{\mu\nu}, \partial_k \smfu^{\mu\nu}, \smfu^{ij}_0, \partial_i\partial_j \smfu^{0\mu}, \delta\rho, z^j) =0,
\label{E:FINALCONSTRAINT2}
\end{align}
where the coefficients $\mathcal{T}^{0\mu k}_{\alpha\beta}=\mathcal{T}^{0\mu k}_{\alpha\beta}(T_0)$ are constant on $\Sigma_{T_0}$ and the
remainder terms satisfy $\mathscr{R}^\mu(\epsilon,0,0,0,0,0,0)=0$.

\begin{remark}
From the above calculations, it is not difficult to see that
the elliptic equations \eqref{E:FINALCONSTRAINT1}-\eqref{E:FINALCONSTRAINT2} are equivalent to the gravitational constraint
equations \eqref{E:CONSTRAINT} provided that the gauge constraint \eqref{E:WAVECONSTRAINT} is also satisfied. Recalling that \eqref{E:PTU000J}
is equivalent to the gauge constraints, it is clear that we can solve the gauge constraints by using \eqref{E:PTU000J} to determine the time derivatives $\del{t}\smfu^{\mu 0}$ from the metric variables $\smfu^{\mu\nu}$ and their spatial derivatives.
\end{remark}

Decomposing $\delta \rho = \rho - \rho_H$ and $\rho z^i$ on $\Sigma_{T_0}$ as
\begin{align*}  %\label{E:REPRHO}
\delta \rho|_{t=T_0}=\breve{\rho}_0+\epsilon\breve{\phi} \quad \text{and}\quad (\rho z ^i)|_{t=T_0}
=\breve{\psi}^i+\breve{\nu}^i,
\end{align*}
where
\begin{align*}  %\label{E:DEFINIRHO}
\breve{\rho}_0:=\Pi \delta \rho|_{t=T_0}, \quad \breve{\phi}:=\frac{1}{\epsilon}\langle 1, \delta \rho\rangle|_{t=T_0}, \quad \breve{\psi}^i:=\langle 1, \rho z ^i\rangle|_{t=T_0} \quad \text{and} \quad \breve{\nu}^i=\Pi(\rho z ^i)|_{t=T_0},
\end{align*}
it is clear that $z^i|_{t=T_0}$ and $\delta\rho|_{t=T_0}$ depend analytically on $(\breve{\nu}^i, \breve{\psi}^i, \breve{\rho}_0, \breve{\phi})$,
and in particular,
\begin{align}\label{E:ZIANALYTICINITIALDATA}
z^i|_{t=T_0}=\frac{\breve{\nu}^i+\breve{\psi}^i}{\rho_H(T_0)+ \breve{\rho}_0+\epsilon \breve{\phi}}.
\end{align}
From this and the fact that the spatial derivatives $\partial_i: H^s(\mathbb{T}^n)\rightarrow H^{s-1}(\mathbb{T}^n)$ define bounded linear maps,
we can, by Lemmas \ref{L:MULTIPLICATION} and \ref{PSERIES} from Appendix \ref{A:ANALY}, view the remainder terms $\mathscr{R}^\mu$ from \eqref{E:FINALCONSTRAINT1}-\eqref{E:FINALCONSTRAINT2} as defining analytic maps
\begin{align}
 &(-\epsilon_0, \epsilon_0)\times B_r(H^{s+1}(\Tbb^3)) \times H^s(\Tbb^3)  \times B_r(H^s(\Tbb^3)) \times B_r(\Rbb) \times \Rbb^3  \times H^s(\Tbb^3) \ni (\epsilon ,\smfu^{\mu\nu}, \smfu^{ik}_0, \breve{\rho}_0, \breve{\phi}, \breve{\psi}^i, \breve{\nu}^i)
\notag \\
& \hspace{3.0cm} \longmapsto \mathscr{R}^\mu(\epsilon ,\smfu^{\mu\nu}, \smfu^{ik}_0, \breve{\rho}_0, \breve{\phi}, \breve{\psi}^i, \breve{\nu}^i) \in H^{s-1}(\Tbb^3) \label{Rscmap}
\end{align}
for $r>0$ chosen small enough. Using this observation, we can proceed with the existence proof for solutions
to the constraint equations.

\begin{theorem}\label{T:INITIALIZATION}
	Suppose $s\in \mathbb{Z}_{> n/2+1}$ and $r>0$, $\smfu^{ij}\in B_r(H^{s+1}(\mathbb{T}^3, \mathbb{S}_3))$, $\smfu^{ij}_0\in H^s(\mathbb{T}^3, \mathbb{S}_3)$, $\breve{\rho}_0\in B_r(\bar{H}^s(\mathbb{T}^3))$, $\breve{\nu}^i\in \bar{H}^s(\mathbb{T}^3, \mathbb{R}^3)$.
Then for $r>0$ chosen small enough so that the map \eqref{Rscmap} is well-defined and analytic, there exists an $\epsilon_0>0$,
	and analytic maps
	$\breve{\phi}\in C^\omega(X_{\epsilon_0,r}^s,\mathbb{R})$, $\breve{\psi}^l\in C^\omega(X_{\epsilon_0,r}^s,\mathbb{R}^3)$, $\smfu^{0\mu}\in C^\omega(X_{\epsilon_0,r}^s,\bar{H}^{s+1}(\mathbb{T}^3,\mathbb{R}^4))$ and $\smfu^{0\mu}_0 \in C^\omega(X_{\epsilon_0,r}^s,H^s(\mathbb{T}^3,\mathbb{R}^4))$ that satisfy
	\begin{equation*}\label{E:QUASILINEAR}
	\breve{\phi}( \epsilon,0,0,0,0)=0, \quad \breve{\psi}^l(\epsilon,0,0,0,0)=0, \quad \smfu^{0\mu}( \epsilon,0,0,0,0)=0
	\quad\text{and} \quad
	\smfu^{0\mu}_0( \epsilon,0,0,0,0)=0
	\end{equation*}
	such that
	\begin{align*}
	\rho|_{t=T_0}&=\rho_H(T_0)+\breve{\rho}_0+\epsilon \breve{\phi},%\label{E:INITIALRHODECOMP}
\\
	z^i|_{t=T_0}&= \frac{\breve{\nu}^i+\breve{\psi}^i}{\rho_H(T_0)+ \breve{\rho}_0+\epsilon \breve{\phi}},\\
	\mfu^{\mu\nu}|_{t=T_0}&=\begin{pmatrix}
	\smfu^{00} & \smfu^{0j}\\
	\smfu^{i0} & \epsilon\smfu^{ij}
	\end{pmatrix},\\
	\partial_0\mfu^{\mu\nu}|_{t=T_0}&=\begin{pmatrix}
	\smfu^{00}_0 & \smfu^{0j}_0\\
	\smfu^{i0}_0 & \smfu^{ij}_0
	\end{pmatrix},
	\end{align*}
	 where the $\smfu^{\mu0}_0$ are determined by \eqref{E:PTU000J},
	solve the constraints \eqref{E:CONSTRAINT}, \eqref{E:WAVECONSTRAINT} and \eqref{E:NORMALIZATION}. Moreover,
 the fields $\{ \breve{\phi}, \breve{\psi}^i, \smfu^{00},\smfu^{0i}\}$ satisfy the estimate
\begin{align*}%\label{E:INITIALDATAESTIMATE1}
	|\breve{\phi}|+|\breve{\psi}^i|+\|\smfu^{0\mu}\|_{H^{s+1}}
	+\|\smfu^{0\mu}_0\|_{H^s}\lesssim \|\smfu^{ik}\|_{H^{s+1}}+\|\smfu^{ik}_0\|_{H^s}+\|\breve{\rho}_0\|_{H^s}+\|\breve{\nu}^i\|_{H^s}
	\end{align*}
uniformly for $\epsilon \in (-\epsilon_0,\epsilon_0)$ and can be expanded as
	\begin{gather}
	\breve{\phi}=\epsilon \mathscr{S}(\epsilon,\smfu^{jk}, \smfu^{jk}_0, \breve{\rho}_0, \breve{\nu}^i), \quad
	\breve{\psi}^i= \epsilon \mathscr{S}^i(\epsilon,\smfu^{jk}, \smfu^{jk}_0, \breve{\rho}_0, \breve{\nu}^i),  \label{E:INITIALEXPANSIONa}\\
	\smfu^{00}=\frac{2\Lambda}{3 T_0^2} E^2(T_0)\Delta^{-1} \breve{\rho}_0+
\epsilon  \mathscr{S}(\epsilon,\smfu^{jk}, \smfu^{jk}_0, \breve{\rho}_0, \breve{\nu}^i)
\quad \text{and} \quad
	\smfu^{0i}= \epsilon \mathscr{S}^i(\epsilon,\smfu^{jk}, \smfu^{jk}_0, \breve{\rho}_0, \breve{\nu}^i),
\label{E:INITIALEXPANSIONb}
	\end{gather}
where the maps $\mathscr{S}$ and $\mathscr{S}^i$ that are analytic on $X_{\epsilon_0,r}^s$ and vanish for $(\epsilon,\smfu^{jk}, \smfu^{jk}_0, \breve{\rho}_0, \breve{\phi})=(\epsilon,0,0,0,0)$.
\end{theorem}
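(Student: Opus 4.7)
The plan is to treat \eqref{E:FINALCONSTRAINT1}--\eqref{E:FINALCONSTRAINT2} together with the mean-value conditions as a coupled fixed-point problem for the unknowns $\mathbf{y}=(\smfu^{00},\smfu^{0i},\breve{\phi},\breve{\psi}^i)$ parametrized by the free data $\mathbf{x}=(\smfu^{ij},\smfu^{ij}_0,\breve{\rho}_0,\breve{\nu}^i)$, and then solve it using an analytic implicit function theorem. The velocity normalization \eqref{E:NORMALIZATION} has already been eliminated algebraically via \eqref{E:V_0}, and the gauge constraint \eqref{E:WAVECONSTRAINT} will be used at the end to define $\smfu^{\mu0}_0$ by the explicit relations \eqref{E:PTU000J}, so the only nontrivial work is to solve the gravitational constraints \eqref{E:FINALCONSTRAINT1}--\eqref{E:FINALCONSTRAINT2}.

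First I would split each of these equations by the $L^2$-orthogonal decomposition $f=\Pi f+\langle 1,f\rangle$. Applying $\Pi$ to \eqref{E:FINALCONSTRAINT1} and using that $\Delta:\bar{H}^{s+1}(\Tbb^3)\to\bar{H}^{s-1}(\Tbb^3)$ is an isomorphism while all the $\partial_k(\cdots)$ terms are already in $\bar{H}^{s-1}$, one obtains an identity of the form
\begin{equation*}
\smfu^{00}=\frac{2\Lambda}{3T_0^2}E^2(T_0)\Delta^{-1}\breve{\rho}_0+\epsilon\,\mathscr{F}^{00}(\epsilon,\mathbf{y},\mathbf{x}),
\end{equation*}
with $\mathscr{F}^{00}$ analytic on a small neighborhood of $(0,0,0)$ in $\Rbb\times\bar{H}^{s+1}\times\bar{H}^{s+1}\times\Rbb\times\Rbb^3\times X^s_{\epsilon_0,r}$, by the analyticity of the map \eqref{Rscmap}, of $z^i|_{t=T_0}$ in \eqref{E:ZIANALYTICINITIALDATA}, and of Sobolev multiplication (Lemma \ref{L:MULTIPLICATION}) and power-series composition (Lemma \ref{PSERIES}). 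Applying $\langle 1,\cdot\rangle$ kills $\Delta\smfu^{00}$ and every $\partial_k(\cdots)$ contribution, leaving
\begin{equation*}
-\frac{2\Lambda}{3T_0^2}E^2(T_0)\,\epsilon\breve{\phi}+\epsilon^2\langle 1,\mathscr{R}^0\rangle=0,
\end{equation*}
where I have used $\langle 1,\breve{\rho}_0\rangle=0$. Dividing by $\epsilon$ gives a scalar equation of the form $\breve{\phi}=\epsilon\,\mathscr{G}^0(\epsilon,\mathbf{y},\mathbf{x})$ with $\mathscr{G}^0$ analytic. Performing the analogous decomposition on \eqref{E:FINALCONSTRAINT2} produces $\smfu^{0i}=\epsilon\,\mathscr{F}^{0i}(\epsilon,\mathbf{y},\mathbf{x})$ and $\breve{\psi}^i=\epsilon\,\mathscr{G}^i(\epsilon,\mathbf{y},\mathbf{x})$. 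Stacking these four relations yields an abstract fixed-point equation
\begin{equation*}
\mathbf{y}=\mathbf{N}(\epsilon,\mathbf{y},\mathbf{x}),\qquad \mathbf{N}(0,0,0)=0,\qquad D_\mathbf{y}\mathbf{N}(0,0,0)=0,
\end{equation*}
on the Banach space $\bar{H}^{s+1}(\Tbb^3)\times\bar{H}^{s+1}(\Tbb^3,\Rbb^3)\times\Rbb\times\Rbb^3$, whose linearization $\mathds{1}-D_\mathbf{y}\mathbf{N}$ at the origin equals the identity.

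Second, the analytic implicit function theorem then delivers, for $\epsilon_0, r>0$ sufficiently small, a unique analytic solution $\mathbf{y}=\mathbf{y}(\epsilon,\mathbf{x})\in C^\omega(X^s_{\epsilon_0,r},\,\bar{H}^{s+1}\times\bar{H}^{s+1}\times\Rbb\times\Rbb^3)$ with $\mathbf{y}(\epsilon,0)=0$. The expansions \eqref{E:INITIALEXPANSIONa}--\eqref{E:INITIALEXPANSIONb} come from Taylor expanding the fixed-point identity in $\epsilon$: the $\breve{\phi},\breve{\psi}^i,\smfu^{0i}$ components are manifestly $O(\epsilon)$, while for $\smfu^{00}$ the leading term is the $\Delta^{-1}\breve{\rho}_0$ contribution exhibited above. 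The quantitative Lipschitz estimate for $\mathbf{N}$ near the origin yields the stated bound
\begin{equation*}
|\breve{\phi}|+|\breve{\psi}^i|+\|\smfu^{0\mu}\|_{H^{s+1}}\lesssim \|\smfu^{ij}\|_{H^{s+1}}+\|\smfu^{ij}_0\|_{H^s}+\|\breve{\rho}_0\|_{H^s}+\|\breve{\nu}^i\|_{H^s}.
\end{equation*}
Finally, once $\smfu^{\mu\nu}$ is determined, the relations \eqref{E:PTU000J} define $\smfu^{\mu 0}_0$ as polynomial combinations of $\smfu^{\mu\nu}$, $\partial_k\smfu^{\mu\nu}$, and $\smfu^{ij}_0$ with smooth coefficients, hence $\smfu^{\mu 0}_0\in H^s$ depending analytically on the data and satisfying the same bound (with the loss of one derivative absorbed into $H^s$).

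The main obstacle will be confirming the ``trivial linearization'' structure $D_\mathbf{y}\mathbf{N}(0,0,0)=0$: this requires checking that every potentially dangerous $O(1)$ contribution to $\mathscr{R}^\mu$ in \eqref{E:FINALCONSTRAINT1}--\eqref{E:FINALCONSTRAINT2} either sits inside a divergence (and so vanishes under $\langle 1,\cdot\rangle$), comes multiplied by an explicit $\epsilon$, or depends on $\mathbf{y}$ only through the already small combinations $\epsilon\breve{\phi}$ and $\breve{\psi}^i$ entering via \eqref{E:ZIANALYTICINITIALDATA}. This bookkeeping is where the careful separation of $\delta\rho$ into $\breve{\rho}_0+\epsilon\breve{\phi}$ and of $\rho z^i$ into $\breve{\nu}^i+\breve{\psi}^i$ pays off, and it is the only delicate step; everything else is a routine application of the analyticity tools in Appendix \ref{A:ANALY} and of the isomorphism property of $\Delta$ on zero-mean Sobolev spaces.
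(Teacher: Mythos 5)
Your proposal is essentially the paper's proof: both split each of \eqref{E:FINALCONSTRAINT1}--\eqref{E:FINALCONSTRAINT2} via $\Pi$ and $\langle 1, \cdot\rangle$, exploit the isomorphism $\Delta:\bar{H}^{s+1}\to\bar{H}^{s-1}$, and invoke the analytic implicit function theorem on the same Banach space $\Rbb\times\Rbb^3\times\bar{H}^{s+1}(\Tbb^3,\Rbb^4)$, the only cosmetic difference being that you package the resulting system as a fixed point $\mathbf{y}=\mathbf{N}(\epsilon,\mathbf{y},\mathbf{x})$ rather than as $F(\epsilon,\iota,\beta)=L(\iota,\beta)+\epsilon M(\epsilon,\iota,\beta)=0$; since $L$ is linear in $\beta$ and independent of $\epsilon$, the condition $D_\mathbf{y}\mathbf{N}(0,\cdot,\cdot)=0$ that you flag as "the main obstacle" follows immediately from the $L+\epsilon M$ structure already established in \eqref{E:FINALCONSTRAINT1}--\eqref{E:FINALCONSTRAINT2}, so no further bookkeeping is actually needed at this stage. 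One small point worth tightening is the last step: the relations \eqref{E:PTU000J} carry a factor $\frac{1}{\epsilon}\partial_k(\cdot)$, so calling them "polynomial combinations with smooth coefficients" is imprecise — what makes $\smfu^{\mu0}_0$ regular in $\epsilon$ (and the $H^s$ bound uniform as $\epsilon\searrow 0$) is precisely the expansion $\smfu^{0k}=\epsilon\mathscr{S}^k$ from \eqref{E:INITIALEXPANSIONb} for the $\mu=0$ component, together with $\hmfu^{jk}|_{t=T_0}=\epsilon\smfu^{jk}$ for the $\mu=j$ component.
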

\begin{proof}
Acting on \eqref{E:FINALCONSTRAINT1} and \eqref{E:FINALCONSTRAINT2} with $\langle 1, \cdot\rangle$ and $\Pi$ , we obtain,
with the help of  \eqref{E:V0HAT}  and \eqref{Rscmap}, the equations
\begin{align}
	\breve{\phi} - \epsilon \left\langle 1,\mathscr{R}^0\left(\epsilon,\smfu^{\mu\nu}, \smfu^{jk}_0, \breve{\rho}_0, \breve{\phi}, \breve{\psi}^j, \breve{\nu}^j\right)\right\rangle=0, \label{consteqn1}\\
	\Delta \smfu^{00}-\frac{2\Lambda}{3 T_0^2} E^2(T_0)\breve{\rho}_0 +\epsilon  \Pi \mathscr{R}^0 \left(\epsilon,\smfu^{\mu\nu}, \smfu^{jk}_0,  \breve{\rho}_0, \breve{\phi}, \breve{\psi}^j, \breve{\nu}^j\right)=0, \label{consteqn2}\\
	\breve{\psi}^i+\epsilon \left\langle 1,\mathscr{R}^i \left(\epsilon,\smfu^{\mu\nu}, \smfu^{jk}_0,  \breve{\rho}_0, \breve{\phi}, \breve{\psi}^j, \breve{\nu}^j\right)\right\rangle=0 \label{consteqn3}
\intertext{and}
	\Delta\smfu^{0i}  +\epsilon
	\Pi
	\mathscr{R}^i \left(\epsilon,\smfu^{\mu\nu}, \smfu^{jk}_0,  \breve{\rho}_0, \breve{\phi}, \breve{\psi}^j, \breve{\nu}^j\right)=0,
\label{consteqn4}
	\end{align}
which are clearly equivalent to \eqref{E:FINALCONSTRAINT1}-\eqref{E:FINALCONSTRAINT2}.  Next, we let
	\begin{align*}
	\iota:=(\smfu^{ik},\smfu^{ik}_0, \breve{\rho}_0, \breve{\nu}^i) \quad \text{and} \quad \beta:=\left(\breve{\phi}, \breve{\psi}^i, \smfu^{0\mu}\right),
	\end{align*}
and write \eqref{consteqn1}-\eqref{consteqn2} more compactly as
	\begin{align}
	F(\epsilon, \iota,\beta) := L(\iota, \beta)+\epsilon M(\epsilon, \iota, \beta)=0, \label{E:L+M}
	\end{align}
	where
	\begin{align*}
	L(\iota, \beta)=\begin{pmatrix}
	\breve{\phi}\\
	\breve{\psi}^i\\
	\Delta\smfu^{0\mu}-\frac{2\Lambda}{3 T_0^2} E^2(T_0) \delta^\mu_0\breve{\rho}_0
	\end{pmatrix}.
	\end{align*}
Recalling that the Laplacian $\Delta$
defines an isomorphism from $\bar{H}^{s+1}(\mathbb{T}^3)$  to $\bar{H}^{s-1}(\mathbb{T}^3)$, we observe that
	\begin{align*}
	(0,\iota, \beta)=\left(0, \iota, \begin{pmatrix}
	0\\
	0\\
	\frac{2\Lambda}{3 T_0^2} E^2(T_0) \delta^\mu_0\Delta^{-1} \breve{\rho}_0
	\end{pmatrix}\right)
	\end{align*}
	solves \eqref{E:L+M}. Since $D_\beta F(0, \iota,\beta )\cdot \delta\beta = L(0,\delta\beta)$,  we can solve \eqref{E:L+M} via an analytic version of the Implicit Function
Theorem \cite[Theorem 15.3]{Deimling:1998}, at least for small  $\epsilon$, if we can show that
	\begin{align*}
	\tilde{L}(\delta\beta) = \begin{pmatrix}
	\delta \breve{\phi}\\
	\delta \breve{\psi}^i\\
	\Delta\delta\smfu^{0\mu}
	\end{pmatrix}
	\end{align*}
defines an isomorphism from  $\Rbb \times \Rbb^3 \times \bar{H}^{s+1}(\mathbb{T}^3, \mathbb{R}^4)$ to
$\Rbb \times \Rbb^3 \times \bar{H}^{s-1}(\mathbb{T}^3, \mathbb{R}^4)$.  But this is clear since $\Delta$ $\: :\:$ $\bar{H}^{s+1}(\mathbb{T}^3)$  $\mapsto$ $\bar{H}^{s-1}(\mathbb{T}^3)$ is an isomorphism. Thus, for $r>0$
chosen small enough and
any $R>0$, there exists an $\epsilon_0 >0$ and a unique analytic map
\begin{equation*}
P \: : \: X^s_{\epsilon_0,r} \mapsto \Rbb\times \Rbb^3 \times \bar{H}^{s+1}(\Tbb^3,\mathbb{R}^4)
\end{equation*}
that satisfies
\begin{equation*}
F(\iota,P(\epsilon,\iota),\epsilon) = 0
\end{equation*}
for all $(\epsilon, \iota)$ $\in$
$(-\epsilon_0,\epsilon_0) \times B_R\bigl( H^{s+1}(\Tbb^3,\mathbb{S}_3)\bigr)\times B_R\bigl( H^{s}(\Tbb^3,\mathbb{S}_3)\bigr)
\times B_r\bigl(\bar{H}^s(\Tbb^3)\bigr) \times  B_r\bigl(\bar{H}^s(\Tbb^3,\Rbb^3)\bigr)$
and
	\begin{align}
	P(\epsilon, \iota)=\begin{pmatrix}
	0\\
	0\\
	\frac{2\Lambda}{3 T_0^2}E^2(T_0) \delta^\mu_0\Delta^{-1} \breve{\rho}_0
	\end{pmatrix}+\text{O}(\epsilon). \label{Pidexp}
	\end{align}
Finally, the estimate
	\begin{align*}
	|\breve{\phi}| + |\breve{\psi}^i|+\|\smfu^{0\mu}\|_{H^{s+1}}
	+\|\mfu^{0\mu}_0\|_{H^s}\lesssim \|\smfu^{ik}\|_{H^{s+1}}+\|\smfu^{ik}_0\|_{H^s}+\|\breve{\rho}_0\|_{H^s}+\|\breve{\nu}^i\|_{H^s}
	\end{align*}
follows from analyticity of $P$,  $\eqref{Pidexp}$ and \eqref{E:PTU000J}.
\end{proof}

\subsection{Bounding $\mathbf{U}|_{t=T_0}$}
For the evolution problem, we need to bound $\mathbf{U}|_{t=T_0}$, see \eqref{E:REALVAR}, by the free initial data $\{\smfu^{ik}, \smfu^{ik}_0, \breve{\rho}_0, \breve{\nu}^i\}$ uniformly in $\epsilon$. The required bound is the content of
the following lemma.

\begin{lemma}\label{L:INITIALTRANSFER}
Suppose that the hypotheses of Theorem \ref{T:INITIALIZATION} hold, and that
$\breve{\phi}\in C^\omega(X_{\epsilon_0,r}^s,\mathbb{R})$, $\breve{\psi}^l\in C^\omega(X_{\epsilon_0,r}^s,\mathbb{R}^3)$, $\smfu^{0\mu}\in C^\omega(X_{\epsilon_0,r}^s,\bar{H}^{s+1}(\mathbb{T}^3,\mathbb{R}^4))$ and $\smfu^{0\mu}_0 \in C^\omega(X_{\epsilon_0,r}^s,H^s(\mathbb{T}^3,\mathbb{R}^4))$ are the analytic maps from that theorem.
Then on the initial hypersurface $\Sigma_{T_0}$, the gravitational and matter fields
\begin{equation*}
\{u^{\mu\nu},u^{ij}_\gamma,
w^{0\mu}_i,u^{0\mu}_0,u,u_\gamma,z_j,\delta \zeta\}
\end{equation*}
can be expanded as follows:
\begin{align*}
	u^{0\mu}|_{t=T_0}&= \epsilon \frac{\Lambda}{6T_0^3}E^2(T_0)\Delta^{-1}\breve{\rho}_0 \delta^\mu_0+\epsilon^2
\mathscr{S}^\mu(\epsilon,\smfu^{kl},\smfu^{kl}_0, \breve{\rho}_0, \breve{\nu}^l), %\label{E:U0MUANDINI2}
\\
	u|_{t=T_0}&=\epsilon^2\frac{2\Lambda}{9}E^2(T_0)\smfu^{ij}\delta_{ij}+\epsilon^3 \mathscr{S}(\epsilon,\smfu^{kl},\smfu^{kl}_0, \breve{\rho}_0, \breve{\nu}^l),  %\label{E:UANDINI2}
\\
	u^{ij}|_{t=T_0}&=\epsilon^2E^2(T_0)\left(\smfu^{ij}
	-\frac{1}{3}\smfu^{kl}\delta_{kl}\delta^{ij}\right)+\epsilon^3\mathscr{S}^{ij}(\epsilon,\smfu^{kl},\smfu^{kl}_0, \breve{\rho}_0, \breve{\nu}^l), %\label{E:UIJANDINI2}
\\
	z_j|_{t=T_0}&= E^2(T_0)\frac{\breve{\nu}^i\delta_{ij}}{\rho_H(T_0)+\breve{\rho}_0}+\epsilon\mathscr{S}_j(\epsilon,\smfu^{kl},\smfu^{kl}_0, \breve{\rho}_0, \breve{\nu}^l),%\label{E:ZJINI}
\\
	\delta\zeta|_{t=T_0}&=\frac{1}{1+\epsilon^2 K}\ln{\left(1+\frac{\breve{\rho}_0+\epsilon\breve{\phi}}{\rho_H(T_0)} \right)}=\ln{\left(1+\frac{\breve{\rho}_0}{\rho_H(T_0)}\right)}+\epsilon^2\mathscr{S}( \epsilon,\smfu^{kl},\smfu^{kl}_0, \breve{\rho}_0, \breve{\nu}^l), %\label{E:DELTAZETA},
\\
 w^{0\mu}_i|_{t=T_0} &= \epsilon \mathscr{S}^\mu_i(\epsilon,\smfu^{kl},\smfu^{kl}_0, \breve{\rho}_0, \breve{\nu}^l), %\label{wexp}
 \\
u^{0\mu}_0|_{t=T_0} &= \epsilon \mathscr{S}^\mu(\epsilon,\smfu^{kl},\smfu^{kl}_0, \breve{\rho}_0, \breve{\nu}^l), %\label{u0muexp}
\\
u_\gamma|_{t=T_0} &= \epsilon \mathscr{S}_\gamma(\epsilon,\smfu^{kl},\smfu^{kl}_0, \breve{\rho}_0, \breve{\nu}^l)%\label{ugammaexp}
\intertext{and}
u^{ij}_\gamma|_{t=T_0} &= \epsilon \mathscr{S}^{ij}_\gamma(\epsilon,\smfu^{kl},\smfu^{kl}_0, \breve{\rho}_0, \breve{\nu}^l),%\label{uijgammaexp}
\end{align*}
for maps $\mathscr{S}$ that are analytic on $X^s_{\epsilon_0,r}$. Moreover, the estimates
	\begin{equation*}\label{E:INITIALDATAESTIMATE2}
	\begin{aligned}
	 \|u^{\mu\nu}|_{t=T_0}\|_{H^{s+1}}+\|u|_{t=T_0}\|_{H^{s+1}}+\|w^{0\mu}_i|_{t=T_0}
	\|_{H^s}+\|u^{0\mu}_0|_{t=T_0}\|_{H^s}+
    \|u_\mu|_{t=T_0}\|_{H^s}& \\
+\|u^{ij}_\mu|_{t=T_0}\|_{H^s}+|\phi(T_0)|  \lesssim  \epsilon(\|\smfu^{ij}\|_{H^{s+1}}+\|\smfu^{ij}_0\|_{H^{s}}+\|\breve{\rho}_0&\|_{H^s}+\|
	\breve{\nu}^i\|_{H^s})
	\end{aligned}
	\end{equation*}
	and
	\begin{align*}%\label{E:INITIALDATAESTIMATE3}
	\|z_j|_{t=T_0}\|_{H^s}+\|\delta\zeta|_{t=T_0}\|_{H^s}\lesssim \|\smfu^{ij}\|_{H^{s+1}}+\|\smfu^{ij}_0\|_{H^{s}}+\|\breve{\rho}_0\|_{H^s}+\|
	\breve{\nu}^i\|_{H^s}
	\end{align*}
hold uniformly for $\epsilon \in (-\epsilon_0,\epsilon_0)$.
\end{lemma}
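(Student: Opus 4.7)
The plan is to combine Theorem \ref{T:INITIALIZATION}, which solves the constraints and produces analytic maps $\breve\phi$, $\breve\psi^i$, $\smfu^{0\mu}$, $\smfu^{0\mu}_0$ of $(\epsilon,\iota)$ with the expansions \eqref{E:INITIALEXPANSIONa}--\eqref{E:INITIALEXPANSIONb}, with the transformation formulas of Lemma \ref{L:RELATION1} (and the further expansions of \S\ref{epexpansions}) to convert the $\hmfu$-variables into the evolution variables $u^{\mu\nu}$, $u^{\mu\nu}_\gamma$, $u$, $u_\gamma$, $z_j$, $\delta\zeta$, $w^{0\mu}_k$, $\phi$. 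Once the formulas are in place, the estimates will follow from the fact that each evolution variable is represented as an analytic map into an appropriate Sobolev space vanishing at the origin, so that a Moser/Lipschitz estimate at $(\epsilon,\iota)=(\epsilon,0)$ produces the claimed linear bound by $\|\smfu^{ij}\|_{H^{s+1}}+\|\smfu^{ij}_0\|_{H^s}+\|\breve{\rho}_0\|_{H^s}+\|\breve\nu^i\|_{H^s}$.

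First I will read off the easy expansions. For $u^{0\mu}|_{t=T_0}$, $u|_{t=T_0}$, and $u^{ij}|_{t=T_0}$ I simply substitute the constraint solution \eqref{E:INITIALEXPANSIONb} into the formulas \eqref{E:U0MUANDINI}--\eqref{E:UIJANDINI} of Lemma \ref{L:RELATION1}; only the $\mu=0$ term survives in the leading order of $u^{0\mu}$ because $\smfu^{0j}=\epsilon \mathscr S^j$, and the stated coefficient $\tfrac{\Lambda}{6T_0^3}E^2(T_0)\Delta^{-1}\breve{\rho}_0$ comes from combining the $\tfrac{\epsilon}{4T_0}\smfu^{00}$ term with the leading $\tfrac{2\Lambda}{3T_0^2}E^2(T_0)\Delta^{-1}\breve{\rho}_0$ term of $\smfu^{00}$. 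For $\delta\zeta|_{t=T_0}$ I use \eqref{E:ZETA}--\eqref{E:DELZETA} together with $\rho|_{t=T_0}=\rho_H(T_0)+\breve\rho_0+\epsilon\breve\phi$, Taylor-expand the logarithm, and use $\breve\phi = \epsilon\mathscr S$. For $z_j|_{t=T_0}$ I invert the expansion \eqref{E:JACOBI} to write $z_j = \underline{\check g_{ij}}z^i + \epsilon \mathscr{S}_j$ (or equivalently lower the index using \eqref{E:CHECKGIJ}), and then substitute \eqref{E:ZIANALYTICINITIALDATA} together with the bound $\breve\psi^i = \epsilon\mathscr S^i$. For the spatial derivatives $u^{ij}_k|_{t=T_0}$ and $u_k|_{t=T_0}$ the relations \eqref{E:PGIJ}--\eqref{E:u.e} reduce everything to differentiating the already established expansions of $u^{ij}$ and $u$.

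The more delicate steps are the $\epsilon$-order bound for $u^{0\mu}_0|_{t=T_0}$ and the cancellation that gives $w^{0\mu}_k|_{t=T_0}=\epsilon\mathscr S^\mu_k$. For $u^{0\mu}_0$ I will combine the definition \eqref{E:u.b} with the gauge-propagation formulas \eqref{E:PTU000J}. For $\mu=i$, \eqref{E:PTU000J}$_2$ gives $\partial_t(E^3\smfu^{i0})|_{t=T_0}$ in terms of $\tfrac{1}{\epsilon}\partial_k(E^3\hmfu^{ik})|_{t=T_0}=E^3(T_0)\partial_k\smfu^{ik}$ (since $\hmfu^{ik}|_{t=T_0}=\epsilon\smfu^{ik}$) plus smaller terms, and the resulting $\underline{\bar{\partial}_0\bar g^{i0}}$, after dividing by $\theta$ (Lemma \ref{L:IDENTITY}) and subtracting $3u^{i0}|_{t=T_0}$ as prescribed by \eqref{E:u.b} and \eqref{E:PIG}, is $O(\epsilon^2)$, hence $u^{i0}_0=O(\epsilon)$. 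For $\mu=0$ the subtlety is that the leading $O(1)$ pieces of $\partial_0 \hat h^{00}/\theta$ and $-\hat h^{00}\partial_0\theta/\theta^2$ both equal $\pm\Lambda\Omega(T_0)/T_0$ at $t=T_0$ and cancel exactly, so that $\partial_0\underline{\bar g^{00}}|_{t=T_0}$ is $O(\epsilon^2)$ as required; here Lemma \ref{E:IDENTITYPTHETA} and the fact that $\partial_k\smfu^{0k}$ enters \eqref{E:PTU000J}$_1$ only through $\smfu^{0k}=\epsilon\mathscr{S}^k$ are essential.

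The main obstacle is the cancellation producing the bound for $w^{0\mu}_k$, which is really the whole point of the Newtonian-potential subtraction. By \eqref{E:WPHI} I need to show that at $t=T_0$ the gradient $\partial_k\Phi$ matches the leading $\epsilon$-order of $u^{00}_k$ exactly. From \eqref{E:PIG} one has $u^{00}_k|_{t=T_0}=\tfrac{2T_0}{\epsilon}\partial_k u^{00}|_{t=T_0}$, so the already computed expansion of $u^{00}|_{t=T_0}$ yields
\als{
u^{00}_k|_{t=T_0}=\frac{\Lambda}{3T_0^2}E^2(T_0)\,\partial_k\Delta^{-1}\breve{\rho}_0+\epsilon\mathscr S.
}
On the other hand, solving \eqref{E:DEFOFPHI} at $t=T_0$ gives $\Delta\Phi|_{t=T_0}=\tfrac{\Lambda}{3T_0^2}E^2(T_0)\,\Pi\,\rho^{1/(1+\epsilon^2K)}|_{t=T_0}$; since $\Pi$ kills the constant $\rho_H(T_0)$ and $\rho^{1/(1+\epsilon^2K)}=\rho(1+O(\epsilon^2))$, this expansion reduces to $\Delta\Phi|_{t=T_0}=\tfrac{\Lambda}{3T_0^2}E^2(T_0)\breve{\rho}_0+\epsilon\mathscr S$, so that $\partial_k\Phi|_{t=T_0}=\tfrac{\Lambda}{3T_0^2}E^2(T_0)\,\partial_k\Delta^{-1}\breve{\rho}_0+\epsilon\mathscr S$, which cancels the leading term of $u^{00}_k|_{t=T_0}$; the cases $\mu=j$ are immediate from $u^{0j}|_{t=T_0}=\epsilon^2\mathscr S^j$ and the fact that $\delta^0_0\delta^j_0=0$. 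Finally, $\phi(T_0)$ is estimated directly from \eqref{E:DEFOFPHISMALL} as $\phi(T_0)=\tfrac{1}{\epsilon}\langle 1,\,t^{-3(1+\epsilon^2K)}\delta\rho\rangle|_{t=T_0}=T_0^{-3(1+\epsilon^2K)}\breve\phi$, which is $\epsilon\mathscr S$ by \eqref{E:INITIALEXPANSIONa}. All the uniform Sobolev estimates are then obtained by applying the Moser product and composition estimates (Lemma \ref{moserlemC}) to the analytic representations, using that each analytic map above vanishes at the origin and hence is Lipschitz in $\iota$ uniformly in $\epsilon\in(-\epsilon_0,\epsilon_0)$.
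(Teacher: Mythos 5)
Your strategy is the same as the paper's: substitute the constraint solution from Theorem \ref{T:INITIALIZATION} and the rescaled free data into the transformation formulas of Lemma \ref{L:RELATION1} (together with Lemmas \ref{L:IDENTITY} and \ref{E:IDENTITYPTHETA}), verify the cancellations at leading $\epsilon$-order, and then extract the uniform Sobolev bounds from analyticity and Moser-type estimates. Your treatment of $u^{0\mu}|_{t=T_0}$, $u|_{t=T_0}$, $u^{ij}|_{t=T_0}$, $z_j$, $\delta\zeta$, $\phi(T_0)$, and the cancellation underlying $u^{0\mu}_0|_{t=T_0}=O(\epsilon)$ is sound. Your handling of $w^{0\mu}_k$ — writing $u^{00}_k=\tfrac{2T_0}{\epsilon}\partial_k u^{00}$ from \eqref{E:PIG} and comparing with $\partial_k\Phi$ obtained from the Poisson equation \eqref{E:DEFOFPHI} at $t=T_0$ — is a mild reshuffling of the paper's derivation (which expresses $w^{0\mu}_i|_{t=T_0}$ directly in terms of $\partial_i\smfu^{00}$ and $\partial_i\Delta^{-1}\breve\rho_0$ before inserting the constraint solution), but the cancellation you exploit is the same one and the argument is correct.

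The one genuine gap is that you never address the time components $u^{ij}_0|_{t=T_0}$ and $u_0|_{t=T_0}$, which the lemma bounds and which the paper handles explicitly (see \eqref{E:UIJ0ANDINI} and \eqref{E:U0}). You only treat $u^{0\mu}_0$ (in your ``delicate steps'' paragraph) and the \emph{spatial} derivatives $u^{ij}_k$, $u_k$. For $u_0 = \partial_t u$ the paper's derivation proceeds from the definition of $u$ and requires the cancellation $-\tfrac{1}{\epsilon}\tfrac{\Lambda}{9}\underline{\alpha}^{-3}\partial_t\underline{\alpha}^3-\tfrac{1}{\epsilon}\tfrac{2\Lambda}{3}\tfrac{\Omega(T_0)}{T_0}=O(\epsilon)$, for which \eqref{E:PTGAMMA} is essential; alternatively, one can differentiate the identity \eqref{E:UANDINI} of Lemma \ref{L:RELATION1} in $t$ and substitute $\hmfu^{ij}_0|_{t=T_0}=\smfu^{ij}_0$ and the gauge-determined $\smfu^{\mu0}_0$, which avoids an explicit cancellation but still needs a chain-rule computation controlling $\partial_t\hmfu^{\alpha\beta}|_{t=T_0}$. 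Either route works, but since $u_0$ and $u^{ij}_0$ appear explicitly in the conclusion of the lemma (and feed into the bound on $\|u_\mu|_{t=T_0}\|_{H^s}$ and $\|u^{ij}_\mu|_{t=T_0}\|_{H^s}$), you need to say which route you take and carry out the leading-order bookkeeping. As it stands the proposal simply omits these two components.
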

\begin{proof}

First, we observe by \eqref{E:PIG},  \eqref{E:WPHI}, \eqref{E:DEFOFPHI}, \eqref{phidecomp}, \eqref{E:DEFHATG}, \eqref{E:PKMFU} and Lemma \ref{L:IDENTITY} that
	\begin{align}\label{E:REPW0MUI1}
	w^{0\mu}_i|_{t=T_0}= \frac{1}{2}\partial_i\smfu^{00}\delta^\mu_0-\delta^\mu_0
	\frac{\Lambda}{3T_0^2}E^2(T_0)\partial_i \Delta^{-1}\breve{\rho}_0+\partial_i \smfu^{0k}\delta^\mu_k+\epsilon^2
\mathscr{S}^\mu_i(\epsilon,\smfu^{kl},\smfu^{kl}_0, \breve{\rho}_0, \breve{\nu}^l),
	\end{align}
	where $\mathscr{S}_i^\mu(\epsilon,0,0,0,0)=0$, which in turn, implies by \eqref{E:INITIALEXPANSIONb} that
	\begin{align}\label{E:REPW0MUI}
	w^{0\mu}_i|_{t=T_0}=\epsilon \mathscr{S}^\mu_i(\epsilon,\smfu^{kl},\smfu^{kl}_0, \breve{\rho}_0, \breve{\nu}^l),
	\end{align}
where again $\mathscr{S}_i^\mu(\epsilon,0,0,0,0)=0$.
Furthermore, by \eqref{E:PIG}, \eqref{E:DEFHATG}, \eqref{E:U0MUANDINI}, Lemma \ref{L:IDENTITY} and Theorem \ref{T:INITIALIZATION},
we see that
	\begin{equation}\label{E:U0MU0ANDINI}
	u^{0\mu}_0|_{t=T_0}= \frac{1}{\epsilon} \frac{1}{\theta}\partial _0\hat{g}^{0\mu}-\frac{1}{\epsilon} \hat{g}^{0\mu}\frac{1}{\theta^2}\partial _0 \theta-3u^{0\mu} = \epsilon\mathscr{S}^\mu(\epsilon,\smfu^{ik},\smfu^{ik}_0, \breve{\rho}_0, \breve{\nu}^i),
	\end{equation}
	where $\mathscr{S}^\mu(\epsilon,0,0,0,0)=0$.

Next, we see from \eqref{E:PTE}, \eqref{E:GAMMA}, \eqref{E:DEFHATG}, \eqref{E:PTTHETA}, \eqref{E:PITHETA}  and Theorem \ref{T:INITIALIZATION},
that we can express $\partial_\mu \underline{\alpha}$ as
	\begin{align}
	\underline{\alpha}^{-3}\partial_t\underline{\alpha}^3=3\underline{\alpha}^{-1}\partial_t \underline{\alpha}=\underline{\check{g}_{kl}}\partial_t\underline{\bar{g}^{kl}}= -\frac{6\Omega(T_0)}{T_0}+\epsilon^2\mathscr{S}(\epsilon,\smfu^{ik},\smfu^{ik}_0, \breve{\rho}_0, \breve{\nu}^i)\label{E:PTGAMMA}
	\end{align}
	and
	\begin{align}
	\underline{\alpha}^{-3}\partial_j \underline{\alpha}^3=&3\underline{\alpha}^{-1}\partial_j \underline{\alpha}=\epsilon^2\frac{9}{2\Lambda}\partial_j \smfu^{00}+\epsilon^3\mathscr{S} (\epsilon,\smfu^{ik},\smfu^{ik}_0, \breve{\rho}_0, \breve{\nu}^i) \notag \\ =& \epsilon^2 \frac{3}{T_0^2}E^2(T_0)\Delta^{-1}\partial_j\breve{\rho}_0 +\epsilon^3\mathscr{S} (\epsilon,\smfu^{ik},\smfu^{ik}_0, \breve{\rho}_0, \breve{\nu}^i),
	\label{E:PJGAMMA}
	\end{align}
where the error terms $\mathscr{S}$ vanish for $(\epsilon,\smfu^{ik},\smfu^{ik}_0, \breve{\rho}_0, \breve{\nu}^i)=(\epsilon,0,0,0,0)$.
Using  \eqref{E:U0MUANDINI} and \eqref{E:U0MU0ANDINI}, we then find with the help of \eqref{E:u.g}  and \eqref{E:PTGAMMA}  that
	\begin{align}
	u_0|_{t=T_0}= 3u^{00}+u^{00}_0-\frac{1}{\epsilon}
	\frac{\Lambda}{9} \underline{\alpha}^{-3} \partial_t \underline{\alpha}^3-\frac{1}{\epsilon}\frac{2\Lambda}{3}\frac{\Omega(T_0)}{T_0}=\epsilon \mathscr{S}_0(\epsilon,\smfu^{ik},\smfu^{ik}_0, \breve{\rho}_0, \breve{\nu}^i)\label{E:U0},
	\end{align}
while we note that
	\begin{equation}
	\begin{aligned}
	u_k|_{t=T_0}=&w^{00}_k+\frac{\Lambda}{3T^2_0} E^2 (T_0) \partial_k\Delta^{-1}\breve{\rho}_0-\frac{1}{\epsilon^2}
	\frac{\Lambda}{9}\underline{\alpha}^{-3} \partial_k \underline{\alpha}^3
	= \epsilon \mathscr{S}_k(\epsilon,\smfu^{ij},\smfu^{ij}_0, \breve{\rho}_0, \breve{\nu}^i)\label{E:UK}
	\end{aligned}
	\end{equation}
follows from \eqref{E:DEFOFPHI}, \eqref{E:PJGAMMA} and \eqref{E:REPW0MUI1}. Again the error terms  $\mathscr{S}_\mu$ vanish for $(\epsilon,\smfu^{ik},\smfu^{ik}_0, \breve{\rho}_0, \breve{\nu}^i)=(\epsilon,0,0,0,0)$.
Starting from \eqref{E:u.e} and \eqref{E:GAMMA},  we see, with the help of \eqref{E:U0},  Lemmas \ref{L:IDENTITY} and \ref{E:IDENTITYPTHETA} along with Theorem \ref{T:INITIALIZATION}, that
	\begin{equation}\label{E:UIJ0ANDINI}
	u^{ij}_0|_{t=T_0}=  \frac{1}{\epsilon}\partial_0(\underline{\alpha}^{-1}\theta^{-1}\hat{g}^{ij})
	=  \epsilon\mathscr{S}^{ij}(\epsilon,\smfu^{kl},\smfu^{kl}_0, \breve{\rho}_0, \breve{\nu}^l),
	\end{equation}
where $\mathscr{S}^{ij}(\epsilon,0,0,0,0)=0$. By a similar calculation, we find with the help of \eqref{E:REPW0MUI} and\eqref{E:UK} that
	\begin{equation}\label{E:UIJKANDINI}
	u^{ij}_k|_{t=T_0}=  \frac{1}{\epsilon}\partial_k(\underline{\alpha}^{-1}\theta^{-1}\hat{g}^{ij}) =   \epsilon\mathscr{S}^{ij}(\epsilon,\smfu^{kl},\smfu^{kl}_0, \breve{\rho}_0, \breve{\nu}^l),
	\end{equation}
where $\mathscr{S}^{ij}(\epsilon,0,0,0,0)=0$. Noting that
\begin{equation*}
\phi(T_0)=\frac{1}{T_0^{3(1+\epsilon^2 K)}}\breve{\phi}=\epsilon \mathscr{S}(\epsilon,\smfu^{ik},\smfu^{ik}_0, \breve{\rho}_0, \breve{\nu}^i)
\end{equation*}
by Theorem \ref{T:INITIALIZATION}, the estimate
\begin{align*}
	 \|u^{\mu\nu}|_{t=T_0}\|_{H^{s+1}}+\|u|_{t=T_0}\|_{H^{s+1}}+\|w^{0\mu}_i|_{t=T_0}
	\|_{H^s}+\|u^{0\mu}_0|_{t=T_0}\|_{H^s}+
    \|u_\mu|_{t=T_0}\|_{H^s}& \\
+\|u^{ij}_\mu|_{t=T_0}\|_{H^s}+|\phi(T_0)|  \lesssim  \epsilon(\|\smfu^{ij}\|_{H^{s+1}}+\|\smfu^{ij}_0\|_{H^{s}}+\|\breve{\rho}_0&\|_{H^s}+\|
	\breve{\nu}^i\|_{H^s}),
	\end{align*}
which holds uniformly for $\epsilon \in (-\epsilon_0,\epsilon_0)$, follows directly from
\eqref{E:REPW0MUI}, \eqref{E:U0MU0ANDINI}, \eqref{E:U0}, \eqref{E:UK}, \eqref{E:UIJ0ANDINI}, \eqref{E:UIJKANDINI}, Lemma \ref{L:RELATION1}
and Theorem \ref{T:INITIALIZATION}.

Next, we observe from $z_j=\frac{1}{\epsilon}\underline{\bar{g}_{j0}}\underline{\bar{v}^0}+\underline{\bar{g}_{ij}}z^i$, \eqref{E:V^0},  \eqref{E:DEFHATG},
\eqref{E:U0MUANDINI}-\eqref{E:UIJANDINI},  \eqref{E:ZIANALYTICINITIALDATA} and Theorem \ref{T:INITIALIZATION} that we can write $z_j|_{t=T_0}$ as
	\begin{equation} \label{zjidata}
	z_j|_{t=T_0}= E^2(T_0)\frac{\breve{\nu}^i\delta_{ij}}{\rho_H(T_0)+\breve{\rho}_0}+\epsilon\mathscr{S} (\epsilon,\smfu^{ik},\smfu^{ik}_0, \breve{\rho}_0, \breve{\nu}^i),
	\end{equation}
where $\mathscr{S} (\epsilon,0,0,0,0)=0$. In addition, we note that
	\begin{equation} \label{delzetaidata}
	\delta\zeta|_{t=T_0}=\frac{1}{1+\epsilon^2 K}\ln{\left(1+\frac{\breve{\rho}_0+\epsilon\breve{\phi}}{\rho_H(T_0)} \right)}=\ln{\left(1+\frac{\breve{\rho}_0}{\rho_H(T_0)}\right)}+\epsilon^2\mathscr{S}( \epsilon,\smfu^{ik},\smfu^{ik}_0, \breve{\rho}_0, \breve{\nu}^i)
	\end{equation}
follows  from \eqref{E:ZETAH2}, \eqref{E:DELRHO} and Theorem \ref{T:INITIALIZATION}, where
$\mathscr{S} (\epsilon,0,0,0,0)=0$. Together, \eqref{zjidata} and \eqref{delzetaidata} imply that the estimate
	\begin{align*}
	\|z_j|_{t=T_0}\|_{H^s}+\|\delta\zeta|_{t=T_0}\|_{H^s}\lesssim \|\smfu^{ij}\|_{H^{s+1}}+\|\smfu^{ij}_0\|_{H^{s}}+\|\breve{\rho}_0\|_{H^s}+\|
	\breve{\nu}^i\|_{H^s}
	\end{align*}
holds uniformly for $\epsilon \in (-\epsilon_0,\epsilon_0)$.
\end{proof}
%---------------- end section 6 --------------------------------------------------------------------------------
\section{Proof of Theorem \ref{T:MAINTHEOREM}}\label{S:MAINPROOF}

\subsection{Transforming the conformal Einstein--Euler equations\label{proof:EEeqn}}
The first step of the proof is to observe that the non-local formulation of the conformal Einstein--Euler equations
given by \eqref{E:REALEQ} can be transformed into the form \eqref{E:MODELEQ3a} analyzed in \S \ref{S:MODEL}
by making the simple change of time coordinate
\al{TIMECHANGE}{
	t\mapsto \hat{t}:=-t
	}
and the substitutions
\begin{gather}
w(\hat{t},x)=\mathbf{U}(-\hat{t},x), \quad A_1^0(\epsilon,-\hat{t},w)=\mathbf{B}^0(\epsilon,-\hat{t},\mathbf{U}),  \quad A_1^i(\epsilon,\hat{t},w) =-\mathbf{B}^i(\epsilon,-\hat{t},\mathbf{U}),  \quad \mathfrak{A}_1(\epsilon,\hat{t},w) =\mathbf{B}(\epsilon,-\hat{t},\mathbf{U}), \label{singdefA} \\  C_1^i=-\mathbf{C}^i, \quad
\mathbb{P}_1=\mathbf{P}, \quad     H_1(\epsilon,\hat{t},w)=-\mathbf{H}(\epsilon,-\hat{t},\mathbf{U}) \AND F_1(\epsilon,\hat{t},x)=-\mathbf{F}(\epsilon,-\hat{t},x,\mathbf{U},\del{k}\Phi,\del{t}\del{k}\Phi,\del{k}\del{l}\Phi). \label{singdefB}
\end{gather}
With these choices, it is clear that the evolution equations \eqref{E:REALEQ} on the spacetime region
$t\in (T_1,1]$, $0< T_1 < 1$, are equivalent to
 \begin{equation*} \label{proofeq1}
  A_1^0\partial_{\hat{t}} w+A_1^i\partial_i w+\frac{1}{\epsilon}C_1^i\partial_i w =\frac{1}{\hat{t}}\mathfrak{A}_1\mathbb{P}_1  w+H_1+F_1
\qquad \text{for $(\hat{t},x)\in[-1,-T_1)\times \Tbb^3$},
  \end{equation*}
which is of the form studied in \S \ref{S:MODELerr}, see \eqref{E:MODELEQ3a}.
Furthermore,  it is not difficult to verify (see \cite[\S 3]{oli5} for details)  that matrices
$\{A^\mu_1, C^i_1, \mathfrak{A}_1,\mathbb{P}_1\}$ and
the source term $H_1$ satisfy the Assumptions \ref{ASS1}.\eqref{A:CONSC}-\eqref{A:PBP} from
\S \ref{S:MODELuni}
for some positive constants $\kappa, \gamma_1, \gamma_2 > 0$.

To see that Assumption \ref{ASS1}.\eqref{A:PDECOMPOSABLE} is also satisfied is more involved. First, we
note that this assumption is equivalent to verifying
$\mathbf{P}^\perp[D_\mathbf{U}\mathbf{B}^0\cdot (\mathbf{B}^0)^{-1}\mathbf{B}\mathbf{P}\mathbf{U}]\mathbf{P}^\perp$
admits an expansion of the type \eqref{E:ADEC}. To see why this is the case, we recall that $\mathbf{B}^0$ and $\mathbf{P}$ are block matrices, see \eqref{E:REALEQa}-\eqref{E:REALEQb}, from which it is clear using
\eqref{E:EINBk}-\eqref{E:EINP2}  that  we can expand $\mathbf{P}^\perp[D_\mathbf{U}\mathbf{B}^0\cdot (\mathbf{B}^0)^{-1}\mathbf{B}\mathbf{P}\mathbf{U} ]\mathbf{P}^\perp$ as
\al{PBPVER}{
	\p{\Pbb_2^\perp[D_\mathbf{U}\tilde{B}^0\cdot \mathbf{W}]\Pbb_2^\perp & 0 & 0 & 0 & 0\\
		0 & \breve{\Pbb}_2^\perp[D_\mathbf{U}\tilde{B}^0\cdot \mathbf{W}]\breve{\Pbb}_2^\perp & 0 & 0 & 0 \\
		0 & 0 & \breve{\Pbb}_2^\perp[D_\mathbf{U}\tilde{B}^0\cdot \mathbf{W}]\breve{\Pbb}_2^\perp & 0 & 0 \\
		0 & 0 & 0 & \hat{\Pbb}_2^\perp[D_\mathbf{U}B^0\cdot \mathbf{W}]\hat{\Pbb}_2^\perp & 0 \\
		0 & 0 & 0 & 0 & 0
		},
	}
where
\als%{WWW}
{	\mathbf{W}:=(\mathbf{B}^0)^{-1}\mathbf{B}\mathbf{P}\mathbf{U}&=\p{(\tilde{B}^0)^{-1} \tilde{\mathfrak{B}}\Pbb_2 & 0 & 0 & 0 & 0\\
	 0 & -2E^2\underline{\bar{g}^{00}}(\tilde{B}^0)^{-1} \breve{\Pbb}_2 & 0 & 0 & 0	\\
	0 & 0 & -2E^2\underline{\bar{g}^{00}}(\tilde{B}^0)^{-1} \breve{\Pbb}_2 &  0 & 0 \\
	0 & 0 & 0 & (B^0)^{-1}\mathfrak{B}\hat{\Pbb}_2 & 0 \\
	0 & 0 & 0 & 0 & 0
    }\mathbf{U}  \nnb \\
    &=
%\p{ \Pbb_2 & 0 & 0 & 0 & 0\\
  %  	0 &   \breve{\Pbb}_2 & 0 & 0 & 0	\\
   % 	0 & 0 &   \breve{\Pbb}_2 &  0 & 0 \\
    %	0 & 0 & 0 & \mathds{1} & 0 \\
    %	0 & 0 & 0 & 0 & 0}
\mathbf{P}
\p{\mathbf{Y}  & 0 & 0 & 0 & 0\\
    	0 & -2\mathds{1}   & 0 & 0 & 0	\\
    	0 & 0 & -2\mathds{1}  &  0 & 0 \\
    	0 & 0 & 0 & (B^0)^{-1}\mathfrak{B}\hat{\Pbb}_2 & 0 \\
    	0 & 0 & 0 & 0 & 0
    }\mathbf{U}
}
with
\begin{equation*}
\mathbf{Y}=\p{1 & 0 & 0 \\
    	0 & \frac{3}{2} \delta^i_j & 0 \\
    	0 & 0 & 1}.
\end{equation*}
Next, by \eqref{E:u.d}, \eqref{E:GAMMA}, \eqref{E:SMALLGAMMA}, \eqref{E:G0MU}, and \eqref{E:EINBk},  we observe
that $\tilde{B}^0$ can be expressed as
\begin{align*}
\tilde{B}^0%(\epsilon,t,\mathbf{U}_1)
=E^2\begin{pmatrix}
\frac{\Lambda}{3}-2\epsilon t u^{00} & 0 & 0\\
0 & (\delta^{ij}+\epsilon u^{ij})E^{-2}\exp{\left(\epsilon\frac{3}{\Lambda}(2tu^{00}-u)\right)} & 0\\
0 & 0 & \frac{\Lambda}{3}-2\epsilon t u^{00}
\end{pmatrix}.
\end{align*}
%Let $\mathbf{B}_1^0=\diag{(\tilde{B}^0,\tilde{B}^0,\tilde{B}^0)}$, $\mathbf{B}_1=\diag{(\mathfrak{\tilde{B}},-2E^2\underline{\bar{g}^{00}}I, -2E^2\underline{\bar{g}^{00}}I)}$ and $\mathbf{P}_1=\diag{(\mathbb{P}_2,\breve{\mathbb{P}}_2,\breve{\mathbb{P}}_2)}$.
Noting from definition \eqref{E:U1} of $\mathbf{U}_1$  that $u^{ij}$ and $u$ are components of  the vector $\mathbf{P}_1^\perp \mathbf{U}_1$, where
\begin{equation*}
\mathbf{P}_1=\diag{(\mathbb{P}_2,\breve{\mathbb{P}}_2,\breve{\mathbb{P}}_2)},
\end{equation*}
it is clear that $\tilde{B}^0$, as
a map, depends only on the the variables $(\epsilon,t\mathbf{U}_1, \mathbf{P}_1^\perp\mathbf{U}_1)$. To make this
explicit, we define the map   $\hat{B}^0 (\epsilon,t\mathbf{U}_1, \mathbf{P}_1^\perp\mathbf{U}_1)
:=\tilde{B}^0(\epsilon,t,\mathbf{U})$. Letting $\mathscr{P}$ denote linear maps that projects out the components $\mathbf{U}_1$ from
$\mathbf{U}$, i.e.
\als%{U1b}
{
	\mathbf{U}_1=\mathscr{P}\mathbf{U},
	}
we can then differentiate $\tilde{B}^0$ with respect to $\mathbf{U}$ in the direction $\mathbf{W}$ to get
\begin{align}\label{E:A10b}
  & D_{\mathbf{U}} \tilde{B}^0\cdot \mathbf{W}%=D_{(\mathbf{U}_1,\mathbf{U}_2, \phi)^T} \tilde{B}^0\cdot \mathbf{W}
  =D_{\mathbf{U} } \hat{B}^0(\epsilon,t\mathbf{U}_1, \mathbf{P}_1^\perp\mathbf{U}_1)\cdot \mathbf{W}
  =\bigl(D_2\hat{B}^0 D_\mathbf{U}(t\mathbf{U}_1)+D_3\hat{B}^0 D_\mathbf{U}( \mathbf{P}_1^\perp\mathbf{U}_1)\bigr)\cdot \mathbf{W} \nnb \\
  & \hspace{3cm}=\bigl(t D_2\hat{B}^0 D_\mathbf{U}(\mathscr{P}\mathbf{U})+D_3\hat{B}^0 D_\mathbf{U}( \mathbf{P}_1^\perp\mathscr{P}\mathbf{U})\bigr)\cdot\mathbf{W}
  \nnb \\
  & \hspace{3cm} = t D_2\hat{B}^0 \mathscr{P} \mathbf{W} +D_3\hat{B}^0 ( \mathbf{P}_1^\perp\mathscr{P}) \mathbf{W} =
  t D_2\hat{B}^0 \mathscr{P}\mathbf{W},
%  =\bigl(D_2\hat{B}^0\cdot t\mathds{1}+D_3\hat{B}^0\cdot \mathbf{P}_1^\perp\bigr)\cdot \mathbf{W}
\end{align}
where in the above calculations, we employed the identities
\gas{
	\mathscr{P}\mathbf{W}=\p{ \Pbb_2 & 0 & 0 & 0 & 0\\
		0 &   \breve{\Pbb}_2 & 0 & 0 & 0	\\
		0 & 0 &   \breve{\Pbb}_2 &  0 & 0  } \p{\mathbf{Y}  & 0 & 0 & 0 & 0\\
		0 & -2\mathds{1}   & 0 & 0 & 0	\\
		0 & 0 & -2\mathds{1}  &  0 & 0 \\
		0 & 0 & 0 & (B^0)^{-1}\mathfrak{B}\hat{\Pbb}_2 & 0 \\
		0 & 0 & 0 & 0 & 0
	}\mathbf{U}=\p{\mathbf{Y} & 0 & 0 \\ 0 & -2\mathds{1} & 0 \\0 & 0 & -2\mathds{1}} \mathbf{P}_1\mathbf{U}_1\\
	\intertext{and}
	\mathbf{P}_1^\perp \mathscr{P} \mathbf{W}=0.
	}
By \eqref{E:EULERB0}, it is not difficult to see that
\begin{align}\label{E:A10c}
  (\hat{\mathbb{P}}_2)^\perp[D_{\mathbf{U}}B^0
\mathbf{W}](\hat{\mathbb{P}}_2)^\perp=\begin{pmatrix}
  D_{\mathbf{U}}1 \cdot
\mathbf{W} & 0\\
0 & 0
\end{pmatrix} = 0,
\end{align}
which in turn, implies via \eqref{E:PBPVER}, \eqref{E:A10b} and \eqref{E:A10c} that
\als{
	&\mathbf{P}^\perp[D_\mathbf{U}\mathbf{B}^0\cdot (\mathbf{B}^0)^{-1}\mathbf{B}\mathbf{P}\mathbf{U} ]\mathbf{P}^\perp \nnb \\
	& \hspace{2cm} = t \diag{\bigl(  \Pbb_2^\perp D_2\hat{B}^0 \mathscr{P}\mathbf{W}\Pbb_2^\perp,  \breve{\Pbb}_2^\perp D_2\hat{B}^0 \mathscr{P} \mathbf{W}\breve{\Pbb}_2^\perp, \breve{\Pbb}_2^\perp D_2\hat{B}^0 \mathscr{P} \mathbf{W}\breve{\Pbb}_2^\perp,0,0\bigr)}.
	}
From this it is then clear that $\mathbf{P}^\perp[D_\mathbf{U}\mathbf{B}^0\cdot (\mathbf{B}^0)^{-1}\mathbf{B}\mathbf{P}\mathbf{U} ]\mathbf{P}^\perp $ satisfies Assumption  \ref{ASS1}.\eqref{A:PDECOMPOSABLE}.

\subsection{Limit equations\label{proof:limiteq}}
Setting
\begin{equation} \label{Uringdef}
\mathring{\mathbf{U}}=(\mathring{u}^{0\mu}_0, \mathring{w}^{0\mu}_k, \mathring{u}^{0\mu}, \mathring{u}^{ij}_0, \mathring{u}^{ij}_k, \mathring{u}^{ij}, \mathring{u}_0, \mathring{u}_k, \mathring{u}, \delta\mathring{\zeta}, \mathring{z}_i, \mathring{\phi})^\textrm{T},
\end{equation}
the limit equation, see \S \ref{S:MODELerr}, associated to \eqref{E:REALEQ}  on the spacetime region $(T_2,1]\times \Tbb^3$,
$0<T_2 < 1$, is given by
 \begin{align}
 \mathring{\mathbf{B}}^0\partial_t \mathring{\mathbf{U}}+\mathring{\mathbf{B}}^i\partial_i \mathring{\mathbf{U}} +\mathbf{C}^i\partial_i
 \mathbf{V}= & \frac{1}{t}
 \mathring{\mathbf{B}} \mathbf{P}
 \mathring{\mathbf{U}} + \mathring{\mathbf{H}} + \mathring{\mathbf{F}} \label{E:REALLIMITINGEQUATIONa} &&
\text{in $(T_2,1]\times \Tbb^3$},\\
 \mathbf{C}^i\partial_i\mathring{\mathbf{U}} = & 0   \label{E:REALLIMITINGEQUATIONb} &&
\text{in $(T_2,1]\times \Tbb^3$},
\end{align}
where
 \al{BLIM}{
 	\mathbf{\mathring{B}}{}^\mu(t,\mathbf{\mathring{U}}):=\lim_{\epsilon\searrow 0}\mathbf{B}^{\mu}(\epsilon,t,\mathbf{\mathring{U}}), \quad \mathbf{\mathring{B}}(t,\mathbf{\mathring{U}}):=\lim_{\epsilon\searrow 0}\mathbf{B}(\epsilon,t,\mathbf{\mathring{U}}), \quad  \mathbf{\mathring{H}}(t,\mathbf{\mathring{U}}):=\lim_{\epsilon\searrow 0}\mathbf{H}(\epsilon,t,\mathbf{\mathring{U}}),
 	}
 and
 \gat{
	%\mathring{\mathbf{B}}^i=\diag{\left(0,0,0, \sqrt{\frac{3}{\Lambda}}\p{\mathring{z}^k & \mathring{E}^{-2} \delta^{km} \\ \mathring{E}^{-2} \delta^{kl}  & K^{-1}\mathring{E}^{-2}\delta^{lm}\mathring{z}^k},0\right)}\label{E:BH1}\\
	\mathbf{\mathring{F}} := \biggl(-\frac{\mathring{\Omega}}{t}\mathcal{D}^{0\mu j}\del{j}\mathring{\Phi} , \frac{3}{2t}\delta^\mu_0\mathring{E}^{-2}\delta^{kl}\del{l}\mathring{\Phi}-\mathring{E}^{-2}\delta^{kl}\delta^\mu_0\del{0}\del{l}\mathring{\Phi}, 0,  -\frac{\mathring{\Omega}}{t}\mathcal{\tilde{D}}^{ijr}\del{r}\mathring{\Phi},  \nnb \\
	 0, 0,   -\frac{\mathring{\Omega}}{t}\mathcal{D}^{j}\del{j}\mathring{\Phi},0,0,0,-K^{-1}\frac{1}{2}\left(\frac{3}{\Lambda}\right)^{\frac{3}{2}}\mathring{E}^{-2}\delta^{lk}\del{k}\mathring{\Phi}, 0 \biggr)^\textrm{T}.\label{E:BH2}
}
In $\mathbf{\mathring{F}}$,  the coefficients $\mathcal{D}^{0\mu j}$ and $\tilde{\mathcal{D}}{}^{ijr}$ are as defined by \eqref{E:D1} and \eqref{E:D4},
$\mathring{\Phi}$ is the Newtonian potential, see \eqref{CPeqn3}, and $\mathring{E}$ and $\mathring{\Omega}$ are defined by
\eqref{Eringform} and \eqref{Oringdef}, respectively.

We then observe that under the change of time coordinate \eqref{E:TIMECHANGE} and the substitutions
\begin{gather}
\mrw(\hat{t},x)=\mathbf{\mathring{U}}(-\hat{t},x), \quad \mathring{A}_1^0( \hat{t},w)=\mathbf{\mathring{B}}{}^0(-\hat{t},\mathbf{\mathring{U}}), \quad \mathring{A}_1^i( \hat{t},w)=-\mathbf{\mathring{B}}{}^i(-\hat{t},\mathbf{\mathring{U}}),
\quad  \mathfrak{\mathring{A}}_1( \hat{t},w) =\mathbf{\mathring{B}}(-\hat{t},\mathbf{\mathring{U}}), \quad  C_1^i=-\mathbf{C}^i,
\label{singdefc} \\
 v(\hat{t},x)=\mathbf{V}(-\hat{t},x), \quad \mathbb{P}_1=\mathbf{P}, \quad \mathring{H}_1( \hat{t},w) =-\mathring{\mathbf{H}}(-\hat{t},\mathbf{\mathring{U}}) \quad \text{and} \quad \mathring{F}_1(\hat{t},x) =-\mathbf{\mathring{F}}(-\hat{t},x), \label{singdefd}
\end{gather}
the limit equation \eqref{E:REALLIMITINGEQUATIONa}-\eqref{E:REALLIMITINGEQUATIONb} transforms into
\begin{align*}
  \mathring{A}_1^0\partial_{\hat{t}} \mathring{w}+\mathring{A}_1^i\partial_i \mathring{w}&=\frac{1}{\hat{t}}\mathring{\mathfrak{A}}_1\mathbb{P}_1\mathring{w}-C_1^i\partial_i v+\mathring{H}_1+\mathring{F}_1
&& \mbox{in} \quad[-1, -T_2)\times\mathbb{T}^3, \\
  C_1^i\partial_i\mathring{w}&=0 && \mbox{in} \quad[-1, -T_2)\times\mathbb{T}^3,
\end{align*}
which is of the form analyzed in \S \ref{S:MODELerr}, see \eqref{E:LIMITINGEQa}-\eqref{E:LIMITINGEQb} and \eqref{E:HRIN}.
It is also not difficult to verify the matrices $\mathring{A}_1^i$ and the source term $\mathring{H}_1$
satisfy the Assumptions \ref{ASS3}.(2) from \S \ref{S:MODELerr}.

\subsection{Local existence and continuation\label{proof:loccont}}
For fixed $\epsilon \in (0,\epsilon_0)$,  we know from Proposition \ref{rcEEexist} that for $T_1 \in (0,1)$ chosen close enough to $1$ there exists a unique solution
\begin{equation*}
\mathbf{U} \in \bigcap_{\ell=0}^1 C^\ell\bigl( (T_1,1],H^{s-\ell}(\Tbb^3,\mathbb{V})\bigr)
\end{equation*}
to \eqref{E:REALEQ} satisfying the initial condition
\begin{equation*}
 \mathbf{U}|_{t=1}=
\bigl(u^{0\mu}_0|_{t=1}, w^{0\mu}_k|_{t=1}, u^{0\mu}|_{t=1}, u^{ij}_0|_{t=1}, u^{ij}_k|_{t=1}, u^{ij}|_{t=1}, u_0|_{t=1}, u_k|_{t=1}, u|_{t=1},\delta\zeta|_{t=1}, z _i|_{t=1},\phi|_{t=1}\bigr)^{\textrm{T}},
\end{equation*}
where the initial data, $u^{0\mu}_0|_{t=1}$, $w^{0\mu}_k|_{t=1}$, $\ldots$, is determined from Lemma \ref{L:INITIALTRANSFER}.
Moreover, this solution can be continued beyond $T_1$ provided that
\begin{equation*}
\sup_{ t\in (T_1,1]} \norm{\mathbf{U}(t)}_{H^s} < \infty.
\end{equation*}

Next, by  Proposition \ref{PEexist},  there exists, for some $T_2 \in (0,1]$, a unique solution
\begin{equation} \label{PEproofsol}
(\mathring{\zeta},\mathring{z}^i,\mathring{\Phi})\in \bigcap_{\ell=0}^1 C^{\ell}((T_1,T_0],H^{s-\ell}(\mathbb{T}^3))
\times \bigcap_{\ell=0}^1  C^{\ell}((T_1,T_0],H^{s-\ell}(\mathbb{T}^3,\mathbb{R}^3)) \times \bigcap_{\ell=0}^1  C^{\ell}((T_1,T_0],H^{s+2-\ell}(\mathbb{T}^3)),
\end{equation}
to the conformal cosmological Poisson-Euler equations, given by \eqref{CPeqn1}-\eqref{CPeqn3}, satisfying
the initial condition
\begin{equation} \label{PEproofsolid}
(\mathring{\zeta},\mathring{z}_i)|_{t=1} =  \biggl(\ln\bigl(\rho_H(1)+\breve{\rho}_0\bigr),\frac{
\breve{\nu}^i\delta_{ij}}{\rho_H(1)+\breve{\rho}_0}\biggr).
\end{equation}
Setting
\begin{align}\label{E:V}
 \mathbf{V}=\left(V^{0 \mu}_0, V^{0\mu}_k, V^{0 \mu}, 0, V^{ij}_k, 0, 0, V_k, 0,0,0,0\right),
 \end{align}
 where
 \begin{gather}
 V^{0 \mu}_0=- \mathring{E}^2\frac{3}{2t}\delta^\mu_0 \mathring{\Phi}
 +\mathring{E}^2\delta^\mu_0 \partial_t\mathring{\Phi}
 =- \frac{1}{2t} \mathring{E}^2\delta^\mu_0\mathring{\Phi}+\delta^\mu_0t \mathring{E}^2\partial_t\biggl(
 \frac{\mathring{\Phi}}{t}\biggr),\label{E:V1}\\
V^{0\mu}_k=\frac{\mathring{\Omega}}{t} \mathcal{D}^{0\mu j} \Delta^{-1} \partial_k \partial_j\mathring{\Phi}+2 \mathring{E}^2\sqrt{\frac{\Lambda}{3}}\frac{1}{t^2}\Delta^{-1}\partial_k (\mathring{\rho}\mathring{z}^j)\delta_j^\mu,  \label{E:V2}\\
 V^{0 \mu}
 =\frac{1}{2}\delta^\mu_0 \mathring{E}^2 \frac{\mathring{\Phi}}{t}+ \delta^\mu_0 \frac{\Lambda}{3 t^3} \mathring{E}^4\mathring{\Omega} \Delta^{-1}\delta \mathring{\rho},   \label{E:V3}\\
 V^{ij}_k=\frac{\mathring{\Omega}}{t}   \mathcal{\tilde{D}}^{ij r}    \Delta^{-1} \partial_k \partial_r\mathring{\Phi}, \label{E:V4}
 \intertext{and}
 V_k=\frac{\mathring{\Omega}}{t}\mathcal{D}^j\Delta^{-1} \partial_k \partial_j\mathring{\Phi}, \label{E:V5}
 \end{gather}
it follows from Corollary \ref{PEcor} and \eqref{PEproofsol} that $\mathbf{V}$ is well-defined and lies in the
space
\begin{equation*}
\mathbf{V} \in \bigcap_{\ell=0}^1 C^\ell\bigl( (T_2,1],H^{s-\ell}(\Tbb^3,\mathbb{V})\bigr).
\end{equation*}
Defining
 \begin{align}\label{E:URINGVALUE}
 \mathring{\mathbf{U}}=(0,0,0,0,0,0,0,0,0,\delta\mathring{\zeta},\mathring{z}_i,0),
 \end{align}
where we recall, see \eqref{deltazetaringdef}, \eqref{deltazetaringH} and Theorem \ref{T:MAINTHEOREM}.(ii), that
 \begin{align}\label{LIMITRHOZ}
\delta\mathring{\zeta} = \mathring{\zeta}-\mathring{\zeta}_H \AND  \mathring{z}^i=\mathring{E}^{-2}\delta^{ij} \mathring{z}_j,
 \end{align}
we see from Remark \ref{PEcorrem},  \eqref{Uringdef} and \eqref{PEproofsol}-\eqref{PEproofsolid} that  $\mathring{\mathbf{U}}$ lies in the space
\begin{equation*}
  \mathring{\mathbf{U}} \in \bigcap_{\ell=0}^1 C^\ell\bigl( (T_2,1],H^{s-\ell}(\Tbb^3,\mathbb{V})\bigr)
\end{equation*}
and satisfies
\begin{align*}%\label{E:URINGa}
 \mathring{\mathbf{U}}|_{t=1} = \biggl(0,0,0,0,0,0,0,0,0,\ln\biggl(1+\frac{\breve{\rho}_0}{\rho_H(1)}\biggr),\frac{
\breve{\nu}^i\delta_{ij}}{\rho_H(1)+\breve{\rho}_0},0\biggr)^{\textrm{T}}.
\end{align*}
It can be verified by a direct calculation that the pair $(\mathbf{V}, \mathring{\mathbf{U}})$ determines
a solution of the limit equation  \eqref{E:REALLIMITINGEQUATIONa}-\eqref{E:REALLIMITINGEQUATIONb}. Moreover,
by Proposition  \ref{PEexist}, it is clear that this solution can be continued past $T_2$ provided that
\begin{equation*}
\sup_{ t\in (T_2,1]} \norm{\mathring{\textbf{U}}(t)}_{H^s} < \infty.
\end{equation*}

\subsection{Global existence and error estimates}
For the last step of the proof, we will use the a priori estimates from Theorem \ref{T:MAINMODELTHEOREM}
to show that the solutions $\mathbf{U}$ and $(\mathbf{V},\mathring{\mathbf{U}})$ to the reduced conformal Einstein--Euler equations and the corresponding limit equation, respectively, can be continued all the way to $t=0$, i.e. $T_1=T_2=0$, with uniform bounds and error estimates. In order to apply Theorem \ref{T:MAINMODELTHEOREM}, we need to verify that the estimates
\eqref{E:VASS}-\eqref{HFLip} hold for the solutions  $\mathbf{U}$ and $(\mathbf{V},\mathring{\mathbf{U}})$. We begin by observing
the equation
\begin{equation} \label{PEdeltarpho}
\del{t}\delta\mathring{\rho} =- \sqrt{\frac{3}{\Lambda}}\del{j} \bigl(\mathring{\rho}\mathring{z}{}^j\bigr)
+ \frac{3(1-\mathring{\Omega})}{t}\delta \mathring{\rho}
\end{equation}
holds in $(T_2,1]\times \Tbb^3$
by \eqref{E:COSEULERPOISSONEQ.a}, \eqref{PEcor0} and the equivalence of the two formulations
 \eqref{E:COSEULERPOISSONEQ.a}- \eqref{E:COSEULERPOISSONEQ.c} and \eqref{CPeqn1}-\eqref{CPeqn3} of
the conformal Poisson-Euler equations. From this equation, \eqref{Oringdef} and the calculus inequalities from Appendix  \ref{A:INEQUALITIES},
we obtain the estimate
\begin{equation}\label{E:PTRHOTEST}
  \left\|\partial_t\left(\frac{\delta\mathring{\rho}}{t^3}\right)\right\|_{H^{s-1}} \leq
C\bigl(\|\delta\mathring{\zeta}\|_{\Li((t,1],H^s)}, \|\mathring{z}_i\|_{\Li((t,1],H^s)}\bigr) (\|\delta\mathring{\zeta}(t)\|_{H^s}+\|\mathring{z}_i(t)\|_{H^s}),  \quad T_2 < t \leq 1
\end{equation}
Recalling that we can write the Newtonian potential as
\begin{align}\label{E:PHIRING}
  \mathring{\Phi}=\frac{\Lambda}{3}\frac{1}{t^2}\mathring{E}^2\Delta^{-1}\delta\mathring{\rho}= \frac{\Lambda}{3} t \mathring{E}^2 e^{\mathring{\zeta}_H} \Delta^{-1} (e^{\delta\mathring{\zeta}}-1) \quad \text{in $(T_2,1]\times \Tbb^3$}
\end{align}
by \eqref{zetaHringform}, \eqref{PEexist1} and Corollary \ref{PEcor} we see, using the calculus inequalities from Appendix \ref{A:INEQUALITIES}
and invertibility of the Laplacian $\Delta$ $:$ $\bar{H}^{k+1}(\Tbb^3)$ $\rightarrow$ $\bar{H}^{k-1}(\Tbb^3)$, $k\in \mathbb{Z}_{\geq 1}$,  that we can estimate $\frac{1}{t}\mathring{\Phi}$ by
\al{PHIRINEST}{
	\left\|\frac{1}{t}\mathring{\Phi}(t)\right\|_{H^{s+1}} \leq
C\bigl(\|\delta\mathring{\zeta}\|_{\Li((t,1],H^{s-1})}\bigr)
\|\delta\mathring{\zeta}(t)\|_{\Hs}, \quad T_2 < t \leq 1.
	}
Dividing \eqref{E:PHIRING} by $t$ and then differentiating with respect to $t$, we find
using \eqref{PEdeltarpho} that
\begin{equation}\label{E:PTPHI}
  \partial_t \left( \frac{\mathring{\Phi}}{t}\right)+\frac{1}{3}\frac{\Lambda}{t^4}\mathring{E}^2\mathring{\Omega} \Delta^{-1}\delta\mathring{\rho} +\sqrt{\frac{\Lambda}{3}} \frac{1}{t^3}\mathring{E}^2  \partial_k\Delta^{-1}
  \left( \mathring{\rho} \mathring{z}^k\right)=0,
\end{equation}
which, by \eqref{E:PHIRING}, is also equivalent to
\begin{equation}  \label{E:PTPHIRING}
  \partial_t \mathring{\Phi}=\frac{\Lambda}{3}\frac{1}{t^3}\mathring{E}^2(1-\mathring{\Omega})\Delta^{-1} \delta\mathring{\rho}-\sqrt{\frac{\Lambda}{3}}\frac{1}{t^2} \mathring{E}^2 \partial_k \Delta^{-1} \left(\mathring{\rho}\mathring{z}^k\right).
\end{equation}
From \eqref{E:PTPHI} and  \eqref{E:PTPHIRING}, we then obtain, with the help
of the calculus inequalities and the invertibility of the Laplacian, the estimate
\al{DTPHITEST}{
	\|\del{t}\mathring{\Phi}\|_{H^{s+1}}+\left\|t\del{t}\left(\frac{\mathring{\Phi}}{t}\right)\right\|_{H^{s+1}} \leq
	C(\|\delta\mathring{\zeta}\|_{\Li((t,1],H^s)}, \|\mathring{z}_i\|_{\Li((t,1]),H^s)}) (\|\delta\mathring{\zeta}(t)\|_{H^s}+\|\mathring{z}_i(t)\|_{H^s}), \quad T_2 < t \leq 1.
	}
Continuing on, we differentiate \eqref{E:PTPHIRING} with respect to $t$ to get
\begin{equation} \label{E:PPTPHIRING}
  \partial^2_t\mathring{\Phi}=  \frac{\Lambda}{3}\frac{1}{t^4}\mathring{\Omega}\left(\frac{5}{2} \mathring{\Omega}-4\right) \mathring{E}^2 \Delta^{-1}\delta\mathring{\rho} - \sqrt{\frac{\Lambda}{3}}\frac{1}{t^2} \mathring{E}^2 \Delta^{-1}\partial_k \partial_t \left(\mathring{\rho}\mathring{z}^k \right)+\sqrt{\frac{\Lambda}{3}}\frac{1}{t^3} \mathring{E}^2 (1-\mathring{\Omega})\Delta^{-1}\partial_k \left(\mathring{\rho}\mathring{z}^k\right),
\end{equation}
where in deriving this we have used the fact that $\mathring{\Omega}$ satisfies  \eqref{E:RICCATI1} with $\epsilon=0$ and
that $ \Delta^{-1}\delta\mathring{\rho} $ is well defined by Corollary \ref{PEcor}.  Adding the conformal cosmological Poisson-Euler equations \eqref{E:COSEULERPOISSONEQ.a}-\eqref{E:COSEULERPOISSONEQ.b} together, we obtain the following
equation for $\mathring{\rho}\mathring{z}^j$:
    \begin{align*}%\label{E:PTRHOZ}
 	  \partial_t\left(\mathring{\rho}\mathring{z}^j\right)+\sqrt{\frac{3}{\Lambda}}K \partial^j \mathring{\rho}+\sqrt{\frac{3}{\Lambda}}\partial_i\left( \mathring{\rho}\mathring{z}^i\mathring{z}^j\right)=\frac{4-3\mathring{\Omega}}{t}\mathring{\rho} \mathring{z}^j -\frac{1}{2}\left(\frac{3}{\Lambda}\right)^{\frac{3}{2}}\mathring{\rho} \partial^j \mathring{\Phi}.
 	\end{align*}
Substituting this into \eqref{E:PPTPHIRING} yields the estimate
\al{DT2PHIEST}{
	\|\partial^2_t\mathring{\Phi}\|_{H^s} \leq 	C(\|\delta\mathring{\zeta}\|_{\Li((t,1],H^s)}, \|\mathring{z}_i\|_{\Li((t,1]),H^s)}) (\|\delta\mathring{\zeta}(t)\|_{H^s}+\|\mathring{z}_i(t)\|_{H^s}), \quad T_2 < t \leq 1,
	}
by \eqref{E:PHIRINEST},  the invertibility of the Laplacian $\Delta$ $:$ $\bar{H}^{k+1}(\Tbb^3)$ $\rightarrow$ $\bar{H}^{k-1}(\Tbb^3)$, $k\in \mathbb{Z}_{\geq 1}$,  and the calculus inequalities from Appendix \ref{A:INEQUALITIES}.
Next, from the definition of $\mathbf{P}$, see \eqref{E:REALEQb},  and \eqref{E:PTPHI}, we compute
\begin{equation} \label{PVt}
\frac{1}{t}\mathbf{P}\mathbf{V}=\left(\frac{1}{2t}(V^{0 \mu}_0+V^{0 \mu}), \frac{1}{t}V^{0\mu}_i, \frac{1}{2t}(V^{0 \mu}_0+V^{0 \mu}), 0, 0, 0, 0, 0, 0,0,0,0\right)^\mathrm{T},
\end{equation}
where the components are given by
\begin{align}
  \frac{1}{2t} (V^{0 \mu}_0+V^{0 \mu})=&\frac{1}{2}\delta^\mu_0 \mathring{E}^2 \partial_t\left(\frac{\mathring{\Phi}}{t}\right)
  +\frac{1}{2t}\delta^\mu_0 \frac{1}{3}\frac{\Lambda}{t^3}\mathring{E}^4\mathring{\Omega} \Delta^{-1}\delta \mathring{\rho} = -\delta^\mu_0 \frac{1}{2} \sqrt{\frac{\Lambda}{3}} \frac{1}{t^3}\mathring{E}^4  \partial_k\Delta^{-1}
  \left( \mathring{\rho} \mathring{z}^k\right), \label{E:TIV1}\\
\frac{1}{t}V_k^{0 \mu}=& \frac{\mathring{\Omega}}{t^2} \mathcal{D}^{0\mu j} \Delta^{-1} \partial_k \partial_j\mathring{\Phi}+2E^2\sqrt{\frac{\Lambda}{3}}\frac{1}{t^3}\Delta^{-1}\partial_k (\mathring{\rho}\mathring{z}^j)\delta_j^\mu \label{E:TIV2}.
\end{align}
Routine calculations also
show that the components of $\del{t} \mathbf{V}$ are given by
\begin{align}
  \partial_t V^{0\mu}_0= &  \mathring{E}^2\delta^\mu_0\left(2\mathring{\Omega}-\frac{3}{2}\right)\partial_t \left(\frac{\mathring{\Phi}}{t}\right)+ \mathring{E}^2 \delta^\mu_0\partial^2_t \mathring{\Phi} - \frac{\mathring{\Omega}}{t^2}  \mathring{E}^2 \delta^\mu_0\mathring{\Phi}, \label{E:DTV1}  \\
  \partial_t V^{0\mu}= & \delta^\mu_0 \mathring{E}^2\frac{\mathring{\Omega}}{t}\frac{\mathring{\Phi}}{t}+\frac{1}{2} \delta^\mu_0 \mathring{E}^2 \partial_t\left(\frac{\mathring{\Phi}}{t}\right)+\delta^\mu_0\frac{\Lambda}{3}\mathring{E}^4 \left(4\frac{\mathring{\Omega}^2}{t}+ \partial_t\mathring{\Omega}\right) \Delta^{-1}\frac{\delta\mathring{\rho}}{t^3}+ \delta^\mu_0 \frac{\Lambda}{3} \mathring{E}^4 \mathring{\Omega}\Delta^{-1}\partial_t\left(\frac{\delta\mathring{\rho}}{t^3}\right), \label{E:DTV2} \\
  \partial_t V^{ij}_k=& \partial_t\left(\frac{\mathring{\Omega}}{t}\right)\tilde{\mathcal{D}}^{ijr} \Delta^{-1}\partial_k\partial_r \mathring{\Phi}+ \frac{\mathring{\Omega}}{t} (\partial_t \tilde{\mathcal{D}}^{ijr}) \Delta^{-1}\partial_k\partial_r \mathring{\Phi}+ \frac{\mathring{\Omega}}{t}\tilde{\mathcal{D}}^{ijr} \Delta^{-1}\partial_k\partial_r \left(\partial_t \mathring{\Phi}\right), \label{E:DTV3}\\
  \partial_t V_k=& \partial_t\left(\frac{\mathring{\Omega}}{t}\right) \mathcal{D}^{j} \Delta^{-1}\partial_k\partial_j \mathring{\Phi}+ \frac{\mathring{\Omega}}{t}(\partial_t \mathcal{D}^{j}) \Delta^{-1}\partial_k\partial_j \mathring{\Phi}+ \frac{\mathring{\Omega}}{t} \mathcal{D}^{j} \Delta^{-1}\partial_k\partial_j \left(\partial_t \mathring{\Phi}\right)\label{E:DTV4}\\
\intertext{and}
  \partial_t V^{0\mu}_k=   & \partial_t\left(\frac{\mathring{\Omega}}{t}\right) \mathcal{D}^{0\mu j} \Delta^{-1}\partial_k\partial_j \mathring{\Phi}+ \frac{\mathring{\Omega}}{t}(\partial_t \mathcal{D}^{0\mu j}) \Delta^{-1}\partial_k\partial_j \mathring{\Phi}+ \frac{\mathring{\Omega}}{t} \mathcal{D}^{0\mu j} \Delta^{-1}\partial_k\partial_j \left(\partial_t \mathring{\Phi}\right) \nnb\\
      &+2 E^2\sqrt{\frac{\Lambda}{3}}\frac{1}{t^2}\delta^\mu_j \left[\frac{2\Omega}{t} \Delta^{-1} \partial_k(\mathring{\rho} \mathring{z}^j)-\frac{2}{t} \Delta^{-1}\partial_k (\mathring{\rho} \mathring{z}^j)+\Delta^{-1}\partial_k \partial_t (\mathring{\rho} \mathring{z}^j)\right] \label{E:DTV5}.
  \end{align}
Recalling the the coefficients $\mathcal{D}^{0\mu\nu}$, $\tilde{\mathcal{D}}{}^{ijk}$ and $\mathcal{D}^j$ are remainder terms as defined in \S  \ref{remainder}, it is then clear that the estimate
\al{NORMVEST}{
	\|\mathbf{V}(t)\|_{H^{s+1}}+\|t^{-1}\mathbf{P}\mathbf{V}(t)\|_{H^{s+1}}+\|\del{t}\mathbf{V}(t)\|_{H^s}\leq 	C(\|\delta\mathring{\zeta}\|_{\Li((t,1],H^s)}, \|\mathring{z}_i\|_{\Li((t,1),H^s)}) (\|\delta\mathring{\zeta}(t)\|_{H^s}+\|\mathring{z}_i(t)\|_{H^s}),
	}
which holds for $ T_2 < t \leq 1$,
follows from the formulas \eqref{Eringform}, \eqref{Oringdef}, \eqref{E:V}-\eqref{E:V5}  and \eqref{PVt}-\eqref{E:DTV5}, the estimates \eqref{E:PTRHOTEST}, \eqref{E:PHIRINEST}, \eqref{E:DTPHITEST}, \eqref{E:DT2PHIEST}, the calculus inequalities and the invertibility of the Laplacian. By similar reasoning, it is also not difficult to
verify that $\mathring{\mathbf{F}}$, defined by \eqref{E:BH2}, satisfies the estimate
\begin{equation}\label{E:TDTFCHE}
	 \|\mathring{\mathbf{F}}(t)\|_{H^s}+\|t\del{t}\mathring{\mathbf{F}}(t)\|_{\Hs}
%\nnb \\ \leq &C(\|\delta\mathring{\zeta}\|_{\Li([T_0,T_1),H^s)}, \|\mathring{z}_i\|_{\Li([T_0,T_1),H^s)})\bigl( \|t^{-1}\mathring{\Phi}\|_{H^{s+1}}+\|
%	\delta \mathring{\zeta}\|_{H^s}+\|\mathring{z}_j\|_{H^s}+\|\del{t}\del{l} \mathring{\Phi}\|_{H^s}+\|\partial_t^2\mathring{\Phi}\|_{H^s} \bigr) \nnb \\
	\leq  C(\|\delta\mathring{\zeta}\|_{\Li((t,1],H^s)}, \|\mathring{z}_i\|_{\Li((t,1],H^s)}) (\|\delta\mathring{\zeta}(t)\|_{H^s}+\|\mathring{z}_i(t)\|_{H^s}), \quad T_2 < t \leq 1.
\end{equation}

From the definition of $\mathbf{F}$, see \eqref{E:REALEQc}, along with \eqref{E:S1}, \eqref{E:S2}, \eqref{E:S3}, \eqref{E:S2a}
and \eqref{Ycaltildef},  the definitions \eqref{E:WPHI} and \eqref{E:U1}-\eqref{E:REALVAR1}, and the calculus inequalities, we see
that $\mathbf{F}$ can be estimated as
\begin{equation} \label{bfFest1}
\|\mathbf{F}(t)\|_{H^s}\leq  C\bigl(\|\mathbf{U}\|_{\Li((t,1],H^s)}, \|\del{k}\Phi\|_{\Li((t,1],H^s)}\bigr)(\|\mathbf{U}(t)\|_{H^s}+\|\del{l}\del{k}\Phi(t)\|_{H^s}+\|\del{t}\del{k}\Phi(t)\|_{H^s}), \quad T_1 < t < 1.\\
%\leq & C(\|\mathbf{U}\|_{\Li([-1,-T_1),H^s)}, \|\delta\zeta\|_{\Li([-1,-T_1),\Hs)})(\|\mathbf{U}\|_{H^s}+\|\delta\zeta\|_{\Hs})\\
%\leq & C(\|\mathbf{U}\|_{\Li([-1,t),H^s)})\|\mathbf{U}\|_{H^s}.
\end{equation}
Appealing again to the invertibility of the map $\Delta$ $:$ $\bar{H}^{k+1}(\Tbb^3)$ $\rightarrow$ $\bar{H}^{k-1}(\Tbb^3)$, $k\in \mathbb{Z}_{\geq 1}$, it follows from \eqref{E:DEFOFPHI} and the calculus inequalities that we can estimate the spatial derivatives  of $\Phi$ as follows:
\begin{equation} \label{Phiestimate}
 \|\del{k}\Phi(t)\|_{H^s} + \|\del{l}\del{k}\Phi(t)\|_{H^s} \leq  C\bigl(\|\mathbf{U}\|_{\Li((t,1],H^s)}\bigr)\|\mathbf{U}(t)\|_{H^s}
\end{equation}
for $T_1 < t < 1$. Using \eqref{E:PTZETAH} and \eqref{Phicont}, we see that
$\del{t}\del{k}\Phi$ satisfies
\begin{equation*}
\del{t}\del{k}\Phi=\frac{\Lambda}{3} E^2 e^{\zeta_H}(1-\Omega) \del{k} \Delta^{-1}\Pi e^{\delta\zeta}+ \frac{\Lambda}{3}E^2t e^{\zeta_H}\del{k}\Delta^{-1}\biggl(e^{\delta\zeta}\del{t}\delta \zeta\biggr).
\end{equation*}
Replacing $\del{t}\delta\zeta$ in the above equation with the right hand side of \eqref{dtzeta}, we see, with the
help of the calculus properties and the invertibility of the Laplacian that
 $\del{t}\del{k}\Phi$ can be estimated by
\al{DTDSPHI}{
\|\del{t}\del{k}\Phi\|_{H^s}\leq C(\|\delta\zeta\|_{\Li((t,1],H^s)})(\|\del{t}\delta\zeta\|_{\Hs}+\|\delta\zeta\|_{\Hs})%\leq C(\|\delta\zeta\|_{\Li([-1,-T_1),H^s)})(\|\del{t}\delta\zeta\|_{\Hs}+\|\delta\zeta\|_{\Hs})
\leq C(\|\mathbf{U}\|_{\Li((t,1],H^s)})\|\mathbf{U}\|_{H^s}
}
for $T_1 < t < 1$. Combining the estimates
\eqref{bfFest1}-\eqref{E:DTDSPHI} gives
\begin{equation} \label{bfFest2}
\|\mathbf{F}(t)\|_{H^s}\leq  C\bigl(\|\mathbf{U}\|_{\Li((t,1],H^s)}\bigr)\|\mathbf{U}(t)\|_{H^s}, \quad T_1 < t < 1.\\
\end{equation}
Together, \eqref{E:NORMVEST}, \eqref{E:TDTFCHE} and \eqref{bfFest2} show that source terms
$\{F_1,\mathring{F}_1,v\}$, as defined by \eqref{singdefB} and \eqref{singdefd},  satisfy the
estimates \eqref{E:VASS}-\eqref{F1est}
from Theorem \ref{T:MAINMODELTHEOREM} for times $-1 \leq  \hat{t} < -T_3$, where
\begin{equation*}
T_3 = \max\{T_1,T_2\}.
\end{equation*}

This leaves us to verify the Lipchitz estimates \eqref{AiLip}-\eqref{HFLip}. We begin by noticing,
with the help of \eqref{E:EINBk}, \eqref{E:BkREMAINDER} and \eqref{E:URINGVALUE}, that
\begin{align*}
\tilde{B}{}^i (\epsilon,t,\mathring{\mathbf{U}})&= 0,\\
B^i(\epsilon,t,\mathring{\mathbf{U}}) &=\sqrt{\frac{3}{\Lambda}}\p{\mathring{z}^i & E^{-2} \delta^{im} \\ E^{-2}\delta^{il} & K^{-1} E^{-2} \delta^{lm}\mathring{z}^i
}+\epsilon^2\mathcal{S}^i(\epsilon,t,\mathring{\mathbf{U}})
\intertext{and}
B^i(0,t,\mathring{\mathbf{U}})&=\sqrt{\frac{3}{\Lambda}}\p{\mathring{z}^i & \mathring{E}^{-2} \delta^{im} \\ \mathring{E}^{-2}\delta^{il} & K^{-1} \mathring{E}^{-2} \delta^{lm}\mathring{z}^i }.
\end{align*}
%where $\mathcal{S}^i(\epsilon,t,\mathring{\mathbf{U}})$ is the remainder term from \eqref{E:BkREMAINDER}.
%\co{This $\mathcal{S}^i(\epsilon,t,\mathring{\mathbf{U}})$ may be $0$ (Need more detailed structure in Section 2), but it is not influence our proof here, so I leave it there. }
From the above expressions, \eqref{E:EOEXP} and the calculus inequalities, we then obtain the estimate
\als%{BLIP}
{
\|\mathbf{B}^i(\epsilon,t,\mathring{\mathbf{U}}) -\mathring{\mathbf{B}}^i(t,\mathring{\mathbf{U}})\|_{\Hs}
%\lesssim & \|E^{-2}-\mathring{E}^{-2}\|_{\Hs}+\|(E^{-2}-\mathring{E}^{-2}) \mathring{z}^i\|_{\Hs}+\epsilon^2\|\mathring{\mathbf{U}}\|_{\Hs} \nnb \\ &\hspace{6.5cm}
\leq \epsilon C(\|\mathring{{\mathbf{U}}}\|_{\Li((t,1],H^s)}), \quad T_3 < t \leq 1.
}
Next, using \eqref{E:fij}, \eqref{E:f}, \eqref{E:G1}, \eqref{E:f2}, \eqref{E:G2}, \eqref{E:G3} and   \eqref{E:URINGVALUE}, we compute the components of $\mathbf{H}(\epsilon,t,\mathring{\mathbf{U}})$, see \eqref{E:REALEQc},  as follows:
\begin{gather*}
\tilde{G}_1(\epsilon,t,\mathring{\mathbf{U}})=\biggl(
-2(1+\epsilon^2 K)E^2 t^{1+3\epsilon^2 K}e^{(1+\epsilon^2 K)(\zeta_H+\delta\mathring{\zeta})}  \sqrt{\frac{\Lambda}{3}}   \mathring{z}{}^k\delta^\mu_k + \epsilon\mathcal{S}^\mu(\epsilon, t, \mathring{\mathbf{U}})  ,0,0\biggr)^\mathrm{T}, \\
\tilde{G}_2(\epsilon,t,\mathring{\mathbf{U}})= \bigl( \epsilon\mathcal{S}^{ij}(\epsilon, t, \mathring{\mathbf{U}})
,0,0\bigr)^\mathrm{T},
\quad
\tilde{G}_3(\epsilon,t,\mathring{\mathbf{U}})= \bigl(\epsilon \mathcal{S} (\epsilon, t, \mathring{\mathbf{U}})
 ,0,0\bigr)^\mathrm{T}, \\
G(\epsilon,t,\mathring{\mathbf{U}})=(0,0)^\mathrm{T} \quad \text{and} \quad \acute{G}(\epsilon,t,\mathring{\mathbf{U}})=0,
\end{gather*}
where $\mathcal{S}^\mu$, $\mathcal{S}^{ij}$ and $\mathcal{S}$ all vanish for $\mathring{\mathbf{U}}=0$.
It follows immediately from these expressions and the definitions \eqref{E:REALEQc} and \eqref{E:BLIM} that
\als%{HRING2}
{
	\mathring{\mathbf{H}}(t,\mathring{\mathbf{U}})=\biggl(-2\mathring{E}^2 t e^{  \mathring{\zeta}_H+\delta\mathring{\zeta}}  \sqrt{\frac{\Lambda}{3}}   \mathring{z}{}^k\delta^\mu_k,0,0,0,0,0,0,0,0,0,0,0\biggr)^T,
	}
and,  with the help of the calculus inequalities and \eqref{E:CHI1}-\eqref{E:EOEXP}, that
\al{HSUBH}{
	\|\mathbf{H}(\epsilon,t,\mathring{\mathbf{U}})-\mathring{\mathbf{H}}(t,\mathring{\mathbf{U}})\|_{\Hs} \leq \epsilon C(\|\mathring{\mathbf{U}}\|_{\Li((t,1],H^s)}) \|\mathring{\mathbf{U}}\|_{\Hs}, \quad T_3 < t \leq 1.
}

To proceed, we define
\al{ZREAL}{
	\mathbf{Z}=\frac{1}{\epsilon}(\mathbf{U}-\mathring{\mathbf{U}}-\epsilon \mathbf{V}),
	}
and set
\begin{equation} \label{E:zreal}
z(\hat{t},x) = \mathbf{Z}(-\hat{t},x).
\end{equation}
In view of the definitions \eqref{E:REALEQc} and \eqref{E:BH2}, we see that the estimate
\al{FSUBF}{
	\|\mathbf{F}(\epsilon,t,\cdot)-\mathring{\mathbf{F}}(t,\cdot)\|_{\Hs}
	\leq & C\bigl(\|\mathbf{U}\|_{\Li((t,1],H^s)}\bigr)\Bigl(\epsilon\|\mathbf{U}\|_{\Hs}+\|\del{k}\Phi\|_{\Hs}+\|\del{k}\del{l}\Phi\|_{\Hs}+\|\del{0}\del{l}\Phi\|_{\Hs}  \nnb \\
 &\hspace{3.0cm}+\|\del{k}(E^{-2}\Phi-\mathring{E}^{-2}\mathring{\Phi})\|_{\Hs}+\|\del{0}\del{l}(E^{-2}\Phi-\mathring{E}^{-2}\mathring{\Phi})\|_{\Hs}\Bigr)  \nnb \\
	\leq & C\bigl(\|\mathbf{U}\|_{\Li((t,1],H^s)}\bigr)\Bigl( \epsilon\|\mathbf{U}\|_{H^s}    +\|\del{k}(E^{-2}\Phi-\mathring{E}^{-2}\mathring{\Phi})\|_{\Hs}+\|\del{0}\del{l}(E^{-2}\Phi-\mathring{E}^{-2}\mathring{\Phi})\|_{\Hs}
\Bigr),
	}
which holds for $T_3 < t \leq 1$, follows from  \eqref{E:CHI1}-\eqref{E:EOEXP}, the estimates \eqref{E:NORMVEST},  \eqref{Phiestimate} and \eqref{E:DTDSPHI}, the calculus inequalities, and the estimate
\als{
	\|\acute{S}\|_{\Hs}\lesssim \epsilon \la 1,\mathcal{S} \ra \lesssim \epsilon \|\mathcal{S}\|_{L^2}\leq \epsilon C(\|\mathbf{U}\|_{\Li((t,1],H^s)}) \|\mathbf{U}\|_{H^1} \leq \epsilon C(\|\mathbf{U}\|_{\Li((t,1],H^s)}) \|\mathbf{U}\|_{\Hs}.
	}
By \eqref{E:ZETAH1}, \eqref{zetaHringform}, \eqref{E:CHI1}, \eqref{E:DEFOFPHI}, \eqref{E:PHIRING}, \eqref{E:NORMVEST},
\eqref{E:ZREAL},
the invertibility of the Laplacian and the calculus inequalities,  we see also that
\al{DSSUBPHI}{
	\|\del{k}(E^{-2}\Phi-\mathring{E}^{-2}\mathring{\Phi})\|_{\Hs}&=\|E^{-2}\Phi-\mathring{E}^{-2}\mathring{\Phi} \|_{H^s} \lesssim \|e^{\zeta_H}\Pi e^{\delta\zeta}-e^{\mathring{\zeta}_H}\Pi e^{\delta\mathring{\zeta}}\|_{\Hsss} \nnb \\
	&\lesssim | e^{\zeta_H}-e^{\mathring{\zeta}_H} |\|\Pi (e^{\delta\zeta}-1)\|_{\Hsss}+\|e^{\mathring{\zeta}_H}\Pi( e^{\delta\zeta}-e^{\delta\mathring{\zeta}})\|_{\Hsss} \nnb \\
	&\leq  C\bigl(\|\delta\zeta\|_{\Li((t,1],H^s)},\|\delta\mathring{\zeta}\|_{\Li((t,1],H^s)}\bigr)\bigl(| \zeta_H- \mathring{\zeta}_H| \| e^{\delta\zeta}-1 \|_{\Hsss}
	+ \|\delta\zeta-\delta\mathring{\zeta}\|_{\Hs} \bigr) \nnb  \\
	& \leq  \epsilon  C\bigl(\|\delta\zeta\|_{\Li((t,1],H^s)},\|\delta\mathring{\zeta}\|_{\Li((t,1],H^s)}\bigr)\bigl(\epsilon  \|  \delta\zeta  \|_{\Hs}
	+   \|\mathbf{Z}\|_{\Hs}+ \|\mathbf{V}\|_{\Hs} \bigr)  \nnb \\
& \leq \epsilon  C\bigl(\|\mathbf{U}\|_{\Li((t,1],H^s)},\| \mathring{\mathbf{U}}\|_{\Li((t,1],H^s)}\bigr)\bigl(\| \mathbf{U}\|_{H^s}
+   \|\mathbf{Z}\|_{\Hs}+ \|\mathring{\mathbf{U}}\|_{H^s} \bigr)
	}
for $T_3 < t \leq 1$, while similar calculations using \eqref{E:PTZETAH}, \eqref{E:PTZETAH2} and \eqref{E:EOEXP} show that
\al{DTDSPHIEST}{
	&\|\del{0}\del{l}(E^{-2}\Phi-\mathring{E}^{-2}\mathring{\Phi})\|_{\Hs}=\|\del{0} (E^{-2}\Phi-\mathring{E}^{-2}\mathring{\Phi})\|_{H^s}\lesssim \|\del{t}(e^{\zeta_H}\Pi e^{\delta\zeta}-e^{\mathring{\zeta}_H}\Pi e^{\delta\mathring{\zeta}})\|_{\Hsss} \nnb \\
	& \hspace{1cm} \leq C\bigl(\|\delta\zeta\|_{\Li((t,1],H^s)},\|\delta\mathring{\zeta}\|_{\Li((t,1],H^s)}\bigr)\Bigl(\epsilon  \|  \delta\zeta \|_{\Hs}  +\epsilon^2\|\del{t}\delta\zeta\|_{\Hs} +\|\delta\mathring{\zeta}-\delta\zeta\|_{\Hs} \nnb \\
	& \hspace{1.5cm} +\|e^{\delta\zeta}\del{t}(\delta\zeta-\delta\mathring{\zeta})\|_{\Hsss}+\|\delta\zeta-\delta\mathring{\zeta}\|_{\Hs}\|\del{t}\delta\mathring{\zeta}\|_{\Hs}
	 \Bigr)  \nnb \\
	 & \hspace{1cm} \leq C\bigl(\|\delta\zeta\|_{\Li((t,1],H^s)},\|\delta\mathring{\zeta}\|_{\Li((t,1],H^s)}\bigr)\Bigl(\epsilon  \|  \delta\zeta \|_{\Hs}  +\epsilon^2\|\del{t}\delta\zeta\|_{\Hs} +\epsilon\|\mathbf{Z}\|_{\Hs}+\epsilon\|\mathbf{V}\|_{\Hs} \nnb \\
	 & \hspace{1.5cm} +\|e^{\delta\zeta}\del{t}(\delta\zeta-\delta\mathring{\zeta})\|_{\Hsss}+\epsilon(\|\mathbf{Z}\|_{\Hs}+ \|\mathbf{V}\|_{\Hs})\|\del{t}\delta\mathring{\zeta}\|_{\Hs}
	 \Bigr)
	}
for $T_3 < t \leq 1$.

Next, by \eqref{E:PTZETAH2}, it is easy to see that \eqref{CPeqn1} is equivalent to
\begin{equation*}
	\partial_t \delta \mathring{\zeta}+\sqrt{\frac{3}{\Lambda}}\bigl( \mathring{z}{}^j\partial_j \delta\mathring{\zeta} + \partial_j\mathring{z}{}^j\bigr)
	=0.
\end{equation*}
Using this, we derive the estimate
\begin{align}
	\|\del{t} \delta\mathring{\zeta}\|_{\Hs} \leq&
 C(\|\mathring{z}_j\|_{\Li((t,1],H^s)})\bigl(\|\delta\mathring{\zeta}\|_{H^s}+\|\mathring{z}_j\|_{H^s}\bigr),
\quad T_3 < t \leq 1,
\end{align}
while we see from \eqref{dtzeta} and \eqref{Phiestimate} that
\al{DTZSUBZ2}{
	\|\del{t} \delta \zeta\|_{\Hs} \leq & C(\|\mathbf{U}\|_{\Li((t,1],H^s)}, \|\del{k}\Phi\|_{\Li((t,1],H^s)})\bigl(\|\delta\zeta\|_{H^s}+\|z_j\|_{H^s}+\epsilon(\|\mathbf{U}\|_{\Hs}+\|\del{k}\Phi\|_{\Hs})\bigr) \nnb  \\
	\leq & C(\|\mathbf{U}\|_{\Li((t,1],H^s)}) \|\mathbf{U}\|_{H^s}
}
for $T_3 < t \leq 1$.
We also observe that
\al{111}{
	& \|e^{\delta\zeta} \mathring{z}^k \del{k}(\delta\zeta-\delta\mathring{\zeta})\|_{\Hsss}\leq \|\del{k}\bigl[e^{\delta\zeta}\mathring{z}^k (\delta\zeta-\delta\mathring{\zeta})\bigr]\|_{\Hsss}+ \|\del{k}\bigl[e^{\delta\zeta}\mathring{z}^k \bigr](\delta\zeta-\delta\mathring{\zeta})\|_{\Hsss} \nnb\\
	& \hspace{6cm} \leq C(\|\delta\zeta\|_{\Li((t,1],H^s)},\|\mathring{z}^k\|_{\Li((t,1],H^s)})\| \delta\zeta-\delta\mathring{\zeta} \|_{\Hs}
	}
and
\al{222}{
	& \|e^{\delta\zeta} \del{k}(z_m-\mathring{z}_m)\|_{\Hsss} %\leq \|\del{k}\bigl[e^{\delta\zeta} (z_m-\mathring{z}_m)\bigr]\|_{\Hsss} + \|\bigl(\del{k}e^{\delta\zeta}\bigr) (z_m-\mathring{z}_m)\|_{\Hsss}\nnb \\ & \hspace{7.5cm}
\leq C(\|\delta\zeta\|_{\Li((t,1],H^s)})\| z_m-\mathring{z}_m \|_{\Hs}
	}
hold for $T_3 < t \leq 1$. Furthermore,
by \eqref{E:GIJ}, \eqref{E:G0MU}, \eqref{E:V_0}, \eqref{E:VELOCITY}, \eqref{LIMITRHOZ} and \eqref{E:ZREAL}, we see
that
	\al{333}{
		\|z^k-\mathring{z}^k\|_{\Hs}%=\|2tu^{0k}\underline{\bar{v}^0}+\underline{\bar{g}^{kj}}z_j -\mathring{E}^{-2}\delta^{kj}\mathring{z}_j\|_{\Hs} %\nnb \\
		%\leq & C(\|\mathbf{U}\|_{\Li((t,1],H^s)} )\bigl(\|u^{0k}\|_{\Hs}+\|(\underline{\bar{g}^{kj}}-E^{-2}\delta^{kj})z_j\|_{\Hs}+\|(E^{-2}-\mathring{E}^{-2})z_j\|_{\Hs}+\|z_j-\mathring{z}_j\|_{\Hs}\bigr) \nnb \\
	%	\leq & C(\|\mathbf{U}\|_{\Li((t,1],H^s)} )\bigl(\|u^{0k}\|_{\Hs}+\epsilon \| z_j\|_{\Hs} +\|z_j-\mathring{z}_j\|_{\Hs}\bigr) \nnb \\
		\leq  \epsilon C(\|\mathbf{U}\|_{\Li((t,1],H^s)} )\bigl(\|\mathbf{Z}\|_{\Hs}+\|\mathbf{V}\|_{\Hs}+ \| z_j\|_{\Hs} \bigr)
		}
and, with the help of \eqref{E:EOEXP}, \eqref{E:NORMVEST}, \eqref{Phiestimate} and \eqref{E:111}-\eqref{E:333},
that
\al{DTZSUBZ3}{
	&\|e^{\delta\zeta}\partial_t (\delta \zeta- \delta \mathring{\zeta})\|_{\Hsss} \lesssim  \|e^{\delta\zeta}(z^k\del{k}\delta\zeta-\mathring{z}^k\del{k}\delta\mathring{\zeta})\|_{\Hsss}+\|e^{\delta\zeta}(E^{-2} \del{k}z_m-\mathring{E}^{-2} \del{k}\mathring{z}_m)\|_{\Hsss}+\epsilon\|e^{\delta\zeta}\mathcal{S}\|_{\Hsss} \nnb \\
	&\hspace{1.5cm} \lesssim \|e^{\delta\zeta}(z^k-\mathring{z}^k)\del{k}\delta\zeta\|_{\Hsss}+\|e^{\delta\zeta}\mathring{z}^k\del{k}(\delta\zeta-\delta\mathring{\zeta})\|_{\Hsss}+\|e^{\delta\zeta}(E^{-2}-\mathring{E}^{-2})\del{k}z_m\|_{\Hsss}\nnb \\
	&\hspace{2cm}  +\|e^{\delta\zeta} \del{k}(z_m-\mathring{z}_m)\|_{\Hsss} + \epsilon \|e^{\delta\zeta}\mathcal{S}\|_{\Hs} \nnb \\
%	&\leq  C(\|\delta\zeta\|_{\Li((t,1],H^s)})\|z^k-\mathring{z}^k\|_{\Hs}+C(\|\delta\zeta\|_{\Li((t,1],H^s)},\|\mathring{z}^k\|_{\Li((t,1],H^s)})\| \delta\zeta-\delta\mathring{\zeta} \|_{\Hs}  \nnb  \\
%	&\quad+\epsilon C(\|\delta\zeta\|_{\Li((t,1],H^s)}) \|\del{k}z_m\|_{\Hs} +C(\|\delta\zeta\|_{\Li((t,1],H^s)})\| z_m-\mathring{z}_m \|_{\Hs}\nnb  \\
%	&\quad+\epsilon C(\|\mathbf{U}\|_{\Li((t,1],H^s)}, \|\del{k}\Phi\|_{\Li((t,1],H^s)})(\|\mathbf{U}\|_{\Hs}+\|\del{k}\Phi\|_{\Hs}) \nnb \\
%	&\leq  \epsilon C(\|\mathring{z}^k\|_{\Li((t,1],H^s)}, \|\mathbf{U}\|_{\Li((t,1],H^s)},\|\del{k}\Phi\|_{\Li((t,1],H^s)})(\|z\|_{\Hs}+\|\mathbf{V}\|_{H^s}+\|\mathbf{U}\|_{H^s}+\|\del{k}\Phi\|_{\Hs}) \nnb\\
	&\hspace{2cm} \leq  \epsilon C(  \|\mathbf{U}\|_{\Li((t,1],H^s)},\|\mathring{\mathbf{U}}\|_{\Li((t,1],H^s)} )(\|\mathbf{Z}\|_{\Hs}+\|\mathring{\mathbf{U}}\|_{H^s}+\|\mathbf{U}\|_{H^s}),
}	
where both estimates hold for $T_3 < t \leq 1$. %and note %we have used the following estimates \eqref{E:111}-\eqref{E:333}:
We also observe that \eqref{E:NORMVEST} and \eqref{E:DTDSPHIEST}-\eqref{E:DTZSUBZ3} imply
\al{DT1}{
	\|\del{0}\del{l}(E^{-2}\Phi-\mathring{E}^{-2}\mathring{\Phi})\|_{\Hs}\leq \epsilon C(\|\mathbf{U}\|_{\Li((t,1],H^s)},\|\mathring{\mathbf{U}}\|_{\Li((t,1],H^s)}) \bigl(\|\mathbf{U}\|_{H^s}+\|\mathbf{Z}\|_{\Hs}+\|\mathring{\mathbf{U}}\|_{H^s}\bigr)
	}	
for $T_3 < t \leq 1$.
Gathering \eqref{E:FSUBF}, \eqref{E:DSSUBPHI} and \eqref{E:DT1} together, we obtain the estimate
\al{FSUBF2}{
	\|\mathbf{F}(\epsilon,t,\cdot)-\mathring{\mathbf{F}}(t,\cdot)\|_{\Hs}
	\leq   \epsilon C(\|\mathbf{U}\|_{\Li((t,1],H^s)},\|\mathring{\mathbf{U}}\|_{\Li((t,1],H^s)}) (\|\mathbf{U}\|_{H^s}+\|\mathbf{Z}\|_{\Hs}+\|\mathring{\mathbf{U}}\|_{H^s}),  \quad T_3 < t \leq 1.
}
The estimates \eqref{E:HSUBH} and \eqref{E:FSUBF2} show that source terms $\{H_1,\mathring{H}_1, F_1,\mathring{F}_1\}$,
as defined by \eqref{singdefB} and \eqref{singdefd}, and $z$, defined by \eqref{E:zreal}, verify the Lipschitz estimate \eqref{HFLip} from Theorem \ref{T:MAINMODELTHEOREM} for times $-1 \leq \hat{t} < -T_3$.

Having verified that all of the hypotheses of Theorem \ref{T:MAINMODELTHEOREM} are satisfied, we conclude, with
the help of Lemma \ref{L:INITIALTRANSFER}, that
 there exists a  constant $\sigma>0$, independent of $\epsilon \in (0,\epsilon_0)$, such that if the free initial data is
chosen so that
\begin{align*}  %\label{E:INITIAL}
\|\smfu^{ij}\|_{H^{s+1}}+\|\smfu^{ij}_0\|_{H^{s}}+\|\breve{\rho}_0\|_{H^s}+\|
\breve{\nu}^i\|_{H^s}  \leq  \sigma,
\end{align*}
 then the estimates
\begin{equation}
  \|\mathbf{U}\|_{L^\infty((T_3,1], H^s)}\leq C\sigma, \quad
  \|\mathring{\mathbf{U}}\|_{L^\infty((T_3,1], H^s)}\leq C\sigma  \AND \|\mathbf{U}-\mathring{\mathbf{U}}\|_{L^\infty((T_3,1], H^{s-1})}\leq \epsilon C \sigma \label{UUringbounds}
\end{equation}
hold
for some constant $C>0$, independent of $T_3\in (0,1)$ and  $\epsilon \in (0,\epsilon_0)$.
Furthermore, from the continuation criterion discussed in \S \ref{proof:loccont}, it is clear that
the bounds \eqref{UUringbounds} imply that the solutions $\mathbf{U}$ and $\mathring{\mathbf{U}}$ exist globally on $M=(0,1]\times \mathbb{T}^3$ and satisfy the estimates
\eqref{UUringbounds}  with $T_3=0$ and uniformly for $\epsilon\in (0,\epsilon_0)$.
In particular, this implies via the definitions \eqref{E:REALVAR} and
\eqref{Uringdef} of $\mathbf{U}$ and $\mathring{\mathbf{U}}$ that
\begin{gather*}
        \|\delta\zeta(t)-\delta\mathring{\zeta}(t)\|_{H^{s-1}}\leq \epsilon C \sigma, \quad \| z_j(t)-\mathring{z}_j(t)\|_{H^{s-1}}\leq \epsilon C \sigma, \\
        \|u^{\mu\nu}_0(t)\|_{H^{s-1}}\leq \epsilon C \sigma, \quad \|u^{\mu\nu}_k(t)-
\delta^\mu_0\delta^\nu_0\partial_k\mathring{\Phi}(t)\|_{H^{s-1}}\leq \epsilon C\sigma, \quad \|u^{\mu\nu}(t)\|_{H^{s-1}}\leq \epsilon C \sigma, \\
        \|u_0(t)\|_{H^{s-1}}\leq \epsilon C \sigma, \quad \|u_k(t)\|_{H^{s-1}}\leq \epsilon C \sigma \quad \text{and} \quad \|u(t)\|_{H^{s-1}}\leq \epsilon C \sigma
\end{gather*}
for $0 < t \leq 1$, while, from \eqref{E:V^0}, we see that
\begin{equation*}
  \left\|\underline{\bar{v}^0}(t)-\sqrt{\frac{\Lambda}{3}}\right\|_{H^{s-1}}\leq C\epsilon\sigma
\end{equation*}
holds for $0 < t \leq 1$. This concludes the proof of Theorem \ref{T:MAINTHEOREM}.

\section*{Acknowledgement}
\noindent This work was partially supported by the ARC grant FT120100045. Part of this work was completed during
a visit by the authors to the IHP as part of the Mathematical Relativity Program in 2015. We are grateful to the Institute
for its support and hospitality during our stay. We also thank the referee for their comments and
criticisms, which have served to improve the content and exposition of this article.

\appendix

\section{Calculus Inequalities} \label{A:INEQUALITIES}

%\textbf{Chao: fix the statement of the lemmas below so that there are correct and the notation is consistent, and add appropriate references to the literature}

We use the following Sobolev-Moser inequalities throughout this article. The proofs can be found in \cite{tay}.

\begin{theorem}{\emph{[Sobolev's inequality]}} \label{sobolev}
If $s \in \mathbb{Z}_{>n/2}$, then
\begin{equation*}
\norm{f}_{L^\infty} \lesssim \norm{f}_{H^s}
\end{equation*}
for all $f\in H^s(\mathbb{T}^n)$.
\end{theorem}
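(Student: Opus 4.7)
The plan is to give the standard Fourier-series proof of Sobolev embedding on the torus, which is essentially a Cauchy--Schwarz estimate using that $\langle k \rangle^{-2s}$ is summable over $\mathbb{Z}^n$ when $2s > n$. First I would reduce to smooth $f$ by density of $C^\infty(\mathbb{T}^n)$ in $H^s(\mathbb{T}^n)$, so that pointwise evaluation is unambiguous and the Fourier series converges absolutely. For such $f$ I would write
\begin{equation*}
f(x) = \sum_{k \in \mathbb{Z}^n} \hat{f}(k)\, e^{2\pi i k \cdot x},
\qquad
\hat{f}(k) = \int_{\mathbb{T}^n} f(x)\, e^{-2\pi i k \cdot x}\, dx,
\end{equation*}
and recall the Plancherel-type characterization of the $H^s$ norm,
\begin{equation*}
\|f\|_{H^s}^2 \;\simeq\; \sum_{k \in \mathbb{Z}^n} \langle k \rangle^{2s}\, |\hat{f}(k)|^2,
\qquad \langle k \rangle := (1 + |k|^2)^{1/2},
\end{equation*}
which follows from the equivalence of $\|D^\alpha f\|_{L^2}^2$ with $\sum_k |k|^{2|\alpha|}|\hat{f}(k)|^2$ for each multi-index $\alpha$ with $|\alpha| \leq s$.

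Next, for any $x \in \mathbb{T}^n$ I would estimate pointwise via the triangle inequality and Cauchy--Schwarz:
\begin{equation*}
|f(x)| \;\leq\; \sum_{k \in \mathbb{Z}^n} |\hat{f}(k)|
\;=\; \sum_{k \in \mathbb{Z}^n} \langle k \rangle^{-s}\, \bigl(\langle k \rangle^{s}\, |\hat{f}(k)|\bigr)
\;\leq\; \Bigl(\sum_{k \in \mathbb{Z}^n} \langle k \rangle^{-2s}\Bigr)^{1/2}
\Bigl(\sum_{k \in \mathbb{Z}^n} \langle k \rangle^{2s}\, |\hat{f}(k)|^2\Bigr)^{1/2}.
\end{equation*}
The second factor is comparable to $\|f\|_{H^s}$. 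The key input is the convergence of the first factor, and this is where the hypothesis $s > n/2$ enters: by comparison with the integral $\int_{\mathbb{R}^n} (1+|\xi|^2)^{-s}\, d\xi$, which converges if and only if $2s > n$, the series $\sum_k \langle k \rangle^{-2s}$ is finite precisely under our assumption $s \in \mathbb{Z}_{>n/2}$.

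Taking the supremum over $x$ then yields $\|f\|_{L^\infty} \leq C(n,s)\, \|f\|_{H^s}$ for $f \in C^\infty(\mathbb{T}^n)$. Finally, I would extend to general $f \in H^s(\mathbb{T}^n)$ by approximation: choose smooth $f_j \to f$ in $H^s$; the inequality shows $\{f_j\}$ is Cauchy in $L^\infty$, hence converges uniformly to a continuous representative of $f$, and passing to the limit gives the stated bound. There is no serious obstacle here; the only point requiring care is the convergence of $\sum_k \langle k \rangle^{-2s}$, which is exactly the source of the sharp threshold $s > n/2$.
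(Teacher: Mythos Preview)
Your proof is correct and is the standard Fourier-series argument on the torus. The paper does not actually supply its own proof of this statement; it simply cites Taylor's textbook \cite{tay} for all of the calculus inequalities in Appendix~\ref{A:INEQUALITIES}, so there is nothing to compare against beyond noting that your argument is precisely the kind one would find in such a reference.
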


%\begin{lemma} \label{moserlemA}
%Suppose $s\in \Zbb_{\geq 0}$ and $f_i \in H^s(\mathbb{T}^n)\cap L^{\infty}(\mathbb{T}^n)$ for $1\leq i \leq l$. Then there is a constant $C$ depending %on $s$ and $l$ such that
%\begin{align*}
%\|D^{\beta_1}f_1\ldots D^{\beta_l}f_l\|_{L^2}\leq C\sum^l_{i=1}\biggl(\prod_{j\neq i}\|f_j\|_{L^{\infty}}\sum_{|\beta|=s}\| D^{\beta}f_i\|_{L^2}\biggr)
%\label{E:MOSERESTIMATE1}
%\end{align*}
%for all   multi-indices $\beta_1, \ldots, \beta_l$ satisfying $|\beta_1|+\ldots+|\beta_l|=s$.
%\end{lemma}

\begin{lemma}  \label{moserlemB}
Suppose $s\in \mathbb{Z}_{\geq 1}$, $l \in \mathbb{Z}_{\geq 2}$, $f_i \in L^{\infty}(\mathbb{T}^n)$ for $1\leq i \leq l$, $f_l \in H^s(\mathbb{T}^n)$, and $ D f_i \in H^{s-1}(\mathbb{T}^n)$ for $1 \leq i \leq l-1$. Then there exists a constant $C>0$, depending
on $s$ and $l$, such that
\begin{align*}
\|f_1\ldots f_l\|_{H^s}\leq C\biggl({\|f_l\|_{H^s}\prod_{i=1}^{l-1}\|f_i\|_{L^{\infty}}+\sum_{i=1}^{l-1}\| D f_i\|_{H^{s-1}}\prod_{i\neq j}\|f_j\|_{L^{\infty}}}\biggr).
%\label{E:MOSERESTIMATE2}
\end{align*}
\end{lemma}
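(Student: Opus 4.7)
\textbf{Proof proposal for Lemma \ref{moserlemB}.} The plan is to prove the estimate by induction on the number of factors $l$, with the base case $l=2$ being a classical Kato--Ponce / Moser product estimate on $\mathbb{T}^n$.

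For the base case $l=2$, I would expand $D^\alpha(f_1 f_2)$ for any multi-index $|\alpha|\leq s$ via the Leibniz rule,
\begin{equation*}
D^\alpha(f_1f_2) = \sum_{\beta\leq \alpha}\binom{\alpha}{\beta} D^\beta f_1\, D^{\alpha-\beta}f_2,
\end{equation*}
isolate the two ``endpoint'' terms ($\beta=0$ and $\beta=\alpha$), which are trivially bounded in $L^2$ by $\|f_1\|_{L^\infty}\|f_2\|_{H^s}$ and $\|f_2\|_{L^\infty}\|Df_1\|_{H^{s-1}}$ respectively, and then treat the mixed terms $0<|\beta|<|\alpha|$ using H\"older's inequality with exponents $p,q$ satisfying $\frac{1}{p}+\frac{1}{q}=\frac{1}{2}$, combined with the Gagliardo--Nirenberg interpolation inequality on $\mathbb{T}^n$ (valid since the torus is compact without boundary). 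Specifically, for $0<|\beta|<|\alpha|\leq s$, one has
\begin{equation*}
\|D^\beta f_1\|_{L^p}\lesssim \|f_1\|_{L^\infty}^{1-\theta}\|Df_1\|_{H^{s-1}}^{\theta},\qquad
\|D^{\alpha-\beta}f_2\|_{L^q}\lesssim \|f_2\|_{L^\infty}^{\theta}\|f_2\|_{H^s}^{1-\theta},
\end{equation*}
with $\theta = |\beta|/s$, and the exponents $p,q$ chosen accordingly. Multiplying and applying Young's inequality for products (i.e.\ $a^\theta b^{1-\theta}\leq \theta a + (1-\theta) b$ after absorbing the $L^\infty$ norms) yields the required bound for $l=2$.

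For the inductive step, assume the inequality has been established for $l-1$ factors and write $g := f_1\cdots f_{l-1}$ so that $f_1\cdots f_l = g\, f_l$. Applying the base case to the pair $(g,f_l)$ gives
\begin{equation*}
\|g f_l\|_{H^s}\lesssim \|g\|_{L^\infty}\|f_l\|_{H^s} + \|Dg\|_{H^{s-1}}\|f_l\|_{L^\infty}.
\end{equation*}
The first term is immediately bounded by $\|f_l\|_{H^s}\prod_{i=1}^{l-1}\|f_i\|_{L^\infty}$ since $\|g\|_{L^\infty}\leq \prod_{i=1}^{l-1}\|f_i\|_{L^\infty}$. For the second term, I would expand $Dg$ by the product rule,
\begin{equation*}
Dg = \sum_{k=1}^{l-1} f_1\cdots f_{k-1}\,(Df_k)\, f_{k+1}\cdots f_{l-1},
\end{equation*}
and estimate each summand in $H^{s-1}$ by applying the inductive hypothesis (with $l-1$ factors, with $Df_k$ playing the role of the ``$H^{s-1}$-factor''). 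This produces the required mixed terms $\|Df_k\|_{H^{s-1}}\prod_{j\neq k}\|f_j\|_{L^\infty}$ plus cross terms involving $\|D f_i\|_{H^{s-2}}$ for $i\neq k$; the latter are absorbed into $\|Df_i\|_{H^{s-1}}$ trivially, completing the induction.

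The main obstacle I anticipate is a bookkeeping one: in the inductive step, after one application of the base case followed by expansion of $Dg$ via the Leibniz rule, one must verify that all intermediate-derivative terms arising from the inductive hypothesis on the $(l-1)$-fold products can be collected into the asserted form (a single $\|Df_i\|_{H^{s-1}}$ times $L^\infty$ norms of the other factors). This requires using the trivial embedding $H^{s-1}\hookrightarrow H^{s-2}$ on $\mathbb{T}^n$ and bundling the combinatorial constants into $C = C(s,l,n)$. The low-regularity case $s=1$ needs to be handled separately, but there the inequality reduces to direct Leibniz expansion and H\"older.
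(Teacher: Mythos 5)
The paper does not supply its own proof of this lemma — at the head of Appendix~\ref{A:INEQUALITIES} it simply states ``The proofs can be found in \cite{tay}'' — so there is no in-text argument to compare against; what matters is whether your proposal stands on its own.

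Your base case $l=2$ is the classical Moser product estimate and the sketch (Leibniz expansion, endpoint terms handled directly, interior terms via H\"older plus Gagliardo--Nirenberg plus Young) is sound, modulo the usual compact-domain versions of the interpolation inequalities. The inductive step, however, has a genuine gap. After applying the base case to $g\,f_l$ with $g=f_1\cdots f_{l-1}$ and expanding $Dg$ by the product rule, you apply the $(l-1)$-factor lemma in $H^{s-1}$ to $f_1\cdots(Df_k)\cdots f_{l-1}$ with $Df_k$ in the role of the distinguished ``$H^{s-1}$-factor''. But the lemma requires \emph{every} factor, including the distinguished one, to lie in $L^\infty(\mathbb{T}^n)$; the hypotheses give you $Df_k \in H^{s-1}$ and $f_k\in L^\infty$, not $Df_k\in L^\infty$, and $H^{s-1}(\mathbb{T}^n)\not\hookrightarrow L^\infty(\mathbb{T}^n)$ when $s-1\leq n/2$ (the lemma allows any $s\geq 1$). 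So the inductive hypothesis is not applicable as invoked. The same obstruction appears again in its conclusion: the cross terms would come multiplied by $\|Df_k\|_{L^\infty}$, which is not controlled. The other natural fix — bounding $\|Dg\|_{H^{s-1}}\leq\|g\|_{H^s}$ and applying the inductive hypothesis directly to $g=f_1\cdots f_{l-1}$, then converting $\|f_{l-1}\|_{H^s}\lesssim\|f_{l-1}\|_{L^\infty}+\|Df_{l-1}\|_{H^{s-1}}$ — fails for a different reason: it produces the extra term $\prod_{j=1}^l\|f_j\|_{L^\infty}$, which cannot be absorbed into $\|f_l\|_{H^s}\prod_{i<l}\|f_i\|_{L^\infty}$ without the Sobolev embedding $\|f_l\|_{L^\infty}\lesssim\|f_l\|_{H^s}$, i.e.\ without $s>n/2$.

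The standard route (and the one in Taylor/Majda) avoids induction on $l$ altogether: expand $D^\alpha(f_1\cdots f_l)$ by the multi-factor Leibniz rule, isolate the term where all derivatives land on $f_l$ (controlled by $\prod_{i<l}\|f_i\|_{L^\infty}\|f_l\|_{H^s}$), and for every other multi-index split, apply H\"older with exponents $p_j=2|\alpha|/|\beta_j|$ over the nonzero entries, Gagliardo--Nirenberg to each factor to interpolate between $\|f_j\|_{L^\infty}$ and $\|D^{|\alpha|}f_j\|_{L^2}$, and Young's inequality to linearize the resulting geometric means. Since any such split has $\beta_i\neq 0$ for at least one $i<l$, after Young one always ends up with a single ``$H$-factor'' of the form $\|Df_i\|_{H^{s-1}}$ (for $i<l$) or $\|f_l\|_{H^s}$ against $L^\infty$ norms of the rest, and no spurious term $\prod_j\|f_j\|_{L^\infty}$ arises. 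In short: keep your base-case argument, but run it directly on the $l$-fold Leibniz expansion rather than trying to iterate the two-factor case.
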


\begin{lemma} \label{moserlemC}
Suppose $s\in \mathbb{Z}_{\geq 1}$, $f \in L^\infty(\Tbb^n,V)\cap H^s(\Tbb^n,V)\cap C^0(\Tbb^n,V)$, $W,U\subset V$ are open with $U$ bounded and $\overline{U}\subset W$,
$f(x)\in U$ for all  $x\in \Tbb^n$   and $F\in C^s(W)$.  Then there exists a constant $C>0$, depending on $s$, such that
\begin{equation*}
\|D^{\alpha}(F\circ f)\|_{L^2}\leq C \norm{DF}_{W^{s-1,\infty}(U)} \norm{f}_{L^\infty}^{s-1} \left(\sum_{|\beta|=s}\|D^{\beta}f\|_{L^2}\right)^{\frac{1}{2}}
\end{equation*}
for any multi-index $\alpha$ satisfying $|\alpha|=s$.

%Suppose $s\in \mathbb{Z}_{\geq 1}$,  $F \in C^{\infty}(I)$ for some open interval $I\subset \mathbb{R}$, $f \in H^s(\mathbb{T}^n)\bigcap L^{\infty}(\mathbb{T}^n)$, and $J\subset I$, where $J=[a,b]$ with $a$ and $b$ the essential infimum and supremum of $f$,
%respectively. Then there exists a constant $C>0$, depending on $s$, the supremum of $F$ and its derivatives up to order $s$ on $J$ and $\|f\|_{L^{\infty}}$, such that
%\begin{align*}
%\|D^{\alpha}(F\circ f)\|_{L^2}\leq C  \sum_{|\beta|=s}\|D^{\beta}f\|_{L^2}
%\label{E:COMPOSITIONESTIMATE}
%\end{align*}
%for any multi-index $\alpha$ satisfying $|\alpha|=s$.
\end{lemma}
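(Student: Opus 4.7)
The plan is to prove this composition estimate by combining Faà di Bruno's formula with Gagliardo–Nirenberg interpolation, which is the standard route to Moser-type estimates. First, I would apply Faà di Bruno to expand
\begin{equation*}
D^{\alpha}(F\circ f)=\sum_{k=1}^{|\alpha|}\sum_{\substack{\beta_1+\cdots+\beta_k=\alpha \\ |\beta_j|\geq 1}} c_{\alpha,\beta_1,\ldots,\beta_k}\,(D^{k}F)(f)\,\prod_{j=1}^{k} D^{\beta_j}f,
\end{equation*}
so that it suffices to bound each term in $L^{2}$. Since $f(x)\in U\subset W$ with $\overline{U}$ compact and $F\in C^{s}(W)$, every $(D^{k}F)(f)$ with $1\leq k\leq s$ is uniformly bounded on $\mathbb{T}^n$ by $\|DF\|_{W^{s-1,\infty}(U)}$, so the composition factor is controlled and can be pulled outside the $L^2$-norm.

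Next, I would estimate the product $\prod_{j=1}^{k}\|D^{\beta_j}f\|_{L^{p_j}}$ via Hölder's inequality with the exponents $p_j=2s/|\beta_j|$; these are admissible because
\begin{equation*}
\sum_{j=1}^{k}\frac{1}{p_j}=\frac{1}{2s}\sum_{j=1}^{k}|\beta_j|=\frac{|\alpha|}{2s}=\frac12.
\end{equation*}
The Gagliardo–Nirenberg inequality on $\mathbb{T}^n$ then yields
\begin{equation*}
\|D^{\beta_j}f\|_{L^{p_j}}\lesssim \|f\|_{L^\infty}^{1-|\beta_j|/s}\Bigl(\sum_{|\gamma|=s}\|D^{\gamma}f\|_{L^2}\Bigr)^{|\beta_j|/s}.
\end{equation*}
Multiplying these estimates across $j=1,\ldots,k$, the $L^\infty$-exponents add to $k-|\alpha|/s$, and since $k\leq s$ and $\|f\|_{L^\infty}$ is bounded on the (bounded) set $U$, this factor is absorbed into a constant times $\|f\|_{L^\infty}^{s-1}$ as required; the $D^s f$ exponents add to $|\alpha|/s=1$, which matches the claimed power on the $H^s$ seminorm term.

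Summing over the finitely many multi-index partitions arising in Faà di Bruno and taking $L^{2}$-norms on $\mathbb{T}^{n}$ then gives the stated bound. The main subtlety I expect is bookkeeping: one must check that $|\beta_j|\leq s-1$ whenever $k\geq 2$, so that $\|DF\|_{W^{s-1,\infty}(U)}$ (rather than a stronger norm of $F$) controls every occurrence of $D^{k}F\circ f$, and simultaneously that the highest-order term ($k=1$, $|\beta_1|=s$) fits the same format with $DF$ rather than $D^{s}F$ — which it does, since for $k=1$ one uses $\|DF\|_{L^\infty(U)}\|D^\alpha f\|_{L^2}$ directly without any interpolation. Collecting constants, which depend only on $s$, $n$ and the Gagliardo–Nirenberg and Faà di Bruno combinatorics, completes the proof.
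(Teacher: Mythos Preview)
The paper does not prove this lemma; Appendix~A simply states that the proofs of all the Moser-type inequalities listed there ``can be found in \cite{tay}.'' Your approach via Fa\`a di Bruno, H\"older with exponents $p_j=2s/|\beta_j|$, and Gagliardo--Nirenberg interpolation is exactly the standard textbook route, and is what the cited reference does, so there is no alternative argument to compare against.

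One step in your write-up is not correct as stated and should be tightened. After multiplying the Gagliardo--Nirenberg bounds, the $L^\infty$-exponent for the term with $k$ factors is $k-1$, not $s-1$. Your claim that ``since $k\leq s$ and $\|f\|_{L^\infty}$ is bounded on the (bounded) set $U$, this factor is absorbed into a constant times $\|f\|_{L^\infty}^{s-1}$'' fails when $\|f\|_{L^\infty}<1$: in particular the $k=1$ term is $\|DF\|_{L^\infty(U)}\|D^\alpha f\|_{L^2}$ with no $\|f\|_{L^\infty}$ factor at all, and this cannot be dominated by $C\,\|f\|_{L^\infty}^{s-1}\|D^\alpha f\|_{L^2}$ with $C$ depending only on $s$. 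The honest output of your argument is a bound by $C\,\|DF\|_{W^{s-1,\infty}(U)}\bigl(1+\|f\|_{L^\infty}\bigr)^{s-1}\bigl(\sum_{|\beta|=s}\|D^\beta f\|_{L^2}^2\bigr)^{1/2}$, which is the form one finds in Taylor; getting the exact power $\|f\|_{L^\infty}^{s-1}$ with a constant depending only on $s$ requires either an additional hypothesis or a constant that also depends on $U$.
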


\begin{lemma} \label{moserlemD}
If $s \in \mathbb{Z}_{\geq 1}$ and $|\alpha|\leq s$, then
\begin{align*}
\|D^{\alpha}(f g)-fD^{\alpha}g\|_{L^2}\lesssim \| Df\|_{H^{s-1}}\|g\|_{L^{\infty}}+\| D f\|_{L^{\infty}}\|g\|_{H^{s-1}} %\label{E:COMMUTATORESTIMATE}
\end{align*}
for all $f, g$ satisfying $Df,g \in L^\infty(\Tbb^n) \cap \Hs(\Tbb^n) $.
\end{lemma}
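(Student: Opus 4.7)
The proof uses the classical strategy for Moser-type commutator estimates (in the spirit of Kato--Ponce): Leibniz-expand $D^\alpha(fg)$, discard the one term in which all derivatives fall on $g$, and bound each remaining product by Gagliardo--Nirenberg interpolation followed by H\"older and Young. The case $|\alpha|=0$ is trivial, so I assume $1 \leq |\alpha| \leq s$. The Leibniz formula gives
\begin{equation*}
 D^\alpha(fg) - fD^\alpha g = \sum_{\substack{\beta \leq \alpha\\|\beta|\geq 1}} \binom{\alpha}{\beta}\, D^\beta f \cdot D^{\alpha-\beta}g,
\end{equation*}
and since the sum is finite, it suffices to estimate each term individually. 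For each such $\beta$ I fix an index $i$ with $\beta_i \geq 1$ and write $D^\beta f = D^{\beta-e_i} h_i$ with $h_i := \partial_i f$, so the generic term becomes $D^{\beta-e_i} h_i \cdot D^{\alpha-\beta} g$. Setting $j := |\beta|-1$, $l := |\alpha-\beta|$, and $r := j+l = |\alpha|-1 \leq s-1$, I observe that $h_i \in L^\infty \cap H^{s-1}$ with $\|h_i\|_{L^\infty} \leq \|Df\|_{L^\infty}$ and $\|h_i\|_{H^{s-1}} \leq \|Df\|_{H^{s-1}}$.

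The core of the argument is then the Gagliardo--Nirenberg inequality on $\Tbb^n$, applied at the interpolation order $r$ (not at $s-1$). When $r = 0$ the product estimate $\|h_i g\|_{L^2} \leq \|Df\|_{L^\infty}\|g\|_{L^2} \leq \|Df\|_{L^\infty}\|g\|_{H^{s-1}}$ is immediate. When $r \geq 1$, Gagliardo--Nirenberg yields
\begin{equation*}
 \|D^j h_i\|_{L^{2r/j}} \lesssim \|h_i\|_{L^\infty}^{l/r}\|h_i\|_{H^r}^{j/r}, \qquad \|D^l g\|_{L^{2r/l}} \lesssim \|g\|_{L^\infty}^{j/r}\|g\|_{H^r}^{l/r}
\end{equation*}
(with the obvious interpretation when $j=0$ or $l=0$). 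Because $j/(2r)+l/(2r)=1/2$, H\"older gives
\begin{equation*}
 \|D^{\beta-e_i}h_i \cdot D^{\alpha-\beta}g\|_{L^2} \lesssim \bigl(\|Df\|_{H^{s-1}}\|g\|_{L^\infty}\bigr)^{j/r}\bigl(\|Df\|_{L^\infty}\|g\|_{H^{s-1}}\bigr)^{l/r},
\end{equation*}
after using $\|h_i\|_{H^r}\leq\|Df\|_{H^{s-1}}$ (since $r \leq s-1$). The weighted Young inequality $x^{j/r}y^{l/r} \leq (j/r)\,x + (l/r)\,y$ then converts the geometric mean into the desired arithmetic-mean bound $\|Df\|_{H^{s-1}}\|g\|_{L^\infty} + \|Df\|_{L^\infty}\|g\|_{H^{s-1}}$. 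Summing the resulting estimates over the finitely many multi-indices $\beta$ concludes the proof.

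The only genuinely delicate point is the choice of the Gagliardo--Nirenberg exponent: taking $r = j+l = |\alpha|-1$, i.e.\ the total derivative count in the product, rather than the seemingly natural $s-1$, is what makes the H\"older exponents balance to exactly $1/2$. With the naive choice $r = s-1$ one ends up with $1/p + 1/q < 1/2$ whenever $|\alpha| < s$, which forces an unabsorbable residual term of the form $\|Df\|_{L^\infty}\|g\|_{L^\infty}$ that cannot be controlled by $\|Df\|_{H^{s-1}}\|g\|_{L^\infty} + \|Df\|_{L^\infty}\|g\|_{H^{s-1}}$ without extra Sobolev embedding hypotheses on $s$ and $n$. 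Everything else in the argument is routine bookkeeping.
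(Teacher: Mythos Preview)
Your proof is correct and follows the classical Kato--Ponce/Moser strategy. The paper does not actually supply its own proof of this lemma: Appendix~\ref{A:INEQUALITIES} states the result and refers the reader to Taylor~\cite{tay} for the proofs, and the argument you give---Leibniz expansion, Gagliardo--Nirenberg interpolation at order $r=|\alpha|-1$, H\"older, then Young---is precisely the standard one found there. Your remark about why one must interpolate at order $|\alpha|-1$ rather than $s-1$ is a useful clarification that is often glossed over in textbook presentations.
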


\section{Matrix relations}

\begin{lemma}\label{A:INVERSEOFA}
  Suppose
  \begin{align*}
    A=\begin{pmatrix}
      a & b\\
      b^\mathrm{T} &c
    \end{pmatrix}
  \end{align*}
  is an $(n+1)\times (n+1)$ symmetric matrix, where $a$ is an $1\times 1$ matrix, $b$ is an $1\times n$ matrix and c is an $n\times n$ symmetric matrix. Then
  \begin{align*}
    A^{-1}=\begin{pmatrix}
      a & b\\
      b^\mathrm{T} &c
    \end{pmatrix}^{-1}=\begin{pmatrix}
      \frac{1}{a}\Bigr[1+b\Bigl(c-\frac{1}{a}b^\mathrm{T}b\Bigr)^{-1}b^\mathrm{T}\Bigr] & -\frac{1}{a}b\Bigl(c-\frac{1}{a}b^\mathrm{T}b\Bigr)^{-1}\\
      -\Bigl(c-\frac{1}{a}b^\mathrm{T}b\Bigr)^{-1}\frac{1}{a}b^\mathrm{T} & \Bigl(c-\frac{1}{a}b^\mathrm{T}b\Bigr)^{-1}
    \end{pmatrix}
  \end{align*}
\end{lemma}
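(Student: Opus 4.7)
The plan is to prove the block-inversion formula by exhibiting an explicit $LDU$ (Schur-complement) factorization of $A$, inverting the factors individually, and multiplying them back together. Since $a\neq 0$ is tacitly assumed (the formula contains $1/a$) and the Schur complement $S := c - \tfrac{1}{a}b^{\mathrm{T}}b$ appears inverted in the claimed formula, I will also assume $S$ is invertible; these are exactly the conditions needed for $A$ to be invertible in this block form.

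First I would write down the factorization
\begin{equation*}
A=\begin{pmatrix} 1 & 0 \\ \tfrac{1}{a}b^{\mathrm{T}} & I_n \end{pmatrix}
  \begin{pmatrix} a & 0 \\ 0 & S \end{pmatrix}
  \begin{pmatrix} 1 & \tfrac{1}{a}b \\ 0 & I_n \end{pmatrix},
\end{equation*}
which one verifies by direct multiplication of the three right-hand factors and comparison with $A$. The outer unipotent triangular factors invert trivially by negating their off-diagonal entries, and the middle block-diagonal factor inverts entrywise. Multiplying the three inverses in reverse order yields
\begin{equation*}
A^{-1}=\begin{pmatrix} 1 & -\tfrac{1}{a}b \\ 0 & I_n \end{pmatrix}
        \begin{pmatrix} \tfrac{1}{a} & 0 \\ 0 & S^{-1} \end{pmatrix}
        \begin{pmatrix} 1 & 0 \\ -\tfrac{1}{a}b^{\mathrm{T}} & I_n \end{pmatrix},
\end{equation*}
and a direct expansion of this product reproduces the four blocks claimed in the lemma (the top-left block picks up the extra term $\tfrac{1}{a}bS^{-1}b^{\mathrm{T}}$ from the $(1,2)$-$(2,1)$ interaction, while the off-diagonal blocks are $-\tfrac{1}{a}bS^{-1}$ and $-S^{-1}\tfrac{1}{a}b^{\mathrm{T}}$, and the bottom-right block is $S^{-1}$).

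As a sanity check, I would verify $AA^{-1}=I_{n+1}$ directly by block multiplication, which reduces to the two algebraic identities $a\cdot\tfrac{1}{a}(1+bS^{-1}b^{\mathrm{T}}) + b\cdot(-S^{-1}\tfrac{1}{a}b^{\mathrm{T}})=1$ and $b^{\mathrm{T}}\cdot\tfrac{1}{a}(1+bS^{-1}b^{\mathrm{T}})+c\cdot(-S^{-1}\tfrac{1}{a}b^{\mathrm{T}})=0$; both collapse after substituting $S=c-\tfrac{1}{a}b^{\mathrm{T}}b$. There is no real obstacle here — the only point requiring care is bookkeeping of the Schur complement, and ensuring the result is presented in the (slightly asymmetric looking) form given in the statement rather than in the equivalent symmetric form that one would automatically get from the $LDU$ derivation.
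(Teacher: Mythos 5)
Your LDU/Schur-complement factorization is a clean and correct strategy (the paper offers only ``follows from direct computation''), but your final step has a concrete slip that happens to coincide with a typo in the lemma as printed. Expanding $U^{-1}D^{-1}L^{-1}$, the $(1,1)$ block is $\frac{1}{a}+\bigl(-\frac{1}{a}b\bigr)\bigl(-\frac{1}{a}S^{-1}b^{\mathrm{T}}\bigr)=\frac{1}{a}\bigl[1+\frac{1}{a}bS^{-1}b^{\mathrm{T}}\bigr]$, not $\frac{1}{a}\bigl[1+bS^{-1}b^{\mathrm{T}}\bigr]$: the cross term you describe parenthetically is off by a factor of $\frac{1}{a}$, and the lemma's $(1,1)$ block is missing the same factor, which is exactly why you believed your expansion ``reproduces the four blocks claimed in the lemma'' when in fact neither is right. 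The sanity check you gesture at but do not carry out would have caught this: with the printed block, $(AA^{-1})_{11}=a\cdot\frac{1}{a}(1+bS^{-1}b^{\mathrm{T}})+b\cdot\bigl(-\frac{1}{a}S^{-1}b^{\mathrm{T}}\bigr)=1+\bigl(1-\frac{1}{a}\bigr)bS^{-1}b^{\mathrm{T}}$, which equals $1$ only when $a=1$; the $1\times 1$ test $a=2$, $b=c=1$ gives $(A^{-1})_{11}=1$ while the printed formula gives $\frac{3}{2}$. With the inner $\frac{1}{a}$ restored, your argument is sound and establishes the correct block-inverse formula.
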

\begin{proof}
Follows from direct computation.
\end{proof}

We also recall the well-known Neumann series expansion.
\begin{lemma} \label{E:EXPANSIONOFINVERSE2}
  If $A$ and $B$ are $n\times n$ matrices with $A$ invertible,  then there exists an $\epsilon_0>0$ such that the map
\begin{equation*}
(-\epsilon_0,\epsilon_0) \ni \epsilon \longmapsto (A+\epsilon B)^{-1} \in \mathbb{M}_{n\times n}
\end{equation*}
is analytic and can be expanded as
  \begin{align*}
    (A+\epsilon B)^{-1}=A^{-1}+\sum_{n=1}^\infty (-1)^n\epsilon^n (A^{-1}B)^nA^{-1}, \quad  |\epsilon|< \epsilon_0.
  \end{align*}
\end{lemma}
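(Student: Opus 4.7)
The plan is to reduce the statement to the standard Neumann series $(I-X)^{-1}=\sum_{n=0}^\infty X^n$ for $\|X\|<1$, which is elementary once one fixes a submultiplicative matrix norm on $\mathbb{M}_{n\times n}$. First I would choose any submultiplicative norm $\|\cdot\|$ on $\mathbb{M}_{n\times n}$ (e.g., the operator norm induced by the Euclidean norm on $\mathbb{R}^n$), and set
\begin{equation*}
\epsilon_0 := \frac{1}{\|A^{-1}B\|+1}
\end{equation*}
(taking $\epsilon_0 = 1$, say, in the degenerate case $A^{-1}B=0$). For $|\epsilon|<\epsilon_0$ one then has $\|\epsilon A^{-1}B\|\leq |\epsilon|\,\|A^{-1}B\|<1$.

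Next I would factor
\begin{equation*}
A+\epsilon B = A\bigl(I+\epsilon A^{-1}B\bigr),
\end{equation*}
so that $A+\epsilon B$ is invertible provided $I+\epsilon A^{-1}B$ is. For the latter, consider the partial sums $S_N(\epsilon) := \sum_{n=0}^N (-\epsilon)^n (A^{-1}B)^n$. The submultiplicativity of $\|\cdot\|$ gives $\|(-\epsilon)^n (A^{-1}B)^n\|\leq (|\epsilon|\,\|A^{-1}B\|)^n$, so the series $\sum_{n=0}^\infty (-\epsilon)^n (A^{-1}B)^n$ converges absolutely (and uniformly on compact subsets of $(-\epsilon_0,\epsilon_0)$) in the Banach space $(\mathbb{M}_{n\times n},\|\cdot\|)$ by comparison with the geometric series. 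A direct telescoping computation shows
\begin{equation*}
(I+\epsilon A^{-1}B)\,S_N(\epsilon) = I - (-\epsilon)^{N+1}(A^{-1}B)^{N+1},
\end{equation*}
and since the right-hand side tends to $I$ as $N\to\infty$, the limit $S(\epsilon) := \lim_{N\to\infty} S_N(\epsilon)$ is a two-sided inverse of $I+\epsilon A^{-1}B$. Multiplying on the left by $A^{-1}$ then yields the stated formula
\begin{equation*}
(A+\epsilon B)^{-1} = S(\epsilon)\,A^{-1} = A^{-1} + \sum_{n=1}^\infty (-1)^n \epsilon^n (A^{-1}B)^n A^{-1}.
\end{equation*}

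Finally, for analyticity I would simply note that the displayed series is a power series in $\epsilon$ with matrix coefficients $c_n := (-1)^n (A^{-1}B)^n A^{-1}\in \mathbb{M}_{n\times n}$, and the bound $\|c_n\|\leq \|A^{-1}B\|^n\|A^{-1}\|$ gives radius of convergence at least $\epsilon_0$ by the root test. Since a power series in one real variable with values in a finite-dimensional Banach space is real-analytic inside its disk of convergence, the map $\epsilon\mapsto (A+\epsilon B)^{-1}$ is analytic on $(-\epsilon_0,\epsilon_0)$. There is no genuine obstacle here; the only thing to be careful about is keeping the clash of notation under control, since the index $n$ in the series coincides with the matrix size $n$ in the statement (a cosmetic issue I would resolve by renaming the summation index).
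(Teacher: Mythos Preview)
Your proof is correct and complete. The paper does not actually supply a proof of this lemma; it simply introduces it with ``We also recall the well-known Neumann series expansion'' and states the result without argument, so your write-up is exactly the standard justification the paper omits.
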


\section{Analyticity}\label{A:ANALY}
We list some well-known properties of analytic maps that will be used throughout this article. We refer interested readers to \cite{hei} or \cite{oli3} for the proofs.
\begin{lemma}\label{L:MULTIPLICATION}
  Let $X$, $Y$ and $Z$ be Banach spaces with $U \subset X$ and $V \subset Y$ open.
  \begin{enumerate}
    \item If $L: X\rightarrow Y$ is a continuous linear map, then $L \in C^\omega(X,Y)$;
    \item If $B: X\times Y\rightarrow Z$ is a continuous bilinear map, then $B\in C^\omega(X\times Y, Z)$;
    \item If $f \in C^\omega(U, Y)$, $g\in C^\omega(V,Z)$ and $\text{ran}(f)\subset V$, then $g\circ f\in C^\omega(U, Z)$.
  \end{enumerate}
\end{lemma}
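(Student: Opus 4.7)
The plan is to verify the three claims in order, each reducing essentially to exhibiting a convergent power series expansion at an arbitrary base point. Recall that $f : U \to Y$ is in $C^\omega(U,Y)$ if for each $x_0 \in U$ there exist continuous symmetric $k$-linear maps $P_k : X^k \to Y$ and a radius $r > 0$ with $B_r(x_0) \subset U$ such that $f(x_0 + h) = \sum_{k=0}^\infty P_k(h,\ldots,h)$ converges absolutely for $\|h\|_X < r$. I will use this characterization throughout.

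For (1), fix $x_0 \in X$ and $h \in X$. By linearity, $L(x_0 + h) = L(x_0) + L(h)$, which is already a (terminating) power series with $P_0 = L(x_0)$, $P_1 = L$, and $P_k = 0$ for $k \geq 2$. Continuity of $L$ supplies $\|L(h)\|_Y \leq \|L\|\,\|h\|_X$, so the series converges absolutely for every $h \in X$; hence $L \in C^\omega(X,Y)$ with infinite radius of convergence. For (2), fix $(x_0, y_0) \in X \times Y$ and write
\[
B(x_0 + h, y_0 + k) = B(x_0, y_0) + B(h, y_0) + B(x_0, k) + B(h, k),
\]
which is a polynomial of degree $2$ in $(h,k)$. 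Setting $P_0 = B(x_0,y_0)$, $P_1(h,k) = B(h,y_0) + B(x_0,k)$, and $P_2((h,k),(h',k')) = \tfrac{1}{2}\bigl[B(h,k') + B(h',k)\bigr]$ (the symmetrization), each $P_j$ is continuous by continuity of $B$, and the series terminates and therefore converges absolutely on all of $X \times Y$.

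For (3), the substance of the lemma, fix $x_0 \in U$ and let $y_0 = f(x_0) \in V$. By analyticity there exist radii $r,\rho > 0$ and continuous symmetric multilinear maps $A_k : X^k \to Y$, $B_j : Y^j \to Z$ with
\[
f(x_0 + h) = y_0 + \sum_{k=1}^\infty A_k(h^{\otimes k}), \qquad g(y_0 + \eta) = g(y_0) + \sum_{j=1}^\infty B_j(\eta^{\otimes j}),
\]
and with the majorant bounds $\|A_k\| \leq M r^{-k}$, $\|B_j\| \leq N \rho^{-j}$ for some $M,N > 0$. The plan is to substitute $\eta = \sum_{k\geq 1} A_k(h^{\otimes k})$ into the expansion of $g$ and reorganize by total degree in $h$. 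Formally,
\[
(g \circ f)(x_0 + h) = g(y_0) + \sum_{j=1}^\infty B_j\Bigl(\bigl(\textstyle\sum_{k\geq 1} A_k(h^{\otimes k})\bigr)^{\otimes j}\Bigr) = \sum_{n=0}^\infty C_n(h^{\otimes n}),
\]
where, expanding $B_j$ by multilinearity, $C_n$ is a finite sum of compositions $B_j \circ (A_{k_1} \otimes \cdots \otimes A_{k_j})$ over partitions $k_1 + \cdots + k_j = n$. Each such composition is a continuous symmetric $n$-linear map $X^n \to Z$, so it remains to verify absolute convergence in a neighborhood.

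The main technical point, and the only nontrivial step, is the convergence rearrangement. Choose $r' > 0$ small enough that $\sum_{k\geq 1} M (r'/r)^k < \rho$; this is possible since $M(r'/r)/(1 - r'/r) \to 0$ as $r' \to 0$. Then for $\|h\|_X < r'$ one has $\sum_k \|A_k\|\,\|h\|^k < \rho$, which controls the scalar majorant series $N \sum_j \rho^{-j} \bigl(\sum_k M (r'/r)^k\bigr)^j < \infty$. Absolute summability of this scalar double series justifies the reordering and produces the bound $\sum_n \|C_n\|\,\|h\|^n < \infty$, so the composition series converges absolutely to $(g\circ f)(x_0 + h)$ on $B_{r'}(x_0)$. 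This establishes $g \circ f \in C^\omega(U,Z)$ and completes the proof. The argument is standard and parallels the classical one for analytic functions of a complex variable; the only care needed is to keep track of operator norms under symmetric multilinear compositions, which is handled by the majorant estimate above.
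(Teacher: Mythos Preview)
Your argument is correct and follows the standard majorant-series approach for analytic maps between Banach spaces. Note, however, that the paper does not actually prove this lemma: it simply lists these properties as well known and refers the reader to \cite{hei} and \cite{oli3} for proofs, so there is no in-paper proof to compare against. Your treatment is exactly the kind of argument one finds in those references, with parts (1) and (2) immediate from the terminating polynomial expansions and part (3) handled by the usual substitution-and-rearrangement justified via a scalar majorant; the only implicit step you rely on is the Cauchy-type coefficient bound $\|A_k\| \le M r^{-k}$ coming from convergence of the series, which is standard.
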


\begin{lemma} \label{PSERIES}
  Suppose $s \in \mathbb{Z}_{>n/2}$, $F\in C^\omega\bigl(B_R(\mathbb{R}^N), \mathbb{R}\bigr)$,  and that
  \begin{align*}
    F(y_1, \cdots, y_N)=F_0+\sum_{|\alpha|\geq 1}c_\alpha y_1^{\alpha_1}\cdots y_N^{\alpha_N}
  \end{align*}
  is the powerseries expansion for $F(y)$ about $0$. Then there exists a constant $C_s$ such that the map
  \begin{align*}
    \bigl(B_{R/C_s}\bigl(H^s(\mathbb{T}^n\bigr)\bigr)^N\ni(\psi_1, \psi_2, \cdots, \psi_N) \mapsto F(\psi_1, \psi_2, \cdots, \psi_N)\in
H^s(\mathbb{T}^n)
  \end{align*}
  is in $C^\omega\bigl(\bigl(B_{R/C_s}\bigl(H^s(\mathbb{T}^n\bigr)\bigr)^N, H^s(\mathbb{T}^n)\bigr)$, and
  \begin{align*}
    F(\psi_1, \cdots, \psi_N)=F_0+\sum_{|\alpha|\geq 1}c_\alpha\psi_1^{\alpha_1}\psi_2^{\alpha_2}\cdots \psi_N^{\alpha_N}
  \end{align*}
  for all $(\psi_1, \cdots, \psi_N)\in \bigl(B_{R/C_s}\bigl(H^s(\mathbb{T}^n\bigr)\bigr)^N$.
\end{lemma}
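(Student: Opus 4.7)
The proof will rest on the Banach algebra property of $H^s(\mathbb{T}^n)$ for $s>n/2$. Applying Sobolev's inequality (Theorem \ref{sobolev}) and Lemma \ref{moserlemB} with $l=2$, $f_1,f_2 \in H^s\subset L^\infty$, I obtain a constant $C_s\geq 1$, which I take to be the constant appearing in the statement, such that
\begin{equation*}
\|fg\|_{H^s}\leq C_s\|f\|_{H^s}\|g\|_{H^s}\qquad \forall\, f,g\in H^s(\mathbb{T}^n),
\end{equation*}
and, after enlarging $C_s$ if necessary, the Sobolev embedding constant satisfies $\|\psi\|_{L^\infty}\leq C_s\|\psi\|_{H^s}$ as well. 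A straightforward induction on $|\alpha|$ then yields, for every $|\alpha|\geq 1$ and every $(\psi_1,\ldots,\psi_N)\in H^s(\mathbb{T}^n)^N$,
\begin{equation*}
\|\psi_1^{\alpha_1}\cdots\psi_N^{\alpha_N}\|_{H^s}\leq C_s^{|\alpha|-1}\|\psi_1\|_{H^s}^{\alpha_1}\cdots\|\psi_N\|_{H^s}^{\alpha_N}.
\end{equation*}

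Next I invoke the analyticity of $F$ on $B_R(\mathbb{R}^N)$: standard Cauchy estimates in several variables imply that for each $r<R$ the multi-series $\sum_{\alpha}|c_\alpha|r^{|\alpha|}$ converges. Now fix $(\psi_1,\ldots,\psi_N)\in (B_{R/C_s}(H^s))^N$ and set $\rho:=\max_{i} C_s\|\psi_i\|_{H^s}<R$. Combining the monomial estimate with the Cauchy bound,
\begin{equation*}
\sum_{|\alpha|\geq 1}|c_\alpha|\,\|\psi_1^{\alpha_1}\cdots\psi_N^{\alpha_N}\|_{H^s}
\leq C_s^{-1}\sum_{|\alpha|\geq 1}|c_\alpha|\prod_{i=1}^N(C_s\|\psi_i\|_{H^s})^{\alpha_i}
\leq C_s^{-1}\sum_{\alpha}|c_\alpha|\rho^{|\alpha|}<\infty .
\end{equation*}
Thus the series $F_0+\sum_{|\alpha|\geq 1}c_\alpha\psi_1^{\alpha_1}\cdots\psi_N^{\alpha_N}$ converges absolutely in $H^s(\mathbb{T}^n)$, uniformly on every set of the form $\{\|\psi_i\|_{H^s}\leq \rho/C_s\}$ with $\rho<R$.

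To identify the sum with the composition $F\circ(\psi_1,\ldots,\psi_N)$, I use the embedding $\|\psi_i\|_{L^\infty}\leq C_s\|\psi_i\|_{H^s}<R$, which guarantees that $(\psi_1(x),\ldots,\psi_N(x))\in B_R(\mathbb{R}^N)$ for every $x\in\mathbb{T}^n$. Absolute convergence in $H^s$ forces convergence (a.e., along a subsequence of partial sums, and in $L^2$), and absolute convergence of the scalar power series at the point $(\psi_1(x),\ldots,\psi_N(x))$ gives $F_0+\sum_\alpha c_\alpha\psi_1(x)^{\alpha_1}\cdots\psi_N(x)^{\alpha_N}=F(\psi_1(x),\ldots,\psi_N(x))$ pointwise. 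The two limits must agree, so the $H^s$-sum equals $F\circ(\psi_1,\ldots,\psi_N)$.

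Finally, analyticity of the composition map in the Banach-space sense follows from the general fact, recorded for instance in \cite[\S 15]{Deimling:1998}, that a map between open subsets of Banach spaces which admits on a neighborhood of each point a representation by an absolutely convergent multilinear power series is analytic; one simply recenters the Taylor expansion of $F$ at any interior point $(\psi_1^0,\ldots,\psi_N^0)\in (B_{R/C_s}(H^s))^N$ (noting that by the $L^\infty$ bound above $(\psi_i^0(x))$ lies in a compact subset of $B_R(\mathbb{R}^N)$) and repeats the majorization argument above. The main technical obstacle will be choosing $C_s$ large enough so that a \emph{single} constant simultaneously governs the algebra inequality, the Sobolev embedding $H^s\hookrightarrow L^\infty$, and the recentered Cauchy estimates used at interior points; once this bookkeeping is in place, the rest is a routine dominated-convergence argument.
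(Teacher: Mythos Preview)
The paper does not actually prove this lemma; it is stated in Appendix~\ref{A:ANALY} with the remark ``We refer interested readers to \cite{hei} or \cite{oli3} for the proofs.'' Your argument---using the Banach algebra property of $H^s(\mathbb{T}^n)$ for $s>n/2$ to majorize the monomials and then dominating by the scalar Cauchy estimates---is precisely the standard proof one finds in those references, so there is nothing to compare.

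One small point worth tightening: the assertion that ``for each $r<R$ the multi-series $\sum_{\alpha}|c_\alpha|r^{|\alpha|}$ converges'' is not automatic from the hypothesis $F\in C^\omega(B_R(\mathbb{R}^N))$ alone, since a real-analytic function on $B_R$ need not have its Taylor series at $0$ converge on all of $B_R$. The lemma (and its sources) implicitly assume that the given power series represents $F$ on $B_R$, which does yield the absolute convergence you use; you should make that reading explicit rather than invoking ``standard Cauchy estimates.''
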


\section{Index of notation} \label{index}

\bigskip

\begin{longtable}{ll}
$\tilde{g}_{\mu\nu}$ & Physical spacetime metric; \S \ref{S:INTRO} \\
$\tilde{v}^\mu$ & Physical fluid four-velocity; \S \ref{S:INTRO} \\
$\bar{\rho}$ & Fluid proper energy density; \S \ref{S:INTRO} \\
$\bar{p} = \epsilon^2 K\bar{\rho}$ & Fluid pressure; \S \ref{S:INTRO} \\
$\epsilon = \frac{v_T}{c}$ & Newtonian limit parameter; \S \ref{S:INTRO} \\
$M_\epsilon=(0,1]\times \mathbb{T}^3_\epsilon$ & Relativistic spacetime manifold; \S \ref{S:INTRO}\\
$M = M_1$ & Newtonian spacetime manifold; \S \ref{S:INTRO}\\
$a(t)$ & FLRW scale factor; \S \ref{S:INTRO}, eqns. \eqref{FLRW.a} and \eqref{E:TPTA}\\
$\tilde{v}_H(t)$ & FLRW fluid four-velocity; \S \ref{S:INTRO}, eqn. \eqref{FLRW.b} \\
$\rho_H(t)$ & FLRW proper energy density; \S \ref{S:INTRO}, eqn. \eqref{FLRW.c} (see also \eqref{E:RHOHOM} and \eqref{E:ZETAH2})  \\
$(\bar{x}^\mu) = (t,\bar{x}^i)$ & Relativistic coordinates; \S \ref{S:INTRO}\\
$(x^\mu)=(t,x^i)$ & Newtonian coordinates; \S \ref{S:INTRO}, eqn. \eqref{Ncoordinates}\\
$\mathring{a}(t)$ & Newtonian limit of $a(t)$; \S \ref{S:INTRO}, eqn. \eqref{arhoringdef}\\
$\mathring{\rho}_H(t)$ & Newtonian limit of $\rho_H(t)$; \S \ref{S:INTRO}, eqn. \eqref{arhoringdef}  \\
$\underline{f}(t,x^i)$ & Evaluation in Newtonian coordinates; \S \ref{iandc}, eqn. \eqref{Neval}\\
$X^s_{\epsilon_0,r}(\mathbb{T}^3)$  & Free initial data function space; \S \ref{Functionspaces}\\
%$E^p( (0,\epsilon_0)\times (T_1,T_2)\times U, V)$ & remainder coefficient space; \S \ref{Functionspaces}\\
$\mathcal{S}(\epsilon,t,\xi)$, $\mathcal{T}(\epsilon,t,\xi)$, $\ldots$ & Remainder terms that are elements of
$E^0$, \S \ref{remainder}   \\
$\texttt{S}(\epsilon,t,\xi)$, $\texttt{T}(\epsilon,t,\xi)$, $\ldots$ & Remainder terms that are elements of
$E^1$, \S \ref{remainder}   \\
$\bar{g}^{\mu\nu}$ & Conformal metric; \S \ref{conformalEinsteinEuler}, eqn.
\eqref{E:CONFORMALTRANSF} \\
$\bar{v}^\mu$ & Conformal four-velocity; \S \ref{conformalEinsteinEuler}, eqn.
\eqref{E:CONFORMALVELOCITY}\\
$\Psi$ & Conformal factor; \S \ref{Conformalfactor}, eqn. \eqref{E:CONFORMALFACTOR}\\
$\bar{h}$ & Conformal FLRW metric; \S \ref{Conformalfactor}, eqn. \eqref{E:CONFORMALFLRW}\\
$E(t)$ & Modified scale factor; \S \ref{Conformalfactor}, eqn. \eqref{E:DEFE} (see also \eqref{E:EREP})\\
$\Omega(t)$ & Modified density;  \S \ref{Conformalfactor}, eqn. \eqref{E:OMEGADEF} (see also \eqref{E:OMEGAREP})\\
$\bar{\gamma}^0_{ij}$, $\bar{\gamma}^i_{j0}$ & Non-vanishing Christoffel symbols of $\bar{h}$;
 \S \ref{Conformalfactor}, eqn. \eqref{E:HOMCHRIS}\\
$\bar{\gamma}^\sigma$ & Contracted Christoffel symbols of $\bar{h}$;
 \S \ref{Conformalfactor}, eqn. \eqref{E:GAMMAFLRW}\\
$\bar{Z}^\mu$ & Wave gauge vector field;  \S \ref{Wavegauge}, eqn. \eqref{Zdef}\\
$\bar{X}^\mu$ & Contracted Christoffel symbols;  \S \ref{Wavegauge}, eqn. \eqref{E:XY}\\
$\bar{Y}^\mu$ & Gauge source vector field;  \S \ref{Wavegauge}, eqn. \eqref{Ydef}\\
$u^{\mu\nu}$, $u$ & Modified conformal metric variables;  \S \ref{vardefs}, eqns. \eqref{E:u.a}, \eqref{E:u.d} and \eqref{E:u.f}\\
$u^{\mu\nu}_\gamma$ & First order metric field variables;  \S \ref{vardefs}, eqns. \eqref{E:u.b}, \eqref{E:u.c}, \eqref{E:u.e}
and \eqref{E:u.g}\\
$z_i$ &  Modified lower conformal fluid 3-velocity;  \S \ref{vardefs}, eqn. \eqref{E:z.b}\\
$\zeta$ & Modified density;  \S \ref{vardefs}, eqn. \eqref{E:ZETA}\\
$\delta \zeta$ & Difference between $\zeta$ and $\zeta_H$; \S \ref{vardefs}, eqn. \eqref{E:DELZETA}\\
$\bar{\mathfrak{g}}^{ij}$ & Densitized conformal 3-metric; \S \ref{vardefs}, eqn. \eqref{E:GAMMA} \\
$\alpha$ & Cube root of conformal 3-metric determinant;  \S \ref{vardefs}, eqn. \eqref{E:GAMMA} \\
$\check{g}_{ij}$ & Inverse of the conformal 3-metric $\bar{g}^{ij}$;  \S \ref{vardefs}, eqn. \eqref{E:GAMMA} \\
$\bar{\mathfrak{q}}$ & Modified conformal 3-metric determinant; \S \ref{vardefs}, eqn. \eqref{E:q} \\
$\bar{\eta}$ & Background Minkowski metric; \S \ref{vardefs}, eqn. \eqref{etabardef}\\
$\zeta_H(t)$ & FLRW modified density; \S \ref{vardefs}, eqns. \eqref{E:ZETAH1} and \eqref{E:ZETAH3}\\
$C_0$ & FLRW constant; \S \ref{vardefs}, eqn. \eqref{C0def}\\
$\mathring{\zeta}_{H}(t)$ & Newtonian limit of $\zeta_H(t)$, \S \ref{vardefs}, eqns. \eqref{zetaHringdef}
and \eqref{zetaHringform}  (see also \eqref{deltazetaringH})\\
$z^i$ &  Modified upper conformal fluid 3-velocity;  \S \ref{vardefs}, eqn. \eqref{E:z.a}\\
$\mathring{\rho}$ & Newtonian fluid density; \S \ref{CPEequations}\\
$\mathring{z}^j$ & Newtonian fluid 3-velocity; \S \ref{CPEequations}\\
$\mathring{\Phi}$ & Newtonian potential; \S \ref{CPEequations}\\
$\Pi$ & Projection operator; \S \ref{CPEequations}, eqn. \eqref{Pidef}\\
$\mathring{E}(t)$ & Newtonian limit of $E(t)$; \S \ref{CPEequations}, eqn. \eqref{Eringform}\\
$\mathring{\Omega}(t)$ & Newtonian limit of $\Omega(t)$; \S \ref{CPEequations}, eqn. \eqref{Oringdef}\\
$\mathring{\zeta}$ & Modified Newtonian fluid density; \S \ref{CPEequations}\\
$\rho$ & Fluid proper energy density in Newtonan coordinates; \S \ref{epexpansions}, eqn. \eqref{E:ZETA2}\\
$\delta\rho$ & Difference between $\rho$ and $\rho_H$; \S \ref{epexpansions}, eqn. \eqref{E:DELRHO}\\
$w^{0\mu}_k$ & Shifted first order gravitational variable; \S \ref{Npotsub}, eqn. \eqref{E:WPHI}\\
$\Phi$ & Gravitational potential; \S \ref{Npotsub}, eqn. \ref{E:DEFOFPHI}\\
$\phi$ & Renormalized spatially average density; \S \ref{Npotsub}, eqn. \ref{E:DEFOFPHISMALL}\\
$\mathbf{U}_1$ & Gravitational field vector; \S \ref{completeevolution}, eqn. \eqref{E:U1}\\
$\mathbf{U}$ & Combined gravitational and matter field vector; \S \ref{completeevolution}, eqn. \eqref{E:REALVAR}\\
$\mathbf{U}_2$ & Matter field vector; \S \ref{completeevolution}, eqn. \eqref{E:REALVAR1}\\
$\delta \mathring{\zeta}$ & Difference between $\mathring{\zeta}$ and $\mathring{\zeta}_H$;
\S \ref{PEcont},  eqn. \eqref{deltazetaringdef}\\
$\vertiii{\cdot}_{a,H^k}$, $\vertiii{\cdot}_{H^k}$,
$\|\cdot\|_{M^\infty_{\mathbb{P}_a,k}([T_0,T)\times \mathbb{T}^n)}$ & Energy norms; \S \ref{S:MODELuni}, Definition \ref{energynorms}\\
$\mathscr{Q}(\xi)$, $\mathscr{R}(\xi)$, $\mathscr{S}(\xi)$, $\ldots$ & Analytic remainder terms; \S \ref{S:INITIALIZATION} \\
\end{longtable}

\end{document}